\documentclass[ aps,pra, reprint,superscriptaddress, floatfix,  
nofootinbib,
nobibnotes,
amsmath,amssymb,
]{revtex4-2}

\usepackage{subfiles}

\usepackage{algorithm}
\usepackage{algpseudocode}
\usepackage[T1]{fontenc}
\usepackage{inputenc}
\usepackage{color}
\usepackage{babel}
\usepackage{physics}
\usepackage{mathtools}
\usepackage{bbding}
\usepackage{pifont}
\usepackage{textcomp}
\usepackage{wasysym}
\usepackage{amsthm}
\usepackage[normalem]{ulem}
\usepackage{nicefrac}
\usepackage{wasysym}
\usepackage{dsfont}
\usepackage{amsmath}
\usepackage{amssymb}
\usepackage{graphicx}
\usepackage{xcolor}
\usepackage{bbold}
\usepackage{notes2bib}
\bibnotesetup{
note-name = ,
use-sort-key = false
}
\usepackage[dvipsnames]{xcolor}

\usepackage{xcolor}
\usepackage{mdframed}

\newmdenv[
backgroundcolor=yellow!2,
linecolor=brown!75,
linewidth=1pt,
skipabove=10pt,
skipbelow=10pt
]{highlightbox}

\newmdenv[
  backgroundcolor=red!2,   
  linecolor=red!75,        
  linewidth=1pt,
  roundcorner=3pt,              
  skipabove=10pt,
  skipbelow=10pt,
  innerleftmargin=8pt,
  innerrightmargin=8pt
]{DefHighlightbox}

\newmdenv[
  backgroundcolor=blue!2, 
 linecolor=blue!50,         
  linewidth=1pt,
  skipabove=10pt,
  skipbelow=10pt,
  innerleftmargin=10pt,         
  innerrightmargin=10pt
]{ObsHighlightbox}

\newmdenv[
  backgroundcolor=cyan!2, 
 linecolor=cyan!75,         
  linewidth=1pt,
  skipabove=10pt,
  skipbelow=10pt,
  innerleftmargin=10pt,         
  innerrightmargin=10pt
]{PropHighlightbox}

\newmdenv[
  backgroundcolor=green!2,      
  linecolor=ForestGreen!75,        
  linewidth=1pt,
  skipabove=10pt,
  skipbelow=10pt
]{ThmHighlightbox}

\newmdenv[
  backgroundcolor=green!2,      
  linecolor=PineGreen!75,          
  topline=false,
  bottomline=false,
  rightline=false,
  linewidth=1pt,
  skipabove=10pt,
  skipbelow=10pt,
  innerleftmargin=10pt,
  innerrightmargin=10pt,
  innertopmargin=8pt,
  innerbottommargin=8pt
]{PropertyHighlightbox}

\newmdenv[
  backgroundcolor=orange!2,     
  linecolor=violet!75,     
  linewidth=1pt,
  roundcorner=3pt,              
  skipabove=10pt,
  skipbelow=10pt,
  innerleftmargin=8pt,
  innerrightmargin=8pt
]{CorHighlightbox}

\newcommand{\Ast}{\mathop{\scalebox{1.5}{\raisebox{-0.2ex}{$\ast$}}}}%

\usepackage{hyperref}

\usepackage{hyperref}

\hypersetup{
     colorlinks   = true,
     linkcolor    = blue,
     citecolor    = blue,
     urlcolor     = blue,
     }

\usepackage{appendix}
\renewcommand{\selectlanguage}[1]{}
\makeatletter
\@ifundefined{textcolor}{}
{%
	\definecolor{BLACK}{gray}{0}
	\definecolor{WHITE}{gray}{1}
	\definecolor{RED}{rgb}{1,0,0}
	\definecolor{GREEN}{rgb}{0,1,0}
	\definecolor{BLUE}{rgb}{0,0,1}
	\definecolor{CYAN}{cmyk}{1,0,0,0}
	\definecolor{MAGENTA}{cmyk}{0,1,0,0}
	\definecolor{YELLOW}{cmyk}{0,0,1,0}
}
\theoremstyle{plain}

\theoremstyle{plain}

\ifx\proof\undefined
\newenvironment{proof}[1][\protect\proofname]{\par
	\normalfont\topsep6\p@\@plus6\p@\relax
	\Trivlist
	\itemindent\parindent
	\item[\hskip\labelsep
	\scshape
	#1]\ignorespaces
}{%
	\endtrivlist\@endpefalse
}
\providecommand{\proofname}{\textbf{Proof}}
\fi
\theoremstyle{plain}

\providecommand{\lemmaname}{Lemma}
\providecommand{\definitionname}{Definition}
\providecommand{\propositionname}{Proposition}

\usepackage{babel}
\usepackage{txfonts}
\usepackage{colortbl}\definecolor{myurlcolor}{rgb}{0,0,0.7}
\newcommand{\id}{\mathcal{I}}
\newcommand{\Id}{\mathds{1}}

\newcommand{\St}{\mathsf{St}}
\newcommand{\Chan}{\mathsf{Chan}}
\newcommand{\jam}{\mathsf{Jam}}
\newcommand{\Choi}{\mathsf{Choi}}
\newcommand{\eff}{\mathsf{Eff}}

\renewcommand{\ket}[1]{\left| #1 \right\rangle}

\renewcommand{\ketbra}[2]{\left|#1\middle\rangle\!\middle\langle#2\right|}
\newcommand{\proj}[1]{\ketbra{#1}{#1}}

\DeclareMathOperator{\N}{\mathcal{N}}
\DeclareMathOperator{\C}{\mathcal{C}}
\DeclareMathOperator{\M}{\mathcal{M}}
\newcommand{\haH}

\newtheorem{theorem}{Theorem}

\newtheorem*{thm*}{Theorem}

\newtheorem{lemma}{Lemma}
\newtheorem*{lem*}{Lemma}

\newtheorem{observation}{Observation}
\newtheorem{corollary}{Corollary}
\newtheorem*{cor*}{Corollary}
\newtheorem{proposition}{Proposition}
\newtheorem{definition}{Definition}

\newtheorem{axm}{Axiom}
\def\H{\mathcal{H}}
\def\E{\mathcal{E}}

\makeatother

\begin{document}

\title{Multi-Time States in Quantum and Operational Probabilistic Theories: An Axiomatic Framework} 
\author{Kaumudibikash Goswami}
\email{goswami.kaumudibikash@gmail.com}
\affiliation{QICI Quantum Information and Computation Initiative, Department of Computer Science, The University of Hong Kong, Pokfulam Road, Hong Kong}
\affiliation{Hon Hai (Foxconn) Research Institute, Quantum Computing Research Center, Taipei 114, Taiwan }

\author{Archana Barvar}
\affiliation{Department of Physical Sciences, Indian Institute of Science Education and Research (IISER), Mohali, Punjab 140306, India}

\author{Manabendra Nath Bera}
\email{mnbera@gmail.com}
\affiliation{Department of Physical Sciences, Indian Institute of Science Education and Research (IISER), Mohali, Punjab 140306, India}

\begin{abstract}

Quantum states over time were introduced, in part, to extend Bayesian probability to quantum theory and have since found applications across diverse areas, ranging from quantum causality and open-system dynamics to tests of macroscopic realism. While axiomatic characterizations of these states exist for Markovian dynamics, a general framework accommodating non-Markovian dynamics is still lacking. Moreover, extending this framework beyond quantum mechanics to general physical theories also remains elusive. In this work, we develop an axiomatic framework for multi-time states within a class of operational probabilistic theories (OPTs) that includes quantum mechanics as a special case. Based on three assumptions---compositionality, reconstructibility, and spatio-temporal marginal consistency---the framework accommodates both Markovian and non-Markovian dynamics, and provides sufficient conditions to certify signaling and temporal memory within OPT. As a concrete application of our OPT framework, we fully characterize the unique multi-time state in boxworld and develop a method for Bayesian retrodiction of physical transformations there. Our general framework is consistent with quantum theory, reproducing several existing results. In the non-Markovian quantum dynamics, we establish a correspondence between multi-time states and the process-matrix framework, and fully characterize the multi-time states compatible with general dynamics. This correspondence also lets us infer quantum causal models from multi-time states, and as a proof of principle, we propose a semidefinite program to infer causal order from quantum multi-time states. These results provide a unified framework for studying spatio-temporal states, causal relations, and Bayesian reasoning within OPTs, with implications for both fundamental and applied research. 
\end{abstract}

\maketitle

\section{Introduction}
In classical theory, a spatio-temporal state is typically a joint
probability distribution over random variables, regardless of whether they describe two spacelike separated systems or one system at two timelike events. In this sense, classical theory may be viewed as causally neutral~\cite{leifer2013towards}. In quantum theory, however, the situation is fundamentally different. 
While conventional bipartite quantum states describe two spacelike
separated systems, a two-time quantum state that describes an initial state evolving through a quantum channel is far less straightforward, and remains an active area of research.

In recent years, several approaches have been proposed to describe quantum states associated with timelike events, commonly referred to as quantum states over time~\cite{ohya1983note, leifer2013towards, fitzsimons2015quantum, sutter2016strengthened, fullwood2022quantum}. These developments aim to generalize Bayesian methods in quantum
theory~\cite{parzygnat2023time}. Such Bayesian generalizations have motivated notions such as quantum conditional expectations~\cite{tsang2022generalized} and the quantum conditional density operator~\cite{cerf1997negative}, and found applications in thermodynamic fluctuation relations~\cite{Buscemi2021,Aw2021}, quantum state smoothing~\cite{Tsang2009,guevara2015smoothing,Liu2025_retrodictive,liu2026unifying},
and related problems~\cite{Sarkka2013-mn, Bishop2006-qp}.

Beyond Bayesian formulation, the proposed constructions of quantum states over time have found broad applications in several areas of quantum information and quantum foundations. For instance, the Petz recovery map associated with the approach in~\cite{leifer2013towards} appears naturally in contexts such as in formulation of observational entropy~\cite{Buscemi_2023, bai2024observational}, and quantum transport~\cite{hoogsteder2025approach}. Likewise, the formalism presented in~\cite{fitzsimons2015quantum} has been utilized to certify causal influence~\cite{fitzsimons2015quantum, song2024causal}, the investigation of Leggett-Garg macro-realism~\cite{comar2026using}. Additional applications and a comprehensive review of these constructions can be found in~\cite{parzygnat2023time}.

Given all these distinct constructions, the natural next step was to single out a unique quantum state over time via suitable axioms. Despite the initial no-go theorem~\cite{horsman2017can}, a refined axiomatic formulation made the program successful, yielding a unique quantum state over time compatible with multi-time Markovian quantum
dynamics~\cite{parzygnat2024virtual,lie2024quantum,lie2025multipartite}. 

In parallel to quantum states over time, several approaches based on quantum processes, or higher-order maps,  have also been developed to understand dynamical aspects of quantum theory. This framework adopts a process-centric approach, with transformation of channels as the primary object of interest. The most general processes, also called multi-time processes, are represented by process matrices~\cite{oreshkov12, Giarmatzi_2021}, process tensors~\cite{Pollock2018Operational, Pollock2018}, or quantum combs~\cite{chiribella09b}. These frameworks provide a complete operational description of the evolution of an open quantum system, including non-Markovian dynamics in which the environment retains memory of past interactions with the system. The formalism has enabled various applications, including quantum metrology~\cite{Altherr2021} and thermodynamic protocols~\cite{Zambon2025}. While the process framework also covers more exotic cases, such as indefinite causal order~\cite{oreshkov12,chiribella09} and indefinite input-output directions~\cite{chiribella2022quantum,mrini2024indefinitecausalstructurecausal}, here we focus on multi-time processes.

In contrast to multi-time processes, the quantum states-over-time framework adopts a state-centric approach: unlike multi-time processes, which split each time step into input and output systems to allow for interventions, this framework treats the system at each time step as a single object, with the goal of representing the multi-time
state as a generalization of a joint probability. This framework has proven promising: it provides a natural generalization of Bayesian probability and offers a rigorous formulation of Bayesian retrodiction of channels in quantum theory~\cite{parzygnat2023time, Parzygnat2023}. Yet, despite its promise, several open questions remain within this framework. In particular, two important gaps motivate the present work.

First, while the states-over-time framework has been axiomatically characterized to capture Markovian evolution~\cite{lie2024quantum,lie2025multipartite}, a similar axiomatic characterization of non-Markovian evolution within this framework remains elusive. This contrasts with process-based approaches, which offer a complete description of arbitrary open quantum dynamics.  Although specific classes of multi-time states, such as the pseudo-density matrix~\cite{fitzsimons2015quantum} and a time-symmetric variant based on pre- and post-selected ensembles~\cite{aharonov1964time,Aharonov1991Complete,Aharonov2009}, have been connected to the process matrix framework~\cite{zhang2020quantumcorrelationstime, Silva2017,Liu_2024}, an axiomatic correspondence between the two frameworks is still lacking. Such a characterization would be valuable, as it could provide a unified framework with a possibly unique structure, rather than relying on distinct formalisms with different correspondence rules to multi-time processes.

Second, so far, multi-time states and their associated Bayesian inference frameworks have been developed mainly for classical and quantum theories. More recently, these ideas have been investigated in finite-dimensional C*-algebras~\cite{parzygnat2024reversinginformationflowretrodiction} and, in infinite-dimensional settings, via von Neumann algebras~\cite{karmakar2026bayesianinferenceretrodictionfaithful}.  While these extensions are significant, they naturally raise an interesting question: can the notion of spatio-temporal states and Bayesian retrodiction be formulated in physical theories that go beyond the Hilbert space framework?  It is well known that every finite-dimensional C*-algebra is a direct sum of matrix algebras, effectively corresponding to a superselection structure within quantum theory~\cite{takesaki2001theory}. Similarly, von Neumann algebras are inherently tied to a Hilbert space framework via their spatial representations and operator topologies~\cite{takesaki2001theory, haag2012local}. However, there exist physical theories, such as boxworld~\cite{Popescu1994,Barrett2007}, that go beyond this standard complex Hilbert space structure and cannot be embedded into quantum theory~\cite{garner2023characterization, Muller2023}. Examples of states over time in these broader classes of physical theories are currently lacking. A systematic development of the states-over-time framework in this general setting is important, as it would reveal whether the structural features of multi-time states and Bayesian rules arise from theory-independent operational principles alone, and which specific features of a physical theory are required---or not required---to formulate them.

We address these questions within the broader class of theories described by operational probabilistic theories (OPTs)~\cite{chiribella2010probabilistic,Chiribella_2011,DAriano2016}, characterized by finite-dimensional convex state spaces, causality (i.e., a unique discarding operation for marginalization)~\cite{Chiribella_2011,chiribella2010probabilistic}, and local tomography~\cite{hardy2001quantum, hardy2011reformulating,chiribella2010probabilistic,Barrett2007,DAriano2016, Chiribella2021}, and formulate our results and proofs on multi-time states within this general OPT framework. This generalization is a central novelty of our approach: it demonstrates that the structural and characteristic features we identify for multi-time states are not specific to quantum theory, but instead follow solely from general operational principles. Our framework not only recovers and refines known results on quantum states over time, but also establishes new results for multi-time states associated with general quantum evolutions, including their ability to capture non-Markovian, causal, and Bayesian aspects. More importantly, the general OPT formulation provides a foundation for extending the framework of multi-time states beyond quantum theory, allowing us to characterize their structure and properties in genuinely non-quantum theories such as square-bit boxworld.

To start with, we first focus on the multi-time state within the OPT framework that arises from a Markovian evolution, which we call a Markov compatible multi-time state. In the same spirit of Refs.~\cite{lie2024quantum,lie2025multipartite}, we introduce two operationally motivated axioms to characterize the structure of such multi-time states. These include a notion of compositionality, which ensures that marginalizing over an intermediate time step yields the expected composition of the evolutions. We also introduce a reconstructibility axiom, inspired by the notion of a short Markov chain~\cite{petz1988sufficiency,hayden2004structure}, which captures the idea that the future state can be recovered from the present state alone via a suitable recovery map, thereby providing a natural criterion for compatibility with Markovian dynamics. It is worth pointing out that the reconstruction maps are not, in general, physical transformations within the theory; rather, they belong to the linear span of such transformations. From the Bayesian perspective, these maps provide a way to update our belief about the future state based solely on available present data, much like Bayesian retrodiction, which is understood as a form of belief propagation backward in time~\cite{Aw2021,Buscemi2021}. These axioms allow us to identify a structure analogous to one found in quantum theory~\cite{lie2025multipartite} for Markov compatible multi-time states within the OPT framework. In the language of the process-matrix formalism, these Markov compatible multi-time states are realized by inserting a broadcasting map into the intervention slots of a Markovian process, where, like the (virtual) quantum broadcasting maps~\cite{parzygnat2024virtual}, the broadcasting maps lie in the linear span of physical transformations.

We subsequently extend the notion of multi-time states to non-Markovian evolutions in OPTs. To characterize the structure of multi-time states in such a scenario, we introduce an additional axiom, spatio-temporal marginal consistency, which prescribes the behavior of multi-time states corresponding to initial product states evolving independently under product channels. Combining spatio-temporal marginal consistency with compositionality and reconstructibility, we show that the structure of general multi-time states arises from inserting broadcasting maps into a general process. This establishes an axiomatic relationship between process matrices, representing the evolution of open quantum systems (or their OPT generalizations \cite{Kissinger2019}), and multi-time states.

Our operational axioms, introduced to characterize multi-time states, lead to several interesting byproducts. First, we develop a complete framework of multi-time states in the boxworld theory of square bits. To characterize the virtual broadcasting map in this setting, we employ axioms analogous to those used for the virtual broadcasting map in quantum theory~\cite{parzygnat2024virtual} and find a unique such map in boxworld. With this unique broadcasting map, together with compositionality and reconstructibility, we provide the first unique multi-time state in boxworld. We then show how Bayesian retrodiction can be formulated in this theory. Subsequently, we find the unique broadcasting map for the composite case (2 inputs to 4 outputs). Unlike the quantum case, it factorizes as a product of individual broadcasting maps. With this composite broadcasting map, we show that spatio-temporal marginal consistency holds and that all structural results for the non-Markovian process in the broader OPT framework carry over directly to boxworld. With this example, we confirm that it is possible to formalize multi-time states and Bayes' rule in a broader class of physical theories.

Second, we generalize the result of Ref.~\cite{Liu2025} on quantum state over time, which introduced a sufficient condition for detecting signaling through a bipartite channel in terms of the negativity of a certain marginal of the multi-qubit pseudo-density matrix. We extend this criterion to any multi-time state that satisfies compositionality, reconstructibility, and spatio-temporal marginal consistency. This provides a certified test for signaling through a bipartite channel within the broader OPT framework, with the criteria in Ref.~\cite{Liu2025} being a special case. We subsequently apply this condition to derive a proof-of-principle sufficient criterion for identifying non-Markovian behavior in a multi-time process. Furthermore, we show that the correspondence between multi-time processes and states enables us to formulate a semidefinite programming (SDP) approach for inferring underlying causal models, for instance, determining whether a given multi-time state is compatible with a specific causal order. These results demonstrate the operational utility of our axioms, providing tools to detect information flow, memory effects, and causal structure, with potential applicability in quantum theory and beyond.

The article is organized as follows. In Section~\ref{sec:AxiomaticFramework}, we introduce the axiomatic framework for multi-time states within operational probabilistic theories (OPTs). In Section~\ref{sec:MarkovCaopatible}, we employ this framework to characterize Markov compatible multi-time states and establish a corresponding structure theorem. Section~\ref{sec:GenMTQSnonMarkov} extends the discussion to general evolutions, in which the system and environment evolve jointly and the environment may retain memory, leading to non-Markov compatible multi-time states and a corresponding structure theorem. Here, we further explore the connection between signaling and non-Markov compatibility and show that every multi-space state, of spacelike separated parties, can equivalently be represented by a multi-time state if we allow non-Markovian evolution. As the first application of our axiomatic framework, we develop the unique multi-time state in the square-bit boxworld, together with broadcasting maps, Bayesian retrodictions, and causal structures in evolutions, in Section~\ref{sec:MTSinBoxWorld}. Then, in Section~\ref{sec:MTSinQuantumTheory}, we apply the axiomatic framework to characterize the multi-time state in quantum theory, establish the correspondence with multi-time quantum processes, and refine some of the known results. We also demonstrate how, when restricted to quantum theory, the framework enables causal inference via semidefinite programming. Finally, we conclude in Section~\ref{sec:Conclusion}.

\section{An axiomatic framework for multi-time states in operational probabilistic theories \label{sec:AxiomaticFramework}}

In this section, we introduce the axiomatic framework used to characterize multi-time states (MTSs) in a wide range of operational probabilistic theories (OPTs), while the structure of the OPTs considered in this work is outlined in Appendix~\ref{App:QSOTinGPT}. To motivate the framework, we begin with the classical case. Consider the random variables $X$, $Y$, and $Z$, with the values $x$, $y$, and $z$ respectively, forming a classical stochastic process $X \to Y \to Z$, governed by the stochastic matrices $P(Y|X)$ and $P(Z|Y)$ for the transitions $X \to Y$ and $Y \to Z$, respectively. For an initial distribution $P(X)$, the multi-time probability distribution $P(ZYX) = P(Z|Y) P(Y|X) P(X)$ satisfies the properties (C1)-(C4), given below.

\begin{PropertyHighlightbox}
\noindent
(C1) {\bf Convex-linearity:} for a $P(X)=\sum_{i}q_i \ P_i (X)$, the multi-time distribution becomes $P(ZYX)=\sum_i q_i \ [P(Z|Y) P(Y|X) P_i(X)]$.\\

\noindent
(C2) {\bf Marginal consistency:} $\sum_{z} P(ZYX)=P(YX)=P(Y|X) P(X)$ and $\sum_{x} P(ZYX)=P(ZY)=P(Z|Y) P(Y)$, where $P(Y)=\sum_x P(YX)$. \\ 

\noindent
(C3) {\bf Compositionality:} for $P(Z|X)=\sum_y P(Z|Y) P(Y|X)$, the joint distribution of $X$ and $Z$ becomes $P(ZX)=\sum_{y}P(ZYX)=P(Z|X) P(X)$.\\

\noindent
(C4) {\bf Reconstructibility:} There exists a stochastic map $Q(Z|Y)$, independent of all possible choices of $P(Y|X)$ and $P(X)$, such that the three-time joint distribution can be reconstructed from its two-time marginal: $P(ZYX)=Q(Z|Y)P(YX)$. Clearly, for a classical Markov chain, this map is $Q(Z|Y)=P(Z|Y)$.
\end{PropertyHighlightbox}

While multi-time classical states are typically described by classical probability distributions satisfying the properties (C1)-(C4), extending this notion to the OPTs, including quantum theory, is not obvious. Here, we develop an axiomatic framework to study MTSs in OPTs. This framework not only reproduces and refines several recent results on quantum MTSs, such as those presented in ~\cite{lie2025multipartite}, but also provides a powerful tool for analyzing spatio-temporal states---namely, states extended across both space and time---while offering fundamental insights into Markovian and non-Markovian evolutions, signaling, and causal structures, both within quantum theory and beyond. We begin by introducing physically motivated axioms to reveal their underlying structure.

\subsection{Notations}
Throughout this work, we consider OPTs (see Appendix~\ref{App:QSOTinGPT} for details on the OPT framework) with finite-dimensional convex state spaces satisfying causality---i.e., a unique way of discarding a system, equivalently a unique deterministic effect---and local tomography, see Appendix~\ref{Sec:OPT_Assumptions} for detailed discussions on these assumptions.

In OPTs, for a physical system $A$, we denote its state space by $\St(A)$, with complex span $\St_{\mathbb{C}}(A) := \operatorname{Span}_{\mathbb{C}}(\St(A)) = \bigl\{ \sum_i a_i \rho_i \;\big|\; a_i \in \mathbb{C},\ \rho_i \in \St(A) \bigr\}$, which we assume throughout to be finite-dimensional. The corresponding effect space is denoted by
$\eff(A) := \bigl\{\, e : e(\rho) \in [0,1] \ \text{for all } \rho \in \St(A) \,\bigr\}$, and its complex span by $\eff_{\mathbb{C}}(A)$. Motivated by quantum theory, the unique deterministic effect is denoted by $\Tr$, satisfying $\Tr(\rho) = 1$ for all $\rho \in \St(A)$. The set of physical transformations in OPTs from $A$ to $B$ is denoted by $\mathsf{Transf}(A\to B)$, with complex span $\mathsf{Transf}_{\mathbb{C}}(A\to B)$, whose elements are linear maps ${\mathcal{N}}: \St_{\mathbb{C}}(A) \to \St_{\mathbb{C}}(B)$; we often write a transformation $\mathcal{N} \in \mathsf{Transf}_{\mathbb{C}}(A \to B)$ simply as $\N: A\to B$. The sequential composition of $\N \in \mathsf{Transf}(A\to B)$ and $\M \in \mathsf{Transf}(B\to C)$ is denoted $\M \circ \N \in \mathsf{Transf}(A\to C)$. Similarly, for $\N \in \mathsf{Transf}(A_1\to B_1)$ and $\M \in \mathsf{Transf}(A_2\to B_2)$, their parallel composition is denoted $\N \otimes \M \in \mathsf{Transf}(A_1 \otimes A_2 \to B_1 \otimes B_2)$. Both, $\circ$ and $\otimes$ are associative. Deterministic transformations---also called channels $\Chan(A \to B)$---form the subset $\Chan(A \to B) \subset \mathsf{Transf}(A \to B)$. Any channel $\N \in \Chan(A \to B)$ satisfies $\Tr_B \circ \N = \Tr_A$, which we call the trace preserving property, by analogy with quantum theory.

Quantum theory in finite dimensions is a special class of OPT satisfying causality and local tomography. In quantum theory, a system $A$ is associated with a Hilbert space $\H _A$, with dimension $d_A$. The space of bounded linear operators on $\H _A$ is denoted by $\mathcal{L}(\H_{A})$. The state space $\St(A)$ consists of density operators---positive trace-one operators in $\mathcal{L}(\H_{A})$---and spans the entire operator space: $\St_{\mathbb{C}}(A) = \mathcal{L}(\mathcal{H}_A)$. For a composite system $A\otimes B$, we write $\St(A \otimes B)$ and $\St(AB)$ interchangeably. Similarly, when the context is clear, we often identify the system $A$ with its operator space $\mathcal{L}(\H_A)$. Physical transformations in $\mathsf{Transf}(A\to B)$ are completely positive trace non-increasing (CPTNI) maps from $\mathcal{L}(\H_{A})$ to $\mathcal{L}(\H_{B})$, while channels in $\Chan(A \to B)$ are completely positive trace preserving (CPTP) maps from $\mathcal{L}(\H_{A})$ to $\mathcal{L}(\H_{B})$. In quantum theory, $\mathsf{Transf}_{\mathbb{C}}(A\to B)$ coincides with the space of linear maps $\mathsf{Lin}(\mathcal{L}(\H_A), \mathcal{L}(\H_B))$, where $\mathsf{Lin}(V,W)$ denotes the vector space of all linear maps from $V$ to $W$.

\subsection{Definitions and Axioms}
We now define Markov compatible\footnote{We say a multi-time state is \emph{compatible} with a causal model---Markovian, non-Markovian, or otherwise---if the state admits an explanation in terms of that model. For instance, a Markov compatible multi-time state is one for which we can assign an initial state and a sequence of local channels on the system, regardless of whether the state actually arose from a Markovian dynamics. Indeed, the same multi-time state may also be compatible with non-Markovian dynamics (see Sec.~\ref{sec:Markvo_non-Markov_compatiblity}) or other causal models~\cite{Song2025}. } multi-time states and the associated axioms, in analogy with the classical Markov chain. Let us consider a sequence of channels $\{ \E_{N-1}, \cdots, \E_2, \E_1 \} \equiv \{ \E_n \}_{n=1}^{N-1}$ leading to an evolution
\[ A_1 \to A_2 \to \cdots A_{n-1} \to A_n \to A_{n+1} \to \cdots A_N,\] 
where each channel $\E_n : A_n \to A_{n+1}$, describes the evolution from time step $n$ to $n+1$, whereby a state $\rho_{A_n} \in \St{(A_n)}$ evolves into a state $\rho_{A_{n+1}} \in \St{(A_{n+1})}$, with the relation $\rho_{A_{n+1}}=\E_n(\rho_{A_n})$. For such a sequence of channels, a multi-time state is defined as follows:

\begin{DefHighlightbox}
\begin{definition} \label{def:MTQS}
(Multi-time state (MTS)): An  $N$-time state $\tau_{A_N \cdots A_2 A_1} \in \St _\mathbb{C} (A_N \otimes \cdots \otimes A_2 \otimes A_1)$ is defined via the function `star product' ($\star$), which takes as inputs a state $\rho_{A_1}\in \St(A_1)$, and a sequence of channels $\{ \E_n \}_{n=1}^{N-1}$, with $\E_n \in \Chan(A_n\to A_{n+1})$, and results in
\begin{align}
\tau_{A_N \cdots A_2 A_1}:= \{\E_{N-1}, \cdots, \E_2, \E_1  \}\star \rho _{A_1},
\end{align}
satisfying the marginal consistency condition (Q2). That is, for $N=2$, the MTS satisfies 
\begin{align}
\Tr_{A_2} \left[ \tau_{A_2A_1} \right] = \rho_{A_1}, \ \mathrm{and} \ \Tr_{A_1} \left[  \tau_{A_2A_1} \right] = \E_1(\rho_{A_1}),
\label{eq:2tMarginal}
\end{align}
and for $N \geq 3$, the MTS satisfies
\begin{align}
\Tr_{A_N} \left[ \tau_{A_N \cdots A_2 A_1} \right] &= \{\E_{N-2}, \cdots, \E_2, \E_1  \}\star \rho _{A_1} , \ \mathrm{and} \nonumber \\ 
\Tr_{A_1} \left[ \tau_{A_N \cdots A_2 A_1} \right] &= \{\E_{N-1}, \cdots, \E_3, \E_2  \} \star \E_1(\rho_{A_1}).
\label{eq:NtMarginal}
\end{align}
\end{definition}
\end{DefHighlightbox}

Note, when restricted to quantum theory, we recover the definition of quantum MTS introduced in~\cite{lie2025multipartite}. Nevertheless, in this axiomatic construction, the `star product' ($\star$) appearing in this definition is yet to be specified and should, in principle, emerge naturally from the axioms that MTSs are required to satisfy, as we uncover below. Note that Eqs.~\eqref{eq:2tMarginal} and ~\eqref{eq:NtMarginal} are generalizations of (C2) to OPT frameworks, and constitute the minimal requirement for the MTS to be an extension of a joint probability distributions over time. These equations imply the state $\tau_{A_N \cdots A_2 A_1}$ satisfies $\Tr[\tau_{A_N \cdots A_2 A_1}] = 1$. Additionally, MTS must satisfy further properties analogous to (C3) and (C4), as stated in the axiom below.

\begin{highlightbox}
\begin{axm}[Properties of MTSs]\label{axm:PropMTQS}
The $N$-time states $\tau_{A_N \cdots A_2 A_1}$, introduced in Definition~\ref{def:MTQS}, satisfy the properties (Q3)--(Q4).\\

\noindent
(Q3) {\bf Compositionality:} For all $ 2 \le n \le N-1 $, the MTS satisfies
\begin{align}
\Tr_{A_n} &\left[ \big\{\mathcal{E}_{N-1}, \cdots,\mathcal{E}_n,  \mathcal{E}_{n-1}, \cdots , \mathcal{E}_1  \big\}\star \rho_{A_1} \right] \nonumber \\
&= \big\{\mathcal{E}_{N-1}, \cdots, \left(\mathcal{E}_n \circ \mathcal{E}_{n-1} \right), \cdots , \mathcal{E}_1  \big\}\star \rho_{A_1},
\end{align}
where $(\mathcal{E}_n \circ \mathcal{E}_{n-1}): A_{n-1} \to A_{n+1}$. \\

\noindent
(Q4) {\bf Reconstructibility:} For any arbitrary cut $A_N,\cdots,A_{n+1}$, and $A_n,\cdots,A_1$, there exists a linear map, $\mathcal{R}_n \in \mathsf{Transf_{\mathbb{C}}}(A_n\to A_N \otimes \cdots \otimes A_{n+1}\otimes A_n)$, independent of all $\rho \in \St(A_1)$ and past channels $\{\E_1,\E_2,\cdots,\E_{n-1}\}$,  such that 
\begin{equation}
\mathcal{R}_n\circ \Tr_{A_N\cdots\ A_{n+1}} \left[\{\E_n\}_{n=1}^{N-1} \star \rho_{A_1} \right]=\{\E_n\}_{n=1}^{N-1} \star \rho_{A_1},  \ \forall n.
\end{equation}
We call $\mathcal{R}_n$ the reconstructing map. Note that, since the reconstructing map $\mathcal{R}_{n=1}$ is linear, the reconstructibility implies the star product is linear in states.
\end{axm}
\end{highlightbox}

The property (Q3) signifies the compositionality of MTS, analogous to (C3) in the classical cases. It is important to emphasize the key distinction between compositionality defined by (Q3) and the one introduced in Ref.~\cite{lie2024quantum}, where the latter is restricted to quantum theory only. In particular, for the case $N = 3$, the compositionality axiom in Ref.~\cite{lie2024quantum} is formulated as follows:
\begin{align}
\Tr_{A_2}\left[ \mathcal{E}_2\star \left(\mathcal{E}_1 \star \rho_{A_1}\right)\right] {=} \left(\mathcal{E}_2{\circ}  \mathcal{E}_1\right) \star \rho_{A_1}.\label{Eq:Lie_Ng_Compositionality}  
\end{align}
Note, Eq.~\eqref{Eq:Lie_Ng_Compositionality} is a special case of our compositionality property (Q3), under the assumption:
\begin{align}
  \left\{\mathcal{E}_2,  \ \mathcal{E}_1\right\}\star \rho_{A_1}{=}\mathcal{E}_2\star \left(\mathcal{E}_1 \star \rho_{A_1}\right).~\label{Eq:Lie-Ng-Compositionality-assumption}
\end{align}
We do not impose such an assumption a priori; the structure given in Eq.~\eqref{Eq:Lie-Ng-Compositionality-assumption} is derived axiomatically at a later stage. 

For the evolution generated by a sequence of channels considered here, the corresponding MTS is expected to exhibit a property similar to the short Markov property~\cite{petz1988sufficiency,hayden2004structure}. This means that the future state can be constructed using the information available at present without requiring any information from the past, i.e., each successive evolution is memoryless and admits a reconstructing map that generates the full state from the present state. This is analogous to the property (C4) in the classical cases. Therefore, for MTSs, if we posit any cut splitting the timeline into a ``past'' ($A_{n-1},\cdots, A_1$), a ``present'' ($A_n$), and a ``future'' ($A_N, \cdots, A_{n+1}$), the full state over time $\tau_{A_N \cdots A_1}$ should be reconstructible via a linear map $\mathcal{R}_n$ acting only on the present system $A_n$. The reconstructibility property (Q4), in essence, captures the Markov compatibility of the evolution considered here.

While the property (Q4) bears some resemblance to
reconstruction in Markov chain states in quantum theory, it departs from this analogy in two important respects. First, the object to be reconstructed is not a physical state, but a state over time. As emphasized earlier, temporal states differ fundamentally from physical states; for example, a quantum MTS need not be positive semidefinite~\cite{leifer2013towards, fitzsimons2015quantum, fullwood2022quantum}. Thus, they already lie outside the standard setting of Markov states. Second, we do not impose a priori that the reconstructing map be a physical channel allowed by the underlying theory, e.g., a completely positive trace preserving (CPTP) map in quantum theory. Within our axiomatic framework applicable to OPTs, the reconstructing map is allowed to be a linear map in the span of physical transformations. One can interpret these maps as Bayesian update rule that can be used to reconstruct the future from the present.

\subsection{Temporal extension map}
We now introduce a linear map, called the \emph{temporal extension map}, which, when applied to the input physical state, reproduces the star product. The temporal extension map reflects the spirit of operational probabilistic theories~\cite{hardy2001quantum, Barrett2007, chiribella2010probabilistic, Chiribella_2011, DAriano2016} and categorical quantum theory~\cite{abramsky2007categorical,abramsky2008categorical,Coecke2017, Selby2021}, where transformations, or morphisms in the language of category theory, are treated as primitive objects of interest. In this language, a channel is one such morphism, transforming a physical state to a physical state. This perspective offers a natural foundation for generalizing multi-time states to broader operational frameworks. For instance, the temporal extension map in quantum theory takes an input multi-time state to an output multi-time state, abstracting away the Hilbert-space-specific details of the star product. Note that a map similar to the temporal extension map, in the context of quantum states over time, has appeared in earlier works~\cite{lie2024quantum, lie2025multipartite}. For two-time quantum states, it is called a \emph{bloom} of the quantum channel in Ref.~\cite{lie2025multipartite} (see Definition~5). Now, we introduce the general temporal extension map, with the definition below, and systematically develop its role within our axiomatic framework for OPTs.

\begin{DefHighlightbox}
\begin{definition}[Temporal extension map]\label{Def:tempExt}
A temporal extension map $\mathcal{C}_{\{\mathcal{E}_n\}_{n=1}^{N-1}} \in \mathsf{Transf}_{\mathbb{C}}(A_1 \to A_N\otimes{\cdots }\otimes A_2\otimes A_1)$, is parameterized by a sequence of channels $\{ \E_{N-1}, \cdots, \E_2, \E_1 \} \equiv \{ \E_n \}_{n=1}^{N-1}$ with $\E_n : A_n \to A_{n+1}$ and is defined by the properties (L0) and (L1).
\begin{itemize}
\item \noindent
(L0) \textbf{Consistency with the star product ($\star$):} Its action on a state $\rho_{A_1} \in \St(A_1)$ reproduces the star product, mapping to the multi-time state:
\begin{align}
 &\{ \E_n \}_{n=1}^{N-1} \star \rho_{A_1} = \mathcal{C}_{\{ \E_n \}_{n=1}^{N-1}} (\rho_{A_1}).  \label{Eq:def1_TempExt_map}
\end{align}

\item\noindent
(L1) \textbf{Linearity:} It is a linear map with the input vector space being $\St_{\mathbb{C}}(A_1)$, i.e., for all states $\rho_{A_1}^{(i)} \in \St(A_1)$ and complex numbers $a_i\in \mathbb{C}$,
\begin{align}
 \mathcal{C}_{{\{\E_n \}_{n=1}^{N-1}}} \left(\sum_i  a_i \ \rho_{A_1}^{(i)}\right) & = \sum_i a_i \ \mathcal{C}_{\{\E_n \}_{n=1}^{N-1}} \left(\rho_{A_1}^{(i)}\right) \nonumber \\ &= \sum_{i}a_i \ \left(\{\E_n \}_{n=1}^{N-1}\star \rho_{A_1}^{(i)}\right).  \label{Eq:linearTempExt_map} 
\end{align} 
\end{itemize}
\end{definition}
\end{DefHighlightbox}

The properties of the multi-time states in Definition~\ref{def:MTQS}, together with the Axiom~\ref{axm:PropMTQS} satisfied by the star product ($\star$), can be directly translated to the temporal extension map, given by (L2)-(L4) below.

\begin{PropertyHighlightbox}
\noindent
(L2) {\bf Marginal consistency:} For $N{=}2$, i.e., with $\E_1: A_1 \to A_2$, the corresponding temporal extension map $\C_{\E_1}: A_1 \to A_2\otimes A_1$, satisfies
\begin{align}
&\Tr_{A_2} \circ \C_{\E_1}=\id_1 \label{Eq:D2}, \\
 &\Tr_{A_1} \circ \C_{\E_1}=\E_1, \label{Eq:D1}  
\end{align}
where $\id_1: A_1 \to A_1$ is an identity map. For $N \ge 3$, the temporal extension map $\C_{\{ \E_n\}_{n=k}^{N-1}}:A_k \to A_N\otimes \cdots \otimes A_{k+1} \otimes A_k$, with $1\le k \le N-1$, satisfies
\begin{align}
 &\Tr_{A_N} \circ \C_{\{\E_n\}_{n=1}^{N-1}}=\C_{\{\E_n\}_{n=1}^{N-2}}, \label{Eq:D3}\\
 &\Tr_{A'_1} \circ \C_{\{\E_n\}_{n=1}^{N-1}}=\C_{\{\E_n\}_{n=2}^{N-1}} \ \circ \  \E_1. \label{Eq:D4}
\end{align}

\noindent
(L3) {\bf Compositionality:} For any $N$, the temporal extension map satisfies the compositionality condition $\forall \ 2\le n \le N-1$, as
\begin{align}
\Tr_{A_n} \circ \C_{\big\{\E_{N-1}, \cdots,\E_n, \  \E_{n-1}, \cdots, \E_1  \big\}} =\ \C_{\big\{\E_{N-1}, \cdots, \left(\E_n \circ  \E_{n-1}\right), \cdots, \E_1  \big\}}, 
\end{align}
where $(\mathcal{E}_n \circ \mathcal{E}_{n-1}): A_{n-1} \to A_{n+1}$.\\

\noindent
(L4) {\bf Reconstructibility:} If a temporal extension map $\C _{\{ \E_n \}_{n=1}^{N-1}}$ exists, the property (Q4) ensures:
\begin{align}
\mathcal{R}_n\circ \Tr_{A_N,\cdots,\ A_{n+1}} \circ \C_{\{ \E_n \}_{n=1}^{N-1}}=\C_{\{ \E_n \}_{n=1}^{N-1}}.
\end{align}  
\end{PropertyHighlightbox}

Properties (L2)–(L4) follow directly from Lemma~\ref{lem:local-tomography-linear} in Appendix~\ref{App:QSOTinGPT}.  For example, consider Eq.~\eqref{Eq:D3}:
\begin{align}
    \Tr_{A_N}\circ&\C_{\{\mathcal{E}\}_{n=1}^{N-1}}(\rho_{A_1})\nonumber\\
    &=\Tr_{A_N}\circ\{\E_{N-1},\cdots,\E_{2},\E_{1}\}\star\rho_{A_1}\nonumber\\
    &=\{\E_{N-2},\cdots,\E_{2},\E_{1}\}\star\rho_{A_1}\nonumber\\
  &=\C_{\{\mathcal{E}\}_{n=1}^{N-2}}(\rho_{A_1})\nonumber
\end{align}
The first equality is due to the definition of the temporal extension map. The second equality is due to the definition of the star product. And lastly, because the OPTs, considered in this work, satisfy local tomography (See Lemma~\ref{lem:local-tomography-linear}), we have
\begin{equation*}
     \Tr_{A_N}\circ\C_{\{\mathcal{E}\}_{n=1}^{N-1}}=\C_{\{\mathcal{E}\}_{n=1}^{N-2}}
\end{equation*} 
The rest of the statements in (L2)-(L4) are also true due to a similar line of arguments. Note that the properties (L2)-(L4) hold for any OPT that satisfies the assumptions of Appendix~\ref{Sec:OPT_Assumptions}, which includes quantum theory as a special case.

It is worth noticing that Axiom~\ref{axm:PropMTQS} guarantees a unique temporal extension map. In particular, the following proposition shows that the temporal extension map is well-defined and unique whenever the star product is convex-linear in states---a condition implied by reconstructibility.

\begin{PropHighlightbox}
\begin{proposition}\label{Prop:stLinEquivLinTemExt}
A unique temporal extension map $\C_{\{\E_n\}_{n=1}^{N-1}}
$, as defined in Definition~\ref{Def:tempExt}, exists if and only if the associated star product is convex-linear in the states, in accordance with Axiom~\ref{axm:PropMTQS}.
\end{proposition}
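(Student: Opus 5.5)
The plan is to prove the two directions separately. Throughout, the star product is regarded as a function $\rho\mapsto\{\E_n\}\star\rho$ defined only on the convex set $\St(A_1)$, and $\St_{\mathbb{C}}(A_1)$ is finite-dimensional. The only real content is the extension of a convex-linear function from a convex set to the complex linear span of that set.

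\emph{Necessity.} Suppose a temporal extension map $\C_{\{\E_n\}}$ satisfying (L0) and (L1) exists. For $\rho=\sum_i q_i\rho^{(i)}$ with $q_i\ge 0$ and $\sum_i q_i=1$, (L0) and (L1) give
\[
\{\E_n\}\star\rho=\C_{\{\E_n\}}(\rho)=\sum_i q_i\,\C_{\{\E_n\}}(\rho^{(i)})=\sum_i q_i\,\{\E_n\}\star\rho^{(i)}.
\]
This is exactly convex-linearity of the star product, the analogue of (C1).

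\emph{Sufficiency: construction.}
\begin{itemize}
\item Assume the star product is convex-linear, as implied by (Q4) via the linearity of $\mathcal{R}_1$ applied to $\Tr_{A_N\cdots A_2}$ of the MTS; this marginal is $\rho_{A_1}$ itself by (Q2).
\item Choose a basis of $\St_{\mathbb{C}}(A_1)$ consisting of states $\rho^{(1)},\dots,\rho^{(d)}\in\St(A_1)$. Such a basis exists because the span of a set always contains a basis drawn from that set.
\item Define $\C_{\{\E_n\}}$ on the basis by $\C(\rho^{(j)}):=\{\E_n\}\star\rho^{(j)}$ and extend it $\mathbb{C}$-linearly. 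This yields a linear map $\St_{\mathbb{C}}(A_1)\to\St_{\mathbb{C}}(A_N\otimes\cdots\otimes A_1)$, so it satisfies (L1). It lies in $\mathsf{Transf}_{\mathbb{C}}$ by the assumption, stated in Appendix~\ref{App:QSOTinGPT}, that linear maps between the spans are realized in the span of transformations; under local tomography this holds via Lemma~\ref{lem:local-tomography-linear}.
\end{itemize}

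\emph{Sufficiency: verifying (L0).} This is the main obstacle: showing that $\C(\sigma)=\{\E_n\}\star\sigma$ for \emph{every} state $\sigma$, and not only for the basis states. Expand $\sigma=\sum_j c_j\rho^{(j)}$. The coefficients $c_j$ are real, since the states lie in the real span of states, but they may be negative, so convex-linearity cannot be applied directly. The strategy is to move the negative terms to the other side:
\begin{itemize}
\item Write $\sigma+\sum_{c_j<0}|c_j|\rho^{(j)}=\sum_{c_j>0}c_j\rho^{(j)}$.
\item Taking traces shows that both sides have the same total weight $\lambda\ge 1$.
\item Dividing by $\lambda$ expresses one state, as a convex combination, in two ways.
\item Applying convex-linearity of $\star$ to both convex decompositions gives $\{\E_n\}\star\sigma+\sum_{c_j<0}|c_j|\,\{\E_n\}\star\rho^{(j)}=\sum_{c_j>0}c_j\,\{\E_n\}\star\rho^{(j)}$.
\item Rearranging yields $\{\E_n\}\star\sigma=\sum_j c_j\,\{\E_n\}\star\rho^{(j)}=\C(\sigma)$.
\end{itemize}
The same argument shows that the construction does not depend on the chosen basis.

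\emph{Uniqueness.} Any two linear maps satisfying (L0) agree on $\St(A_1)$, and hence on its span $\St_{\mathbb{C}}(A_1)$, which is the whole domain. Therefore $\C_{\{\E_n\}}$ is unique.
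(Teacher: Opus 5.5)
Your proof is correct and is essentially the paper's argument. Lemma~\ref{lem:convex-base-extension-composite} in Appendix~\ref{App:proof_stLinEquivLinTemExt} uses the same trick of moving the negative coefficients across, normalizing by $\lambda$, and invoking convex-linearity; the paper phrases it as well-definedness over arbitrary affine decompositions, with auxiliary systems $R$ that become trivial under local tomography and a separate complexification step, whereas you phrase it as agreement of a basis-defined linear map with the star product. One quibble: membership in $\mathsf{Transf}_{\mathbb{C}}$ rests on the identification $\mathsf{Transf}_{\mathbb{R}}(A\to B)\simeq\mathsf{Lin}(\St_{\mathbb{R}}(A),\St_{\mathbb{R}}(B))$ from Appendix~\ref{App:QSOTinGPT}, not on Lemma~\ref{lem:local-tomography-linear}, which only gives uniqueness.
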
 
\end{PropHighlightbox}

The proof of this proposition in quantum theory is given in Ref.~\cite[Remark~4]{lie2024quantum}, and naturally extends to our OPT framework. The latter is proven in Appendix~\ref{App:proof_stLinEquivLinTemExt}.

A particular case of interest arises when the map is the identity, $\E_1 = \id$, for $N=2$. According to Eqs.~\eqref{Eq:D2} and \eqref{Eq:D1}, the corresponding temporal extension map $\C_{{\id}}: A_1 \rightarrow A''_1\otimes A_1'$ is a broadcasting map, defined as follows.

\begin{DefHighlightbox}
\begin{definition}[Broadcasting map] \label{def:broadcasting}
A linear map $\mathcal{C}_{\mathcal{I}}\in \mathsf{Transf}_{\mathbb{C}}(A\to A''\otimes A')$, with the isomorphic input and output systems: $A{\simeq} A' {\simeq} A''$, is a broadcasting map if 
\begin{align}
\Tr_{A'}\circ \mathcal{C}_{\mathcal{I}} &=\id_{A\to A''} ,\ \ \mathrm{and} \\
\Tr_{A''}\circ \mathcal{C}_{\mathcal{I}} &=\id_{A\to A'}.
\end{align}
\end{definition}
\end{DefHighlightbox}

In the rest of the paper, we use $A'', A'$ and $A$ interchangeably whenever there is no confusion. Several no-go results~\cite{barnum1996noncommuting, barnum2007generalized, Luo2010, Rolino2025} forbid broadcasting maps from being physical channels in $\Chan(A \to A'' \otimes A'$) in quantum and broader physical theories. Nevertheless, such maps can be defined within the linear span of physical transformations.  For example, when the star product is restricted to the Jordan product quantum state over time, the map reduces to the virtual broadcasting map introduced in~\cite{parzygnat2024virtual}, thereby recovering it as a special case of the axiomatic construction defined above. We shall revisit this map later, showing that it naturally arises in the construction of multi-time states.

Having introduced the axiomatic framework for multi-time states (MTSs) in OPTs and reformulated it in terms of temporal extension maps, we now apply the framework to investigate various dynamical aspects of physical evolutions. These include the characterization and analysis of Markov and non-Markov compatible MTSs, the role of dynamical memory, signaling, and causal structures.    

\section{Markov compatible Multi-Time States \label{sec:MarkovCaopatible}}

Having defined multi-time states in Definition~\ref{def:MTQS}, we now explore their structure when generated by a sequence of channels $\{ \E_n \}_{n=1}^{N-1}$ acting on an initial state  $\rho_{A_1} \in \St(A_1)$.  Our first step is to derive the implications of Axiom~\ref{axm:PropMTQS} for such states.

\begin{PropHighlightbox}
\begin{proposition}\label{Pro:ReconAndComposition}
Axiom~\ref{axm:PropMTQS} has the following implications for multi-time states:
\begin{enumerate}
\item The Reconstructibility axiom (Q4) implies that any multi-time state $\{ \E_n \}_{n=1}^{N-1}\star \rho_{A_1}$ decomposes as a composition of temporal extension maps:
\begin{align}\label{Eq:MarkovCompositional}
\{ \E_n \}_{n=1}^{N-1} \star \rho_{A_1}  &=\mathcal{C}_{\mathcal{E}_{N-1}}\circ \cdots \mathcal{C}_{\mathcal{E}_2} \circ \mathcal{C}_{\mathcal{E}_1}(\rho_{A_1}).
\end{align}

\item Assume every multi-time state has the structure shown in Eq.~\eqref{Eq:MarkovCompositional}. Then the Compositionality axiom (Q3) further restricts the structure of every two-time state $\E \star \rho_{A_1}$:
\begin{align}\label{Eq:Compositionality}
\E \star \rho_{A_1} = \C_{\E}(\rho_{A_1}) = (\E \otimes \id) \circ \mathcal{C}_{\id}(\rho_{A_1}).
\end{align}
\end{enumerate}
\end{proposition}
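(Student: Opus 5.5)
For part 1, I will induct on $N$ using reconstructibility (Q4) at the last cut, $n=N-1$, together with marginal consistency (L2). By (L4), there is a linear $\mathcal{R}_{N-1}: A_{N-1}\to A_N\otimes A_{N-1}$, independent of $\rho_{A_1}$ and of $\E_1,\dots,\E_{N-2}$, with
\[
\C_{\{\E_n\}_{n=1}^{N-1}}=\mathcal{R}_{N-1}\circ \Tr_{A_N}\circ\C_{\{\E_n\}_{n=1}^{N-1}}=\mathcal{R}_{N-1}\circ\C_{\{\E_n\}_{n=1}^{N-2}},
\]
where the second equality is Eq.~\eqref{Eq:D3}. Here $\mathcal{R}_{N-1}$ acts only on $A_{N-1}$ and tensors with the identity on $A_{N-2}\cdots A_1$.

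The key step is identifying $\mathcal{R}_{N-1}$ with $\C_{\E_{N-1}}$. Since $\mathcal{R}_{N-1}$ does not depend on the past, I choose the past freely. Take the past channels $\E_1,\dots,\E_{N-2}$ to be replacement (prepare) channels, $\E_{N-2}(\cdot)=\Tr(\cdot)\,\sigma$ for arbitrary $\sigma\in\St(A_{N-1})$. Then trace out $A_{N-2}\cdots A_1$ in the displayed identity, using Eq.~\eqref{Eq:D4} repeatedly and local tomography (Lemma~\ref{lem:local-tomography-linear}). The left side becomes $\C_{\E_{N-1}}$ applied to the $A_{N-1}$ marginal, i.e. $\C_{\E_{N-1}}(\sigma)$. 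The right side becomes $\mathcal{R}_{N-1}(\sigma)$, since the trace on the untouched systems commutes with $\mathcal{R}_{N-1}$. Because $\sigma$ ranges over a spanning set of $\St_{\mathbb{C}}(A_{N-1})$, linearity gives $\mathcal{R}_{N-1}=\C_{\E_{N-1}}$. Substituting back and inducting on $\C_{\{\E_n\}_{n=1}^{N-2}}$ down to $N=2$ yields Eq.~\eqref{Eq:MarkovCompositional}.

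The main obstacle is justifying the marginal on the left. I need that tracing out all systems before $A_{N-1}$ from $\C_{\{\E_n\}}$ gives $\C_{\E_{N-1}}\circ\E_{N-2}\circ\cdots\circ\E_1$. I would obtain this by iterating Eq.~\eqref{Eq:D4} one earliest system at a time. I would also check that this iteration uses no assumption beyond the definition of the MTS.

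For part 2, I take $N=3$ with $\E_1=\id:A_1\to A_2$ and $\E_2=\E$. Compositionality (L3) at $n=2$ gives $\Tr_{A_2}\circ\C_{\{\E,\id\}}=\C_{\E\circ\id}=\C_{\E}$. By part 1, $\C_{\{\E,\id\}}=\C_{\E}\circ\C_{\id}$, where $\C_{\id}:A_1\to A_2\otimes A_1$ is a broadcasting map (Definition~\ref{def:broadcasting}). Now $\C_{\E}$ acts on $A_2$ and outputs $A_3\otimes A_2$, and by Eq.~\eqref{Eq:D1} we have $\Tr_{A_2}\circ\C_{\E}=\E$. Hence $\Tr_{A_2}\circ(\C_{\E}\otimes\id_{A_1})=\E\otimes\id_{A_1}$. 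Therefore $\C_{\E}=\Tr_{A_2}\circ\C_{\E}\circ\C_{\id}=(\E\otimes\id)\circ\C_{\id}$, which is Eq.~\eqref{Eq:Compositionality}. The only care needed is the system bookkeeping ($A_1\simeq A_2$ for the identity channel), with local tomography again lifting equalities on states to equalities of maps.
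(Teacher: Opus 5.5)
Your argument is correct and is essentially the paper's proof. In part~1, reconstructibility combined with iterated marginal consistency (Eqs.~\eqref{Eq:D4} and \eqref{Eq:D1}) pins the reconstructing map down to a temporal extension map. The paper does this at a general cut and recurses, while you use the last cut and induct on $N$. Part~2 is exactly the paper's choice $N=3$, $\E_1=\id$ followed by compositionality.

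The one substantive difference is how you identify $\mathcal{R}_{N-1}=\C_{\E_{N-1}}$. You use replacement channels $\operatorname{Tr}(\cdot)\,\sigma$ instead of the paper's reversible past channels. This is slightly more general, because it does not require reversible channels between $A_1,\dots,A_{N-1}$.

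You should also state explicitly that the temporal extension maps you invoke through (L2)--(L4) exist. The paper establishes this via $\mathcal{R}_1$ and Proposition~\ref{Prop:stLinEquivLinTemExt}.
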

\end{PropHighlightbox}

The proof of the proposition for OPTs is given in Appendix~\ref{App:Pro:ReconAndComposition}. Note, the Eq.~\eqref{Eq:MarkovCompositional} on reconstructibility shows that the multi-time state $\{\E_n\}_{n=1}^{N-1} \star \rho_{A_1}$ can be expressed as an overall temporal extension map acting on the initial state:
\begin{align}
    \{\E_n\}_{n=1}^{N-1} \star \rho_{A_1} &= \mathcal{C}_{\{\E_n\}_{n=1}^{N-1}} (\rho_{A_1}), \nonumber \\
    &= \mathcal{C}_{\mathcal{E}_{N-1}} \circ \cdots \circ \mathcal{C}_{\mathcal{E}_2} \circ \mathcal{C}_{\mathcal{E}_1} (\rho_{A_1}).
    \label{Eq:two_time_structure}
\end{align}
Since each temporal extension map $\mathcal{C}_{\mathcal{E}_i}$ is linear, by Proposition~\ref{Prop:stLinEquivLinTemExt}, the multi-time state must be linear in the state. The converse, however, is not true. One can construct multi-time states that are linear in states yet do not satisfy the reconstructibility axiom. We provide an explicit example of one such `temporal state' in quantum theory in Appendix~\ref{app:uniqeMTQS}. 

\subsection{Structure theorem for Markov compatible MTSs}

Eq.~\eqref{Eq:two_time_structure} can be understood as follows. Consider a three-time state $\{ \E_2, \E_1\}\star \rho_{A_1}$, where $\E_1 = \id$ and $\E_2=\E$ is an arbitrary channel. Given this, applying compositionality to Eq.~\eqref{Eq:MarkovCompositional} yields the two-time state structure. This reveals that the broadcasting map $\mathcal{C}_{\id}$ is a fundamental building block: each temporal extension map $\mathcal{C}_{\E}$ can be written as $(\E \otimes \id) \circ \mathcal{C}_{\id}$. The proposition above, therefore, directly implies the following theorem:

\begin{ThmHighlightbox}
\begin{theorem}\label{Thm:StructureThm_MarkovQSOT}
(Decomposable multi-time state.) A multi-time state $ \{ \E_n \}_{n=1}^{N-1} \star \rho_{A_1}$ satisfying Axiom~\ref{axm:PropMTQS} has the structure
\begin{align}
    &\{ \E_n \}_{n=1}^{N-1} \star \rho_{A_1}=\mathcal{E}_{N-1}\star(\cdots\star(\mathcal{E}_1\star\rho_{A_1})) \\
    &=(\E_{N-1}\otimes \id^{N-1})\circ \cdots \circ  (\E_2 \otimes \id^{\otimes 2}) \nonumber \\ 
    &\qquad \circ (\C_{\mathcal{I}} \otimes 
 \id)\circ (\E_1 \otimes \id) \circ \C_{\mathcal{I}}(\rho_{A_1}).
\end{align}
\end{theorem}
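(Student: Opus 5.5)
The plan is to combine the two parts of Proposition~\ref{Pro:ReconAndComposition}. First I take part~1 as given: the reconstructibility axiom (Q4) yields
\[
\{\E_n\}_{n=1}^{N-1}\star\rho_{A_1}=\C_{\E_{N-1}}\circ\cdots\circ\C_{\E_2}\circ\C_{\E_1}(\rho_{A_1}).
\]
Here each $\C_{\E_n}$ must be read as acting only on the most recent system $A_n$ and as the identity on the earlier systems, i.e.\ $\C_{\E_n}\otimes\id^{\otimes(n-1)}$. Each factor is the two-time temporal extension map, so $\C_{\E_n}(\sigma)=\E_n\star\sigma$. By Proposition~\ref{Prop:stLinEquivLinTemExt} it extends linearly to $\St_{\mathbb{C}}(A_n)$, and hence to the tensor product with the earlier systems, using local tomography via Lemma~\ref{lem:local-tomography-linear}. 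This linear extension is what gives meaning to the nested expression $\E_{N-1}\star(\cdots\star(\E_1\star\rho_{A_1}))$, and it establishes the first equality of the theorem.

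Next I substitute part~2 of Proposition~\ref{Pro:ReconAndComposition}, $\C_{\E}=(\E\otimes\id)\circ\C_{\id}$, into each factor. This gives
\[
\C_{\E_n}\otimes\id^{\otimes(n-1)}=(\E_n\otimes\id^{\otimes n})\circ(\C_{\id}\otimes\id^{\otimes(n-1)}).
\]
Chaining these factors produces the alternating pattern of broadcasts and channels in the statement. The innermost factors are $(\E_1\otimes\id)\circ\C_{\id}$, and the next ones are $(\E_2\otimes\id^{\otimes2})\circ(\C_{\id}\otimes\id)$. The remaining $\C_{\id}$ insertions hidden by the dots are included in the same way. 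A simple induction on $N$ organizes the bookkeeping. The base case $N=2$ is exactly Eq.~\eqref{Eq:Compositionality}. The inductive step appends one factor $\C_{\E_{N-1}}\otimes\id^{\otimes(N-2)}$, using the recursive form that part~1 provides at the cut $n=N-1$.

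The main point needing care is applying part~2 inside a longer chain. Part~2 is a statement about two-time states, and it was derived under the assumption that all MTSs have the form of Eq.~\eqref{Eq:MarkovCompositional}. Part~1 supplies exactly this assumption, so I would invoke it first and note that the hypothesis of part~2 is met. I also need the identity $\C_{\E}=(\E\otimes\id)\circ\C_{\id}$ as an equality of linear maps on $\St_{\mathbb{C}}(A)$, not only on states. That follows from Lemma~\ref{lem:local-tomography-linear} together with linearity, and it is what justifies tensoring the identity with $\id$ on earlier systems. The remaining difficulty is purely notational: keeping track of which tensor slot each $\C_{\id}$ and $\E_n$ acts on, so that the ordering $A_N\otimes\cdots\otimes A_1$ is preserved.
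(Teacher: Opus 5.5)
Your proposal is correct and follows the paper's own route: the paper obtains Theorem~\ref{Thm:StructureThm_MarkovQSOT} directly from Proposition~\ref{Pro:ReconAndComposition}, using part~1 to get the chain $\C_{\E_{N-1}}\circ\cdots\circ\C_{\E_1}$ (from the recursive identity at each cut) and part~2 to replace each factor by $(\E_n\otimes\id)\circ\C_{\id}$. Your remarks on extending to linear maps via local tomography and on tracking which tensor slot each map acts on spell out details the paper leaves implicit in that deduction.
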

\end{ThmHighlightbox}

Theorem~\ref{Thm:StructureThm_MarkovQSOT} tells us that at each time step, the construction proceeds by applying a broadcasting map $\C_{\mathcal{I}}$ followed by a channel $\E_n$ acting on one of the subsystems. The theorem leaves $\mathcal{C}_{\id}$ unspecified; its precise form can be fixed by additional axioms (see Refs.~\cite{parzygnat2024virtual, xiao2025no}, or our discussion on boxworld multi-time states in Section~\ref{sec:MTSinBoxWorld}). Beyond this freedom, however, the overall structure of multi-time states is uniquely determined. 

The structure of MTS, given in Theorem~\ref{Thm:StructureThm_MarkovQSOT}, is significant because, in quantum theory, it establishes a connection between quantum MTS and the well-established framework of multi-time processes ~\cite{chiribella09b, Pollock2018Operational, Pollock2018, Giarmatzi_2021, Goswami2021,Goswami2025}, while the latter provides a complete operational characterization of Markovian and non-Markovian dynamics. In particular, the structure in Theorem~\ref{Thm:StructureThm_MarkovQSOT} is directly related to Markovian multi-time processes and allows for a natural generalization to non-Markovian settings, which we explore next.

\section{General multi-time states and non-Markov compatibility \label{sec:GenMTQSnonMarkov}}

We have so far focused on multi-time states compatible with Markovian dynamics. We now extend this framework to the non-Markovian case. More precisely, we first formally define what it means for a multi-time state to be compatible with a general multi-time process, including those exhibiting non-Markovianity. We then introduce a new axiom that, along with Axiom~\ref{axm:PropMTQS}, that allows us to determine the structure of such states. 

To build intuition, consider a classical analog. Suppose we have `system' random variables $X,Y,Z$ and `environment' random variables $\Lambda_1, \Lambda_2, \Lambda_3$. The dynamics begin with an initial joint distribution $P(X,\Lambda_1)$, followed by the evolutions corresponding to stochastic matrices $P(Y,\Lambda_2|X,\Lambda_1)$ and $P(Z,\Lambda_3|Y,\Lambda_2)$. The distribution over the observed system variables obtained by marginalizing over the environment variables:
\begin{align}
&P(X,Y,Z)=\sum_{\lambda_1,\lambda_2,\lambda_3} P(Z, \lambda_3| Y,\lambda_2 )P(Y, \lambda_2| X,\lambda_1 )  P(X, \lambda_1 ).
\end{align}
Here, $\lambda_i$ denotes a specific value of the random variable $\Lambda_i$, and the sum is over all possible values. Note, the resulting distribution may be non-Markovian when $P(XYZ)\neq P(Z|Y)P(Y|X)P(X)$. However, this distribution can be `Markovianized' by including the hidden environment variables, forming the Markov chain $(X,\Lambda_1) \to (Y,\Lambda_2) \to (Z,\Lambda_3)$. 

This classical picture generalizes to OPTs as follows. We start with a Markov compatible multi-time state defined over both the system and the environment. In other words, the system considered in Definition~\ref{def:MTQS} is now taken to include both the original system and its environment, which jointly undergo a sequence of global channels. Subsequently, we discard all the environmental degrees of freedom. Thus, we have a formal definition of the general multi-time state below.

\begin{DefHighlightbox}
\begin{definition}[General multi-time states]\label{Def:QSOTGen}
Given an initial state $\sigma\in \St(A_1\otimes E_1)$, and a set of channels $\{\mathcal{F}_{n} \in \Chan (A_n\otimes E_n\to A_{n+1}\otimes E_{n+1})\}_{n=1}^{N-1}$, a general multi-time state $\tau\in \St_{\mathbb{C}}(A_N\otimes A_{N-1}\otimes \cdots \otimes A_1)$ is defined as
\begin{align}
 \tau_{A_N\cdots A_2A_1}=\Tr_{E_N\cdots E_2E_1} \left(\{\mathcal{F}_{n}\}_{n=1}^{N-1} \star \sigma \right).   
\end{align}
\end{definition}
\end{DefHighlightbox}

 This general setting naturally accommodates processes with memory, commonly referred to as \emph{non-Markovian processes}. Such memory arises when information about the system is transferred to the environment during an earlier interaction and subsequently returns to affect the system dynamics at a later time. This backflow of information gives rise to temporal correlations between the system's local intervention outputs across different time steps, even when an intermediate intervention fully resets the state. 

Since Definition~\ref{Def:QSOTGen} involves joint system-environment evolutions, followed by discarding the environment, we need an axiom that specifies how the star product behaves under such discarding of a subsystem. In particular, consider the simplest scenario with two time steps, in which both the initial physical state and the joint evolution take a product form. Let the initial state be $\sigma_{A_1E_1}=\rho\otimes \tau \in \St(A_1 \otimes E_1)$ and let the channel be $\mathcal{F}=\E \otimes \N$ with $\E:A_1\to A_2$ acting on the system and $\N:E_1\to E_2$ acting on the environment. As a result, the final state become $\E (\rho)\otimes \N(\rho) \in \St(A_2 \otimes E_2)$.  The question is what do we expect when we compute $\Tr_{E_2E_1} \big[ (\mathcal{E} \otimes \mathcal{N}) \star (\rho \otimes \tau) \big]$.

To gain insight, we note that, while $A_1$ and $E_1$ (or $A_2$ and $E_2$) are spacelike separated, the $A_1$ and $A_2$ (or $E_1$ and $E_2$) are timelike separated. Also, the composites $A_1E_1$ and $A_2E_2$ are timelike separated. For the initial state $\rho \otimes \tau$, any physical transformation, such as a measurement or instrument, performed on $A_1$ (or on $E_1$) does not affect the state of $E_1$ (or of $A_1$). The same holds for the final state $\E(\rho) \otimes \N(\tau)$. Moreover, for uncorrelated states and independent channels, any physical transformation performed on $A_1$ (or on $E_1$) does not influence the state of $E_2$ (or of $A_2$). An analogous property holds in the reverse temporal direction: any physical transformation on $A_2$ (or on $E_2$) does not affect the state of $E_1$ (or of $A_1$). This we call spatio-temporal marginal consistency. Such marginal consistency conditions should likewise hold for multi-time states spanning two or more spacetime events, as formalized in the following axiom imposing a minimal set of constraints\footnote{The term "minimal" is used because a stronger axiom can be considered: $ (\E \otimes \N)\star (\rho \otimes \tau)=(\E \star \rho) \otimes (\N \star \tau)$. This holds in classical theory for independent initial joint probabilities and stochastic maps, and many quantum generalizations---such as the Leifer-Spekkens~\cite{leifer2013towards}, uncorrelated~\cite{parzygnat2023time}, Ohya compound~\cite{ohya1983note}, left, and right bloom star products~\cite{parzygnat2023time}, and boxworld multi-time states in Section~\ref{sec:MTSinBoxWorld}---but fails for the star product considered in Eq.~\eqref{eq:StarProduct}. Axiom~\ref{axm:spatial-marginal-consistency}, on the other hand, appears to hold for all these star products.}.

\begin{highlightbox}
\begin{axm}[Spatio-Temporal Marginal Consistency]
\label{axm:spatial-marginal-consistency}
For all local channels $\mathcal{E}: A_1 \to A_2$ and $\mathcal{N}: E_1 \to E_2$, and uncorrelated states $\rho_{A_1}$ and $\tau_{E_1}$, corresponding to spacelike separated systems $A_1$ and $E_1$, the spatio-temporal state $\gamma_{A_2A_1E_2E_1}=(\mathcal{E} \otimes \mathcal{N}) \star (\rho \otimes \tau)$ must satisfy the following:
\begin{align}
\gamma_{A_2A_1E_2} = \Tr_{E_1}\big[\gamma_{A_2A_1E_2E_1}] &= (\mathcal{E} \star \rho)_{A_2A_1} \otimes \mathcal{N}(\tau)_{E_2}, \label{eq:A1A2E2} \\
\gamma_{A_2A_1E_1} = \Tr_{E_2}\big[\gamma_{A_2A_1E_2E_1}\big] &= (\mathcal{E} \star \rho)_{A_2A_1} \otimes \tau_{E_1}, \label{eq:A1A2E1}\\
\gamma_{A_2E_2E_1}  = \Tr_{A_1}\big[\gamma_{A_2A_1E_2E_1}\big]&= \mathcal{E}(\rho)_{A_2} \otimes (\mathcal{N} \star \tau)_{E_2E_1}, \label{eq:A2E1E2} \\
\gamma_{A_1E_2E_1}  = \Tr_{A_2}\big[\gamma_{A_2A_1E_2E_1}\big]&= \rho_{A_1} \otimes (\mathcal{N} \star \tau)_{E_2E_1}. \label{eq:A1E1E2}
\end{align}
\end{axm}
\end{highlightbox}

For a star product that is linear in states, Proposition~\ref{Prop:stLinEquivLinTemExt} guarantees a unique temporal extension map, letting us re-express Axiom~\ref{axm:spatial-marginal-consistency} via Proposition~\ref{prop:TemporalExtProdChan} below.

\begin{PropHighlightbox}
\begin{proposition} \label{prop:TemporalExtProdChan}
When the star product ($\star$) is convex-linear in states and satisfies spatial marginal consistency axiom (Axiom~\ref{axm:spatial-marginal-consistency}) then for all channels $\E:A_1\to A_2$ and $\N : E_1\to E_2$, the corresponding temporal extension map $\C_{\E\otimes \N}: {A_1\otimes E_1\to A_2\otimes A_1\otimes E_2\otimes E_1}$ satisfies
\begin{align}
    &\Tr_{E_1}\circ \C _{\E\otimes \mathcal{N}} = \C _{\E} \otimes \mathcal{N}_{E_1\to E_2},\\
    & \Tr_{E_2}\circ \C _{\E\otimes \mathcal{N}}= \C _{\E} \otimes \id_{E_1\to E_1},\\
   & \Tr_{A_1}\circ \C _{\E\otimes \mathcal{N}} = \E_{A_1\to A_2} \otimes \C _{\mathcal{N}},\\
   & \Tr_{A_2}\circ \C _{\E\otimes \mathcal{N}} = \id_{A_1\to A_1} \otimes \C _{\mathcal{N}}.
\end{align}
\end{proposition}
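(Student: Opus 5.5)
The plan is to turn each of the four state-level identities of Axiom~\ref{axm:spatial-marginal-consistency} into an identity between linear maps. We first establish the identities on product inputs. We then extend them to all of $\St_{\mathbb{C}}(A_1\otimes E_1)$ using linearity together with local tomography, exactly as was done when deriving (L2)--(L4) from Lemma~\ref{lem:local-tomography-linear}.

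\emph{Step 1 (existence of the maps).} Since the star product is convex-linear in states, Proposition~\ref{Prop:stLinEquivLinTemExt} gives unique linear temporal extension maps. These are $\C_{\E\otimes\N}$ on $\St_{\mathbb{C}}(A_1\otimes E_1)$, $\C_{\E}$ on $\St_{\mathbb{C}}(A_1)$, and $\C_{\N}$ on $\St_{\mathbb{C}}(E_1)$. By (L0), $\C_{\E\otimes\N}(\rho\otimes\tau)=(\E\otimes\N)\star(\rho\otimes\tau)=\gamma_{A_2A_1E_2E_1}$, and likewise $\C_\E(\rho)=\E\star\rho$ and $\C_\N(\tau)=\N\star\tau$.

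\emph{Step 2 (identities on product states).} Take the first identity as the model case. Combining Eq.~\eqref{eq:A1A2E2} with Step 1 gives
\begin{align}
\Tr_{E_1}\circ\C_{\E\otimes\N}(\rho\otimes\tau)=\C_\E(\rho)\otimes\N(\tau)=(\C_\E\otimes\N)(\rho\otimes\tau)
\end{align}
for all $\rho\in\St(A_1)$ and $\tau\in\St(E_1)$. The remaining three identities follow in the same way from Eqs.~\eqref{eq:A1A2E1}, \eqref{eq:A2E1E2} and \eqref{eq:A1E1E2}. In those cases $\tau_{E_1}=\id(\tau)$, $\E(\rho)$ and $\rho=\id(\rho)$ are read as the action of the corresponding local map.

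\emph{Step 3 (extension to all inputs).} Both sides of each identity are linear maps on $\St_{\mathbb{C}}(A_1\otimes E_1)$. Partial trace is a linear map, so its composition with the linear map $\C_{\E\otimes\N}$ is linear, and the parallel composition of linear maps is linear. Local tomography implies that product states $\rho\otimes\tau$ span $\St_{\mathbb{C}}(A_1\otimes E_1)=\St_{\mathbb{C}}(A_1)\otimes\St_{\mathbb{C}}(E_1)$. Two linear maps that agree on a spanning set agree everywhere, so each state identity lifts to the stated operator identity. This is the same argument as Lemma~\ref{lem:local-tomography-linear}.

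The main obstacle, though a mild one, is Step 3. Convex-linearity of the star product must genuinely extend $\C_{\E\otimes\N}$ linearly to the whole complex span of the joint state space, which is what Proposition~\ref{Prop:stLinEquivLinTemExt} guarantees. Local tomography must then ensure that product states suffice; without it, entangled-type states outside the product span could violate the identities. Everything else reduces to bookkeeping of system labels.
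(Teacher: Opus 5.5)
Your proposal is correct and follows essentially the same route as the paper. The paper's proof is the one-line observation that the four identities hold on product inputs $\rho\otimes\tau$, by (L0) together with Axiom~\ref{axm:spatial-marginal-consistency}, and then lift to operator identities by Lemma~\ref{lem:local-tomography-linear}; your Steps 1--3 simply unpack that lemma.
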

\end{PropHighlightbox}

The proof immediately follows from Lemma~\ref{lem:local-tomography-linear} in Appendix~\ref{App:QSOTinGPT}. In the next subsection we show that the property of the joint broadcasting map $\C_{\id \otimes \id}$, obtained as a special case of Proposition~\ref{prop:TemporalExtProdChan} with $\mathcal{E} = \id$ and $\mathcal{N} = \id$, together with Axiom~\ref{axm:PropMTQS}, suffices to fully characterize the structure of general multi-time quantum states.

\begin{figure}
    \centering
    \includegraphics[width=\columnwidth]{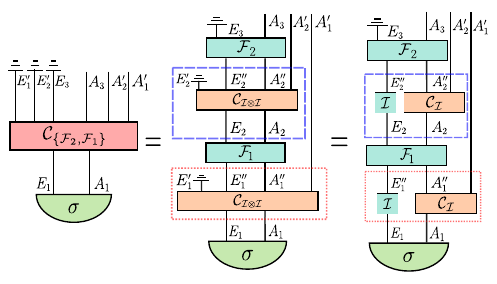}
    \caption{\textbf{Proof sketch for Theorem 2:} We show a pictorial proof sketch of Theorem~\ref{Thm:NMQSOT} for $N=3$. In the first figure, for a state-linear star product, we have $\Tr_{E_3E_2'E_1'}\big[\{\mathcal{F}_2,\mathcal{F}_1\}\star \sigma \big] =\Tr_{E_3E_2'E_1'} \big[\mathcal{C}_{\{\mathcal{F}_2,\mathcal{F}_1\}}(\sigma)\big]$, i.e., we express the left-hand side in terms of the temporal extension map $\mathcal{C}_{\{\mathcal{F}_2,\mathcal{F}_1\}}$. If the properties of Axiom~\ref{axm:PropMTQS} are respected, then the first equality follows from Theorem~\ref{Thm:StructureThm_MarkovQSOT} applied to the joint system-environment multi-time state. The second equality follows from Proposition~\ref{prop:TemporalExtProdChan} (a consequence of Axiom~\ref{axm:spatial-marginal-consistency}), with all product channels taken as identity channels, $\id$. We illustrate this by comparing the dashed blue and dotted red lines in the second and third figures, where Proposition~\ref{prop:TemporalExtProdChan} is applied. The identity maps in the environment compose the state $\sigma$ and the channels $\mathcal{F}_1$ and $\mathcal{F}_2$ through their environment degrees of freedom. When restricted to quantum theory, this corresponds precisely to the multi-time process in Eq.~\eqref{Eq:semicausality_twoT_process} for $N=3$.}
    \label{fig:Intuition_Thm2}
\end{figure}

\subsection{Structure theorem for general multi-time states}
The structure of a general multi-time state is given in the following theorem. 

\begin{ThmHighlightbox}
\begin{theorem}\label{Thm:NMQSOT}
Given a sequence of channels $\{\mathcal{F}_n\}_{n=1}^{N-1}$, with $\mathcal{F}_{n} \in \Chan (A_n\otimes E_n\to A_{n+1}\otimes E_{n+1})$, and an initial state $\sigma \in \St{(A_1E_1)}$, the corresponding general multi-time state $\tau_{A_N\cdots A_2A_1}$, respecting Axioms~\ref{axm:PropMTQS} and \ref{axm:spatial-marginal-consistency} satisfies
\begin{align}
&\tau_{A_N\cdots A_2A_1}=\Tr_{E_N\cdots E_1}\big[\{\mathcal{F}_n\}_{n=1}^{N-1}\star \sigma \big] \nonumber \\
&=\Tr_{E_{N}}\circ \mathcal{F}_{N-1}\circ (\id_{E_{N-1}} \otimes \C_{\id} ){\circ \cdots \circ} \mathcal{F}_1\circ (\id_{E_1} \otimes \C_{\id} )\circ \sigma.   
    \end{align}
\end{theorem}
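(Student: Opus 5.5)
The plan is to view the joint system--environment evolution as a Markov compatible evolution on the composite systems $A_n\otimes E_n$, and apply Theorem~\ref{Thm:StructureThm_MarkovQSOT} there. The joint broadcasting maps that appear will then be simplified using Proposition~\ref{prop:TemporalExtProdChan}.

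First, since Axiom~\ref{axm:PropMTQS} holds for the composite system, Theorem~\ref{Thm:StructureThm_MarkovQSOT} gives
\begin{align*}
\{\mathcal{F}_n\}_{n=1}^{N-1}\star\sigma
=(\mathcal{F}_{N-1}\otimes\id)\circ(\C_{\id_{AE}}\otimes\id)\circ\cdots\circ(\mathcal{F}_1\otimes\id)\circ\C_{\id_{AE}}(\sigma).
\end{align*}
Here $\C_{\id_{AE}}=\C_{\id_A\otimes\id_E}:A_nE_n\to A_nE_nA_n'E_n'$ is the joint broadcasting map. At each step one copy $A_nE_n$ is fed into $\mathcal{F}_n$, and the other copy $A_n'E_n'$ is left untouched until the end.

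Second, I trace out all environment copies. Every primed environment output $E_n'$ is traced out and is acted on only by identities afterwards. By linearity, and because partial traces commute with maps on other tensor factors, the trace $\Tr_{E_n'}$ can be moved directly onto the $n$-th broadcasting map. Proposition~\ref{prop:TemporalExtProdChan} with $\E=\N=\id$ gives
\begin{align*}
\Tr_{E_n'}\circ\C_{\id\otimes\id}=\C_{\id_A}\otimes\id_{E_n}.
\end{align*}
Here I use the second of the four identities, read with the roles of the two environment outputs fixed so that the traced copy is the one not sent into $\mathcal{F}_n$. Substituting this at every step replaces each joint broadcasting map by $\id_{E_n}\otimes\C_{\id}$. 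The untraced $E_n$ wire goes into $\mathcal{F}_n$, and the final environment $E_N$ is discarded by $\Tr_{E_N}$. The $A_n'$ copies are carried by identities to the output and give the factors $A_N\cdots A_1$. This yields exactly the stated expression. The ordering of the tensor factors, and which output of $\C_{\id}$ feeds $\mathcal{F}_n$, are fixed by the convention of Theorem~\ref{Thm:StructureThm_MarkovQSOT}.

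The step I expect to need the most care is justifying that $\Tr_{E_n'}$ can be pushed through the later maps. The later maps act on the $A_{n+1}E_{n+1}$ branch and on other broadcast copies, but never on $E_n'$. I will check this by an explicit wire-tracking argument, or by induction on $N$, and use Lemma~\ref{lem:local-tomography-linear} to pass from statements on states to statements on linear maps. A second subtlety is that Proposition~\ref{prop:TemporalExtProdChan} controls $\C_{\id\otimes\id}$ only after one environment output is traced. Our situation needs exactly this and nothing more, because every $E_n'$ is discarded. For $N=2$ the argument reduces to a single application, and the general case follows inductively, as in Fig.~\ref{fig:Intuition_Thm2}.
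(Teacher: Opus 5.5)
Your proposal is correct and follows the paper's own argument. You apply Theorem~\ref{Thm:StructureThm_MarkovQSOT} to the joint system--environment chain, move each environment trace onto its broadcasting map, and use Proposition~\ref{prop:TemporalExtProdChan} with $\E=\N=\id$ to replace $\Tr_{E_n'}\circ\C_{\id\otimes\id}$ by $\C_{\id}\otimes\id_{E_n}$; this is exactly the argument of Fig.~\ref{fig:Intuition_Thm2}, extended to general $N$. The only slip is cosmetic. The identity you need is the first one, $\Tr_{E_1}\circ\C_{\E\otimes\N}=\C_{\E}\otimes\N$, because there $E_1$ labels the retained copy that is not fed into the channel, whereas the second identity traces the copy that feeds $\mathcal{F}_n$; for $\E=\N=\id$ both have the same form, so nothing breaks.
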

\end{ThmHighlightbox}

We have provided a pictorial proof sketch of Theorem~\ref{Thm:NMQSOT} in Fig.~\ref{fig:Intuition_Thm2} for $ N=3$. The proof can be extended to general $N$ by a simple inductive argument. As mentioned earlier, we see that Theorem~\ref{Thm:StructureThm_MarkovQSOT} (alternatively, Corollary~\ref{Cor:MarkovProc_StTime}) is a special case of Theorem~\ref{Thm:NMQSOT} when restricted to Markovian processes.

\subsection{Semicausality and non-Markov compatibility \label{sec:CausalSignalNonMarkov}}
Our axiomatic framework not only characterizes the structure of multi-time states but also provides a sufficient condition for certifying signaling between systems. To build intuition, first consider the simplest scenario: a single input-output channel $\E:A\to B$. In the extreme case where the channel $\E$ does not allow any signaling from $A$ to $B$, it must be a \emph{constant channel}: $\E(\rho)=\Tr(\rho)\phi$ for all $\rho \in \St(A)$, i.e., it outputs a fixed state $\phi$ independent of the input. For such a channel, if Axiom~\ref{axm:PropMTQS} is satisfied, then due to Theorem~\ref{Thm:StructureThm_MarkovQSOT}, the corresponding two-time state takes the form  
\begin{align}
 \E \star \rho = \phi \otimes \Tr_{A''}[\mathcal{C}_{\id}(\rho)] = \phi \otimes \rho.   
\end{align}
This implies that, when there is no signaling, the two-time state is always a well-defined state (in fact, a product state). Conversely, if the output of the star product fails to be an element of the state space $\St(A\otimes B)$, for instance, a negative operator in the quantum theory, then the channel must allow signaling from $A$ to $B$. This line of reasoning has been fruitfully explored in the literature on pseudo-density matrices~\cite{fitzsimons2015quantum,Liu2025, song2024causal} in quantum theory, where negativity serves as a sufficient condition of signaling. The origin of such negativity lies in the nature of the broadcasting map $\mathcal{C}_{\id}$. While the no-broadcasting theorem~\cite{barnum1996noncommuting, barnum2007generalized} precludes the existence of a completely positive trace preserving (CPTP) broadcasting map, a virtual broadcasting map, which may not be completely positive, can still be defined~\cite{parzygnat2024virtual}. In quantum theory, such maps generally output operators that are trace-one but not necessarily positive, thereby introducing negativity in the resulting multi-time state.

This intuition extends naturally to bipartite channels, where the notion of signaling becomes more nuanced. Consider a bipartite channel $\mathcal{F}:A_I B_I\rightarrow A_O B_O$. A particularly important class of these bipartite channels is \emph{semicausal channels}~\cite{Preskill_casual, Eggeling2002Semicuasal, Kissinger2019} which forbids signaling from one input to the other output. Here, without loss of generality, we focus on channels that prevent signaling from $A_I$ to $B_O$ ($A_I\not \to B_O$).

\begin{figure}
\centering
\includegraphics[width=\columnwidth]{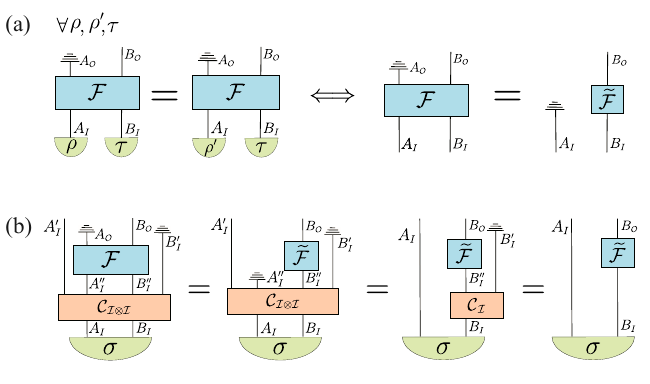}
\caption{\textbf{Consequence of semicausality on MTS.} (a) A semicausal channel $\mathcal{F}:A_IB_I\rightarrow A_O B_O$ that forbids signaling from $A_I$ to $B_O$, denoted as $A_I\not \to B_O$. Operationally, $A_I\not \to B_O$ implies for all $\rho, \rho' \in \St(A_I)$ and $\tau \in \St(B_I)$, we have $\Tr_{A_O}\circ \mathcal{F}(\rho \otimes \tau) = \Tr_{A_O}\circ \mathcal{F}(\rho' \otimes \tau)$~\cite{Preskill_casual}. Equivalently, in local tomographic OPTs, such a channel admits a marginal channel $\widetilde{\mathcal{F}}_{B_I\to B_O}$ satisfying $\Tr_{A_O}\circ \mathcal{F} = \Tr_{A_I}\otimes \widetilde{\mathcal{F}}$~\cite{schumacher2004locality,Ormrod2023}. (b) Pictorial proof of Theorem~\ref{thm:semicausal}. For a semicausal channel $\mathcal{F}$, if the star product satisfies Axioms \ref{axm:PropMTQS}, and \ref{axm:spatial-marginal-consistency}, then for an arbitrary input state $\sigma\in \St(A_I\otimes B_I)$, the reduced operator $\Tr_{A_OB_I'} ( \mathcal{F} \star \sigma)$ simplifies to $(\id \otimes \widetilde{\mathcal{F}})(\sigma)$, which is a valid state. (b) In the figure on the lower panel, the first equality is due to the semicausality of $\mathcal{F}$, the second equality is due to the spatio-temporal marginal consistency Axiom, and the third equality is due to the property of the broadcasting map. }
\label{fig:QSOT-Semicausality}
\end{figure}

As shown in Fig.~\ref{fig:QSOT-Semicausality}(a), by definition~\cite[Definition 2.9]{Kissinger2019} a semicausal channel admits a marginal channel $\widetilde{\mathcal{F}}:B_I\to B_O$ satisfying:
\begin{align}
\Tr_{A_O}\circ \mathcal{F}_{A_IB_I\to A_OB_O} = \Tr_{A_I}\otimes \widetilde{\mathcal{F}}_{B_I\to B_O}.
\end{align}

In local tomographic OPTs, the above definition is equivalent~\cite{Eggeling2002Semicuasal, Hsieh2022, Ormrod2023} to the operational scenario shown in Fig.~\ref{fig:QSOT-Semicausality}(a). For any fixed input state $\tau\in \St(B_I)$, varying the input state at $A_I$ leaves the reduced output state at $B_O$ unchanged. If this holds for all possible input states, the channel is semicausal, i.e., $A_I\not \to B_O$:
\begin{align}
    \Tr_{A_O}[\mathcal{F}(\rho \otimes \tau)] {=}     \Tr_{A_O}[\mathcal{F}(\rho' \otimes \tau)], \ \forall \rho,\rho' {\in }\St(A_I), \tau {\in} \St(B_I).
\end{align}

Now consider the general two-time state $\mathcal{F}\star\sigma \in A_I'\otimes A_O\otimes B_O\otimes B_I'$ (where $A_I\simeq A_I'$ and $B_I\simeq B_I'$), arising from the semicausal ($A_I\not \to B_O$) channel $\mathcal{F}:{A_I\otimes B_I\to A_O\otimes B_O}$ with an initial state $\sigma\in \St(A_I\otimes B_I)$ (see Fig.~\ref{fig:QSOT-Semicausality}(b)). A natural question then arises: if we retain only the input system $A_I'$ and the output system $B_O$, discarding the remaining systems, what property does the reduced state $\Tr_{A_OB_I'}(\mathcal{F}\star\sigma)$ satisfy? This question is of particular interest because it investigates whether the causal structure imposed by semicausality manifests as a constraint on the resulting two-time state, and conversely, whether such constraints can be used to detect signaling in OPTs. We answer this in the following theorem.
\begin{ThmHighlightbox}
\begin{theorem}\label{thm:semicausal}
Let $\mathcal{F}:A_I\otimes B_I\to A_O\otimes B_O$ be a semicausal channel with the associated marginal channel $\widetilde{\mathcal{F}}:B_I\to B_O$ satisfying $\Tr_{A_O}\circ \mathcal{F} = \Tr_{A_I}\otimes \widetilde{\mathcal{F}}$, and let $\sigma\in \St(A_I\otimes B_I)$ be an arbitrary initial state. If the two-time state $\mathcal{F}\star\sigma$ satisfies Axioms~\ref{axm:PropMTQS} and \ref{axm:spatial-marginal-consistency}, then
    \begin{align}
        \Tr_{A_OB_I'}(\mathcal{F} \star \sigma)= \big[\id_{A_I} \otimes \widetilde{\mathcal{F}}\big](\sigma)
    \end{align}
    is a physical state in $\St(A_I\otimes B_O)$.
\end{theorem}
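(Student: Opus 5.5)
The plan is to rewrite $\mathcal{F}\star\sigma$ through its temporal extension map and then peel off the discarded systems one at a time. By Proposition~\ref{Pro:ReconAndComposition} (the case $N=2$ of Theorem~\ref{Thm:StructureThm_MarkovQSOT}) applied to the composite system $A_I\otimes B_I$, we have
\begin{align}
\mathcal{F}\star\sigma=(\mathcal{F}\otimes\id_{A_I'B_I'})\circ\C_{\id_{A_I}\otimes\id_{B_I}}(\sigma),
\end{align}
where $\C_{\id\otimes\id}:A_IB_I\to A_I''B_I''\otimes A_I'B_I'$ is the joint broadcasting map. The channel $\mathcal{F}$ acts on the primed-twice copies $A_I''B_I''$.

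First I trace out $A_O$. Since $\Tr_{A_O}$ acts only on the output of $\mathcal{F}$, it commutes with the identity on $A_I'B_I'$. Semicausality, $\Tr_{A_O}\circ\mathcal{F}=\Tr_{A_I}\otimes\widetilde{\mathcal{F}}$, then gives
\begin{align}
\Tr_{A_O}(\mathcal{F}\star\sigma)=(\widetilde{\mathcal{F}}\otimes\id_{A_I'B_I'})\circ\Tr_{A_I''}\circ\C_{\id\otimes\id}(\sigma).
\end{align}
Next I apply Proposition~\ref{prop:TemporalExtProdChan} with $\E=\id_{A_I}$ and $\N=\id_{B_I}$, taking $A$ as the system and $B$ as the environment. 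The relation $\Tr_{A_2}\circ\C_{\E\otimes\N}=\id_{A_1\to A_1}\otimes\C_{\N}$ reads $\Tr_{A_I''}\circ\C_{\id\otimes\id}=\id_{A_I\to A_I'}\otimes\C_{\id_{B_I}}$, where $\C_{\id_{B_I}}:B_I\to B_I''\otimes B_I'$ is the single-system broadcasting map. The only thing to check here is that the tensor factors and primes are matched correctly: the remaining $A_I$ copy is the one kept ($A_I'$), and the trace removes the copy that fed into $\mathcal{F}$. I expect this bookkeeping, i.e. matching the labels of Proposition~\ref{prop:TemporalExtProdChan} to the copies in the statement, to be the one delicate step. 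Since the marginal relations are symmetric in the two copies, either labelling works.

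Finally I trace out $B_I'$. This acts on the $B_I'$ output of $\C_{\id_{B_I}}$ and commutes with $\widetilde{\mathcal{F}}$, which acts on $B_I''$. Definition~\ref{def:broadcasting} gives $\Tr_{B_I'}\circ\C_{\id_{B_I}}=\id_{B_I\to B_I''}$, so
\begin{align}
\Tr_{A_OB_I'}(\mathcal{F}\star\sigma)=(\id_{A_I}\otimes\widetilde{\mathcal{F}})(\sigma).
\end{align}
For physicality, $\widetilde{\mathcal{F}}$ is a channel by the definition of semicausality: it is a marginal of the channel $\mathcal{F}$, obtained as $\Tr_{A_O}\circ\mathcal{F}\circ(\alpha\otimes\cdot)$ for any state $\alpha\in\St(A_I)$. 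In an OPT, parallel composition of physical transformations is physical, so $\id\otimes\widetilde{\mathcal{F}}$ is a channel. It therefore maps $\sigma\in\St(A_I\otimes B_I)$ into $\St(A_I\otimes B_O)$. All the equalities between maps are justified by linearity together with local tomography (Lemma~\ref{lem:local-tomography-linear}).
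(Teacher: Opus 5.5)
Your proof is correct and follows the paper's own argument (Fig.~\ref{fig:QSOT-Semicausality}(b)) step for step. You write $\mathcal{F}\star\sigma=(\mathcal{F}\otimes\id)\circ\mathcal{C}_{\id\otimes\id}(\sigma)$, use semicausality to replace $\Tr_{A_O}\circ\mathcal{F}$, apply spatio-temporal marginal consistency via Proposition~\ref{prop:TemporalExtProdChan} with identity channels, and finish with the marginal property of the single-system broadcasting map; your added check that $\widetilde{\mathcal{F}}$ is itself a channel makes explicit the physicality claim the paper leaves implicit.
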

\end{ThmHighlightbox}
The proof is illustrated in Fig.~\ref{fig:QSOT-Semicausality}(b). This theorem shows that the reduced state $\Tr_{A_O B_I'} (\mathcal{F} \star \sigma)$ is always a physical state, as it is obtained by applying the marginal channel $\widetilde{\mathcal{F}}$ to one subsystem of the input state $\sigma$. Consequently, the contrapositive of Theorem~\ref{thm:semicausal} provides a sufficient condition to certify signaling: 
\begin{CorHighlightbox}
    
\begin{corollary}\label{cor:semicausality}
Under the same assumptions as Theorem~\ref{thm:semicausal}, if  $\Tr_{A_O B_I'} (\mathcal{F} \star \sigma)$ is not a physical state in the underlying OPT, then the channel $\mathcal{F}$ must allow signaling from $A_I$ to $B_O$.
\end{corollary}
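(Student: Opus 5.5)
The corollary is the contrapositive of Theorem~\ref{thm:semicausal}, so the plan is to argue by contradiction and let that theorem do all the work. We fix a channel $\mathcal{F}:A_I\otimes B_I\to A_O\otimes B_O$ and an initial state $\sigma\in\St(A_I\otimes B_I)$. We assume the star product satisfies Axioms~\ref{axm:PropMTQS} and \ref{axm:spatial-marginal-consistency}, and that $\Tr_{A_OB_I'}(\mathcal{F}\star\sigma)\notin\St(A_I\otimes B_O)$. Suppose, for contradiction, that $\mathcal{F}$ does not allow signaling from $A_I$ to $B_O$, i.e.\ $A_I\not\to B_O$.

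The first step is to turn this operational no-signaling assumption into the form Theorem~\ref{thm:semicausal} needs. That form is the existence of a marginal channel $\widetilde{\mathcal{F}}:B_I\to B_O$ with $\Tr_{A_O}\circ\mathcal{F}=\Tr_{A_I}\otimes\widetilde{\mathcal{F}}$.

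We would construct this channel explicitly. Fix any $\rho_0\in\St(A_I)$ and set $\widetilde{\mathcal{F}}(\tau):=\Tr_{A_O}[\mathcal{F}(\rho_0\otimes\tau)]$.
\begin{itemize}
\item $\widetilde{\mathcal{F}}$ is a physical channel, since it is a composition of state preparation, $\mathcal{F}$, and discarding.
\item By the no-signaling condition, $\Tr_{A_O}[\mathcal{F}(\rho\otimes\tau)]=\widetilde{\mathcal{F}}(\tau)$ for all product inputs $\rho\otimes\tau$.
\item Both sides of the required identity are linear maps on $\St_{\mathbb{C}}(A_I\otimes B_I)$, and this space is spanned by product states by local tomography. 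Lemma~\ref{lem:local-tomography-linear} therefore upgrades the product-state identity to $\Tr_{A_O}\circ\mathcal{F}=\Tr_{A_I}\otimes\widetilde{\mathcal{F}}$.
\end{itemize}
This is exactly the equivalence already quoted in the text from~\cite{Eggeling2002Semicuasal,Ormrod2023}, so we may also simply cite it.

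With $\mathcal{F}$ now known to be semicausal with marginal $\widetilde{\mathcal{F}}$, Theorem~\ref{thm:semicausal} applies to the same $\sigma$. It gives $\Tr_{A_OB_I'}(\mathcal{F}\star\sigma)=[\id_{A_I}\otimes\widetilde{\mathcal{F}}](\sigma)$. The right-hand side is a physical state, because a local channel tensored with the identity is a physical transformation and maps $\St(A_I\otimes B_I)$ into $\St(A_I\otimes B_O)$. This contradicts the hypothesis, so $\mathcal{F}$ must allow signaling from $A_I$ to $B_O$.

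The only step with any content is the passage from operational no-signaling to the marginal-channel form, and it rests entirely on local tomography together with linearity. Everything else is immediate.
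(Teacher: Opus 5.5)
Your proposal is correct and matches the paper, which obtains this corollary directly as the contrapositive of Theorem~\ref{thm:semicausal}. Your explicit construction of $\widetilde{\mathcal{F}}$ from a fixed $\rho_0$, upgraded to $\Tr_{A_O}\circ\mathcal{F}=\Tr_{A_I}\otimes\widetilde{\mathcal{F}}$ via Lemma~\ref{lem:local-tomography-linear}, simply spells out the equivalence between operational no-signaling and the marginal-channel form of semicausality, which the paper cites from the literature rather than proving.
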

\end{CorHighlightbox}

It is worth noting that, in quantum theory, a similar criterion was identified in Ref.~\cite{Liu2025} for multiqubit pseudo-density matrices. Since a pseudo-density matrix, which is a special case of the Jordan product state over time~\cite[Appendix~A]{horsman2017can}, is self-adjoint, the departure from a physical state is manifested as it being a negative operator. In our framework, we provide an axiomatic foundation for this phenomenon: any multi-time state satisfying Axioms~\ref{axm:PropMTQS} and \ref{axm:spatial-marginal-consistency} can be used to certify signaling. Such axiomatic characterization is important because it clarifies the essential physical principles underlying the link between signaling and two-time states, independent of any specific mathematical construction such as the pseudo-density matrix formalism. Further, this idea of detecting signaling through a bipartite channel can be extended to multi-time evolutions, leading to multi-time states, as well. Furthermore, as a concrete application of Theorem~\ref{thm:semicausal} to multi-time evolutions, we shall use it to certify the presence of memory effects in multi-time states in quantum theory in Section~\ref{sec:SemiCausalProcessMatrix}. We conclude this section on multi-time states in the OPT framework with the following observation.

\subsection{Non-Markov compatibility and multipartite states} \label{sec:Markvo_non-Markov_compatiblity}

The formulation of states over time naturally gives rise to the following question: in OPTs, can states shared between spacelike separated parties be equivalently represented as states over timelike separated events? If not all, which multi-space (or multipartite) states admit a causal explanation, i.e., can be represented as multi-time states? To answer this question, we make an important observation for multipartite states.

\begin{figure}
\centering
\includegraphics[width=\columnwidth]{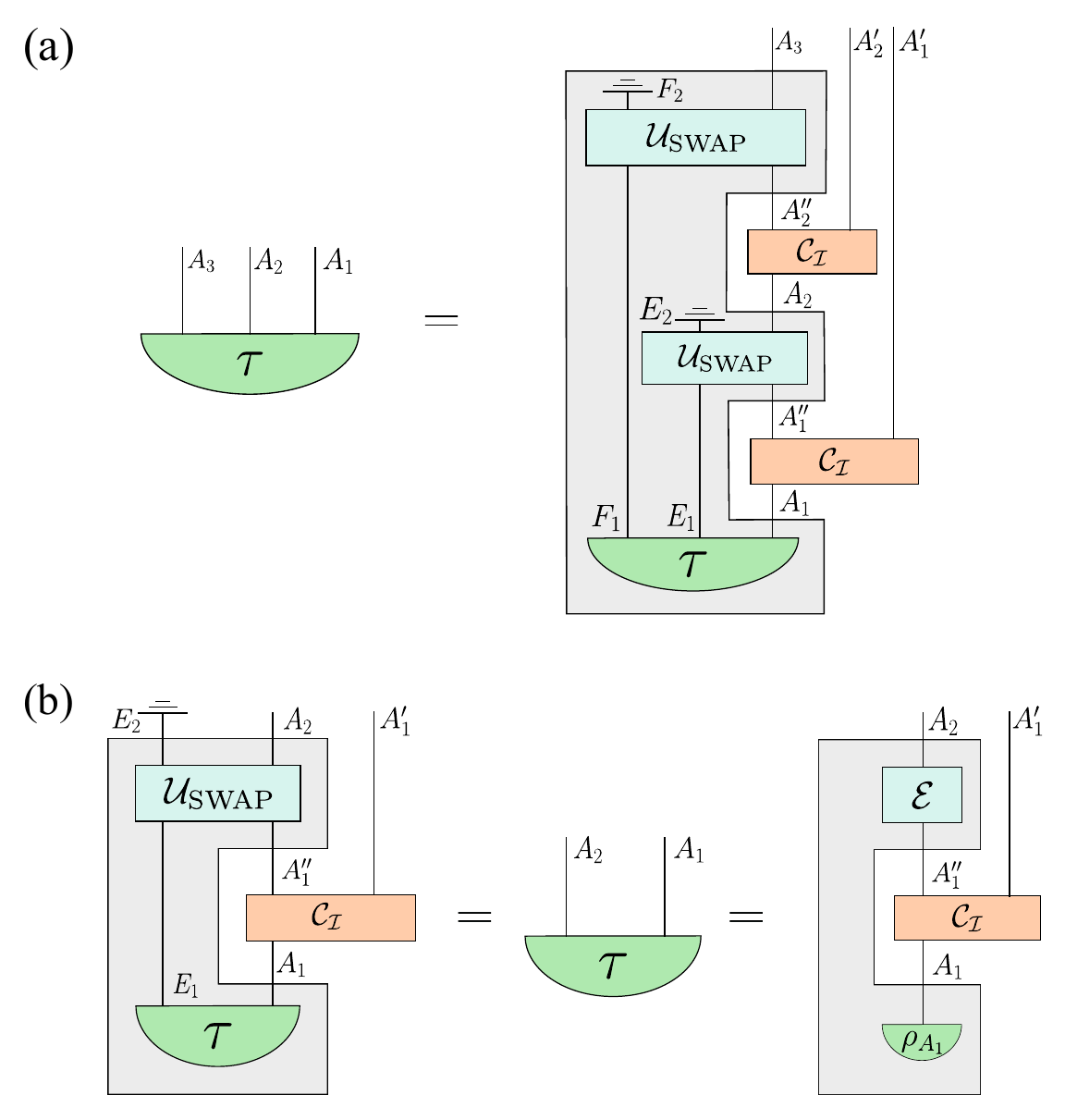}
\caption{\textbf{Multipartite states and multi-time states.} The gray-shaded regions in the figure represent higher-order maps within OPTs~\cite{Kissinger2019, Wilson2021}, which correspond to the quantum combs in the quantum theory. Here, $\mathcal{U}_{\mathrm{SWAP}}\in \Chan(E\otimes A \to A \otimes E)$ is the SWAP transformation in OPTs: $\mathcal{U}_{\mathrm{SWAP}}(\psi \otimes \phi)=\phi \otimes \psi$, for every state $\psi \in \St(E)$ and $\phi\in\St(A)$. (a) Any multipartite state can be seen as arising out of a general process matrix, as illustrated for a tripartite case in the figure above. Since this state is obtained by inserting a broadcasting map in the slots of the process matrix, it also corresponds to a multi-time state. If the initial state, $\tau_{A_1E_1F_1}$, is a correlated (i.e., a non-product) state, then the resulting dynamics is non-Markovian. (b) Now, consider bipartite states over time that are compatible with both Markovian and non-Markovian processes. As a consequence of Fig.~\ref{fig:observation1}(a), every bipartite state $\tau_{A_2A_1}$ is compatible with a non-Markovian process, as illustrated in the leftmost panel of Fig.~\ref{fig:observation1}(b). However, there exists a subset of bipartite states---including, potentially, the state $\tau_{A_2A_1}$ considered here---that also admits a Markovian realization, as illustrated in the rightmost panel of Fig.~\ref{fig:observation1}(b) and discussed in Ref.~\cite{Song2025} for quantum state over time. Therefore, there exists a nonempty class of bipartite states that are simultaneously compatible with both Markovian and non-Markovian evolutions.}
\label{fig:observation1}
\end{figure}

\begin{ObsHighlightbox}
\begin{observation}\label{obs:obs1}
Once non-Markovian processes are allowed, every multi-space state admits an equivalent representation as a general multi-time state. In this sense, every multi-space state admits a causal explanation. 
\end{observation}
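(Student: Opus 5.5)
The plan is an explicit construction, using Theorem~\ref{Thm:NMQSOT} as the structural tool, following Fig.~\ref{fig:observation1}(a). Fix an arbitrary multi-space state $\tau_{A_1 E_1 F_1 \cdots}$ on $N$ spacelike separated systems; for concreteness take the tripartite case $\tau \in \St(A_1\otimes E_1\otimes F_1)$ with $A_1\simeq A_2\simeq A_3$ after relabeling. The goal is an initial state $\sigma \in \St(A_1\otimes E_1)$, with a suitably enlarged environment, and channels $\mathcal{F}_n \in \Chan(A_n\otimes E_n\to A_{n+1}\otimes E_{n+1})$ whose general multi-time state (Definition~\ref{Def:QSOTGen}) reproduces $\tau$ up to the identification of the spacelike labels with the time labels.

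\textbf{Construction.} Put the whole multipartite state into the initial system--environment state: the system holds the first party, and the environment $E_1$ holds the remaining parties. Then let each $\mathcal{F}_n$ be a SWAP $\mathcal{U}_{\mathrm{SWAP}}$ that, in an environment register, exchanges the system content at time $n$ with the next stored party. Concretely, the ``discarded'' copy of the system is parked in the environment, while the $(n+1)$-th party is handed to the system. Each $\mathcal{F}_n$ is a reversible channel of the theory, so the requirements of Definition~\ref{Def:QSOTGen} are met.

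\textbf{Evaluation.} Apply Theorem~\ref{Thm:NMQSOT}. The general MTS becomes $\Tr_{E_N}\circ \mathcal{F}_{N-1}\circ(\id\otimes\C_{\id})\circ\cdots\circ \mathcal{F}_1\circ(\id\otimes \C_{\id})(\sigma)$. At each step $\C_{\id}$ acts on the current system register and produces a kept copy $A_n'$ and a forwarded copy. The SWAP then routes the forwarded copy into the environment, where it is eventually traced out, and brings the next party onto the system. Tracing one leg of a broadcasting map gives the identity (Definition~\ref{def:broadcasting}). Hence every $\C_{\id}$ collapses to $\id$ on the leg that is kept, and the output equals $\tau$ with its factors relabeled as $A_N\cdots A_1$. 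Locality of the trace and associativity of $\circ,\otimes$ justify this collapse factor by factor. I would do this for $N=3$ diagrammatically, then induct on $N$.

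\textbf{Main obstacle.} The hardest point is the bookkeeping that ensures each broadcast copy that is not kept is genuinely traced out, and never fed into a later slot. Any copy fed forward would leave a nontrivial $\C_{\id}$ inside the expression. To handle this, I would enlarge $E_{n+1}$ to include a ``trash'' register that is untouched until the final $\Tr_{E_N}$. By linearity of $\C_{\id}$ and local tomography (Lemma~\ref{lem:local-tomography-linear}), the partial trace can then be commuted past the later channels acting only on other registers, so the identity $\Tr_{A''}\circ\C_{\id}=\id$ applies. Finally, the non-Markovianity claim in the figure caption follows by noting that correlations in $\tau$ are carried by the environment between time steps.
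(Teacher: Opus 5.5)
Your construction is correct and is essentially the paper's argument from Fig.~\ref{fig:observation1}(a). There, the multipartite state is taken as a correlated system--environment initial state, and SWAP channels bring each successive party onto the system while parking the forwarded broadcast copy in the environment; Theorem~\ref{Thm:NMQSOT}, together with $\Tr_{A''}\circ\C_{\id}=\id$, then collapses every broadcasting map once $E_N$ is discarded. Your ``trash register'' needs no separate enlargement, because the SWAP already supplies it: after step $n$ the environment simply holds the parked copies and the not-yet-used parties.
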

\end{ObsHighlightbox}
This observation follows directly from the illustration in Fig.~\ref{fig:observation1}(a). Therefore, every multi-space state (of spacelike separated parties) is non-Markov compatible.

Interestingly, for Markovian dynamics in the quantum domain, the authors of Ref.~\cite{Song2025} derived conditions for a bipartite quantum state \(\tau_{A_2A_1}\) to be Markov compatible, i.e., compatible with a two-time state \(\mathcal{E}_{A_1\to A_2}\star \rho_{A_1}\), where \(\mathcal{E}_{A_1\to A_2}\) is a quantum channel and \(\rho_{A_1}=\Tr_{A_2}\tau_{A_2A_1}\). The set of such bipartite states also includes some entangled states. Therefore, following our Observation~\ref{obs:obs1}, all such bipartite quantum states are compatible with both Markovian and non-Markovian processes simultaneously, as illustrated in Fig.~\ref{fig:observation1}(b). 

So far, we have established an axiomatic framework for multi-time states, characterized their structure in OPTs under Markovian and non-Markovian evolutions, and demonstrated their use in certifying signaling. Building on this, we show how the framework applies to post-quantum theories. As a proof of principle, we construct unique multi-time states in boxworld theory, characterize them, and develop a notion of Bayesian retrodiction there. We then turn briefly to quantum theory, where we connect multi-time states with multi-time processes represented by process matrices, refine and augment some of the earlier results, and propose some applications.

\section{Multi-time states in the boxworld \label{sec:MTSinBoxWorld}}

In this section, we explore applications of the axiomatic framework for multi-time states (MTS) developed in the preceding sections. We first demonstrate that our framework provides a unified characterization of multi-time states across a broader class of physical theories, encompassing quantum theory and, more generally, operational probabilistic theories (OPTs). As a proof of principle, we first apply the framework to square-bit boxworld, an OPT satisfying the no-signaling principle. This demonstrates the versatility and generality of our axiomatic approach. 

\subsection{Unique state over time in the square-bit boxworld}

So far, we have claimed that our results on the structure of multi-time states hold for a broader class of physical theories, provided a suitable broadcasting map exists within the span of physical transformations. To substantiate this claim, we provide an example of a theory beyond quantum mechanics in which the required broadcasting map exists. Specifically, we construct a unique state over time in the square-bit boxworld~\cite{Popescu1994, Scarani2006, Barrett2007, Janotta2012, Plvala2023} with the maximal tensor product. We first briefly introduce the relevant features of this theory with notations mostly inspired by Ref.~\cite{Plvala2023}. 

\subsubsection{Preliminaries}

In this theory, the physical system $A$ has four pure states arranged at the corners of a square (hence the name ``square-bit''):
\begin{align}
\rho_{00} = \begin{pmatrix} 0 \\ 0 \\ 1 \end{pmatrix}, \quad
\rho_{10} = \begin{pmatrix} 1 \\ 0 \\ 1 \end{pmatrix}, \quad
\rho_{11} = \begin{pmatrix} 1 \\ 1 \\ 1 \end{pmatrix}, \quad
\rho_{01} = \begin{pmatrix} 0 \\ 1 \\ 1 \end{pmatrix}. \label{eq:squarebit-states}
\end{align}
They satisfy the linear relation $\rho_{11}= \rho_{10}+\rho_{01}-\rho_{00}$. The state space is the convex hull of these four pure states
$\St(A) = \operatorname{conv}\{\rho_{00}, \rho_{10}, \rho_{11}, \rho_{10}\}$. Any state in this hull can be parameterized by coordinates $(x,y)\in [0,1]^2$ as $\rho = (x,y,1)^T$, with one possible convex decomposition being $ \rho= (1{-}x{-}y) \rho_{00} + x \rho_{10} + y \rho_{01}$. 

The effect space in this theory is $\eff(A):=\operatorname{conv}\{\hat{0}, \Tr, e_x, \bar{e}_x, e_y, \bar{e}_y\}$ where $\hat{0}=(0,0,0)^T$ is the null-effect and
\begin{align}
&\Tr = \begin{pmatrix} 0 \\ 0 \\ 1 \end{pmatrix}, \qquad
e_x = \begin{pmatrix} 1 \\ 0 \\ 0 \end{pmatrix}, \qquad
\bar{e}_x := \Tr - e_x = \begin{pmatrix} -1 \\ 0 \\ 1 \end{pmatrix}, \nonumber \\
&e_y = \begin{pmatrix} 0 \\ 1 \\ 0 \end{pmatrix}, \qquad
\bar{e}_y:=\Tr - e_y = \begin{pmatrix} 0 \\ -1 \\ 1 \end{pmatrix}.
\end{align}
The action of an effect $e$ on a state $\rho$ is given by the standard Euclidean inner product: $e(\rho) := \langle e, \rho \rangle$. Thus, for any state $\rho = (x,y,1)^T$, we have $\Tr(\rho)=1$, $e_x(\rho)=x$, $\bar{e}_x(\rho) = 1-x$, $e_y(\rho)=y$ and $\bar{e}_y(\rho) = 1-y$. 

The reversible transformations of the square bit form the dihedral group $D_4$, generated by a rotation $R$ and a reflection $M$ with $R^4 = M^2 = \id$ and $MR = R^{-1}M$. Their action on the pure states is:
$R: \rho_{00} \to \rho_{10} \to \rho_{11} \to \rho_{01} \to \rho_{00}$,
and
$M: \rho_{00} \leftrightarrow \rho_{11}, \quad \rho_{10} \leftrightarrow \rho_{10}, \quad \rho_{01} \leftrightarrow \rho_{01}$.

With these reversible transformations, each system $A$ has a unique invariant state $\omega_A$ satisfying $\mathcal{U}(\omega_A) = \omega_A$ for all $\mathcal{U} \in D_4$. For the square bit, this is
\begin{align}
\omega &= \frac{1}{4}(\rho_{00} + \rho_{10} + \rho_{11} + \rho_{01}) \nonumber \\ 
&=\frac{1}{2}(\rho_{00} + \rho_{11}) =  \frac{1}{2}(\rho_{10} + \rho_{01}) 
= \begin{pmatrix} 1/2 \\ 1/2 \\ 1 \end{pmatrix}.\label{eq:invariant-state}
\end{align}

To define the parallel composition $A\otimes E$ of systems $A$ and $E$, we adopt the maximal tensor product~\cite{Plvala2023}. The composite effect and state spaces are then

\begin{align}
 &\eff(A \otimes E) = \operatorname{conv}\{ e \otimes f \mid e \in \eff(A),\ f \in \eff(E) \}, \label{Eq:eff_space_max_tens}\\
 &\St(A \otimes E) = \bigg\{ \tau = \sum_k a_{kl} \ \rho_k \otimes \sigma_l \mid a_{kl} \in \mathbb{R}, \ \sum_{kl} a_{kl} = 1 \nonumber \\
& \qquad \qquad  \rho_k \in \St(A), \sigma_l \in \St(E) , \ (e \otimes f)(\tau) \in [0,1], \nonumber \\
&\qquad \qquad \forall e \in \eff(A), f \in \eff(B) \bigg \}.\label{Eq:state_space_max_tens}
\end{align}

\subsubsection{Single system broadcasting map and state over time}

With this introduction in place, it is clear that the square-bit boxworld satisfies all the OPT assumptions, as laid out in Appendix.~\ref{Sec:OPT_Assumptions}. In particular, the theory is locally tomographic, and each system admits a unique deterministic effect. Consequently, if we apply Axiom~\ref{axm:PropMTQS}, Theorem~\ref{Thm:StructureThm_MarkovQSOT} holds and determines the structure of the multi-time state. The only missing ingredient is the broadcasting map, which we now characterize. 

To pick out a unique virtual broadcasting map in this theory, we adopt axioms similar to those in Ref.~\cite{parzygnat2024virtual}. 

\begin{highlightbox}
\begin{axm}[Single system broadcasting map] \label{axm:single_system_broadcasting} The broadcasting map $\mathcal{B}: \mathsf{Transf}_{\mathbb{C}}(A \to A'' \otimes A')$ should satisfy:

\begin{enumerate}
    \item \textbf{Covariance:} For every reversible transformation $\mathcal{U}$,
    \begin{align}
    \mathcal{B} \circ \mathcal{U} = (\mathcal{U} \otimes \mathcal{U}) \circ \mathcal{B}.
    \end{align}

    \item \textbf{Classical consistency:} For any classical basis $\{\rho_\alpha, \rho_\beta\}$ with distinguishing measurement $\{e, \bar{e} := \Tr - e\}$ satisfying $e(\rho_\alpha) = 1$ and $e(\rho_\beta) = 0$, the broadcasting map must satisfy
    \begin{align}
    (\mathcal{D} \otimes \mathcal{D}) \circ \mathcal{B} \circ \mathcal{D} = \mathcal{B}_{\mathrm{cl}},
    \end{align}
    where $\mathcal{D}(\rho) = e(\rho)\rho_\alpha + \bar{e}(\rho)\rho_\beta$ is the dephasing channel, and $\mathcal{B}_{\mathrm{cl}}(\rho) = e(\rho)\rho_\alpha\otimes\rho_\alpha + \bar{e}(\rho)\rho_\beta\otimes\rho_\beta$ is the corresponding classical  broadcasting map.
\end{enumerate}
\end{axm} 
\end{highlightbox}

Note, in contrast to Ref.~\cite{parzygnat2024virtual}, we do not impose permutation invariance, as we will show that the above constraints already yield a unique permutation-symmetric broadcasting map in boxworld. A few remarks are needed to clarify the classical consistency principle. Classical bits correspond precisely to unordered pairs of perfectly distinguishable pure states. Since any two of the four pure states in Eq.~\eqref{eq:squarebit-states} are perfectly distinguishable, there are six such pairs in total. These six bases decompose into two orbits under the action of the reversible transformations. The first orbit consists of the four adjacent pairs,
\begin{align}
\mathcal{O}_{\mathrm{adj}} = \{ \{\rho_{00},\rho_{10}\}, \{\rho_{10},\rho_{11}\}, \{\rho_{11},\rho_{01}\}, \{\rho_{01},\rho_{00}\} \}, \label{Eq:adj_orbit}
\end{align}
while the second orbit consists of the two diagonal pairs,
\begin{align}
\mathcal{O}_{\mathrm{diag}} = \{ \{\rho_{00},\rho_{11}\}, \{\rho_{01},\rho_{10}\} \}.\label{Eq:diag_orbit}
\end{align}
Since reversible transformations preserve the adjacency graph of the square, diagonal pairs cannot be mapped to adjacent ones. Thus, the two orbits are distinct. This differs from quantum theory, where \emph{strong symmetry}~\cite{Barnum2014_HigherOrder} holds: all sets of perfectly distinguishable pure states containing the same number of elements are equivalent under reversible transformations.

Now suppose we have two bases from the same equivalence class, $\mathcal{C} = \{\rho_\alpha, \rho_\beta\}$ and $\mathcal{C}' = \{\mathcal{U}(\rho_\alpha), \mathcal{U}(\rho_\beta)\}$. In this scenario, both the dephasing channel and the classical broadcasting map undergo the corresponding reference frame transformation:
\begin{align}
&\mathcal{D}' = \mathcal{U} \circ \mathcal{D} \circ \mathcal{U}^{-1}, \\
&\mathcal{B}'_{\mathrm{cl}} = (\mathcal{U} \otimes \mathcal{U}) \circ \mathcal{B}_{\mathrm{cl}} \circ \mathcal{U}^{-1}.
\end{align}
To see this, note that the distinguishing effect transforms as $e' = e \circ \mathcal{U}^{-1}$, i.e., $e'(\rho) = e(\mathcal{U}^{-1}(\rho))$. Then,
\begin{align}
\mathcal{D}'(\rho) &= e'(\rho) \mathcal{U}(\rho_\alpha) + \bar{e}'(\rho) \mathcal{U}(\rho_\beta)
= \mathcal{U} [\mathcal{D}(\mathcal{U}^{-1}(\rho)], \\
\mathcal{B}'_{\mathrm{cl}}(\rho) &= e'(\rho) \mathcal{U}(\rho_\alpha) \otimes \mathcal{U}(\rho_\alpha) + \bar{e}'(\rho) \mathcal{U}(\rho_\beta) \otimes \mathcal{U}(\rho_\beta) \nonumber \\
&= (\mathcal{U} \otimes \mathcal{U}) [\mathcal{B}_{\mathrm{cl}}(\mathcal{U}^{-1}\rho)].
\end{align}
Now assume that the broadcast map $\mathcal{B}$ satisfies classical consistency in the basis $\mathcal{C}$. Then,
\begin{align}
&(\mathcal{D}' \otimes \mathcal{D}') \circ \mathcal{B} \circ \mathcal{D}' \nonumber \\
&= (\mathcal{U} \otimes \mathcal{U})\circ (\mathcal{D} \otimes \mathcal{D})\circ (\mathcal{U}^{-1} \otimes \mathcal{U}^{-1}) \circ \mathcal{B} \circ \mathcal{U}\circ \mathcal{D}\circ \mathcal{U}^{-1} \nonumber \\
&= (\mathcal{U} \otimes \mathcal{U})\circ(\mathcal{D} \otimes \mathcal{D})\circ \mathcal{B} \circ \mathcal{D} \circ\mathcal{U}^{-1} \nonumber \\
&= (\mathcal{U} \otimes \mathcal{U}) \circ\mathcal{B}_{\mathrm{cl}} \circ \mathcal{U}^{-1}
= \mathcal{B}'_{\mathrm{cl}}.
\end{align}
where we used covariance in the third equality and classical consistency in the fourth.  Hence, classical consistency is preserved when moving between bases in the same equivalence class. Furthermore, if $\mathcal{B}$ is unique according to Axiom~\ref{axm:single_system_broadcasting} with classical consistency in a basis $\mathcal{C}$, then it is unique for any basis $\mathcal{C}'$ in the same equivalence class. Indeed, suppose there existed another broadcasting map $\mathcal{B}' \neq \mathcal{B}$ satisfying the same axioms for $\mathcal{C}'$. By the reference-frame transformation, this would yield a distinct broadcasting map for the original basis $\mathcal{C}$, contradicting the assumed uniqueness of $\mathcal{B}$. Thus, we establish the following proposition.

\begin{PropHighlightbox}
\begin{proposition} \label{Prop:classical_consistency_equivalence_class}
Classical consistency is preserved within the same equivalence class of perfectly distinguishable states. Consequently, if a broadcasting map is uniquely determined by Axiom~\ref{axm:single_system_broadcasting} together with classical consistency in one basis $\mathcal{C}$, then it is uniquely determined for any basis $\mathcal{C}'$ in the same equivalence class.
\end{proposition}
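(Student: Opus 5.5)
The plan is a reference-frame (conjugation) argument built on the covariance clause of Axiom~\ref{axm:single_system_broadcasting}. Fix a basis $\mathcal{C}=\{\rho_\alpha,\rho_\beta\}$ with distinguishing effect $e$, and let $\mathcal{C}'=\{\mathcal{U}(\rho_\alpha),\mathcal{U}(\rho_\beta)\}$ for some reversible $\mathcal{U}\in D_4$. The $D_4$-orbits $\mathcal{O}_{\mathrm{adj}}$ and $\mathcal{O}_{\mathrm{diag}}$ are exactly the equivalence classes, so every $\mathcal{C}'$ in the class of $\mathcal{C}$ has this form.

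The first step identifies the transformed data.
\begin{itemize}
\item The effect $e':=e\circ\mathcal{U}^{-1}$ is a valid effect, because reversible transformations map states to states. It distinguishes $\mathcal{C}'$.
\item By causality, $\Tr\circ\mathcal{U}^{-1}=\Tr$, so $\bar e'=\bar e\circ\mathcal{U}^{-1}$.
\item Substituting these into the definitions gives $\mathcal{D}'=\mathcal{U}\circ\mathcal{D}\circ\mathcal{U}^{-1}$ and $\mathcal{B}'_{\mathrm{cl}}=(\mathcal{U}\otimes\mathcal{U})\circ\mathcal{B}_{\mathrm{cl}}\circ\mathcal{U}^{-1}$.
\end{itemize}
Both identities are checked on states and then extended by linearity to $\St_{\mathbb{C}}(A)$. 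Local tomography (Lemma~\ref{lem:local-tomography-linear}) promotes equality on product inputs to equality of maps on the composite.

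The second step proves preservation. Suppose $\mathcal{B}$ is covariant and classically consistent for $\mathcal{C}$. Expand
\[
(\mathcal{D}'\otimes\mathcal{D}')\circ\mathcal{B}\circ\mathcal{D}'
\]
using the conjugation formulas. Covariance gives $\mathcal{B}\circ\mathcal{U}=(\mathcal{U}\otimes\mathcal{U})\circ\mathcal{B}$, and hence $(\mathcal{U}^{-1}\otimes\mathcal{U}^{-1})\circ\mathcal{B}\circ\mathcal{U}=\mathcal{B}$. Use this to pull $\mathcal{B}$ past the inner $\mathcal{U}$'s. Then apply classical consistency for $\mathcal{C}$, which reduces the expression to $(\mathcal{U}\otimes\mathcal{U})\circ\mathcal{B}_{\mathrm{cl}}\circ\mathcal{U}^{-1}=\mathcal{B}'_{\mathrm{cl}}$.

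The converse direction, from $\mathcal{C}'$ back to $\mathcal{C}$, is the same computation with $\mathcal{U}^{-1}$ in place of $\mathcal{U}$. So the set $S(\mathcal{C})$ of broadcasting maps satisfying Axiom~\ref{axm:single_system_broadcasting} with consistency in $\mathcal{C}$ equals $S(\mathcal{C}')$.

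Uniqueness then follows without extra work. If $S(\mathcal{C})=\{\mathcal{B}\}$, then $S(\mathcal{C}')=S(\mathcal{C})$ is also a singleton. Equivalently, if some $\mathcal{B}'\neq\mathcal{B}$ lay in $S(\mathcal{C}')$, the preservation result would place $\mathcal{B}'$ in $S(\mathcal{C})$, a contradiction.

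The only step needing care is the converse inclusion. It must use covariance of the candidate map itself, which holds because covariance is basis-independent. I expect no real obstacle beyond this bookkeeping.
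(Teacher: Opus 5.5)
Your proposal is correct and follows essentially the paper's reference-frame argument: conjugate the dephasing and classical broadcasting maps by $\mathcal{U}$, use covariance to cancel the inner $\mathcal{U}$'s, invoke consistency in $\mathcal{C}$, and obtain uniqueness by transporting any competing map back to $\mathcal{C}$. Your explicit equality $S(\mathcal{C})=S(\mathcal{C}')$, obtained via $\mathcal{U}^{-1}$ and the basis-independence of covariance, is a cleaner phrasing of the paper's contradiction step.
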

\end{PropHighlightbox}

Thus, in principle, we can pick just one representative basis from each equivalence class and look for a unique map. A subtlety arises for the diagonal representative $\{\rho_{01}, \rho_{10}\}$. Unlike the adjacent case, each state in this pair can be distinguished by two different measurements: $\{e_x, \bar{e}_x\}$ and $\{e_y, \bar{e}_y\}$. This results in two distinct dephasing channels,
\begin{align}
\mathcal{D}_x(\rho) &= e_x(\rho)\rho_{10} + \bar{e}_x(\rho)\rho_{01}, \\
\mathcal{D}_y(\rho) &= e_y(\rho)\rho_{01} + \bar{e}_y(\rho)\rho_{10}.
\end{align}
However, they are related by the reflection $M$, which fixes both $\rho_{01}$ and $\rho_{10}$ while exchanging the distinguishing effects: $e_x \circ M^{-1} = \bar{e}_y$. Consequently, $\mathcal{D}_y = M \circ \mathcal{D}_x \circ M^{-1}$, and similarly for the corresponding classical broadcasting maps. Thus, by the same argument as Proposition~\ref{Prop:classical_consistency_equivalence_class}, it suffices to check classical consistency for only one of the two dephasing channels. With this, we have the following theorem.

\begin{ThmHighlightbox}
\begin{theorem}\label{thm:unique-broadcasting-map}
Let $v = \rho_{01} - \rho_{10}$ and $u = \rho_{00} - \rho_{11}$, and let $\omega = \frac{1}{2}(\rho_{00} + \rho_{11}) = \frac{1}{2}(\rho_{10} + \rho_{01})$ be the unique invariant state as in Eq.~\eqref{eq:invariant-state}. Then the unique map $\mathcal{B} \in \mathsf{Transf}_{\mathbb{C}}(A \to A'' \otimes A')$ satisfying Axiom~\ref{axm:single_system_broadcasting} is given by
\begin{align}
\mathcal{B}(\rho) {=} \omega \otimes \rho {+} \rho \otimes \omega {-}  \Tr (\rho)  \left( \omega \otimes \omega {-} \frac{1}{8}(v \otimes v + u \otimes u) \right).
\end{align}
Crucially, the broadcasting map is unique and does not depend on whether the classical consistency conditions are imposed on an adjacent or a diagonal basis.
\end{theorem}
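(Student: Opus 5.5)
The plan is to treat $\mathcal{B}$ as an unknown linear map $\mathbb{R}^3\to\mathbb{R}^3\otimes\mathbb{R}^3$, i.e.\ a $9\times 3$ real matrix (complex entries can be handled identically; reality follows at the end). I will work in the basis $\{\omega, u, v\}$ of $\St_{\mathbb{C}}(A)$. This basis is natural because $\omega$ is invariant, and $u = (-1,-1,0)^T$, $v=(-1,1,0)^T$ span the traceless subspace.

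\textbf{Step 1: the $D_4$ action.} First compute the action of the generators on this basis. $R$ maps $u=\rho_{00}-\rho_{11}\mapsto \rho_{10}-\rho_{01}=-v$ and $v\mapsto \rho_{00}-\rho_{11}=u$. Hence $R$ acts on $\mathrm{span}\{u,v\}$ as a quarter rotation. $M$ fixes $v$ and sends $u\mapsto -u$. So the traceless part carries the two-dimensional irrep $E$ of $D_4$, and $\omega$ carries the trivial irrep $A_1$.

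\textbf{Step 2: impose covariance.} Covariance says $\mathcal{B}$ is an intertwiner from $A_1\oplus E$ to $(A_1\oplus E)^{\otimes 2} = A_1\oplus 2E \oplus (E\otimes E)$. The decomposition $E\otimes E = A_1\oplus A_2\oplus B_1\oplus B_2$ contains one trivial component, namely $u\otimes u+v\otimes v$. By Schur's lemma, $\mathcal{B}(\omega)$ must be a combination of $\omega\otimes\omega$ and $u\otimes u+v\otimes v$. Likewise $\mathcal{B}(u)$ is a combination of $\omega\otimes u$ and $u\otimes\omega$, with $\mathcal{B}(v)$ fixed by the same two coefficients. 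This leaves four real parameters.

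\textbf{Step 3: impose the marginal conditions.} Use Definition~\ref{def:broadcasting}, namely $\Tr_{A'}\circ\mathcal{B}=\Tr_{A''}\circ\mathcal{B}=\id$. Since $\Tr(u)=\Tr(v)=0$ and $\Tr\omega=1$, these force both coefficients of the $u$-part to equal $1$, so $\mathcal{B}(u)=\omega\otimes u+u\otimes\omega$. They also force the $\omega\otimes\omega$ coefficient of $\mathcal{B}(\omega)$ to be $1$. Therefore
\begin{align}
\mathcal{B}(\omega)=\omega\otimes\omega+c\,(u\otimes u+v\otimes v),
\end{align}
with a single unknown $c$. Comparing with the claimed formula, $\omega\otimes\rho+\rho\otimes\omega-\Tr(\rho)\,\omega\otimes\omega$ reproduces these values on $\omega,u,v$, and the target is $c=\tfrac18$.

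\textbf{Step 4: fix $c$ by classical consistency.} This is the main computational step, and it must be done in both orbits.

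For the adjacent basis $\{\rho_{00},\rho_{10}\}$, take $e=\bar e_x$. The dephasing is $\mathcal{D}(\rho)=(1-x)\rho_{00}+x\rho_{10}$. Evaluate $(\mathcal{D}\otimes\mathcal{D})\mathcal{B}(\rho_{00})$ and require it to equal $\rho_{00}\otimes\rho_{00}$; it suffices to check the input $\rho_{00}$ and use linearity for the other one. Writing $\rho_{00}=\omega+\tfrac12 u$ and noting $\mathcal{D}(\omega)=\tfrac12(\rho_{00}+\rho_{10})$, $\mathcal{D}(u)=\rho_{00}-\rho_{10}$, $\mathcal{D}(v)=\rho_{10}-\rho_{00}$, the $c$-term becomes $2c(\rho_{00}-\rho_{10})^{\otimes 2}$. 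Matching the coefficient of $\rho_{10}\otimes\rho_{10}$, which must vanish, gives $\tfrac14-\tfrac12+2c=0$ at leading order once the remaining terms are expanded carefully; I expect this to yield $c=\tfrac18$.

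For the diagonal basis $\{\rho_{01},\rho_{10}\}$ with $\mathcal{D}_x$, we have $\mathcal{D}_x(u)=0$ and $\mathcal{D}_x(v)=\rho_{01}-\rho_{10}=v$. Repeating the computation should again give $c=\tfrac18$, using $\rho_{10}=\omega-\tfrac12 v$ with $\mathcal{D}_x(\omega)=\omega$.

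Proposition~\ref{Prop:classical_consistency_equivalence_class} and the remark relating $\mathcal{D}_y=M\circ\mathcal{D}_x\circ M^{-1}$ reduce all other bases to these two representatives. Agreement of the two values of $c$ establishes both uniqueness and basis-independence. The main risk is arithmetic in Step 4: if the adjacent and diagonal conditions gave different values of $c$, no such map would exist, so both computations must be carried out explicitly.
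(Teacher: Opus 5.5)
Your overall strategy matches the paper's: the decomposition $\mathbf{1}\oplus E$, a four-parameter covariant family, then fixing the constants. However, the decisive diagonal computation in Step~4 uses a wrong value and, as written, fails. You claim $\mathcal{D}_x(u)=0$. In fact $\mathcal{D}_x(\rho_{00})=\rho_{01}$ and $\mathcal{D}_x(\rho_{11})=\rho_{10}$, so $\mathcal{D}_x(u)=\rho_{01}-\rho_{10}=v$. The map $\mathcal{D}_x$ projects along lines of constant $x$, not orthogonally onto the anti-diagonal, and its traceless kernel is spanned by $u-v$.

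With your value you would get $(\mathcal{D}_x\otimes\mathcal{D}_x)\mathcal{B}(\rho_{10})=\omega\otimes\omega+c\,v\otimes v-\tfrac12(\omega\otimes v+v\otimes\omega)$. Matching $\rho_{10}\otimes\rho_{10}=(\omega-\tfrac12 v)^{\otimes 2}$ then forces $c=\tfrac14$, which contradicts the adjacent value. By your own closing criterion you would conclude that no such map exists, the opposite of the theorem's basis-independence clause. With the correct value the $c$-term becomes $2c\,v\otimes v$, giving $2c=\tfrac14$, i.e.\ $c=\tfrac18$.

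You also leave both computations at ``I expect''. The adjacent one does work: the coefficients of $\rho_{00}\otimes\rho_{00}$, $\rho_{00}\otimes\rho_{10}$, $\rho_{10}\otimes\rho_{00}$, $\rho_{10}\otimes\rho_{10}$ are $\tfrac34+2c$, $\tfrac14-2c$, $\tfrac14-2c$, $2c-\tfrac14$, all consistent at $c=\tfrac18$. Your $\mathcal{D}(v)=\rho_{10}-\rho_{00}$ has the wrong sign, since $\mathcal{D}(\rho_{01})=\rho_{00}$, but this is harmless because $v$ enters only through $v\otimes v$.

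Two smaller points. First, the condition on the second basis state does not follow from the first ``by linearity''. One state suffices for uniqueness. To show the displayed map actually satisfies the axiom, you must either check the second state directly or use covariance under a symmetry that swaps the two basis states and commutes with the dephasing. For $\{\rho_{00},\rho_{10}\}$ that is the reflection $(x,y)\mapsto(1-x,y)$; for $\{\rho_{01},\rho_{10}\}$ it is the half-turn $R^2$.

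Second, you spend the marginal conditions of Definition~\ref{def:broadcasting} to fix three of the four parameters, whereas the paper never uses them. Take the general family $\tilde\alpha\,\operatorname{Tr}(\rho)\,\omega\otimes\omega+\beta\,\operatorname{Tr}(\rho)(u\otimes u+v\otimes v)+\gamma\,\rho\otimes\omega+\delta\,\omega\otimes\rho$. A single condition such as $(\mathcal{D}\otimes\mathcal{D})\mathcal{B}(\rho_{00})=\rho_{00}\otimes\rho_{00}$ already gives four linear equations with unique solution $\tilde\alpha=-1$, $\beta=\tfrac18$, $\gamma=\delta=1$, in either orbit. So the paper proves uniqueness under covariance and classical consistency alone (Axiom~\ref{axm:single_system_broadcasting}), with the marginals following as a consequence. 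Your version is only a uniqueness statement among maps already assumed to be broadcasting. You can upgrade it at no cost by keeping all four parameters free in Step~4.
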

\end{ThmHighlightbox}

A brief proof strategy is as follows. The vector space 
$\mathsf{St}_{\mathbb{C}}(A) = \operatorname{Span}_{\mathbb{C}}\{\rho_{00}, \rho_{10}, \rho_{01}\}$ 
decomposes in terms of the irreducible subrepresentations
$\mathsf{St}_{\mathbb{C}}(A) \cong \mathbf{1} \oplus E$, 
where $\mathbf{1} := \operatorname{Span}_{\mathbb{C}}\{\omega\}$ and $E := \operatorname{Span}_{\mathbb{C}}\{v,u\}$ 
are subrepresentations of the dihedral group $D_4$, with action given by
\begin{align*}
R(\omega) &= \omega, & R(u) &= -v & R(v)&=u, \\
M(\omega) &= \omega,  & M(u) &= -u, & M(v) &= v.
\end{align*}
Subsequently, identifying the broadcasting map 
$\mathcal{B} \in \mathsf{Transf}_{\mathbb{C}}(A \to A'' \otimes A')$ 
with a vector 
$X_{\mathcal{B}} \in \mathsf{St}_{\mathbb{C}}(A)^* \otimes \mathsf{St}_{\mathbb{C}}(A'') \otimes \mathsf{St}_{\mathbb{C}}(A')$, 
where $\mathsf{St}_{\mathbb{C}}(A)^*$ is the dual space of $\mathsf{St}_{\mathbb{C}}(A)$, 
the covariance condition of $\mathcal{B}$ is equivalent to 
$({R^{-1}}^* \otimes R \otimes R)(X_{\mathcal{B}}) = ({M^{-1}}^* \otimes M \otimes M)(X_{\mathcal{B}}) = X_{\mathcal{B}}$, 
with $(R^{-1})^*$ and $(M^{-1})^*$ being the dual actions on $\mathsf{St}_{\mathbb{C}}(A)^*$.  
Hence, the task is to characterize the invariant vectors in 
$(\mathbf{1} \oplus E)^* \otimes (\mathbf{1} \oplus E) \otimes (\mathbf{1} \oplus E)$. 
Expanding this tensor product, it turns out, only the components 
$\mathbf{1}^* \otimes \mathbf{1} \otimes \mathbf{1}$, 
$\mathbf{1}^* \otimes E \otimes E$, 
$E^* \otimes E \otimes \mathbf{1}$, and 
$E^* \otimes \mathbf{1} \otimes E$ contain non-zero invariant vectors; all other combinations are eliminated by the group action. 
Transforming the resulting invariant vector back to an element of 
$\mathsf{Transf}_{\mathbb{C}}(A \to A'' \otimes A')$ 
and applying the remaining constraint from Axiom~\ref{axm:single_system_broadcasting} 
then uniquely determines the broadcasting map in Theorem~\ref{thm:unique-broadcasting-map}. See Appendix~\ref{App:unique-broadcasting-map} for the detailed proof.

 This allows us to propose the following unique star product in the square-bit boxworld: for any channel $\mathcal{E} \in \Chan(A {\to} B)$ and initial state $\rho \in \St(A)$,
\begin{align}
&\mathcal{E} \star \rho  = (\mathcal{E} \otimes \id)\circ  \mathcal{B}(\rho) \nonumber \\ 
&= \mathcal{E}(\omega) \otimes (\rho - \omega) {+} \mathcal{E}(\rho) \otimes \omega {+} \frac{1}{8} \big( \mathcal{E}(v) \otimes v + \mathcal{E}(u) \otimes u \big) .\label{Eq:Boxworld_state_over_time}
\end{align}
The corresponding multi-time state can be straightforwardly formulated using Theorem~\ref {Thm:StructureThm_MarkovQSOT}.

\subsubsection{Bayesian retrodiction in the boxworld}

One of the primary motivations for the state-over-time formalism is to provide a rigorous framework for Bayesian retrodiction beyond standard classical theory. While extensions of Bayesian retrodiction have been extensively explored in quantum theory, our formulation of states-over-time goes beyond the quantum setting, thereby broadening the scope of such retrodiction schemes. In particular, we demonstrate the method of Bayesian retrodiction in boxworld as follows.

For a state $\rho \in \mathsf{St}(A)$ and a channel $\mathcal{E} \in \mathsf{Chan}(A \to B)$, the construction $\mathcal{E} \star \rho$ in Eq.~\eqref{Eq:Boxworld_state_over_time} allows us to evaluate the Bayesian retrodiction map $\mathcal{R}_{(\mathcal{E}, \rho)} \in \mathsf{Transf}_{\mathbb{C}}(B \to A)$ by solving
\begin{align}
&\mathcal{R}_{(\mathcal{E}, \rho)} \star \E(\rho) = \E \star \rho \\
&\implies (\id \otimes \mathcal{R}_{(\mathcal{E}, \rho)}) \circ \mathcal{B}[\mathcal{E}(\rho)]
= (\mathcal{E} \otimes \id) \circ \mathcal{B}(\rho). \label{Eq:Retrodiction_definition}
\end{align}

To solve it, note that since $\{\omega,v,u\}$ forms a basis of the $\operatorname{St}_{\mathbb{C}}(A)$, the linear map $\mathcal{R}_{(\mathcal{E},\rho)}$ is completely characterized by its action on these three vectors. 

Let $\{\omega^*,v^*,u^*\}$ be the dual basis of $\operatorname{Eff}_{\mathbb{C}}(A)$ satisfying
\begin{align}
&\omega^*(\omega)=v^*(v)=u^*(u)=1, \\
&\omega^*(v)=\omega^*(u)=v^*(\omega)=u^*(\omega)=v^*(u)=u^*(v)=0,
\end{align}
where $\omega^*=\operatorname{Tr}$ is the unit effect and the remaining dual functionals are given explicitly by
\begin{equation}
v^*=\frac12(e_y-e_x),\qquad
u^*=\frac12(\bar{e}_x-e_y).
\end{equation}

Expanding both sides of Eq.~\eqref{Eq:Retrodiction_definition} with the concrete form of $\mathcal{B}$ as in Theorem~\ref{thm:unique-broadcasting-map} and applying the dual functionals $\omega^*_B\otimes\id_A\equiv \Tr_B \otimes \id_A$, $v^*_B\otimes\id_A$ and $u^*_B\otimes \id_A$ in turn yields the three relations

\begin{align}
&\mathcal{R}_{(\mathcal{E},\rho)}\bigl[\mathcal{E}(\rho)\bigr]
=\rho,
\label{eq:R-trace}
\\[4pt]
&v^*[\mathcal{E}(\rho)]\,
\mathcal{R}_{(\mathcal{E},\rho)}(\omega)
+\frac18\mathcal{R}_{(\mathcal{E},\rho)}(v)
\nonumber \\
&{=}
v^*[\mathcal{E}(\omega)](\rho{-}\omega)
{+}v^*[\mathcal{E}(\rho)]\,\omega
{+}\frac18\Big(
v^*[\mathcal{E}(v)]\,v
{+}v^*[\mathcal{E}(u)]\,u
\Big),
\label{eq:R-v}
\\[4pt]
&u^*[\mathcal{E}(\rho)]\,
\mathcal{R}_{(\mathcal{E},\rho)}(\omega)
+\frac18\mathcal{R}_{(\mathcal{E},\rho)}(u)
\nonumber \\
&{=}
u^*[\mathcal{E}(\omega)](\rho{-}\omega)
{+}u^*[\mathcal{E}(\rho)]\,\omega
{+}\frac18\Big(
u^*[\mathcal{E}(v)]\,v
{+}u^*[\mathcal{E}(u)]\,u
\Big).
\label{eq:R-u}
\end{align}
Because $\mathcal{E}$ is a channel, it is trace preserving. So, its action on any state admits the expansion
\begin{equation}
\mathcal{E}(\rho)=\omega+v^*(\mathcal{E}(\rho))\,v+u^*(\mathcal{E}(\rho))\,u.
\end{equation}
Consequently,
\begin{align}
\mathcal{R}_{(\mathcal{E},\rho)}[\mathcal{E}(\rho)]
{=}
\mathcal{R}_{(\mathcal{E},\rho)}(\omega)
{+}v^*[\mathcal{E}(\rho)]\,\mathcal{R}_{(\mathcal{E},\rho)}(v)
{+}u^*[\mathcal{E}(\rho)]\,\mathcal{R}_{(\mathcal{E},\rho)}(u).
\end{align}
Substituting this expression into \eqref{eq:R-trace} yields
\begin{align}
\mathcal{R}_{(\mathcal{E},\rho)}(\omega)
=
\rho
-v^*[\mathcal{E}(\rho)]\,\mathcal{R}_{(\mathcal{E},\rho)}(v)
-u^*[\mathcal{E}(\rho)]\,\mathcal{R}_{(\mathcal{E},\rho)}(u).
\end{align}
The remaining two equations \eqref{eq:R-v} and \eqref{eq:R-u} then determine $\mathcal{R}_{(\mathcal{E},\rho)}(v)$ and $\mathcal{R}_{(\mathcal{E},\rho)}(u)$ uniquely, thereby completing the reconstruction of the retrodiction map. We present the explicit solution in Appendix~\ref{app:retrodiction_solution}. With these solutions, the action of the retrodiction map on an arbitrary $\tau \in \St_{\mathbb{C}}(B)$ is given by
\begin{align}
\mathcal{R}_{(\E,\rho)}(\tau)
{=}
\operatorname{Tr}(\tau)\,\mathcal{R}_{(\E,\rho)}(\omega)
{+}
v^*(\tau)\,\mathcal{R}_{(\E,\rho)}(v)
{+}
u^*(\tau)\,\mathcal{R}_{(\E,\rho)}(u).
\end{align}

While the general solution for the retrodiction map is rather bulky, some special cases provide useful intuition. First, consider the constant channel in boxworld, defined by
\begin{align}
\mathcal{E}(\rho)=\operatorname{Tr}(\rho)\,\sigma \qquad \forall \rho \in \mathsf{St}_{\mathbb{C}}(A).
\end{align}
That is, this channel discards all information about the input state and outputs the fixed state $\sigma \in \St(B)$. For this channel, the retrodiction map becomes (see Appendix~\ref{App:constant_chan_retro} for the proof)
\begin{align}
\mathcal{R}_{(\mathcal{E},\rho)}(\omega)=\rho, \qquad
\mathcal{R}_{(\mathcal{E},\rho)}(v)=0, \qquad
\mathcal{R}_{(\mathcal{E},\rho)}(u)=0.
\end{align}
Consequently, for any arbitrary state $\tau \in \mathsf{St}(B)$,
\begin{align}
\mathcal{R}_{(\mathcal{E},\rho)}(\tau)
=
\operatorname{Tr}(\tau)\,\rho.
\end{align}
Thus, the retrodiction map simply discards the input and prepares the prior state $\rho$. This is exactly the Bayesian update rule: when the channel is completely information-degrading, the observation provides no information to update the prior.

As a second example, consider a reversible transformation $\mathcal{U} \in \mathsf{Chan}(A \to B)$. There are two ways to obtain the retrodiction map. A conceptual, theory-independent way is to use the covariance of the broadcasting map. Since $\mathcal{B}$ is covariant under reversible transformations, we have
\begin{align}
(\mathcal{U} \otimes \id) \mathcal{B}(\rho)
&=
(\id \otimes \mathcal{U}^{-1})
(\mathcal{U} \otimes \mathcal{U}) \mathcal{B}(\rho)
\nonumber\\
&=
(\id \otimes \mathcal{U}^{-1})
\mathcal{B}(\mathcal{U}(\rho)).
\end{align}
Comparing this with the defining relation for the retrodiction map, namely
\begin{align}
(\id \otimes \mathcal{R}_{(\mathcal{U}, \rho)}) \mathcal{B}(\mathcal{U}(\rho))
=
(\mathcal{U} \otimes \id) \mathcal{B}(\rho),
\end{align}
This indicates $\mathcal{R}_{(\mathcal{U}, \rho)} = \mathcal{U}^{-1}$ is one solution for a fixed $\rho$. Whether this is the unique solution depends on the specific structure of the multi-time state. To show the uniqueness in the boxworld scenario, we provide a more systematic derivation in Appendix~\ref{app:reversible_retrodiction}. This result is conceptually consistent: if the evolution from $A$ to $B$ is reversible, then no information is lost, and the retrodicted state should be obtained by applying the inverse transformation directly, independent of any prior state $\rho$.

\subsubsection{Boxworld conditionability}

It is worth noting that the state-over-time construction in boxworld also satisfies a notion of conditionability similar to what was introduced in Refs.~\cite {parzygnat2023time,lie2024quantum,lie2025multipartite} for the quantum theory. Specifically, there exists a state-rendering map $\Theta_\rho \in \mathsf{Transf}_{\mathbb{C}}(A \to A)$ such that for all $\mathcal{E} \in \mathsf{Chan}(A \to B)$ and a fixed $\rho \in \mathsf{St}(A)$,
\begin{align}
\mathcal{E} \star \rho = (\id \otimes \Theta_\rho) \circ (\mathcal{E} \star \omega) = (\mathcal{E} \otimes \Theta_\rho) \mathcal{B}[\omega].
\label{Eq:boxworld_conditionability_def}
\end{align}
In particular, choosing $\mathcal{E} = \id$ gives
\begin{align}
(\id \otimes \Theta_\rho) \circ \mathcal{B}(\omega) = \mathcal{B}(\rho).
\label{Eq:boxworld_conditionability}
\end{align}
One can solve for this state-rendering map, showing that for any state $\sigma \in \mathsf{St}(A)$, the map acts as
\begin{align}
\Theta_\rho(\sigma) = \sigma + \rho + \bigl[8 \big(v^*(\sigma) v^*(\rho) + u^*(\sigma) u^*(\rho)\big) - 1\bigr] \omega.
\end{align}

Since a thorough exploration of this conditionability property, and its interpretation in boxworld is beyond the scope of the present work, we leave a detailed investigation of its consequences, including a comparison with the quantum case, for future research.

\subsubsection{Joint broadcasting map}

We now characterize the broadcasting map for the parallel composition of two square-bit systems $A$ and $E$. We first extend
Axiom~\ref{axm:single_system_broadcasting} as:

\begin{highlightbox}
\begin{axm}[Bipartite broadcasting map] \label{axm:joint_broadcasting} The broadcasting map $\mathcal{B} \in \mathsf{Transf}_{\mathbb{C}}[A \otimes E \to (A'' \otimes A' )  \otimes (E'' \otimes E')]$  should satisfy:

\begin{enumerate}
\item \textbf{Covariance:} For every reversible transformation $\mathcal{U}\in \Chan(A\otimes E \to A\otimes E)$,
\begin{align}
\mathcal{B} \circ \mathcal{U} = (\mathcal{U} \otimes \mathcal{U}) \circ \mathcal{B}.
\end{align}
Here, we implicitly reorder the tensor factors to match the output ordering of $\mathcal{B}$ (i.e., $A'' \otimes A' \otimes E'' \otimes E'$ rather than $A''\otimes E''\otimes A'\otimes E'$), keeping track of the systems throughout.
  
\item \textbf{Classical consistency:} For any classical basis $\{\rho_{ij} \in \St(AE)\}_{\ i,j \in \{0,1\}}$ with distinguishing measurement $\{e_{ij}\in \eff(AE) \}_{\ i,j \in \{0,1\}}$ satisfying $\sum_{i,j} e_{ij}=\Tr_{AE}=\Tr_{A}\otimes \Tr_E$ and $e_{ij}(\rho_{kl}) = \delta_{i,k} \cdot \delta_{j,l}$, the broadcasting map must satisfy
\begin{align}
(\mathcal{D} \otimes \mathcal{D}) \circ \mathcal{B} \circ \mathcal{D} = \mathcal{B}_{\mathrm{cl}},
\end{align}
where $\mathcal{D}(\rho) = \sum_{i,j} e_{ij} (\rho)\rho_{ij}$ is the dephasing channel, and $\mathcal{B}_{\mathrm{cl}}(\rho) = \sum_{i,j} e_{ij}(\rho)\ \rho_{ij}\otimes\rho_{ij}$ is the corresponding classical broadcasing map.
\end{enumerate}
\end{axm} 
\end{highlightbox}
Note, although the axiom requires covariance under all joint reversible maps, in the boxworld the joint reversible transformations are highly restricted~\cite{Gross2010}. It is known that joint reversible transformations consist only of local reversible maps and permutations of the subsystems. This structure significantly simplifies the characterization of the joint broadcasting map.

Next, regarding classical consistency, one might wonder whether entangled states could serve as elements of the classical basis. In the maximal tensor product of two square bits, there are 8 pure entangled states (the PR-box states that maximally violate the CHSH inequality) in addition to 16 product states. Due to the restricted structure of joint reversible transformations, which cannot map product states to entangled states, each entangled state belongs to a different equivalence class from the product states. This contrasts with quantum theory, where global unitary transformations can map product states to entangled ones. Consequently, the reference-frame argument in Proposition~\ref{Prop:classical_consistency_equivalence_class} does not apply to any potential classical basis that contains an entangled state.

However, the restricted structure of the joint effect space in the maximal tensor product (Eq.~\eqref{Eq:eff_space_max_tens}) rules out entangled states as classical basis elements. To see this, suppose a pure entangled state $\psi\in \St(AE)$ were in a classical basis. Then some global effect $g = \sum_\lambda p_\lambda e_\lambda \otimes f_\lambda$ would satisfy $g(\psi) = 1$. By convexity, there exists an extremal effect $e \otimes f$ with $(e \otimes f)(\psi) = 1$. Taking the complementary effect $\bar{f}_E:=\Tr_E - f_E$, we have $(e \otimes \bar{f})(\psi) \ge 0$, which implies $e(\Tr_E(\psi)) \ge 1$. Since $e$ is an effect, this forces $e(\Tr_E(\psi)) = 1$. But for any PR-box state, the reduced state on either subsystem is the invariant state, thus we have $\Tr_E(\psi) = \omega_A$, and no extremal effect satisfies $e(\omega_A) = 1$. This contradiction shows that a PR-box cannot be part of a classical basis. Thus, only product states $\rho_{ij} = \rho_i \otimes \rho_j$, with $\{\rho_i\}$ and $\{\rho_j\}$ chosen from either of the equivalence classes in Eqs.~\eqref{Eq:diag_orbit}, and \eqref{Eq:adj_orbit}, can belong to a classical basis. Accordingly, we only need to consider product dephasing channels $\mathcal{D} = \mathcal{D}_\alpha \otimes \mathcal{D}_\beta$ and product classical broadcasters $\mathcal{B}_{\mathrm{cl}} = \mathcal{B}_{\mathrm{cl},\alpha} \otimes \mathcal{B}_{\mathrm{cl},\beta}$, where $\alpha, \beta \in \{\text{adj}, \text{diag}\}$.

With this, we have the following result.
\begin{ThmHighlightbox}
\begin{theorem} \label{Thm:two_party_broadcasting}
    If $A$ and $E$ are two square bit systems with $A\otimes E$ being the maximal tensor product, then the joint broadcasting map $\mathcal{B}: \mathsf{Transf}_{\mathbb{C}}[A \otimes E \to (A' \otimes A'') \otimes (E'' \otimes E')]$ is simply the product of single system broadcasting maps $\mathcal{B}_{A}\in \mathsf{Transf}_{\mathbb{C}}(A \to A'' \otimes A')$ and $\mathcal{B}_E \in  \mathsf{Transf}_{\mathbb{C}}(E \to E'' \otimes E')$ as in Theorem~\ref{thm:unique-broadcasting-map}:
    \begin{align}
        \mathcal{B} = \mathcal{B}_{A}\otimes \mathcal{B}_E.
    \end{align}
\end{theorem}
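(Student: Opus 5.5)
We will follow the same representation-theoretic strategy as in the proof of Theorem~\ref{thm:unique-broadcasting-map}, now for the composite system. Since boxworld is locally tomographic, $\St_{\mathbb{C}}(A\otimes E)=\St_{\mathbb{C}}(A)\otimes\St_{\mathbb{C}}(E)$. Using the irreducible decomposition $\St_{\mathbb{C}}(A)\cong\mathbf{1}\oplus E$ (here $E$ denotes the two-dimensional irrep), this becomes $(\mathbf{1}\oplus E)_A\otimes(\mathbf{1}\oplus E)_E$. We identify a candidate $\mathcal{B}$ with a vector
\[
X_{\mathcal{B}}\in \St_{\mathbb{C}}(A\otimes E)^*\otimes\St_{\mathbb{C}}(A'')\otimes\St_{\mathbb{C}}(A')\otimes\St_{\mathbb{C}}(E'')\otimes\St_{\mathbb{C}}(E').
\]
By the result of Ref.~\cite{Gross2010}, the joint reversible group is generated by $D_4\times D_4$ together with the subsystem swap. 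Covariance therefore says that $X_{\mathcal{B}}$ is invariant under $D_4^{(A)}\times D_4^{(E)}$, acting as $g^{-1*}\otimes g\otimes g$ on the $A$-factors and the $E$-factors separately. It is also invariant under the swap, which we will use only as a consistency check.

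The first step exploits that $D_4^{(A)}\times D_4^{(E)}$ acts factorwise. Its invariant subspace is therefore the tensor product of the $D_4$-invariant subspaces of $(\mathbf{1}\oplus E)^*\otimes(\mathbf{1}\oplus E)^{\otimes 2}$ for $A$ and for $E$ separately. Each of these was already computed in the proof of Theorem~\ref{thm:unique-broadcasting-map}. The components $\mathbf{1}^*\otimes\mathbf{1}\otimes\mathbf{1}$, $\mathbf{1}^*\otimes E\otimes E$, $E^*\otimes E\otimes\mathbf{1}$ and $E^*\otimes\mathbf{1}\otimes E$ carry the invariants. Since $E$ is real of dihedral type, $E\otimes E$ contains the trivial irrep once, which gives one invariant in each of these components.

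So the covariant single-system maps form a space $V$ of dimension $4$, spanned by maps $\beta_1,\dots,\beta_4$. Covariant joint maps are then exactly $\sum_{k,l}c_{kl}\,\beta_k\otimes\beta_l$, with $16$ unknowns. The swap symmetry gives $c_{kl}=c_{lk}$.

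The second step imposes classical consistency. The paper has reduced this to product dephasings $\mathcal{D}_\alpha\otimes\mathcal{D}_\beta$ and product classical broadcasters $\mathcal{B}_{\mathrm{cl},\alpha}\otimes\mathcal{B}_{\mathrm{cl},\beta}$. The constraint then reads
\[
\sum_{k,l}c_{kl}\,\Phi_\alpha(\beta_k)\otimes\Phi_\beta(\beta_l)=\mathcal{B}_{\mathrm{cl},\alpha}\otimes\mathcal{B}_{\mathrm{cl},\beta},
\qquad \Phi_\alpha(\beta):=(\mathcal{D}_\alpha\otimes\mathcal{D}_\alpha)\circ\beta\circ\mathcal{D}_\alpha.
\]
The map $\Phi_\alpha$ is linear, and we compute its restriction to $V$ explicitly.

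In the single-system proof the constraint $\sum_k b_k\Phi_\alpha(\beta_k)=\mathcal{B}_{\mathrm{cl},\alpha}$ had a unique solution $b$, giving $\mathcal{B}_A$. This needs $\Phi_\alpha|_V$ to be injective, or at least that its kernel contains no covariant direction left undetermined; this is what uniqueness in Theorem~\ref{thm:unique-broadcasting-map} amounts to. If $\Phi_\alpha|_V$ is injective, then $\Phi_\alpha\otimes\Phi_\beta$ is injective on $V\otimes V$, and the product $\mathcal{B}_A\otimes\mathcal{B}_E$ is a solution. Hence $c_{kl}=b_kb_l$ is forced, which proves the theorem.

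The main obstacle is the injectivity of $\Phi_\alpha$ on $V$. In the single-system proof, uniqueness may have relied partly on the marginal conditions of a broadcasting map (Definition~\ref{def:broadcasting}), rather than on classical consistency alone. If so, we additionally impose the marginal conditions on the joint map: $\Tr_{A'E'}\circ\mathcal{B}=\id$ and $\Tr_{A''E''}\circ\mathcal{B}=\id$. Since $\Tr_{A'}\circ\beta_k$ and $\Tr_{A''}\circ\beta_k$ are computable on $V$, these conditions give bilinear-in-factor linear constraints on $c_{kl}$.

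We then combine the kernels. Suppose the single-system constraints together (classical consistency plus marginals, with both adjacent and diagonal bases available) cut $V$ down to the single point $b$. Then the joint constraints are $L\otimes L'$-type equations whose joint kernel on $V\otimes V$ is $\ker L\otimes V+V\otimes\ker L'$. We check, by explicit computation of the $4\times4$ constraint matrices, that applying both bases kills this kernel. If a residual direction survives, we use the swap symmetry and the remaining choices of $(\alpha,\beta)$ to eliminate it.
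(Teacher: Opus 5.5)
Your argument is correct, and it departs from the paper at the decisive step.

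\textbf{Common part.} Up to the ansatz $\sum_{k,l}c_{kl}\,\beta_k\otimes\beta_l$ with $c_{kl}=c_{lk}$, you follow the paper. Both use the group structure of Ref.~\cite{Gross2010}, factorize the $D_4\times D_4$ invariants into single-system invariants, and use the swap to symmetrize the coefficients.

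\textbf{The paper's route.} The paper substitutes this ansatz into classical consistency for the adjacent--adjacent basis and extracts ten linear equations in the ten symmetric coefficients. It solves them by symbolic computation and only afterwards observes that the solution happens to factorize as $\tilde a_{ij}=b_ib_j$.

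\textbf{Your route.} You observe that, for a product basis, the constraint operator is $\Phi_\alpha\otimes\Phi_\beta$ acting on $V\otimes V$. Uniqueness then follows from injectivity of $\Phi_\alpha|_V$, since a tensor product of injective linear maps is injective. Existence is immediate, because $\mathcal{B}_A$ satisfies single-system classical consistency on both orbits.

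\textbf{Your hedge is unnecessary.} The needed injectivity is already contained in the single-system proof. Take the paper's coefficients $(\tilde\alpha,\beta,\gamma,\delta)$ of the covariant map. The four equations obtained from $(\mathcal{D}_{\mathrm{adj}}\otimes\mathcal{D}_{\mathrm{adj}})\mathcal{B}(\rho_{00})$ alone form a nonsingular $4\times 4$ system. Its homogeneous version forces $\gamma=\delta=8\beta$ and $\tilde\alpha=-8\beta$, and then the first equation gives $\beta=0$. The diagonal representative yields the same matrix. So $\Phi_\alpha|_V$ is injective for a single basis, without any marginal conditions. The kernel-intersection fallback and the appeal to the remaining $(\alpha,\beta)$ choices are never needed.

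\textbf{What each route buys.} Yours is computation-free and explains structurally why the joint map factorizes. It needs only one product basis, and it shows that swap covariance is redundant: local $D_4\times D_4$ covariance already suffices. The paper's route gives an explicit, machine-checkable linear system, but relies on computer algebra and does not reveal why factorization occurs.
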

\end{ThmHighlightbox}
See the proof in Appendix~\ref{App:proof_two_party_broadcasting}. Subsequently, due to Theorem~\ref{Thm:StructureThm_MarkovQSOT}, for any state $\sigma\in \St(A_1E_1)$ and channel $\mathcal
{F}\in \Chan(A_1\otimes E_1\to A_2\otimes E_2)$ we have 
\begin{align}
    \mathcal{F}\star \sigma = (\id_{A_1'} \otimes \mathcal{F}_{A_1''E_1''\to A_2E_2} \otimes \id_{E'_1} )\circ (\mathcal{B}_{A_1}\otimes B_{E_1})(\sigma).
\end{align}
In particular, for all product channels $\mathcal{F}=\E \otimes \N$ with $\E:A_1\to A_2$ and $\N:E_1\to E_2$, and all product states $\sigma_{A_1E_1}=\rho_{A_1}\otimes \tau_{E_1}$ we have
\begin{align}
    (\E\otimes \N)\star (\rho\otimes \tau)=(\E \star \rho)\otimes (\N\star \tau).
\end{align}
Thus Axiom~\ref{axm:spatial-marginal-consistency} naturally holds, and all subsequent results--- Theorem~\ref{Thm:NMQSOT}, Theorem~\ref{thm:semicausal}, Corollary~\ref{cor:semicausality}---hold in square-bit boxworld with maximal tensor product.

Clearly, the axiomatic framework developed in the preceding sections can be applied to the study of multi-time states in boxworld, an OPT, allowing us to characterize the unique state over time associated with general evolutions, including those involving temporal memory. We now turn to quantum theory and apply the same axiomatic framework to characterize multi-time states in the quantum setting.

\section{Multi-Time States in Quantum Theory and Quantum Process Framework}\label{sec:MTSinQuantumTheory}

In the recent past, many different candidates for a quantum state over time have been proposed, such as Ohya compound state~\cite{ohya1983note}, Leifer-Spekkens (LS) quantum state over time~\cite{leifer2013towards}, pseudo-density matrix (PDM)~\cite{fitzsimons2015quantum}, Sutter-Tomamichel-Harrow (STH) state~\cite{sutter2016strengthened,Parzygnat2023}, Fullwood-Parzygnat (FP) quantum state over time~\cite{fullwood2022quantum}. While these constructions (see Ref.~\cite{parzygnat2023time} for a review) find uses in various areas~\cite{leifer2006quantum, parzygnat2023time, Buscemi_2023, bai2024observational,  fitzsimons2015quantum, comar2026using, Song2025}, they are not mutually compatible, thereby motivating the need for a unified framework. In this direction, a set of desirable properties for quantum states over time was formulated in~\cite{horsman2017can} in the form of axioms. There, a valid two-time quantum state is defined as a function of the initial state and the quantum channel, required to satisfy the following conditions: (a) it should produce the correct marginals, (b) it should be Hermitian, (c) it should be linear in both the state and the channel, (d) it should admit an extension to multi-time scenarios, i.e., satisfy compositionality, and (e) it should reproduce the corresponding classical probability distributions under appropriate conditions. The authors of~\cite{horsman2017can} also established a no-go theorem showing that no valid quantum state over time satisfying all these conditions can exist. This issue was later resolved in~\cite{fullwood2022quantum} by demonstrating that a quantum state over time can indeed be consistently assigned to a dynamical process by carefully formulating the axioms over an appropriate domain. However, whether this construction was the unique quantum state over time satisfying these axioms remained an open question. The uniqueness was subsequently established later in~\cite{parzygnat2024virtual,lie2024quantum}. The next natural step concerned the existence of a multi-time extension of the two-time quantum state, namely, a joint quantum state spanning multiple timelike events. A unique multi-time extension of the two-time quantum state corresponding to a Markovian evolution was obtained in~\cite{lie2025multipartite}. However, a systematic characterization of multi-time states for general quantum evolutions, including non-Markovian dynamics, their connections to the process-matrix framework, and their ability to capture signaling and causal properties, is still missing. We now apply our axiomatic framework to address these gaps related to multi-time states in quantum theory. This allows us to recover and refine several existing results, while also elucidating how multi-time states encode the dynamical, temporal, and causal properties of quantum processes. 

To describe general system-environment dynamics, we adopt the framework of a multi-time process, also known as a quantum comb ~\cite{chiribella09b}, process tensors~\cite{Pollock2018Operational, Pollock2018}, or process matrix~\cite{Giarmatzi_2021, Goswami2021,Goswami2025}. Note, the quantum multi-time process framework can be generalized to OPTs~\cite{Kissinger2019, Wilson2021}. Nevertheless, in quantum theory, the process framework provides a complete description of all possible correlations that can arise when a system is probed at multiple times while interacting with an environment. This includes \emph{non-Markovian dynamics}, in which the environment retains a memory of past interactions that influence the system's future evolution. Here, we aim to explore these features by explicitly employing quantum process matrices to characterize quantum multi-time states. We begin by providing a brief overview of the multi-time process framework.

\begin{figure}
    \centering
    \includegraphics[width=\columnwidth]{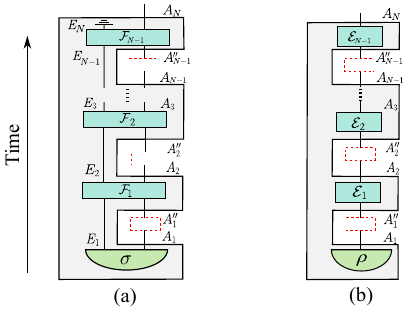}
\caption{\textbf{Multi-time processes.} (a) A general multi-time process, potentially exhibiting non-Markovian dynamics. An initially correlated system-environment state $\sigma_{A_1E_1}$ evolves through a sequence of system-environment interactions: $\mathcal{F}_n:A''_{n}E_n\to A_{n+1}E_{n+1}$ for $n=1,2,\cdots,N-1$, with the final environment $E_N$ being discarded. The orange open slots in both figures represent the probe regions where the experimentalist can perform their quantum operations. The overall dynamics is represented by the process matrix shown in Eq.~\eqref{Eq:non-MarkovianProcess}. (b) A Markovian process, a special case of (a), described by an initial state $\rho_{A_1}$ and a sequence of system-only channels $\E_n : {A_n''\to A_{n+1}}$  for $n=1,2,\cdots,N-1$. In this case, the process matrix factorizes into a product form as in Eq.~\eqref{Eq:MarkovianProcess}.  }  
    \label{fig:markovNonmarkov_N_step}
\end{figure}

In a general multi-time process, as illustrated in Fig.~\ref{fig:markovNonmarkov_N_step}(a), a system and its environment, possibly initially correlated, undergo joint evolution at multiple time steps. Between these system–environment interactions, an experimentalist can probe the system at $N$ well-defined times. Each probe region $n$ (dashed orange slots) consists of an input system $A_n$ and an output system $A_n''$. We assume $A_n\simeq A_n''$ for all $n\in \{1,\cdots,N\}$. The process begins with a quantum state $\sigma_{A_1E_1}$, followed by alternating probe regions (input $A_n$, output $A''_n$) and joint system-environment interaction $\mathcal{F}_{n} \in \Chan( A_n''\otimes E_n\to A_{n+1}\otimes E_{n+1})$ for $n=1,\cdots,N-1$, with the final environment $E_N$ traced out. In this way, the overall dynamics are captured entirely from the system's perspective, with the experimentalist having access only to the designated probe regions.

Mathematically, the multi-time quantum process is represented by a positive semidefinite operator $W \in A_N \otimes A_{N-1}'' \otimes A_{N-1}\otimes \cdots \otimes A_1'' \otimes A_1$ called \emph{process matrix} with 
\begin{align}
    W {=} \Tr_{E_N} \big [\Choi (\mathcal{F}_{N-1}) * \cdots  \Choi(\mathcal{F}_{2}) * \Choi(\mathcal{F}_1) * \sigma\ \big ]. \label{Eq:non-MarkovianProcess}
\end{align}
Here, $\Choi(\mathcal{F}_n) \in A_{n+1}\otimes E_{n+1} \otimes A_n''\otimes E_n$ denotes the Choi representation~\cite{choi_completely_1975,choi75} of the quantum channel $\mathcal{F}_{n}:\ A_n'' E_n\to A_{n+1} E_{n+1}$ where  $\Choi(\mathcal{F}_n):= \sum_{i,j}\mathcal{F}_{n}(\ketbra{i}{j}) \otimes \ketbra{i}{j}$ and $\{\ket{i}\}_i$ being an orthonormal basis of $A_n''E_n$ system. The operation $*$ is the \emph{link product}~\cite{chiribella09b}. It provides a way to combine different quantum operations, e.g., state preparation, evolution through a quantum channel, or measurements, using their Choi representations. For instance, let $A, B$ and $C$ denote the Hilbert spaces associated with three systems and consider two quantum operations with Choi representations $P \in A \otimes B$ and $Q\in B \otimes C$. Their link product is defined as
\begin{align}
 P*Q := \Tr_{B} [(P^{T_B}\otimes \Id _C)\cdot(\Id _A \otimes Q)],   
\end{align}
where $T_B$ represents the partial transpose on the $B$ system. When $P$ and $Q$ have no overlapping system ($B\equiv \emptyset$), then the link product between the operators reduces to a tensor product: $P*Q=P\otimes Q$. 

Suppose an experimentalist intervenes in this process by applying a sequence of quantum instruments $\{e_{x_n}\}_{x_n \in X_n}$ for $n = 1, \dots, N-1$, where each $e_{x_n} \in \mathsf{Transf}(A_n \to A_n'')$ is a CPTNI map associated with a classical outcome $x_n$, and $\sum_{x_n} e_{x_n}$ is a CPTP map. Finally, at the last step, they perform a Positive Operator Valued Measure (POVM) $\{e_{x_N}\}_{x_N \in X_N}$. The probability of the outcome sequence $(x_1, \dots, x_N)$ is then given by the generalized Born rule~\cite{Shrapnel2018}
\begin{align}
p(x_1, \dots, x_N) = W * \bigotimes_{n=1}^N \Choi(e_{x_n}). 
\label{Eq:gen_Born_rule}
\end{align}

The most general multi-time process includes dynamics with memory, also known as \emph{non-Markovian processes}. Here, ``memory'' refers to the situation where information about the system that leaks into the environment during one interaction can later return and influence the system at some future time. This memory effect manifests as temporal correlations in the resulting multi-time process. In contrast, a \emph{Markovian process} exhibits memoryless dynamics, with the system's future evolution depending only on its current state, not on its past. The process matrix for such a Markovian process exhibits a product form, see Fig.~\ref{fig:markovNonmarkov_N_step}(b):
\begin{align}
    W_{\rm Markov}{ =} \Choi(\E_{N-1}) \otimes {\cdots} \otimes \Choi(\E_{2})\otimes \Choi(\E_1) \otimes \rho, \label{Eq:MarkovianProcess}
\end{align}
where the initial state $\rho {\in} \St (A_1)$ and each channel $\E_{n}:A_n''{\to} A_{n+1}$ act only on the system. Mathematically, this product form arises because the constituent components: the initial state and the channel Choi operators, act on distinct, non-overlapping systems. Consequently, their link products simplify to tensor products, in contrast to the general multi-time process where the environment mediates the non-trivial composition. This product structure reflects the memoryless nature of Markovian dynamics: there are no temporal correlations beyond those arising from the sequential application of channels.

\begin{figure}
    \centering
    \includegraphics[width=\columnwidth]{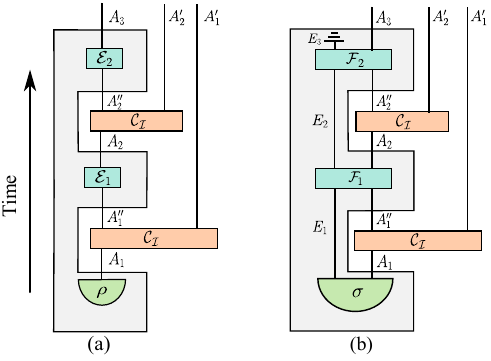}
    \caption{\textbf{ Three-time quantum states.} (a) A three-time quantum state, compatible with the Markovian process in \eqref{Eq:MarkovianProcess} with $N=3$. The state is constructed by inserting broadcasting maps $\mathcal{C}_{\id}$ (see Def.~\ref{def:broadcasting}) into the probe-regions (orange slots in Fig.~\ref{fig:markovNonmarkov_N_step}). As shown in Theorem~\ref{Thm:StructureThm_MarkovQSOT}, this structure follows from Axiom~\ref{axm:PropMTQS}. One can further characterize the broadcasting maps by additional physically motivated axioms. See, e.g., Refs.~\cite{parzygnat2024virtual, xiao2025no}. (b) A general three-time state (see Def.~\ref{Def:QSOTGen}), compatible with a multi-time process in Eq.~\eqref{Eq:non-MarkovianProcess} with $N=3$. As in (a), the state is obtained by fusing broadcasting maps in the probe regions of the general multi-time process. As shown in Theorem~\ref{Thm:NMQSOT}, this structure is derived from Axioms~\ref{axm:PropMTQS} and \ref{axm:spatial-marginal-consistency}.} 
    \label{fig:markovqsot_processmatrix}
\end{figure}

\subsection{On proposed star products for quantum multi-time states}
Before presenting the structure theorems for quantum MTSs and linking them with process matrix framework, we briefly review the various star products proposed so far for constructing states over time in quantum theory, and identify which of them satisfy the properties specified in Axioms~\ref{axm:PropMTQS} and~\ref{axm:spatial-marginal-consistency}. We first consider the non-linear star products used to construct quantum states over time, such as the Ohya compound state~\cite{ohya1983note}, the Leifer--Spekkens (LS) state over time~\cite{leifer2013towards}, the $t$-rotated product~\cite{parzygnat2023time,wilde2015recoverability}, and the STH product~\cite{sutter2016strengthened}. These constructions are non-linear in the input states. As a consequence, for any possible multi-time extension of these states, the reconstructibility via a linear map does not hold. Nevertheless, all of these constructions satisfy the spatio-temporal marginal consistency condition (Axiom~\ref{axm:spatial-marginal-consistency}). To make this explicit, we consider the Leifer-Spekkens (LS) quantum state over time, which, given an initial state $\rho_{A_1} \in \St(\rho_{A_1})$ and a channel $\E: A_1 \to A_2$, is defined as
\begin{align*}
\E \star_{LS} \rho_{A_1}= (\sqrt{\rho_{A_1}} \otimes \Id_{A_2}) \ \jam(\E) \ (\sqrt{\rho_{A_1}} \otimes \Id_{A_2}),
\end{align*}
where $\jam(\E)=\sum_{ij} \ketbra{i}{j}_{A_1} \otimes \E(\ketbra{j}{i})_{A_2}$ denotes the Jamio{\l}kowski matrix on $\mathcal{H}_{A_1}\otimes\mathcal{H}_{A_2}$. This construction satisfies
$\Tr_{A_2}[\E \star_{LS} \rho_{A_1}]=\rho_{A_1}$ and $\Tr_{A_1}[\E \star_{LS} \rho_{A_1}]=\E(\rho_{A_1})$.
More generally, consider an uncorrelated state $\rho_{A_1}\otimes\tau_{E_1} \in \St(A_1 \otimes E_1)$ of the spacelike separated systems $A_1$ and $E_1$, evolving under the channels $\E:A_1 \to A_2$ and $\N: E_1 \to E_2$. In this case,
$(\E\otimes \N)\star_{LS}(\rho_{A_1}\otimes \tau_{E_1})=(\E \star_{LS} \rho_{A_1}) \otimes (\N \star_{LS} \tau_{E_1})$,
since $\jam(\E\otimes \N)= \jam(\E) \otimes \jam(\N)$ and $\sqrt{\rho_{A_1} \otimes \tau_{E_1}}=\sqrt{\rho_{A_1}} \otimes \sqrt{\tau_{E_1}}$. It follows directly that the LS state over time satisfies the spatio-temporal marginal consistency condition in Axiom~\ref{axm:spatial-marginal-consistency}, for instance
\begin{align*}
\Tr_{E_1}[(\E\otimes\mathcal{N})\star_{LS}(\rho_{A_1}\otimes\tau_{E_1})]=(\E\star_{LS}\rho_{A_1})\otimes\mathcal{N}(\tau_{E_1}).   
\end{align*}

We next turn to quantum states over time based on a family of star products, given by
\begin{equation}\label{eq:StarProduct}
    \E\star_p\rho_{A_1}=p[(\rho\otimes\Id)\jam(\E)]+(1-p)[\jam(\E)(\rho\otimes\Id)], ~p\in\mathbb{C}
\end{equation}
where $\rho\in\St{(A_1)}$ is the input state, $\E:A_1\rightarrow A_2$ is a quantum channel, and $\jam(\E)$ is the Jamio{\l}kowski matrix defined earlier. Note that this construction reduces to the symmetric ($p=1/2$), right-bloom ($p=1$), and left-bloom ($p=0$) constructions~\cite{fullwood2025quantum} as special cases. This family of star products is linear in the input state and satisfies both Axioms~\ref{axm:PropMTQS} and~\ref{axm:spatial-marginal-consistency}. To verify the latter, consider an uncorrelated state $\rho_{A_1}\otimes\tau_{E_1} \in \St(A_1 \otimes E_1)$ of the spacelike separated systems $A_1$ and $E_1$, evolving under the channels $\E:A_1 \to A_2$ and $\N: E_1 \to E_2$. For $p\neq 0$ and $p \neq 1$, one readily finds that $(\E\otimes\N)\star_p(\rho_{A_1}\otimes\tau_{E_1})\neq (\E \star_p \rho_{A_1}) \otimes (\N \star_p \tau_{E_1})$. Nevertheless, with $\jam(\E\otimes \N)= \jam(\E) \otimes \jam(\N)$, a straightforward manipulation leads to
\begin{align*}
\Tr_{E_1}[(\E\otimes\N)\star_p(\rho_{A_1}\otimes\tau_{E_1})] = (\E \star_p \rho_{A_1}) \otimes \N(\tau_{E_1}).  
\end{align*}

Note that, by setting $\mathcal{E}=\mathcal{I}$, i.e., the identity channel, the construction given by Eq.~\eqref{eq:StarProduct} yields a broadcasting map, namely, $\mathcal{I}\star_p\rho_{A_1}$. Interestingly, with $p=\frac{1}{2}$, the broadcasting map
\begin{align}
\mathcal{B}_2 (\rho_{A_1})&= \frac{1}{2}\left[(\rho_{A_1}\otimes\Id_{A_2})\jam(\mathcal{I})]+\jam(\mathcal{I})(\rho_{A_1}\otimes\Id_{A_2})\right], \nonumber \\
&=\frac{1}{2}\bigg\{ (\rho_{A_1}\otimes \Id_{A_2}), \ \mathrm{SWAP}\bigg\}.
\label{eq:VirtualBC}
\end{align}
becomes permutation symmetric and unitary covariant~\cite{parzygnat2024virtual,xiao2025no,lie2024quantum}. Here, the second equality follows because $\jam(\id)=\mathrm{SWAP}$ is the swap operator and $\{\cdot,\cdot\}$ denotes the anticommutator. In fact, the map $\mathcal{B}_2$ is \emph{unique} under some reasonable axioms, similar to Axiom~
\ref{axm:single_system_broadcasting}, and also referred to in the literature as virtual quantum broadcasting map~\cite{parzygnat2024virtual, xiao2025no}. Further, the quantum state over time obtained using Eq.~\eqref{eq:StarProduct} with $p=\frac{1}2$ becomes exactly equal to the pseudo-density matrix (PDM) based state over time for qubits~\cite[Appendix~A]{horsman2017can}. It has yet another important property: it admits a valid representation of two-time expectation values for a particular class of observables, known as light-touch observables~\cite{fullwood2024operator}. An observable represented by an operator $M_{A_1}$ is called light-touch if it admits a spectral decomposition of the form $M_{A_1}=\sum_i\lambda_i  P_i$, where the spectra $\{\lambda_i\}$ is either $\{\lambda\}$ or $\{\pm\lambda\}$ for some $\lambda\ge0$, and the $P_i$s are mutually orthogonal projectors. For example, observables associated with Pauli operators are light-touch observables. With this, the two-time expectation value of light-touch observable $M_{A_1}$ and  any observable $N_{A_2}$, for a state $\rho_{A_1}$ evolving under the action of a channel $\E:A_1\to A_2$, is given as
\begin{align*}
\langle M_{A_1}, N_{A_2}\rangle &:= \sum_i \lambda_i \  \Tr[\E(P_i \ \rho_{A_1} \ P_i) \ N_{A_2}] \nonumber \\ 
&= \Tr[\rho_{A_1A_2} \ (M_{A_1} \otimes N_{A_2})],
\end{align*}
where $\rho_{A_1A_2}=\E \star_p \rho_{A_1}$ is the quantum state over time obtained using Eq.~\eqref{eq:StarProduct} with $p=\frac{1}{2}$.

\subsection{Markov and non-Markov compatible quantum MTS}

We now turn to investigate the structures of quantum multi-time states. Before delving into multi-time states compatible with a general multi-time process, it is useful to revisit Theorem~\ref{Thm:StructureThm_MarkovQSOT} within the process matrix formalism. With the multi-time process framework in hand, we can now restate it as follows. 

\begin{CorHighlightbox}
\begin{corollary} \label{Cor:MarkovProc_StTime}
Given a Markovian process $W_{\rm Markov} {=} \bigotimes_{n=1}^{N-1}\Choi(\E_{n}) \otimes \rho$, the corresponding multi-time quantum state $\{\E_n\}_{n=1}^{N-1}\star \rho$ respecting Axiom~\ref{axm:PropMTQS}, has the following structure:
    \begin{align}
        \{\E_n\}_{n=1}^{N-1}\star \rho = W_{\rm Markov} * \left( \bigotimes_{n=1}^{N-1}   \Choi(\mathcal{C}_{\id})_{A_nA_n''A_n'}\right).
    \end{align}
 Here $\C_{\id}:{A_n\to {A''}_n \otimes {A'}_n}$ is a broadcasting map as in Definition~\ref{def:broadcasting}.   
\end{corollary}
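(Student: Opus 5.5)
The plan is to start from Theorem~\ref{Thm:StructureThm_MarkovQSOT}, which already fixes the multi-time state under Axiom~\ref{axm:PropMTQS} as the alternating composition
\begin{align*}
(\E_{N-1}\otimes \id^{\otimes N-1})\circ \cdots \circ (\C_{\id}\otimes\id)\circ(\E_1\otimes\id)\circ\C_{\id}(\rho),
\end{align*}
and simply translate this sequential composition into the Choi/link-product language. I will also use the fact that the final broadcast on $A_N$ is absent (equivalently, it is traced out by marginal consistency), so only $N-1$ broadcasting maps appear. This matches the index range in the corollary. No new axiom is needed: the content of the corollary is purely the representation-theoretic bookkeeping of Theorem~\ref{Thm:StructureThm_MarkovQSOT}.

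The key tool is the standard property of the link product: for linear maps $\mathcal{M}:X\to Y$ and $\mathcal{K}:Y\to Z$ one has $\Choi(\mathcal{K}\circ\mathcal{M})=\Choi(\mathcal{K})*\Choi(\mathcal{M})$. Parallel composition with identities corresponds to link products over non-overlapping systems, i.e.\ tensor products. These hold for arbitrary linear maps, not just CPTP ones, so they apply to the non-physical $\C_{\id}$. Relabelling the systems, I write each broadcasting map as $\C_{\id}:A_n\to A_n''\otimes A_n'$. The output $A_n''$ feeds the channel $\E_n:A_n''\to A_{n+1}$, and $A_n'$ is retained as the recorded copy at time $n$. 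The state $\rho$ is viewed as a map from the trivial system, with Choi operator $\rho_{A_1}$.

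The steps are then as follows. Write the full composite as a chain of elementary maps $\rho$, $\C_{\id}^{(1)}$, $\E_1$, $\C_{\id}^{(2)}$, $\E_2$, $\dots$, $\E_{N-1}$, where each identity factor only carries a retained $A_n'$ along. Apply the composition rule iteratively to obtain $\Choi(\E_{N-1})*\Choi(\C_{\id})_{A_{N-1}A''_{N-1}A'_{N-1}}*\cdots*\Choi(\E_1)*\Choi(\C_{\id})_{A_1A_1''A_1'}*\rho$. Then use commutativity and associativity of the link product (valid since each system appears in at most two factors) to regroup this as $\big(\Choi(\E_{N-1})\otimes\cdots\otimes\Choi(\E_1)\otimes\rho\big)*\bigotimes_n\Choi(\C_{\id})$. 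The first factor is exactly $W_{\rm Markov}$, because the channel Choi operators and $\rho$ share no systems.

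The main obstacle I expect is the careful bookkeeping of system labels and transposition conventions. In particular, I must make sure that the link of $W_{\rm Markov}$ with $\Choi(\C_{\id})$ contracts precisely over $A_n$ and $A_n''$ and leaves the $A_n'$ and $A_N$ open. I also need to check that the ordering of the output tensor factors agrees with $A_N\otimes\cdots\otimes A_1$ after identifying $A_n'\simeq A_n$. Associativity requires that no system appears in three operators, and this holds by construction here.
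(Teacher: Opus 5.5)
Your proposal is correct and follows the paper's argument. The paper obtains the corollary directly from Theorem~\ref{Thm:StructureThm_MarkovQSOT}, reading its alternating broadcast--channel composition as broadcasting maps $\mathcal{C}_{\id}$ inserted into the $N-1$ probe slots of $W_{\rm Markov}$ (Fig.~\ref{fig:markovqsot_processmatrix}(a)), and your link-product bookkeeping makes that translation explicit: $\Choi(\mathcal{K}\circ\mathcal{M})=\Choi(\mathcal{K})*\Choi(\mathcal{M})$ for arbitrary linear maps, with regrouping justified because each system appears in at most two factors.
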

\end{CorHighlightbox}

This reformulation follows directly from the structure depicted in Fig.~\ref{fig:markovqsot_processmatrix}(a) and highlights that the Markovian compatible multi-time state is obtained by inserting a broadcasting map into each probe region $n = 1,\cdots,N-1$ of the underlying Markovian process.

Notably, the same unique structure of decomposable quantum multi-time states, resembling the structure presented in Theorem~\ref{Thm:StructureThm_MarkovQSOT}, has also been derived in Ref.~\cite{lie2025multipartite} using a different set of axioms: state linearity and an additional axiom called ``quantum conditionability''.  Interestingly, the decomposable structure already implies reconstructibility, so axioms in Ref.~\cite{lie2025multipartite} give it for free. In our work, we instead start from reconstructibility and compositionality and arrive at the same structure. This convergence suggests the structure is fundamentally natural and robust. Another important distinction lies in the scope of the two frameworks. The proof in Ref.~\cite{lie2025multipartite} is carried out within quantum theory. By contrast, our axioms and their consequences hold for OPTs~\cite{Chiribella_2011, chiribella2010probabilistic, DAriano2016} obeying principles such as local tomography and causality. This generality allows us to construct multi-time states in theories beyond quantum theory, such as boxworld, as explored in Section~\ref{sec:MTSinBoxWorld}. Also, see Appendix~\ref{App:QSOTinGPT} for a discussion.

Quantum conditionability, however, remains important because it appears to generalize the classical Bayesian update rule to quantum theory. Interestingly, the boxworld multi-time states (see Section~\ref{sec:MTSinBoxWorld}) exhibit an analogue of conditionability, that emerges once additional axioms are invoked to characterize the broadcasting map. This seems to suggest conditionability is strictly stronger than Axiom~\ref{axm:PropMTQS}. Having seen it in both quantum and boxworld theory, it would be interesting to know whether it can be derived in a theory-independent way, or axiomatized in operational probabilistic theories to yield the same structure as Theorem~\ref{Thm:StructureThm_MarkovQSOT}. We leave these questions for future work.

Corollary~\ref{Cor:MarkovProc_StTime} hints at a natural generalization: a general multi-time quantum state can be obtained by replacing the Markovian process with a general (multi-time) process, as in Fig.~\ref{fig:markovqsot_processmatrix}(b), thereby unifying the frameworks of multi-time quantum states and process matrices. Now the structure of general quantum multi-time state, as given in Theorem~\ref{Thm:NMQSOT}, can be expressed using process matrices, as given below.

\begin{CorHighlightbox}
\begin{corollary}
 Given a multi-time process with a process matrix $W{=} \Tr_{E_N}\big[\Ast _{n=1}^{N-1} \Choi(\mathcal{F}_n) * \sigma \big]$, with $\mathcal{F}_{n} \in \Chan (A_n''\otimes E_n\to A_{n+1}\otimes E_{n+1})$, the corresponding quantum multi-time state  $\tau_{A_N\cdots A_2A_1}=\Tr_{E_N\cdots E_1}\big[\{\mathcal{F}_n\}_{n=1}^{N-1}\star \sigma \big]$, respecting Axioms~\ref{axm:PropMTQS} and \ref{axm:spatial-marginal-consistency} satisfies
\begin{align}
\tau_{A_N\cdots A_2A_1}=\Tr_{E_N\cdots E_1}\big[\{\mathcal{F}_n\}_{n=1}^{N-1}\star \sigma \big] =W * \left( \bigotimes_{n=1}^{N-1}   \Choi(\mathcal{C}_{\id})\right).   
\end{align}   
\end{corollary}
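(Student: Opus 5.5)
The plan is to take Theorem~\ref{Thm:NMQSOT} as given and only translate its operator-composition formula into the link-product language of process matrices, in the same way Corollary~\ref{Cor:MarkovProc_StTime} restates Theorem~\ref{Thm:StructureThm_MarkovQSOT}. Theorem~\ref{Thm:NMQSOT} (whose hypotheses are exactly Axioms~\ref{axm:PropMTQS} and~\ref{axm:spatial-marginal-consistency}) gives
\begin{align*}
\tau_{A_N\cdots A_1}=\Tr_{E_N}\circ \mathcal{F}_{N-1}\circ(\id_{E_{N-1}}\otimes\C_{\id})\circ\cdots\circ\mathcal{F}_1\circ(\id_{E_1}\otimes\C_{\id})(\sigma).
\end{align*}
Here each $\C_{\id}:A_n\to A_n''\otimes A_n'$ feeds $A_n''$ into $\mathcal{F}_n$ and leaves $A_n'$ as a surviving output. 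So the task reduces to showing that this sequential composition equals the link product of the Choi operators of all the constituents.

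First I will recall the two standard properties of the link product. It represents composition: $\Choi(\mathcal{M}\circ\mathcal{N})=\Choi(\mathcal{M})*\Choi(\mathcal{N})$, with contraction over the shared system. It is also commutative and associative, up to reordering of tensor factors, as long as no system appears in more than two factors. It reduces to $\otimes$ on disjoint systems, and tracing out a system is the link product with $\Id$ on that system. Applying a map to a state is the special case where the "state" is its own Choi operator. Since each $\C_{\id}$ lies in $\mathsf{Transf}_{\mathbb{C}}$, i.e.\ it is merely linear, I need the link-product composition rule to hold for arbitrary linear maps. It does, because the Choi isomorphism and the contraction identity are purely linear-algebraic and do not use positivity.

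Next I will write the right-hand side of Theorem~\ref{Thm:NMQSOT} as
\begin{align*}
\Id_{E_N}*\Choi(\mathcal{F}_{N-1})*\Choi(\C_{\id})_{A_{N-1}A''_{N-1}A'_{N-1}}*\cdots*\Choi(\mathcal{F}_1)*\Choi(\C_{\id})_{A_1A_1''A_1'}*\sigma .
\end{align*}
The identity maps on the $E_n$ contribute trivial Choi factors, which are absorbed. I then check the wiring: every system $A_n$, $A_n''$ and $E_n$ appears in exactly two factors and is contracted, while the outputs $A_n'$ (for $n\le N-1$) and $A_N$ appear once and survive. This check licenses regrouping by associativity and commutativity into
\begin{align*}
\big(\Tr_{E_N}[\Choi(\mathcal{F}_{N-1})*\cdots*\Choi(\mathcal{F}_1)*\sigma]\big)*\bigotimes_{n}\Choi(\C_{\id}).
\end{align*}
The first factor is exactly $W$ from Eq.~\eqref{Eq:non-MarkovianProcess}. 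The broadcasting Choi operators act on pairwise disjoint systems, so their mutual link product is a tensor product, which gives the claim.

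The main obstacle is bookkeeping rather than substance. I must make sure the regrouping is legitimate, i.e.\ that no system is shared by three factors; the primed and double-primed labels are introduced precisely to guarantee this. I must also reconcile the Choi convention used for $W$ with the identification of the final surviving $A_n'$ as the $n$-th time slot of $\tau$. If a transpose convention mismatch appears on $A_n''$, I would resolve it by noting that the same $\Choi$ definition is used for both $\mathcal{F}_n$ and $\C_{\id}$, so the partial transposes in the link product cancel consistently.
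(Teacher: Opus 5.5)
Your proposal is correct and follows the paper's route. The paper gives this corollary no separate proof: it presents it as a direct rewriting of Theorem~\ref{Thm:NMQSOT} in link-product form, just as Corollary~\ref{Cor:MarkovProc_StTime} rewrites Theorem~\ref{Thm:StructureThm_MarkovQSOT} (cf.\ Fig.~\ref{fig:markovqsot_processmatrix}(b)), and your wiring check plus the observation that the link-product composition rule is purely linear (so it covers the non-physical $\C_{\id}$) supply the bookkeeping the paper leaves implicit. Your worry about a transpose mismatch is unfounded: with the paper's convention, $P*Q$ is symmetric in its two arguments up to ordering of tensor factors, so $\Choi(\mathcal{M}\circ\mathcal{N})=\Choi(\mathcal{M})*\Choi(\mathcal{N})$ holds with no extra transposes.
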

\end{CorHighlightbox}

Having expressed quantum multi-time states in terms of process matrices, all the structural and conceptual aspects of MTS established through our axiomatic framework in the previous sections can now be reformulated within quantum theory. Conversely, the dynamical features captured by the quantum process-matrix framework can be translated into the language of multi-time states. This correspondence, in particular, opens the possibility of studying exotic causal structures---such as indefinite causal order and causal non-separability---within the multi-time state framework. We leave a systematic investigation of these connections for future work. In the remainder of this section, we instead focus on how temporal quantum memory and causal signaling are manifested in quantum multi-time states, and how these states can be exploited for causal discovery. 

\subsection{Semicausality and non-Markov processes \label{sec:SemiCausalProcessMatrix}}
Let us reinvestigate the connection between semicausality and non-Markov compatibility, discussed in Section~\ref{sec:CausalSignalNonMarkov}, in quantum multi-time states. Consider a bipartite quantum channel $\mathcal{F}:A_I B_I\rightarrow A_O B_O$. To establish the connection between causal structure and multi-time states, we restrict, without loss of generality, to semicausal channels~\cite{Preskill_casual, Eggeling2002Semicuasal}, for which signaling from $A_I$ to $B_O$ is prohibited, i.e., $A_I\not\to B_O$. Given an initial state $\sigma\in\St(A_I\otimes B_I)$, the action of $\mathcal{F}$ induces a quantum two-time state $\mathcal{F}\star\sigma\in A_I'\otimes A_O\otimes B_O\otimes B_I'$, as illustrated in Fig.~\ref{fig:QSOT-Semicausality}(b).

We are interested in understanding how the causal structure of the underlying process is reflected in the corresponding quantum multi-time state. In particular, suppose that we retain only the input system $A_I'$ and the output system $B_O$, while discarding the intermediate systems $A_O$ and $B_I'$. The resulting reduced state $\Tr_{A_OB_I'}(\mathcal{F}\star\sigma)$ raises a natural question: does the absence of signaling from $A_I$ to $B_O$ impose a characteristic constraint on this reduced two-time state? Conversely, can such a constraint on a two-time state be used to infer or detect the causal structure of the underlying quantum process, and more generally, to identify signaling in quantum theory? We have already established in Corollary~\ref{cor:semicausality} that non-positive semidefiniteness of the two-time states $\Tr_{A_OB_I'}(\mathcal{F}\star\sigma)$ implies signaling from $A_I$ to $B_0$. We now highlight that the above idea of detecting signaling through a bipartite channel can be extended to multi-time processes as well. As a concrete application, we show how Theorem~\ref{thm:semicausal} can be adapted to certify memory in a multi-time process next. 

\begin{figure} 
    \centering
    \includegraphics[width=\columnwidth]{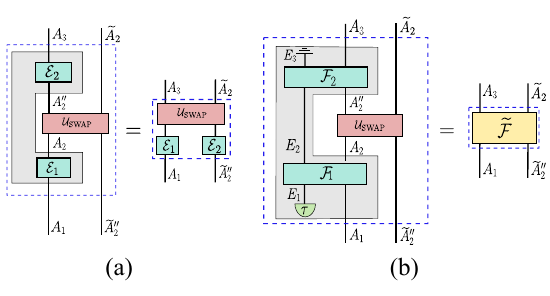}
    \caption{\textbf{Using multi-time process–channel isomorphism to detect memory.} The isomorphism is realized by inserting a SWAP unitary in the probe regions. We illustrate this for a two-time process ($N=2$). (a) A two-time process (see Eq.~\eqref{Eq:semicausality_twoT_process}) is isomorphic to a bipartite channel $\widetilde{\mathcal{F}}$ (see Eq.~\eqref{Eq:bipartite_channel_semicausality}). This bipartite channel can then be used to apply Theorem~\ref{thm:semicausal} to determine whether signaling from $A_1$ to $A_3$ occurs, thereby detecting memory in the dynamics. (b) In the Markovian case, the corresponding bipartite channel is a product channel followed by a SWAP interaction. Here, the reduced multi-time state (see Eq.~\eqref{Eq:semicausality_reduced_QSOT_Markovian}) is always positive. Consequently, the presence of memory is revealed by the failure of positivity in the reduced multi-time state. }
    \label{Fig:memory_semicausality}
\end{figure}

Consider a general three-time process $W\in A_3\otimes A_2'' \otimes A_2\otimes A_1$, constructed from of an initial environment state $\tau_{E_1}$ and two successive system-environment channels $\mathcal{F}_1:A_1\otimes E_1\to A_2\otimes E_2$, and $\mathcal{F}_2:A_2''\otimes E_2\to A_3\otimes E_3$, as depicted in Fig.~\ref{Fig:memory_semicausality}(b):

\begin{align}
 W=\Tr_{E_3}\big[\Choi(\mathcal{F}_2) * \Choi(\mathcal{F}_1) *  \tau\big].  \label{Eq:semicausality_twoT_process}  
\end{align}
For simplicity, we take the initial environment state $\tau_{E_1}$ to be uncorrelated with the system $A_1$. The result can be straightforwardly extended to initially correlated cases. We do not yet specify a particular input state at $A_1$, focusing instead on the structure of the channels and how they connect to  Theorem~\ref{thm:semicausal}. 

In particular, we can see that this three-time process is mathematically isomorphic to a bipartite channel $\widetilde{\mathcal{F}}:A_1\otimes\widetilde{A}_2''\to A_3\otimes\widetilde{A_2}$ where $\widetilde{A}_2''\simeq A_2''$ and $\widetilde{A}_2\simeq A_2$. This isomorphism is implemented by inserting a SWAP unitary: $\mathcal{U}_{\rm SWAP}: A_2\otimes\widetilde{A_2}''\to A_2''\otimes\widetilde{A}_2$, defined by  $\mathcal{U}_{\rm SWAP}(\rho)={\rm SWAP}\  \rho\  {\rm SWAP}$ with ${\rm SWAP}=\sum_{i,j}\ketbra{i}{j}\otimes \ketbra{j}{i}$. As illustrated in Fig.~\ref{Fig:memory_semicausality}(a), this gives
\begin{align}
    \Choi(\widetilde{\mathcal{F}}) =  W * \Choi(\mathcal{U}_{\rm SWAP}). \label{Eq:bipartite_channel_semicausality}
\end{align}

Operationally, the insertion of this Swap unitary is motivated by our aim to detect `memory' in the dynamics. Here, memory manifests as the potential for signaling along the path $A_1\to E_2\to A_3$: information that flows from the system into the environment and later returns. However, there is also a direct signaling path $A_1\to A_2\to A_2''\to A_3$ through the system itself. The SWAP unitary breaks this direct path, allowing us to recast the dynamics as a bipartite channel and isolate the memory effect. Note that inserting a SWAP for this causal break is only a proof-of-principle approach; weaker interventions may suffice, as exemplified in Refs.~\cite{Goswami2021,roy2026practical}.

Now consider the Markovian case, where the process matrix factorize as
\begin{align}
  W_{\rm Markov} = \Choi(\E_2)\otimes \Choi(\E_1),  \label{Eq:semicausality_twoT_Markovian_process}
\end{align}
with channels $\E_1:A_1\to A_2$ and $\E_2: A_2''\to A_3$ ($A_2''\simeq\widetilde{A}_2''$) (see Fig.~\ref{Fig:memory_semicausality}(a)). After inserting the SWAP, the corresponding bipartite channel becomes 
\begin{align}
    \widetilde{F}_{\rm Markov} = \mathcal{U_{\rm SWAP}}\circ (\E_1 \otimes \E_2).
\end{align}
Clearly, here $A_1$ can not signal to $A_3$. Equivalently, $\E_2$ emerges as the marginal channel of $\widetilde{F}_{\rm Markov}$: $\Tr_{\widetilde{A}_2}\circ \widetilde{F}_{\rm Markov} = \Tr_{A_1}\otimes \E_2$. 

As a consequence, Theorem~\ref{thm:semicausal} dictates that for any input state $\sigma_{A_1\widetilde{A}_{2}''}$,  
\begin{align}
    \Tr_{\widetilde{A_2}\widetilde{A_2}''} \big(\widetilde{\mathcal{F}}_{\rm Markov} \star \sigma\big) = (\id_{A_1}\otimes \E_2)(\sigma).  \label{Eq:semicausality_reduced_QSOT_Markovian}
\end{align}
Therefore, by Corollary~\ref{cor:semicausality}, whenever the bipartite channel $\widetilde{F}$ obtained from an underlying dynamics yields a reduced two-time state that is not positive semidefinite, then the dynamics must be non-Markovian. This establishes a connection between our axiomatic framework and the detection of memory effects.

\subsection{Causal discovery via semidefinite programming}

\begin{figure}
    \centering
    \includegraphics[width=1\linewidth]{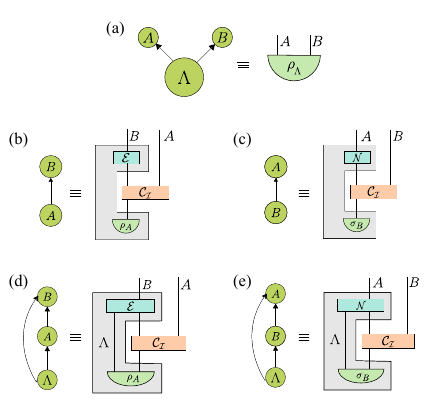}
\caption{\textbf{A bipartite spatio-temporal state compatible with different causal models.} Consider a bipartite spatio-temporal state $\tau_{AB}$.
(a) \textit{Common cause:} A latent variable $\Lambda$ influences both $A$ and $B$. In this case, $\tau_{AB}$ is simply a bipartite physical state.
\textit{Direct cause:} (b) causal order $A \prec B$; (c) causal order $B \prec A$. By Theorem~\ref{Thm:StructureThm_MarkovQSOT}, $\tau_{AB}$ is compatible with a Markovian process in either case.
\textit{A combination of direct and common cause:} (d) causal order $A \prec B$; (e) causal order $B \prec A$. By Theorem~\ref{Thm:NMQSOT}, $\tau_{AB}$ arises from a non-Markovian process in either case.
The task of causal discovery is to identify which causal model gives rise to a given spatio-temporal state. In a ``causally neutral'' framework such as classical probability theory,  a bipartite joint distribution is compatible with all of the above causal models, making it impossible to infer the underlying causal model from the data alone. In more general theories, such as quantum theory, however, a ``causally neutral'' spatio-temporal state is not always possible, allowing us to rule out certain causal models based on the spatio-temporal state $\tau_{AB}$. }
    \label{fig:causalmodels}
\end{figure}

As an application of the correspondence established between quantum multi-time states and process matrices, we now investigate how this framework can be used to infer causal relationships among temporal events, a task commonly referred to as causal discovery.  A question in causal discovery is whether we can figure out a causal model just from the observed correlations~\cite{pearl_2009,Wood2015,Giarmatzi2018}. In our setting, this question becomes, given a multi-time quantum state, what causal model could have produced it? As Fig.~\ref{fig:causalmodels} illustrates, even a simple two-time quantum state can potentially arise from multiple causal structures. The task, then, is to find all causal models consistent with the corresponding state over time and rule out those that are not.

As shown in Refs.~\cite{costa2016,barrett_2019,Barrett_2021}, the underlying causal model can be uniquely represented by a process matrix with appropriate linear constraints. In this language, our question is given an $N$-time quantum state $\rho_{A_N\cdots A_2A_1}$, is it compatible with a general $N$-time quantum process that respects a well-defined causal order? More precisely, denoting the collective input ($A_i$) and output $(A''_i)$ systems of the probing region $i$ as $\mathbb{A}_i$, a causal order is given by a permutation $\sigma$ such that $\mathbb{A}_{\sigma(1)}\preceq \mathbb{A}_{\sigma(2)} \cdots \preceq \mathbb{A}_{\sigma(N)}$, where $\mathbb{A}_{i}\preceq \mathbb{A}_{j}$ means that the probing region $i$ causally precedes probing region $j$, i.e., no information can flow from $j$ to $i$. For notational convenience, from now on we discuss the scenario with the causal order $\mathbb{A}_{1}\preceq \mathbb{A}_{2} \cdots \preceq \mathbb{A}_{N}$. The same arguments hold for any other causal order by relabeling.

As shown in Eq.~\eqref{Eq:non-MarkovianProcess}, such a process $W$ can be represented by an initial system-environment state together with a sequence of joint channels. Alternatively, $W$ can also be characterized directly by a set of linear constraints. To state these constraints compactly, we use the following notation~\cite{araujo2015witnessing,Bavaresco2021}: for any subsystem $X$, let
\begin{align}
    _{X}W :=\Tr_{X}W \otimes \frac{\Id _X}{d_X},
\end{align}
where $d_X$ is the dimension of $X$. Using this notation, a general multi-time process $W$ consistent with the causal order $\mathbb{A}_{1}\preceq \mathbb{A}_{2} \cdots \preceq \mathbb{A}_{N}$ satisfies the following constraints for all $1< n \le N$:

\begin{align}
    &W \ge 0, \\
    &_{A_NA''_{N-1}\cdots A_n}W =     _{A_NA''_{N-1}\cdots A_n A''_{n-1}}W, \\
    &\Tr(W) = d_{A_1''}d_{A_2''}\cdots d_{A''_{N-1}}.
\end{align}
Operationally, the first operator inequality ensures the CP-preserving property of the process: any set of completely positive maps (possibly with an ancillary system) is transformed to a completely positive map. The second constraint encodes the causality constraint, the systems at causal future cannot influence the past systems. The third constraint fixes the overall normalization. 

One can verify that the process matrix in Eq.~\eqref{Eq:non-MarkovianProcess}  satisfies these constraints due to the trace preserving property of the channels. Conversely, it is known that any operator satisfying the above constraints represents a valid $N$-time quantum process~\cite{chiribella09b}. This characterization thus provides an algorithmic way to search over multi-time processes with a given causal structure. 

We denote the set of multi-time processes satisfying the above constraints by $\mathcal{W}^{\mathbb{A}_{1}\preceq \mathbb{A}_{2} \cdots \preceq \mathbb{A}_{N}}$.

Now let $\rho_{A_N\cdots A_2A_1}$ be a general $N$-time quantum state.  From Theorem~\ref{Thm:NMQSOT}, a multi-time state compatible with a process $W\in \mathcal{W}^{\mathbb{A}_{1}\preceq \mathbb{A}_{2} \cdots \preceq \mathbb{A}_{N}}$ is given by 
\begin{align}
    \rho_{A_N \cdots A_2A_1} = W * \left( \bigotimes_{i=1}^{N-1} \Choi(\mathcal{C}_{\id})_{A_i A_i'' A_i'} \right),
\end{align}
where $\mathcal{C}_{\id}$ is a broadcasting map that defines the star product. For example, we can choose the virtual broadcasting map $\mathcal{B}_2$ from~\cite{parzygnat2024virtual} that reproduces the Jordan product-state over time for the two-time scenario. The causal inference problem then becomes given $\rho$, find a process $W\in \mathcal{W}^{\mathbb{A}_{1}\preceq \mathbb{A}_{2} \cdots \preceq \mathbb{A}_{N}}$ that satisfies this equation.

To solve this feasibility problem, we can frame it as an optimization that minimizes the distance between $\rho$ and the multi-time state generated by a process from the set $\mathcal{W}^{\mathbb{A}_{1}\preceq \mathbb{A}_{2} \cdots \preceq \mathbb{A}_{N}}$. A natural choice is to minimize the trace norm distance:

\begin{align}
    &\text{minimize} \quad \frac{1}{2} \left\| \  \rho - W * \left( \bigotimes_{i=1}^{N-1} \Choi(\mathcal{C}_{\id})_{A_i A_i'' A_i'} \right)\  \right\|_1 \\
    &\text{subject to} \quad W \in \mathcal{W}^{\mathbb{A}_{1}\preceq \mathbb{A}_{2} \cdots \preceq \mathbb{A}_{N}}.
\end{align}

This can be reformulated as the following Semidefinite Program (SDP) by introducing auxiliary variables $Y$ and $Z$~\cite{watrous2009semidefinite,watrous2013simpler}:

\begin{align} \min_{W, Y, Z} \quad & \Tr(Y + Z) \\ \text{subject to} \quad & Y - Z = \rho - W * \left( \bigotimes_{i=1}^{N-1} \Choi(\mathcal{C}_{\id})_{A_i A_i'' A_i'} \right), \\ & Y \ge 0, \quad Z \ge 0, \\ & 
W \in {\mathcal{W}^{\mathbb{A}_{1}\preceq \mathbb{A}_{2} \cdots \preceq \mathbb{A}_{N}}}.
\end{align}

If the optimal value of this SDP is zero, then the multi-time state $\rho$ is compatible with the causal order $\mathbb{A}_{1}\preceq \mathbb{A}_{2} \cdots \preceq \mathbb{A}_{N}$, and the optimal $W$ serves as a witness to this fact.

\section{Conclusions \label{sec:Conclusion}}

In this work, we address two major gaps in the theory of multi-time states: the lack of an axiomatic characterization of multi-time states corresponding to non-Markovian dynamics, and the absence of a general framework for spatio-temporal states beyond quantum theory itself. To bridge these gaps, we provide a set of operational principles formulated within the broader framework of operational probabilistic theories (OPTs)~\cite{chiribella2010probabilistic,Chiribella_2011,DAriano2016} with finite-dimensional convex state spaces that satisfy causality (i.e., a unique discarding effect)~\cite{Chiribella_2011,chiribella2010probabilistic} and local tomography~\cite{hardy2001quantum, hardy2011reformulating,chiribella2010probabilistic,Barrett2007,DAriano2016, Chiribella2021}. Specifically, we have introduced three operationally motivated axioms---compositionality, reconstructibility, and spatio-temporal marginal consistency---and have shown that these axioms uniquely determine the structure of multi-time states in both Markovian and non-Markovian settings. Our results establish a connection between multi-time states and the well-established framework of multi-time processes (such as quantum combs, process tensors, process matrices in quantum theory), which provide a complete description of open system dynamics, including both Markovian and non-Markovian evolutions~\cite{Pollock2018Operational}.

For Markovian dynamics, our structure theorem implies that a Markov compatible multi-time state is obtained by inserting broadcasting maps into the intervention slots of a Markov multi-time process: the initial physical state undergoes a broadcasting map, one output evolves through a local channel, whose output is then fed into the next broadcasting map, and the structure repeats sequentially. Given this structure, characterizing Markov compatible multi-time states reduces to identifying the admissible broadcasting maps, which are typically fixed by additional operational axioms~\cite{parzygnat2024virtual}. This establishes broadcasting maps as the fundamental building blocks of the structure of Markov compatible multi-time states, consistent with a previously obtained result in quantum theory~\cite{lie2024quantum,lie2025multipartite}. Our contribution is to show that this structure holds more generally within the framework of operational probabilistic theories, without relying on any specific features of quantum theory.

Our axiomatic framework extends naturally beyond the Markovian regime. By imposing an additional axiom of spatio-temporal marginal consistency, we establish a structure theorem for general multi-time states, including those compatible with non-Markovian dynamics. Operationally, both Markovian and non-Markovian dynamics are instances of general multi-time processes, in which a joint initial system-environment state evolves through a sequence of system-environment interactions, followed by discarding the inaccessible environmental degrees of freedom at the final time step, and an intervention slot on the system between successive interactions. We show that a multi-time state compatible with such a general process can be constructed by inserting suitable broadcasting maps at each intervention slot of the underlying multi-time process, provided the multi-time state satisfies spatio-temporal marginal consistency together with compositionality and reconstructibility. These structure theorems establish, for the first time, an explicit axiomatic characterization of multi-time states arising from arbitrary system-environment dynamics. Furthermore, as in the Markovian case, our axioms are not restricted to quantum theory but apply to the broader class of operational probabilistic theories. This has foundational implications: it provides a concrete structure for multi-time states and potentially their associated Bayesian inference rules across a wide range of physical theories. Such a framework can help clarify which features of Bayesian calculus are fundamental and arise from theory-independent operational principles.

In addition to its foundational implications, our axiomatic framework also leads to several concrete applications. One of our results establishes a fundamental connection between the properties of a spatio-temporal state---arising from a joint initial physical state and a subsequent bipartite channel---and signaling through that channel. Specifically, we show that for any spatio-temporal state satisfying reconstructibility, compositionality, and spatio-temporal marginal consistency, if the reduced spatio-temporal state, obtained by discarding the relevant intermediate systems, is not a physical state of the theory, then this provides a sufficient condition for certifying signaling through the bipartite channel. In quantum theory, for Hermitian spatio-temporal states, this property translates to negativity of a particular marginal spatio-temporal state, a special case of which was investigated in the pseudo-density matrix framework~\cite{Liu2025}. In our work, we demonstrate that this connection is not a peculiar feature of quantum theory or any particular formalism for spatio-temporal states, but rather emerges directly from the operational principles underlying our framework. Furthermore, we also show that this property can be extended to multi-time processes and used to certify memory in dynamics. This observation could lead to experimental tests to detect memory in a non-Markovian process, analogous to tests used in causal inference~\cite{liu2024experimentaldemonstrationquantumcausal}. 

We also show that the correspondence between multi-time states and processes naturally leads to an application in causal discovery. Finding a causal model from observed correlations is a central goal in the causality community~\cite{pearl_2009,Wood2015,Giarmatzi2018}. Since the framework of multi-time states generalizes the notion of classical joint probability distributions in time, it is natural to ask: given a multi-time state, which underlying causal model, captured by a multi-time process~\cite{Spekkens_cause,costa2016,Barrett_2021}, could explain it? In general, the inferred causal model need not be unique. As shown in Ref.~\cite{Song2025}, certain multi-time states can be compatible with both a bipartite state and a Markov compatible two-time state. Similarly, we have shown that the same multi-time state can be compatible with both Markovian and non-Markovian dynamics. This ambiguity is not universal, however, due to the inherent asymmetry between spacelike and timelike correlations in nonclassical theories. The goal of a causal discovery algorithm is therefore to rule out incompatible causal models, rather than to identify a unique one. We demonstrate a proof-of-principle approach to this question in the quantum case. Specifically, given a multi-time state, we ask whether there exists a multi-time process with a particular causal order of interest that could explain it. We show that this question can be addressed via a semidefinite program (SDP). Our structure theorems provide the foundation for this formulation, offering a systematic framework for quantum causal discovery from multi-time states. 

Finally, to justify that our framework indeed extends beyond quantum theory, we develop, for the first time, a complete framework of multi-time states and Bayesian retrodiction in the square-bit boxworld. As a first step, we characterize the broadcasting map there using operational principles analogous to those in Ref.~\cite{parzygnat2024virtual}, which had been left unspecified in the general structure theorems. Since the square-bit boxworld satisfies the assumptions of our OPT framework, the structure theorems apply directly whenever a multi-time state satisfies reconstructibility, compositionality, and spatio-temporal marginal consistency. Moreover, the unique broadcasting map we identify yields a unique multi-time state in this theory, and all the applications discussed above extend straightforwardly to boxworld. This example concretely establishes our axiomatic framework in a Hilbert-space-free setting, without relying on the tools of quantum theory.

Our results raise several intriguing questions and open up new directions for future research, which we discuss in the following. Although we found a unique structure for multi-time states in a broad class of OPTs, the broadcasting map itself was left unspecified. In quantum theory and boxworld, however, we have concrete examples of unique broadcasting maps, both derived from similar operational axioms~\cite{parzygnat2024virtual}. This naturally raises the question: are these axioms sufficiently universal to single out a broadcasting map in any OPT? Several mathematical tools are already available to explore this question. It is known that reversible transformations in any finite-dimensional OPT with closed-convex state space form a compact Lie group~\cite{Gianelli2026}, and hence admit an irreducible decomposition. This means that maps satisfying covariance conditions under reversible transformations can, in principle, be characterized. The remaining question is whether classical consistency and permutation invariance are sufficient to force a unique solution. A positive answer could lead to a unique state over time for any OPT satisfying our assumptions.

Another open question is whether our framework can be extended beyond locally tomographic OPTs. In principle, one could attempt to define multi-time states in some of these theories, such as fermionic theories, which are simply a superselection of quantum theory, by projecting quantum multi-time states onto the relevant subspaces. However, it would be more satisfying to characterize multi-time states directly within the OPT framework, without relying on an embedding into quantum theory. This remains an interesting direction for future work.

One of the main motivations for the quantum state-over-time formalism was to develop a rigorous, non-commutative Bayesian calculus within quantum theory. This effort is partly inspired by the success of Bayesian methods in classical settings, with applications in signal processing, machine learning, and state estimation~\cite{Sarkka2013-mn, Bishop2006-qp}. In quantum theory, similar ideas have led to Bayesian phase estimation~\cite{Wiebe2016}, quantum state smoothing~\cite{Tsang2009, guevara2015smoothing, Liu2025_retrodictive,liu2026unifying}, and quantum belief propagation~\cite{Hastings2007}, to name a few. A systematic exploration of these applications through the unified lens of spatio-temporal states could provide a rigorous, device-independent, and theory-independent foundation for existing protocols, and may also uncover new Bayesian tools that have no classical analog, similar to Ref.~\cite{Liu_2026_proper_unproper}.

Another interesting direction is to distinguish two types of multi-time states: those arising from a probabilistic mixture of Markovian processes (a class of classical memory processes), and those with genuine quantum memory. In the multi-time process framework, considerable progress has been made in distinguishing these~\cite{Giarmatzi_2021, taranto2024characterising, Goswami2025}. A natural question is whether this distinction is also visible at the level of multi-time states. One possible approach is to adapt the concept of squashed non-Markovianity~\cite{Gangwar2025}, originally developed for tripartite physical states and later extended to dynamical settings~\cite{Buscemi2025}. This could provide a quantitative way to distinguish different types of non-Markovianity in multi-time states.

Finally, an interesting endeavor would be to go beyond multi-time processes and consider more exotic scenarios, such as processes with indefinite causal order~\cite{oreshkov12,chiribella09} or indefinite input-output directions~\cite{chiribella2022quantum}. In the pseudo-density matrix (PDM) framework, some initial steps have been taken to connect PDMs with indefinite causal order, showing that PDM negativity can be activated in such scenarios~\cite{Liu_2024}. An axiomatic framework could provide a deeper understanding of this phenomenon and potentially generalize such negativity activation across different physical theories.

\begin{acknowledgements}
This research was completed while K.G.\ was a postdoc at HKU. K.G.\ was
supported by the Hong Kong Research Grant Council (RGC) through grant
No.~17307520 and by the John Templeton Foundation through grant 62312,
``The Quantum Information Structure of Spacetime'' (qiss.fr). A.B. thankfully acknowledges financial support from IISER Mohali through the institute-funded PhD programme. All the authors align with the manifesto of \emph{Quantum Scientists for Disarmament} \cite{Disarmament2026} and oppose any use of the results of this work for military or surveillance purposes.\\

\smallskip

\noindent \textbf{AI Usage:} The conceptualization, proof strategy, and all technical content are entirely generated by the authors, all of whom, to the best of our knowledge, are humans. We used standard AI tools to polish the writing and improve styles. No conceptual or mathematical content was AI-generated. KG used DeepSeek-V4.1-Flash to generate the Python code~\cite{Goswami_GlobalBroadcastingCode_2026}, which he subsequently modified.
\end{acknowledgements}

\appendix

\section{Towards a formulation of state over time in general physical theories}\label{App:QSOTinGPT}

We now introduce the framework of a broader class of physical theories on which
our framework for multi-time states rests. Generally, such theories are formalized within the framework of general probabilistic theories (GPTs)\cite{hardy2001quantum,Barrett2007,HARDY2013,Plvala2023}, which provide a unified setting for studying quantum theory and beyond. Here, we focus on a special class of GPTs, known as operational probabilistic theories (OPTs)~\cite{chiribella2010probabilistic,Chiribella_2011,DAriano2016}, which is inspired by category theory. In this theory, physical transformations and their sequential composition are treated as primitives, and physical systems are equipped with parallel composition rules. We begin with a brief introduction to OPT. For conciseness, we introduce OPT in algebraic notation; a comprehensive diagrammatic introduction is given in Refs.~\cite{chiribella2010probabilistic,Chiribella_2011,DAriano2016}.

\subsection{OPT framework}

\noindent \textbf{Physical systems:} The physical systems in OPT determine the input and output of all physical transformations. In classical theory, physical systems are random variables; in quantum theory, they are represented by Hilbert spaces. For two physical systems $A$ and $B$, the composite system (or their parallel composition) is denoted by $A\otimes B$. The parallel composition is associative: $A\otimes (B\otimes C) = (A\otimes B)\otimes C$ for all systems $A,B,C$. To keep notation concise, we write $AB$ for $A\otimes B$ whenever there is no risk of confusion. This parallel composition rule is not unique across theories: different choices, such as the minimal or maximal tensor product~\cite{Plvala2023} or the wedge product~\cite{Vidal2021}, can yield theories with distinct physical features. In quantum theory, this composition rule is isomorphic to the algebraic tensor product. Every OPT contains a distinguished `trivial system' $I$ satisfying $A\otimes I = I \otimes A = A$ for all systems $A$. It is isomorphic to a one-dimensional vector space. 

\medskip

\noindent{\textbf{Transformation and tests}:} The set of physical transformations from an input system $A_I$ to an output system $A_O$ is denoted by $\mathsf{Transf}(A_I\to A_O)$. Parallel composition of two transformations $\N \in \mathsf{Transf}(A_I\to A_O)$ and $\M \in \mathsf{Transf}(B_I\to B_O)$ results in a new transformation $\N \otimes \M \in \mathsf{Transf}(A_I \otimes B_I\to A_O \otimes B_O)$. On the other hand, transformations $\N \in \mathsf{Transf}(A_1\to A_2)$ and $\M \in \mathsf{Transf}(A_2\to A_3)$ can be sequentially composed to get a new transformation $\M \circ \N \in \mathsf{Transf}(A_1\to A_3)$. Just like the parallel composition, the sequential composition `$\circ$' is also associative. Every system $A$ has an identity transformation $\id \in \mathsf{Transf}(A\to A)$ satisfying $\N \circ \id_A = \id_B \circ \N$ for all $\N \in \mathsf{Transf}(A\to B)$ and for all systems $A$, $B$. Moreover, for any two systems $A$ and $B$, there is a SWAP transformation $\mathcal{U}_{\rm{SWAP}} \in \mathsf{Transf}(A\otimes B \to B \otimes A)$ that exchanges the order of the systems. It satisfies $\mathcal{U}_{\rm{SWAP}} \circ (\N \otimes \M) = (\M \otimes \N) \circ \mathcal{U}_{\rm{SWAP}} $ for all $\N \in \mathsf{Transf}(A \to A)$ and $\M \in \mathsf{Transf}(B \to B)$. Together, systems and transformations form a symmetric monoidal category, where systems are objects, transformations are morphisms, sequential composition is the categorical composition, and the parallel composition gives the monoidal structure, with the SWAP map providing symmetry.
In quantum theory, transformations are represented by completely positive trace non-increasing (CPTNI) linear maps acting on the space of bounded operators on a Hilbert space.

\smallskip

A test represents a scenario where an experimentalist performs a collection of transformations $\{\N_x\}_{x\in X}$ with each $\N_x \in \mathsf{Transf}(A \to B)$ associated with a classical outcome $x$. In quantum theory, this corresponds to a quantum instrument. The set of all tests from $A$ to $B$ is represented by $\mathsf{Tests}(A \to B)$. Tests are composed in parallel as $\{\N_x\}_{x\in X} \otimes \{\M_y\}_{y\in Y} = \{\N_x \otimes \M_y\}_{(x,y)\in X\times Y}$ and sequence as 
$\{\N_x\}_{x\in X} \circ \{\M_y\}_{y\in Y}=\{\N_x \circ \M_y\}_{(x,y)\in X\times Y}$. A single-outcome test corresponds to a \emph{deterministic transformation}. For example, coarse-graining a test $\{\N_x\}_{x\in X}$ to a single outcome yields a deterministic transformation $\sum_{x\in X}\N_x$. Deterministic transformations, however, are primitive and need not necessarily arise from coarse-graining tests (e.g., waveplates in optics). The set of all deterministic transformations between $A$ and $B$ is denoted by $\Chan(A\to B)$, following the quantum theory convention where deterministic transformations are known as quantum channels and are given by completely positive trace preserving (CPTP) linear maps.

\medskip

\noindent{\textbf{States, effects, measurements and scalars}:} Operationally, a state denotes the process of initializing a system.  In OPT, a state $\rho$ of system $A$ is a transformation from the trivial system $I$ to $A$, i.e. $\rho \in \mathsf{Transf}(I\to A)$, with the set of all states denoted by $\St _1(A):=\mathsf{Transf}(I\to A)$. In quantum theory, these correspond to subnormalized density operators, represented by positive operators with trace less than or equal to one. Deterministic states in $A$ are elements of $\St(A):=\Chan(I \to A)$. In quantum theory, these are positive operators with unit trace (density operators) \footnote{Note, our notation differs slightly from the OPT literature~\cite{DAriano2016}. In OPT, the state space includes subnormalized states corresponding to probabilistic preparations, whereas in most GPT~\cite{Plvala2023} and quantum theory, the state space refers to deterministic (normalized) states. To avoid confusion, we denote the deterministic states by $\St(A)$ and the OPT state space by $\St_1(A)$.}.

\smallskip

On the other hand, effects in the system $A$ are the elements of the effect space $\eff(A):=\mathsf{Transf(A\to I)}$, which are transformations from the system $A$ to the trivial systems $I$. Similarly, a collection of effects $\{e_x\}_X$ in $A$ is an element of $\mathsf{Tests}(A\to I)$ representing a measurement, and generalizing the positive operator valued measure (POVM) in the quantum theory. The set of deterministic effects on $A$ are denoted by $\mathsf{DetEff}(A):=\Chan(A\to I)$. An important principle is the \emph{Causality} which states that for every system $A$ there is only one and unique element in $\mathsf{DetEff}(A)$. Quantum theory satisfies this principle, and the deterministic effect is the trace map $\Tr$. In our work, we assume the causality principle, and denote the deterministic effect in $A$ by $\Tr_A$, or simply $\Tr$ when there is no confusion.

\smallskip

The elements of $\mathsf{Transf}(I\to I)$ are called scalars. They are crucial to the probabilistic structure of the OPT, as their sequential and parallel compositions commute.

\medskip

\noindent \textbf{Probabilistic structure and quotienting:} The probabilistic structure of OPT is captured by a map $\mathsf{Prob}: \mathsf{Transf}(I \to I) \to [0,1] \subset \mathbb{R}$, , which assigns probabilities to scalars and satisfies

\begin{align}
  &\sum_{x\in X}\mathsf{Prob}(s_x)=1,\ \forall \{s_x\}_{x\in X} \in \mathsf{Tests}(I\to I),\\
  & \mathsf{Prob}(s\otimes t) = \mathsf{Prob}(s).\mathsf{Prob}(t), \ \forall s,t \in \mathsf{Transf}(I\to I).
\end{align}
Thus, for a state $\rho\in \St(A)$ and measurements $\{e_x\}_{x\in X}\in \mathsf{Tests}(A\to I)$, their composition $ \{e_x(\rho) := e_x\circ \rho\}_{x\in X} \in \mathsf{Tests(I\to I)}$  yields the measurement outcome probability $p(x) = \mathsf{Prob}(e_x (\rho) )$. In quantum theory, this is the Born rule~\cite{born_1926}.

Given $\Tr$ is a deterministic effect, for all $\rho \in \St(A)$, $\Tr(\rho):=\Tr \circ \rho$ is a deterministic scalar, hence $\mathsf{Prob}(\Tr(\rho))=1$. Conversely, thanks to the causality axiom, any probabilistic state $\tilde{\rho}\in \St _1(A)$ can be uniquely normalized to a deterministic state $\rho \in \St(A)$ via $\tilde{\rho}= \mathsf{{Prob}(\Tr(\tilde{\rho}))\  .\  \rho}$.

The probabilistic structure also induces an equivalence relation on transformations. Two transformations $\N,\M \in \mathsf{Transf}(A\to B)$ are equivalent if, for all auxiliary system $R$, every state $\rho\in \St(AR)$ and effects $e\in \eff(BR)$, 
\begin{align}
    \mathsf{Prob}( e \circ (\N \otimes \id_R) \circ \rho) =  \mathsf{Prob}( e \circ (\M \otimes \id_R) \circ \rho). \label{Eq:Equivalent_transforamtion}
\end{align}
By quotienting out this equivalence relation, that is, grouping together all operationally indistinguishable transformations, yields the quotient OPT, which retains only physically meaningful distinctions. Throughout this work, we work in the quotient OPT, and write $\N=\M$ whenever they belong to the same equivalence class under \eqref{Eq:Equivalent_transforamtion}.  Also, from now on, we drop $\mathsf{Prob}$, implicitly treating scalars $\mathsf{Transf}(I\to I)$ as their associated probabilities. For example, $\Tr(\rho)$ denotes $\mathsf{Prob}(\Tr \circ \rho)$. 

\medskip

\noindent \textbf{Multi-time process in OPT:} The notion of multi-time processes has been investigated within the OPT framework under the name of `Higher Order Causal Category (HOCC)'~\cite{Kissinger2019}, or more generally `Higher Order Process Theory (HOPT)'~\cite{Wilson2021}.   Explicitly, consider an $N$-time process with causal order $\mathbb{A}_1\preceq \mathbb{A}_2 \preceq \cdots \preceq \mathbb{A}_N$.  The process is constituted by an initial system–environment state $\sigma \in \St(A_1 E_1)$, followed by a  sequence of deterministic system–environment transformations $\mathcal{F}_n \in \Chan(A_n E_n \to A_{n+1} E_{n+1})$ for $n=1,\cdots,N-1$. At each probe region, an experimentalist can intervene by performing tests $\{\tau_{x_n}\}_{x_n \in X_n}$, where $\{\tau_{x_n}\}_{x_n \in X_n}\in \mathsf{Tests}(A_n\to A_n)$ for $n=1.\cdots, N-1$ and $\{\tau_{x_N}\}_{x_N \in X_N}\in \mathsf{Tests}(A_N\to I)$. The resulting joint probability of the temporal sequence of outcomes $(x_1,\cdots, x_N)\in X_1 \times \cdots \times X_N$ is given by

\begin{align}
&p(x_1, \dots, x_N)
 \nonumber \\
 &=
(\Tr_{E_N} \otimes \tau_{x_N})
\circ \mathcal{F}_{N-1}
\circ \cdots
\circ \mathcal{F}_1
\circ (\id_{E_1} \otimes \tau_{x_1})
\circ \sigma.
\end{align}
All our results in quantum theory can be succinctly expressed within the multi-time process framework of HOPT.

\medskip
\noindent \textbf{Vector space structure in OPT:
} For any system 
$A$, we can define the real vector space $\St_\mathbb{R}(A) := \operatorname{Span}_{\mathbb{R}}(\St(A))$ generated by the state space, and $\eff_\mathbb{R}(A) := \operatorname{Span}_{\mathbb{R}}(\eff(A)) $generated by the effect space. Their complexifications are then given by the standard construction $\St_\mathbb{C}(A) := \St_\mathbb{R}(A) \oplus i \St_\mathbb{R}(A)$ and $\eff_\mathbb{C}(A) := \eff_\mathbb{R}(A) \oplus i \eff_\mathbb{R}(A)$. $\St_{\mathbb{R}}(A)$ and $\eff_\mathbb{R}(A)$ are respectively generated by the positive convex cones $\St_+(A) := \left\{ \sum_i a_i \rho_i \mid a_i \ge 0,\ \rho_i \in \St(A) \right\}$, and $\eff_+(A) := \left\{ \sum_i a_i e_i \mid a_i \ge 0,\ e_i \in \eff(A) \right\}$ via $\St_\mathbb{R}(A) = \St_+(A) - \St_+(A)$ and $\eff_\mathbb{R}(A) = \eff_+(A) - \eff_+(A)$. 

The parallel composition of states $\rho \otimes \sigma$, for $\rho \in \St(A)$ and $\sigma \in \St(B)$, span $\St_\mathbb{R}(A) \otimes \St_\mathbb{R}(B)$. As these are valid states in $\St(A\otimes B)$, we have $\St_\mathbb{R}(A) \otimes \St_\mathbb{R}(B) \subseteq \St_\mathbb{R}(A \otimes B)$. When this inclusion is an equality, the theory is said to satisfy \textit{local tomography}, also known as \textit{local discriminability}~\cite{hardy2001quantum, hardy2011reformulating,chiribella2010probabilistic,Barrett2007,DAriano2016, Chiribella2021}. In this work, we assume local tomography.

In general, to characterize a transformation $\N\in \mathsf{Transf}(A\to B)$ its action must be checked on all states of $A\otimes R$ for every auxiliary system $R$, as in Eq.~\eqref{Eq:Equivalent_transforamtion}. Each such extension gives rise to a family of linear maps $\widehat{\mathcal{N} \otimes \id_R} \in \mathsf{Lin}( \St_\mathbb{R}(A \otimes R), \St_\mathbb{R}(B \otimes R))$, where $\mathsf{Lin}(V,W)$ denotes the vector space of all linear maps from vector spaces $V$ to $W$. Thus, a transformation $\N$ is represented by the direct sum of its local extensions
\begin{align}
L_{\mathcal{N}} := \bigoplus_{R} \widehat{\mathcal{N} \otimes \id_R},
\end{align}
and the vector space of transformations satisfies
\begin{align}
\mathsf{Transf}_{\mathbb{R}}(A \to B) \simeq \bigoplus_{R} \mathsf{Lin}\big(\St_\mathbb{R}(A \otimes R),\, \St_\mathbb{R}(B \otimes R)\big). 
\end{align}
When local tomography is respected, however, life is simpler: the transformation $\N$ is uniquely determined by its action on a single system $A$. In that case, 

\begin{align}
\mathsf{Transf}_{\mathbb{R}}(A \to B) \simeq \mathsf{Lin}\big(\St_\mathbb{R}(A),\, \St_\mathbb{R}(B)\big).
\end{align}
Subsequently, one can define the complexification of the transformation space: $\mathsf{Transf}_{\mathbb{C}}(A\to B):=\mathsf{Transf}_{\mathbb{R}}(A \to B)+ i \mathsf{Transf}_{\mathbb{R}}(A \to B)$. 

In quantum theory, $\St_\mathbb{C}(A)$ and $\eff_\mathbb{C}(A)$ both correspond to the space of bounded operators $\mathcal{L}(\H _A)$ on a Hilbert space $\H _A$, and $\mathsf{Transf}_{\mathbb{C}}(A\to B)$
is the space of linear maps $\mathsf{Lin}(\mathcal{L}(\H _A), \mathcal{L}(\H _B))$.

\subsection{Assumptions on OPT }\label{Sec:OPT_Assumptions}

\noindent \textbf{Finite dimensionality, convexity and closure}.
For all systems $A$ and $B$, we assume that the vector space $\mathsf{Transf}_{\mathbb{C}}(A\to B)$ is finite dimensional. Moreover, the set of physical transformations $\mathsf{Transf}(A\to B)$ is assumed to be convex and closed in an operationally meaningful topology. In OPT, such a topology is typically induced by the operational norm~\cite{chiribella2010probabilistic}, which reduces to the trace norm in quantum theory. We adopt this topology without further specification.

\medskip

\noindent \textbf{Causality Axiom}. Classical and quantum theories admit a unique way to discard systems. In classical theory, discarding $B$ from a join distribution $P_{AB}$ gives the marginal $P_A(a)=\sum_bP_{AB}(a,b)$. In quantum theory, it corresponds to the partial trace: $\rho_{A}=\Tr_{B}\rho_{AB}$. In OPT,  the existence of a unique discarding operation is not guaranteed; it is imposed as the \emph{Causality axiom} \cite{Chiribella_2011,chiribella2010probabilistic}, which is intricately linked to the no-signaling principle for bipartite states. It ensures that the marginal state of a subsystem is well-defined and independent of how the rest of the system is discarded.  We restrict our discussion to theories satisfying this axiom and denote the discarding map on system $B$ by $\Tr_{B}$, following the quantum notation.

\medskip

\noindent \textbf{Local Tomography:
} We also impose the axiom of \emph{local tomography} (or \emph{local discriminability})~\cite{hardy2001quantum, hardy2011reformulating,chiribella2010probabilistic,Barrett2007,DAriano2016, Chiribella2021}.  It states that a composite state is uniquely determined by the statistics of all local measurements on its subsystems. Equivalently, two joint states $\rho, \sigma \in \St_1(A \otimes R)$ are equal if and only if they yield the same measurement statistics for every pair of local effects $e \in \eff(A)$ and $f \in \eff(R)$: 
\begin{align}
 &(e_A \otimes f_R)(\rho_{AR}) = (e_A \otimes f_R)(\sigma_{AR}) \quad \forall e \in \eff(A), \ f \in \eff(R) \nonumber \\ &\Longrightarrow \quad \rho_{AR} = \sigma_{AR}.   
\end{align}

Local tomography significantly simplifies the structure of the theory's state spaces. For instance, it implies that the underlying vector space $\St_{\mathbb{R}}(AR)$ for a composite system $A\otimes R$ is isomorphic to the tensor product of the local vector spaces: $\St_{\mathbb{R}}(AR)=\St_{\mathbb{R}}(A)\otimes \St_{\mathbb{R}}(R)$. In contrast, without local tomography, we only have a strict inclusion $\St_{\mathbb{R}}(AR)\subsetneq \St_{\mathbb{R}}(A)\otimes \St_{\mathbb{R}}(R)$. 

As a consequence of local tomography, any composite state $\tau_{AR}$ can be written as an affine combination of product states: 
\begin{align}
 \tau_{AR}:=\sum_k c_k \ \rho_{k}\otimes \sigma_{k} \label{Eq:affineState},   
\end{align}
where $\rho_{k}\in \St(A)$ and $\sigma_{k}\in \St(R)$, and $c_k \in \mathbb{R}$ with $\sum_k c_k=1$. 

Moreover, local tomography also implies that two physical transformations $\mathcal{N}, \mathcal{M}: \mathsf{Transf}(A_I\to A_O)$ are equal if and only if they agree on all input states---without needing to check their action on ancillary inputs as in Eq.~\eqref{Eq:Equivalent_transforamtion}. That is, 

\begin{align}
    \mathcal{N} = \mathcal{M} \quad \Longleftrightarrow \quad \mathcal{N}(\rho) = \mathcal{M}(\rho) \quad \forall \rho \in \St(A_I)\label{Eq:local_input}.
\end{align}
Eqs.~\eqref{Eq:affineState} and \eqref{Eq:local_input} immediately extend the above argument to transformations with multiple inputs and outputs: product input states are sufficient to distinguish any two such transformations. For any $\mathcal{N}, \mathcal{M}: \mathsf{Transf}(A_1 \otimes \cdots \otimes A_n \to B_1 \otimes \cdots \otimes B_m)$,
\begin{align}
    \mathcal{N} {=} \mathcal{M}\  {\Longleftrightarrow}\  \mathcal{N}(\rho_1{\otimes} \rho_2{\otimes} \cdots{\otimes}\rho_n) {=} \mathcal{M}(\rho_1{\otimes} \rho_2{\otimes} \cdots{\otimes}\rho_n), \label{Eq:Equality_with_product}
\end{align}
for all states $\rho_i\in \St(A_i)$. It is easy to see that the above property holds for  any linear map in $\mathsf{Transf}_{\mathbb{C}}(A_1 \otimes \cdots \otimes A_n \to B_1 \otimes \cdots \otimes B_m)$:

\begin{lemma}\label{lem:local-tomography-linear}
In a locally tomographic OPT, for any two linear maps $\mathcal{R}, \mathcal{S} \in  \mathsf{Transf}_{\mathbb{C}}(A_1 \otimes \cdots \otimes A_n \to B_1 \otimes \cdots \otimes B_m)$,
\begin{align}
\mathcal{R} = \mathcal{S} \quad \Longleftrightarrow \quad \mathcal{R}(\rho_1 \otimes \cdots \otimes \rho_n) = \mathcal{S}(\rho_1 \otimes \cdots \otimes \rho_n)
\end{align}
for all product states $\rho_k \in \St(A_k)$.
\end{lemma}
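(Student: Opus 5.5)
The forward direction is immediate, since equal maps agree on every input. For the converse, I would reduce to the already-stated fact in Eq.~\eqref{Eq:Equality_with_product} by using the vector-space structure described above. The plan is to show that the complex span of product states $\rho_1\otimes\cdots\otimes\rho_n$, with $\rho_k\in\St(A_k)$, is the whole domain $\St_{\mathbb{C}}(A_1\otimes\cdots\otimes A_n)$. Once this is established, two linear maps that coincide on a spanning set coincide everywhere, and the lemma follows.

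\textbf{Step 1 (spanning).} By local tomography, $\St_{\mathbb{R}}(A\otimes R)=\St_{\mathbb{R}}(A)\otimes\St_{\mathbb{R}}(R)$. I would iterate this, using associativity of parallel composition, to obtain
\[
\St_{\mathbb{R}}(A_1\otimes\cdots\otimes A_n)=\St_{\mathbb{R}}(A_1)\otimes\cdots\otimes\St_{\mathbb{R}}(A_n).
\]
Each factor $\St_{\mathbb{R}}(A_k)$ is spanned by $\St(A_k)$, so elementary tensors of states span the real tensor product. The affine decomposition in Eq.~\eqref{Eq:affineState} expresses the same fact. Complexifying via $\St_{\mathbb{C}}=\St_{\mathbb{R}}\oplus i\St_{\mathbb{R}}$ then shows that product states span $\St_{\mathbb{C}}(A_1\otimes\cdots\otimes A_n)$ over $\mathbb{C}$.

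\textbf{Step 2 (linearity).} Take an arbitrary $X\in\St_{\mathbb{C}}(A_1\otimes\cdots\otimes A_n)$ and write it as $X=\sum_j c_j\,\rho^{(j)}_1\otimes\cdots\otimes\rho^{(j)}_n$ with $c_j\in\mathbb{C}$. Linearity of $\mathcal{R}$ and $\mathcal{S}$ gives
\[
\mathcal{R}(X)=\sum_j c_j\,\mathcal{R}\bigl(\rho^{(j)}_1\otimes\cdots\otimes\rho^{(j)}_n\bigr)=\sum_j c_j\,\mathcal{S}\bigl(\rho^{(j)}_1\otimes\cdots\otimes\rho^{(j)}_n\bigr)=\mathcal{S}(X).
\]

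\textbf{Step 3 (identification).} Finally I would invoke $\mathsf{Transf}_{\mathbb{R}}(A\to B)\simeq\mathsf{Lin}(\St_{\mathbb{R}}(A),\St_{\mathbb{R}}(B))$, together with its complexification, which holds under local tomography. This shows that an element of $\mathsf{Transf}_{\mathbb{C}}$ is determined by its action on $\St_{\mathbb{C}}$ of the input, so no ancillary extensions need to be checked. Hence agreement on all of $\St_{\mathbb{C}}$ gives $\mathcal{R}=\mathcal{S}$.

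The only delicate point is Step 3. One must make sure that equality as linear maps on the single input space implies equality in $\mathsf{Transf}_{\mathbb{C}}$, that is, equality on all extensions $\mathcal{R}\otimes\id_R$. I would handle this by applying Step 1 to $A\otimes R$: the action of $\mathcal{R}\otimes\id_R$ on product states $\rho\otimes\sigma$ is fixed by the action of $\mathcal{R}$ on $\rho$, and such product states span $\St_{\mathbb{C}}(A\otimes R)$.
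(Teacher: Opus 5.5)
Your proposal is correct and follows essentially the same route as the paper. Local tomography makes product states span the ancilla-extended composite state spaces, so $\mathcal{R}-\mathcal{S}$, which vanishes on product states, also vanishes on every extension $(\cdot)\otimes\id_R$ and is therefore zero in $\mathsf{Transf}_{\mathbb{C}}$. The paper phrases this by expanding $\mathcal{R}$ and $\mathcal{S}$ in a basis of physical transformations, while you work with linearity directly; your explicit check that $(\mathcal{R}\otimes\id_R)(\rho\otimes\sigma)=\mathcal{R}(\rho)\otimes\sigma$ spells out the ancilla step that the paper only states.
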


\begin{proof}
The implication $\Rightarrow$ is trivial. For the other direction ($\Leftarrow$), 
let the set of physical transformations $\{\mathcal{N}_i\}_i$ form a basis of the vector space $\mathsf{Transf}_{\mathbb{C}}(A_1 \otimes \cdots \otimes A_n \to B_1 \otimes \cdots \otimes B_m)$. Hence, any linear map $\mathcal{R}$ can be written as $\mathcal{R} = \sum_i \alpha_i \mathcal{N}_i$, and similarly $\mathcal{S} = \sum_i \beta_i \mathcal{N}_i$, with $\alpha_i,\beta_i \in \mathbb{C}$. Suppose they agree on all product states, then
\begin{align}
  \sum_i (\alpha_i - \beta_i) \mathcal{N}_i(\rho_1 \otimes \cdots \otimes \rho_n) = 0,  
\end{align}
for all product states $\rho_k \in \St(A_k)$. Since the theory is locally tomographic, any state of the composite system can be written as an affine combination of product states. Hence, for any arbitrary ancillary system $R$ and any joint state $\sigma \in \St(A_1 \otimes \cdots \otimes A_n \otimes R)$,
\begin{align}
\sum_i (\alpha_i - \beta_i) (\mathcal{N}_i \otimes \id_R)(\sigma) = 0.
\end{align}
This implies that $\sum_i (\alpha_i - \beta_i) \mathcal{N}_i = 0$ as a linear map. Since $\{\mathcal{N}_i\}_i$ form a basis, we must have $\alpha_i = \beta_i$ for all $i$. Therefore, $\mathcal{R} = \mathcal{S}$.
\end{proof}
Lemma~\ref{lem:local-tomography-linear} is a key tool that underpins several propositions in the main text. Local tomography is a non-trivial axiom. It is violated by several physical theories, including fermionic~\cite{DAriano2014_Fermionic,DAriano2014_Feynman}, anyonic~\cite{vidal2024entanglementasymm}, real quantum~\cite{wootters1990local,Hardy_2011,Aleksandrova2013}, and quaternionic quantum theories~\cite{hardy2009foliableoperation,Barnum2014,Barnum2014_HigherOrder}. Conversely, it holds in classical theory, deterministic classical theory~\cite{chiribella2010probabilistic, DAriano2020}, standard quantum theory, and theories based on minimal and maximal tensor products (e.g., the PR-box model)~\cite{Popescu1994,Barrett2007,Plvala2023}. The principle has also been tested experimentally, and the results are consistent with its validity, showing no violation~\cite{lismer2025experimentaltest}.

\section{Multi-time states in OPT}

The vector space structure of OPT discussed in the previous section, along with all the notation introduced there, extends directly to multi-time states as defined in Definitions~\ref{def:MTQS} and~\ref{Def:QSOTGen}, the temporal extension map in Definition~\ref{Def:tempExt}, and the corresponding Axioms~\ref{axm:PropMTQS} and \ref{axm:spatial-marginal-consistency}. Moreover, under the assumptions in Section~\ref{Sec:OPT_Assumptions}, Theorems~\ref{Thm:StructureThm_MarkovQSOT}, \ref{Thm:NMQSOT}, and \ref{thm:semicausal} continue to hold in this broader framework. Note, in the OPT context, the multi-time quantum process in Theorem~\ref{Thm:NMQSOT} is the multi-time process ($n$-comb) in the HOCC framework~\cite[Definition 6.6]{Kissinger2019}, and the semicausal quantum channel in Theorem~\ref{thm:semicausal} becomes a semicausal deterministic transformation in OPT, referred to as a `one-way signaling process' in Ref.~\cite[Definition 2.9]{Kissinger2019}.

To establish our results in full generality, all statements and proofs of the propositions and theorems in Secs.~\ref{sec:AxiomaticFramework},
\ref{sec:MarkovCaopatible}, and~\ref{sec:GenMTQSnonMarkov} of the main text are formulated in the OPT setting.

\subsection{Proof of Proposition~\ref{Prop:stLinEquivLinTemExt}} \label{App:proof_stLinEquivLinTemExt}
We first state a lemma that applies to OPTs regardless of whether local tomography is assumed. 

\begin{lemma}\label{lem:convex-base-extension-composite}
 Let $f: \St(A) \to W$, with $W$ being a vector space, be a convex-linear map, which has the property that for every auxiliary system $R$, the map $f \otimes \id_R$ is convex-linear on $\St(A \otimes R)$. Then $f$ extends uniquely to a real-linear map $F: \St_{\mathbb{R}}(A) \to W$ such that for any auxiliary system $R$, this extension satisfies
\begin{align}
(F \otimes \id_R)(v) = \sum_i a_i (f \otimes \id_R)(\rho_i)
\end{align}
for any $v = \sum_i a_i \rho_i \in \St_{\mathbb{R}}(A\otimes R)$.
\end{lemma}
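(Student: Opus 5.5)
The plan is the standard extension of a convex-linear map on a convex set to its real span, carried out uniformly over all auxiliary systems $R$. First extend $f$ to $\St_+(A)$ by positive homogeneity, $F(\lambda\rho):=\lambda f(\rho)$ for $\lambda\ge 0$. This is well defined because states of the form $\lambda\rho$ with $\lambda>0$ determine $\lambda$ via $\lambda=\Tr(\lambda\rho)$, using the unique deterministic effect from the causality assumption. Convex-linearity of $f$ then gives additivity on the cone: $F(\lambda\rho+\mu\sigma)=(\lambda+\mu)f\big(\tfrac{\lambda}{\lambda+\mu}\rho+\tfrac{\mu}{\lambda+\mu}\sigma\big)=\lambda f(\rho)+\mu f(\sigma)$.

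Next, since $\St_{\mathbb{R}}(A)=\St_+(A)-\St_+(A)$, write any $v$ as $v=p-q$ with $p,q\in\St_+(A)$ and set $F(v):=F(p)-F(q)$. For well-definedness, suppose $p-q=p'-q'$. Then $p+q'=p'+q$ in the cone, and additivity gives $F(p)+F(q')=F(p')+F(q)$. Linearity of $F$ over $\mathbb{R}$ then follows by the usual splitting of scalars into signs. Uniqueness is immediate: any real-linear extension must agree with $f$ on $\St(A)$, and $\St(A)$ spans $\St_{\mathbb{R}}(A)$.

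For the composite statement, apply exactly the same construction to the convex-linear map $f\otimes\id_R$ on $\St(A\otimes R)$, obtaining a unique real-linear map $G_R$ on $\St_{\mathbb{R}}(A\otimes R)$ with $G_R(\sum_i a_i\rho_i)=\sum_i a_i(f\otimes\id_R)(\rho_i)$. Well-definedness of the right-hand side is exactly the cone argument above. It remains to identify $G_R$ with $F\otimes\id_R$, and this is the main obstacle, since no local tomography is assumed and $\St_{\mathbb{R}}(A\otimes R)$ need not equal $\St_{\mathbb{R}}(A)\otimes\St_{\mathbb{R}}(R)$.

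I would handle this by taking the definition of the extended transformation, in the spirit of the representation $L_{\mathcal N}=\bigoplus_R\widehat{\mathcal N\otimes\id_R}$ in Appendix~\ref{App:QSOTinGPT}, to be the family $\{G_R\}_R$. I would then check the consistency conditions that make this family a single extension. For $R=I$, $G_I=F$. On product vectors, $G_R(\rho\otimes\sigma)=f(\rho)\otimes\sigma$, because $(f\otimes\id_R)(\rho\otimes\sigma)=f(\rho)\otimes\sigma$ for states and both sides are linear in each factor. Compatibility under further appending of systems, $G_{R\otimes R'}$ restricted to $\St_{\mathbb{R}}(A\otimes R)\otimes\St_{\mathbb{R}}(R')$, follows the same way. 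Because each $G_R$ is uniquely fixed by its values on $\St(A\otimes R)$, this family is the unique extension, which is what we need to prove.
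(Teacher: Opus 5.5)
Your proposal is correct and follows the paper's argument: your two-step construction (positive homogeneity on the cone $\St_+$, normalized via the unique deterministic effect, then extension to differences $p-q$) repackages the paper's well-definedness proof, which moves the negative-coefficient terms of two decompositions to opposite sides, normalizes both sides by the common total weight $\lambda$ obtained from $\Tr_{AR}$, and applies convex-linearity of $f\otimes\id_R$. The paper also handles the missing local tomography as you do, by defining $F\otimes\id_R$ on each $\St_{\mathbb{R}}(A\otimes R)$ through the decomposition formula (in the spirit of $L_{\mathcal N}=\bigoplus_R\widehat{\mathcal N\otimes\id_R}$), so your extra consistency checks ($G_I=F$ and the action on product vectors) are a harmless, slightly more careful addition.
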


\begin{proof}
Fix an arbitrary auxiliary system $R$. Assume two different decompositions
\begin{align}
&v = \sum_i a_i \rho_i = \sum_j b_j \sigma_j,
\end{align}
with $\rho_i, \sigma_j \in \St(A \otimes R)$ and $a_i,b_i \in \mathbb{R}$. Applying, the unique deterministic effect, fixed by the causality axiom, $\Tr_{AR} := \Tr_A \otimes \Tr_R$, which extends to a linear functional on $\St_{\mathbb{R}}(AR)$, we have
\begin{align}
\Tr_{AR}(v) {=} \sum_i a_i = \sum_j b_j. \label{Eq:GPT_Tr_on_span_composite}
\end{align}

We now define the extension of $f \otimes \id_R$. For any $v \in \St_{\mathbb{R}}(AR)$, choose a decomposition $v = \sum_i a_i \rho_i$ with $\rho_i \in \St(A R)$, and define
\begin{align}
(F \otimes \id_R)(v) := \sum_i a_i (f \otimes \id_R)(\rho_i).
\end{align}

\medskip
\noindent \textbf{Well-definedness.} 
We must show that $(F \otimes \id_R)(v)$ does not depend on the chosen decomposition. Suppose $v$ has two different decompositions
\begin{align}
&v = \sum_i a_i \rho_i = \sum_j b_j \sigma_j, \nonumber \\
&{\Rightarrow} \quad \sum_{i: a_i \ge 0} a_i \rho_i + \sum_{j: b_j < 0} |b_j| \sigma_j
= \sum_{j: b_j \ge 0} b_j \sigma_j + \sum_{i: a_i < 0} |a_i| \rho_i. \label{Eq:GPT_pos_neg_rearrangement_composite}
\end{align}
Here in the second equation, we separated and rearranged the non-negative ($a_i\ge 0$, $b_i\ge 0$) and negative ($a_i< 0$, $b_i <0$) coefficients. Furthermore, rearranging these non-negative and negative coefficients in Eq.~\eqref{Eq:GPT_Tr_on_span_composite}, we define
\begin{align}
\lambda := \sum_{i: a_i \ge 0} a_i + \sum_{j: b_j < 0} |b_j|
= \sum_{j: b_j \ge 0} b_j + \sum_{i: a_i < 0} |a_i|.
\end{align}
Dividing both sides of Eq.~\eqref{Eq:GPT_pos_neg_rearrangement_composite} by $\lambda$ gives a convex combination of states on both sides. Since $f \otimes \id_R$ is convex-linear, applying it to these convex combinations yields
\begin{align}
&\sum_{i: a_i \ge 0} \frac{a_i}{\lambda} (f \otimes \id_R)(\rho_i)
+ \sum_{j: b_j < 0} \frac{|b_j|}{\lambda} (f \otimes \id_R)(\sigma_j) \nonumber \\
&{=} \sum_{j: b_j \ge 0} \frac{b_j}{\lambda} (f \otimes \id_R)(\sigma_j)
+ \sum_{i: a_i < 0} \frac{|a_i|}{\lambda} (f \otimes \id_R)(\rho_i).
\end{align}
Canceling $\lambda$ and rearranging the rest gives
\begin{align}
\sum_i a_i (f \otimes \id_R)(\rho_i) = \sum_j b_j (f \otimes \id_R)(\sigma_j).
\end{align}
Thus $(F \otimes \id_R)(v)$ is independent of the decomposition, so it is well-defined.

\medskip
\noindent  Linearity of this map can be easily verified. For uniqueness, we have the following:

\medskip

\noindent \textbf{Uniqueness.} 
Suppose $G: \St_{\mathbb{R}}(A) \to W$ is another real-linear extension of $f$. Then, for any $v = \sum_i a_i \rho_i \in \St_{\mathbb{R}}(AR)$,
\begin{align}
(G \otimes \id_R)(v) &= \sum_i a_i (G \otimes \id_R)(\rho_i) \nonumber \\ 
&=\sum_i a_i (f \otimes \id_R)(\rho_i) = (F \otimes \id_R)(v).
\end{align}
Thus $G\otimes \id_R$ and $F \otimes \id_R$ agree on all elements of $\St_{\mathbb{R}}(AR)$, and all extensions $R$. Hence $G = F$.
\end{proof}

\medskip

\noindent \textbf{Complexification.} 
Having established the unique real-linear extension $F: \St_{\mathbb{R}}(A) \to W$, we can naturally extend it to the complex vector space $\St_\mathbb{C}(A)$, with the target space similarly complexified to $W^{\mathbb{C}} := W \oplus i W$. For any $v = v_1 + i v_2 \in \St_\mathbb{C}(A)$, with $v_1, v_2 \in \St_\mathbb{R}(A)$, we define the complex-linear extension $F^{\mathbb{C}}: \St_\mathbb{C}(A)\to W^{\mathbb{C}}$ by $F^{\mathbb{C}}(v) := F(v_1) + i F(v_2).$
The well-definedness, linearity, and uniqueness of $F^{\mathbb{C}}$ follow directly from the corresponding properties of $F$ on the real vector space.

Note that the above proof does not require local tomography, except that the argument only applies to convex-linear maps that stay convex-linear when applied locally alongside an arbitrary auxiliary system. Clearly, this property holds for every convex-linear map in a locally tomographic OPT. Consequently, for locally tomographic theories, it suffices to consider the proof with $R$ taken to be the trivial system, as in the quantum case treated in Ref.~\cite{lie2024quantum}.

\medskip

\noindent \textbf{Proof of Proposition~\ref{Prop:stLinEquivLinTemExt} in an OPT satisfying assumptions in Subsec. \ref{Sec:OPT_Assumptions}} Clearly, when a temporal extension map exists as in Definition~\ref{Def:tempExt}, then the star product must be convex-linear in states. Conversely, for a convex-linear star product, we define a convex-linear map $C_{\E_n}$ such that $C_{\E_n}(\rho) = \{\E_n\}\star \rho$. In an OPT, satisfying assumptions in Subsec. \ref{Sec:OPT_Assumptions},  this map satisfies the conditions of Lemma~\ref{lem:convex-base-extension-composite} and therefore extends uniquely to a linear map on the full span.

\subsection{Proof of Proposition~\ref{Pro:ReconAndComposition}} \label{App:Pro:ReconAndComposition}

\begin{figure}
    \centering
    \includegraphics[width=1\linewidth]{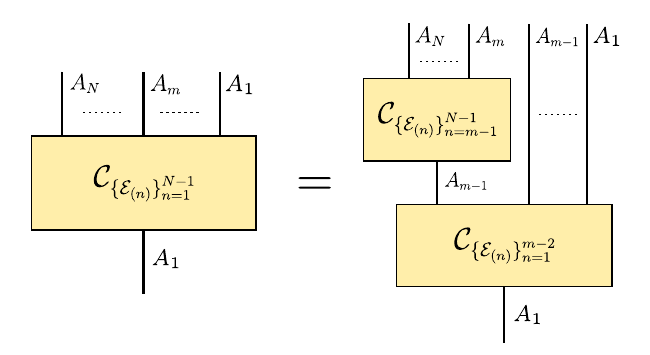}
    \caption{\textbf{Decomposable structure of the temporal extension map.} The axiom of Reconstructibility imposes above structure on the temporal extension map (Eq.~\eqref{Eq:chain-likeRecon3}). This decomposition mirrors the Markov property that future evolution is obtained by acting only on the present.}
    \label{fig:Reconchainlikeproposition}
\end{figure}

\noindent \textbf{Proof of part 1: } Consider a sequence of channels $\{ \E_{N-1}, \cdots, \E_2, \E_1 \} \equiv \{ \E_n \}_{n=1}^{N-1}$ with $\E_n \in \Chan (A_n \to A_{n+1})$. Suppose there exists a linear reconstructing map, $\mathcal{R}_m \in \mathsf{Transf}_{\mathbb{C}}(A_m\rightarrow A_N\cdots A_{n+1}A_m)$, such that
\begin{equation}
    \mathcal{R}_m\circ\Tr_{A_N\cdots A_{m+1}}\left[\{\E_n\}_{n=1}^{N-1}\star\rho_{A_1}\right ]=\{\E_n\}_{n=1}^{N-1}\star\rho_{A_1}, \forall m
\end{equation}
Now, choosing $m=1$, i.e., applying $\mathcal{R}_1$ after discarding $A_2,\cdots,A_N$ (applying unique deterministic effect $\Tr_{A_2,\cdots,A_N}$), we have, 
\begin{align}
\mathcal{R}_1\circ\Tr_{A_N\cdots A_2}\left[\{\E_n\}_{n=1}^{N-1}\star\rho_{A_1}\right ]&=\{\E_n\}_{n=1}^{N-1}\star\rho_{A_1},\\
\implies \mathcal{R}_1(\rho_{A_1})&=\{\E_n\}_{n=1}^{N-1}\star\rho_{A_1}. \label{Eq:R1->star} 
\end{align}
The second equality is due to the marginal consistency of the star product (see Definition~\ref{def:MTQS}). Since $\mathcal{R}_1$ is linear, the above equation implies the star product is convex-linear in states. However, due to Proposition~\ref{App:proof_stLinEquivLinTemExt}, convex linearity of star product uniquely identifies a temporal extension map $\C_{\{ \E_n \}_{n=1}^{N-1}} \in \mathsf{Transf}_{\mathbb{C}}( A_1 \to A_N \cdots A_1)$ such that
\begin{align}
\mathcal{C}_{\{\mathcal{E}_n\}_{n=1}^{N-1}}(\rho_{A_1})
=
\{\mathcal{E}_n\}_{n=1}^{N-1} \star \rho_{A_1}. \label{Eq:temp->star}
\end{align}

Thus, for all $m< N$, we can express the star product in terms of the temporal extension map. Now take any arbitrary cut $m$ ($m<N$). Then, for all $\rho \in \St(A_1)$, we have,
\begin{align}
     \mathcal{R}_{m-1}\circ\Tr_{A_N\cdots A_{m}}\left[\{\E_n\}_{n=1}^{N-1}\star\rho_{A_1}\right ]&=\{\E_n\}_{n=1}^{N-1}\star\rho_{A_1},\nonumber\\
     \implies \mathcal{R}_{m-1}\circ\left[\{{\E_n\}_{n=1}^{m-2}\star\rho_{A_1}}\right]&=\{\E_n\}_{n=1}^{N-1}\star\rho_{A_1},\nonumber\\
     \implies \mathcal{R}_{m-1}\circ\C_{\{\E_n\}_{n=1}^{m-2}}(\rho_{A_1})&=\C_{\{\E_n\}_{n=1}^{N-1}}(\rho_{A_1}),\nonumber\\
     \implies \mathcal{R}_{m-1}\circ\C_{\{\E_n\}_{n=1}^{m-2}}&=\C_{\{\E_n\}_{n=1}^{N-1}}.
\end{align}

The first equality is due to the reconstructibility axiom, the second one is due to the marginal consistency of the star product (see Definition~\ref{def:MTQS}), and in the third equality, we express the star products in terms of the temporal extension maps (see Definition~\ref{Def:tempExt}). The fourth equality is due to local tomography (see Lemma~\ref{lem:local-tomography-linear}). Discarding $A_1, A_2,\cdots,A_{m-1}$ (i.e. applying the unique deterministic effect $\Tr_{A_1, A_2,\cdots,A_{m-1}}$) on both sides of the above equation, and applying marginal consistency of the temporal extension maps (Eq.~\eqref{Eq:D1}), we get,
\begin{equation}
    \mathcal{R}_{m-1}\circ\E_{m-2}\cdots\circ\E_1=\C_{\{\E_n\}_{n=m-1}^{N-1}}\circ\E_{m-2}\cdots\circ\E_1
\end{equation}
If the reconstructing map is independent of \emph{all possible} past channels $\{\mathcal{E}_1, \dots, \mathcal{E}_{n-1}\}$, as stipulated in Axiom~\ref{axm:PropMTQS}, then choosing reversible channels, the reconstructing map $\mathcal{R}_{m-1}$ is uniquely determined and must equal the temporal extension map $\mathcal{C}_{\{\E_n\}_{n=m-1}^{N-1}}$ itself. Hence, we have for all $m<N$,
\begin{equation}
    \mathcal{C}_{\{\E_n\}_{n=m-1}^{N-1}}\circ\C_{\{\E_n\}_{n=1}^{m-2}}=\C_{\{\E_n\}_{n=1}^{N-1}}\label{Eq:chain-likeRecon3}
\end{equation}
See Fig.~\ref{fig:Reconchainlikeproposition} for a pictorial depiction. Applying Eq.~\eqref{Eq:chain-likeRecon3} recursively for
$m = N-1, N-2, \dots, 2$, and using the definition of the star product, we
obtain the single chain shown in part~1 of proposition~\ref{Pro:ReconAndComposition}, completing the proof
of part~1.

\medskip

\noindent \textbf{Proof of part 2: }  Consider the particular case of $N=3$ of a temporal extension map, i.e., we consider the map $\C_{\{\E_1,\E_2\}} \in \mathsf{Transf}_{\mathbb{C}}(A_1\to A_3\otimes A_2\otimes A_1)$. Due to \eqref{Eq:chain-likeRecon3}, we have for all systems $A_1,A_2$ and $A_3$ and channels ${\E_1}: A_1\to A_2$ and $\E_2: A_2\to A_3$
\begin{align}
 \C_{\{\E_1,\E_2\}} =  \C_{\E_2}\circ \C_{\E_1}. 
\end{align}
Subsequently, discarding $A_2$ on both sides, and assuming compositionality (Q3 in Axiom ~\ref{axm:PropMTQS}) holds, with the temporal extension map satisfying Eq.~\eqref{Eq:D2}, we have the following chain of equations:
\begin{align}
 &\Tr_{A_2}\circ \C_{\{\E_1,\E_2\}} =  [\Tr_{A_2}\circ \C_{\E_2}]\circ \C_{\E_1} \qquad \forall \E_1, \E_2\nonumber \\
 & \implies \C_{\{\E_2\circ\E_1\}} =  {\E_2}\circ \C_{\E_1}, \forall \E_1, \E_2. 
\end{align}
In particular, choosing $\E_1=\id$ so that $\mathcal{C}_{\mathcal{I}}:A_1\rightarrow A_1''\otimes A_1'$ (where $A_1\simeq A_1'\simeq A_1''$) and $\E_2 =\E$ being an arbitrary channel, we have for all channels $\E$
\begin{align}
 & \C_{\E} =  (\E_{A_1''\rightarrow A_2} \otimes \id)\circ \C_{\id}. 
\end{align}
This proves part~2 of proposition~\ref{Pro:ReconAndComposition}.

\section{Multi-time state in the square-bit boxworld}

\subsection{Proof of Theorem~\ref{thm:unique-broadcasting-map}} \label{App:unique-broadcasting-map}

We structure the proof as follows. First, we establish the representation-theoretic decomposition of the relevant state spaces. Second, we translate the covariance condition into an invariance problem in a tensor product space. Third, we characterize all invariant vectors in this tensor product. Finally, we translate the resulting invariant vectors back to linear maps and impose the remaining axioms to single out the unique broadcasting map.

\subsubsection*{Step 1: Vector Representation of Linear Maps}

We first introduce some useful facts from linear algebra, which are closely related to the Choi-Jamiołkowski isomorphism in quantum theory. For the sake of completeness, we elaborate on them in some detail.

We use the standard isomorphism $\mathsf{Lin}(V,W) \cong V^* \otimes W$, where $V^*$ is the dual space of $V$. For bases $\{v_i\}$ of $V$ and $\{w_j\}$ of $W$, a linear map $\mathcal{N}$ with $\mathcal{N}(v_i) = \sum_j N_{ji} w_j$ corresponds to the vector
\begin{align*}
X_{\mathcal{N}} = \sum_i v_i^* \otimes \mathcal{N}(v_i) = \sum_{i,j} N_{ji} \, v_i^* \otimes w_j \in V^* \otimes W.
\end{align*}
Conversely, any vector $X = \sum_{i,j} X_{ji} \, v_i^* \otimes w_j$ defines $\mathcal{N}_X \in \mathsf{Lin}(V,W)$ by $\mathcal{N}_X(\tilde{v}) = \sum_{i,j} v_i^*(\tilde{v})X_{ji} w_j$ for any $\tilde{v}\in V$. We now introduce a useful lemma.

\begin{lemma} \label{Lem:dual_map_Lemma}
For linear maps $\Phi_1: U \to V$, $\mathcal{N}: V \to W$, and $\Phi_2: W \to Z$, the vector representation of the composition $\Phi_2 \circ \mathcal{N} \circ \Phi_1: U \to Z$ is given by
\begin{align*}
X_{\Phi_2 \circ \mathcal{N} \circ \Phi_1} = (\Phi_1^* \otimes \Phi_2)(X_{\mathcal{N}}),
\end{align*}
where $\Phi_1^*: V^* \to U^*$ is the dual map defined by $[\Phi_1^*(\psi)](\mu) = \psi(\Phi_1(\mu))$ for all $\psi \in V^*$ and $\mu \in U$.
\end{lemma}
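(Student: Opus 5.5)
The claim is that $X_{\Phi_2\circ\mathcal{N}\circ\Phi_1}=(\Phi_1^*\otimes\Phi_2)(X_{\mathcal{N}})$. The plan is to compare the two sides directly in bases. The vector representation is defined basis-wise, so we first fix bases $\{u_k\}$ of $U$, $\{v_i\}$ of $V$, $\{w_j\}$ of $W$, and their dual bases. A preliminary observation makes the comparison cleaner. The element $\sum_i v_i^*\otimes v_i\in V^*\otimes V$ is basis-independent; it is the canonical element corresponding to $\id_V$. Consequently $X_{\mathcal{N}}=\sum_i v_i^*\otimes\mathcal{N}(v_i)=(\id_{V^*}\otimes\mathcal{N})\big(\sum_i v_i^*\otimes v_i\big)$ depends only on $\mathcal{N}$, not on the chosen basis. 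This justifies using different bases on the two sides as convenient.

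The key steps proceed in order. First, handle post-composition. From the definition, $X_{\Phi_2\circ\mathcal{N}}=\sum_i v_i^*\otimes\Phi_2(\mathcal{N}(v_i))=(\id\otimes\Phi_2)(X_{\mathcal{N}})$, which is immediate by linearity of $\Phi_2$.

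Second, handle pre-composition, and show $X_{\mathcal{N}\circ\Phi_1}=(\Phi_1^*\otimes\id)(X_{\mathcal{N}})$. Write $\Phi_1(u_k)=\sum_i (\Phi_1)_{ik}v_i$. The dual map then satisfies $\Phi_1^*(v_i^*)=\sum_k v_i^*(\Phi_1(u_k))\,u_k^*=\sum_k(\Phi_1)_{ik}u_k^*$. Using this, compute
\begin{align*}
(\Phi_1^*\otimes\id)(X_{\mathcal{N}})&=\sum_{i,k}(\Phi_1)_{ik}\,u_k^*\otimes\mathcal{N}(v_i)\\
&=\sum_k u_k^*\otimes\mathcal{N}\Big(\sum_i(\Phi_1)_{ik}v_i\Big)=\sum_k u_k^*\otimes\mathcal{N}(\Phi_1(u_k)),
\end{align*}
and the last expression is exactly $X_{\mathcal{N}\circ\Phi_1}$ written in the basis $\{u_k\}$.

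Third, combine the two steps. The operators $\Phi_1^*\otimes\id$ and $\id\otimes\Phi_2$ act on different tensor factors, so they commute, and their product is $\Phi_1^*\otimes\Phi_2$. Applying the first two steps in succession therefore gives the claimed formula.

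As an alternative check, one can verify that both sides induce the same map. Evaluate each on an arbitrary $\mu\in U$ via the converse correspondence $\mathcal{N}_X$: the first tensor factor is evaluated at $\mu$, and the second factor is kept. For the right-hand side, evaluating gives $\sum_i v_i^*(\Phi_1(\mu))\,\Phi_2(\mathcal{N}(v_i))$. Expanding $\Phi_1(\mu)$ in the basis $\{v_i\}$, this equals $\Phi_2\mathcal{N}\Phi_1(\mu)$. The correspondence $\mathsf{Lin}(V,W)\cong V^*\otimes W$ is a bijection, so agreement of the induced maps gives equality of the vectors.

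The only real obstacle is the pre-composition step. There one must correctly identify the index placement of the dual map, and see that the sum over $i$ reassembles $\Phi_1(u_k)$. Everything else is linearity bookkeeping.
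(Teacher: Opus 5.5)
Your proof is correct and is essentially the paper's: the paper performs the same dual-basis computation $\Phi_1^*(v_j^*)=\sum_i(\Phi_1)_{ji}\mu_i^*$ and matches the coefficient of each $\mu_i^*$ against $(\Phi_2\circ\mathcal{N}\circ\Phi_1)(\mu_i)$, but does it in a single pass with $\Phi_1^*\otimes\Phi_2$ applied at once rather than splitting into pre- and post-composition. Your remark that $\sum_i v_i^*\otimes v_i$ is basis-independent is a worthwhile addition, since the paper relies on it without comment when it writes the two sides in bases of different spaces.
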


\begin{proof}
Let $\{\mu_i\}$, $\{v_j\}$, and $\{w_k\}$ be bases of $U$, $V$, and $W$, respectively. Write $\Phi_1(\mu_i) = \sum_j (\Phi_1)_{ji} v_j$ and $\mathcal{N}(v_j) = \sum_k N_{kj} w_k$. Then $X_{\mathcal{N}} = \sum_{j,k} N_{kj} v_j^* \otimes w_k$. The dual map $\Phi_1^*$ acts on the dual basis by $\Phi_1^*(v_j^*) = \sum_i (\Phi_1)_{ji} \mu_i^*$; this follows since $[\Phi_1^*(v_j^*)](\mu_i) = v_j^*(\Phi_1(\mu_i)) = (\Phi_1)_{ji}$. Applying $\Phi_1^* \otimes \Phi_2$ to $X_{\mathcal{N}}$ gives
\begin{align*}
(\Phi_1^* \otimes \Phi_2)(X_{\mathcal{N}}) 
&= \sum_{j,k} N_{kj} \Phi_1^*(v_j^*) \otimes \Phi_2(w_k) \\
&= \sum_{j,k} N_{kj} \left(\sum_i (\Phi_1)_{ji} \mu_i^*\right) \otimes \Phi_2(w_k) \\
&= \sum_{i,k} \left(\sum_j (\Phi_1)_{ji} N_{kj}\right) \mu_i^* \otimes \Phi_2(w_k).
\end{align*}
But $(\Phi_2 \circ \mathcal{N} \circ \Phi_1)(\mu_i) = \Phi_2\left(\sum_j (\Phi_1)_{ji} \mathcal{N}(v_j)\right) = \sum_{j,k} (\Phi_1)_{ji} N_{kj} \Phi_2(w_k)$, so the coefficients match those of $X_{\Phi_2 \circ \mathcal{N} \circ \Phi_1}$. Hence the identity holds.
\end{proof}

\subsubsection*{Step 2: Applying covariance condition }

Next, we note that the covariance condition of the broadcasting map $\mathcal{B}$ is only to be checked on the generators $R$ and $M$ of $D_4$. In particular, we have

\begin{lemma}\label{lem:map_to_invariant_vector}
The covariance condition of the broadcasting map $\mathcal{B}$ is equivalent to invariance of its associated vector $X_{\mathcal{B}}$ under the action of the generators of $D_4$:
\begin{align}
&(R \otimes R) \circ \mathcal{B}  = \mathcal{B} \circ R
 {\iff }
[(R^{-1})^* \otimes R \otimes R](X_{\mathcal{B}}) = X_{\mathcal{B}}, \\
&(M \otimes M) \circ \mathcal{B}  {=} \mathcal{B} \circ M
{\iff}
[(M^{-1})^* {\otimes} M {\otimes} M](X_{\mathcal{B}}){ =} X_{\mathcal{B}}.
\end{align}

\end{lemma}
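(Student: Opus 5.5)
The plan is to apply Lemma~\ref{Lem:dual_map_Lemma} to both sides of the covariance identity and then use injectivity of the isomorphism $\mathsf{Lin}(V,W)\cong V^*\otimes W$. Take $U=V=\St_{\mathbb{C}}(A)$ and $W=Z=\St_{\mathbb{C}}(A'')\otimes\St_{\mathbb{C}}(A')$. For $R$, the left-hand side $(R\otimes R)\circ\mathcal{B}$ is the composition $\Phi_2\circ\mathcal{B}\circ\Phi_1$ with $\Phi_1=\id$ and $\Phi_2=R\otimes R$. The lemma then gives
\begin{align*}
X_{(R\otimes R)\circ\mathcal{B}}=(\id^*\otimes R\otimes R)(X_{\mathcal{B}}).
\end{align*}
The right-hand side $\mathcal{B}\circ R$ corresponds to $\Phi_1=R$ and $\Phi_2=\id$, so $X_{\mathcal{B}\circ R}=(R^*\otimes\id\otimes\id)(X_{\mathcal{B}})$.

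Because $X\mapsto\mathcal{N}_X$ is a linear bijection, the covariance identity holds if and only if
\begin{align*}
(\id^*\otimes R\otimes R)(X_{\mathcal{B}})=(R^*\otimes\id\otimes\id)(X_{\mathcal{B}}).
\end{align*}
Since $R$ is reversible, $R^*$ is invertible, with $(R^*)^{-1}=(R^{-1})^*$; this follows from $(\Phi\circ\Psi)^*=\Psi^*\circ\Phi^*$ together with $\id^*=\id$. Apply $(R^{-1})^*\otimes\id\otimes\id$ to both sides. The two tensor factors act on different legs and therefore commute, so the identity becomes $[(R^{-1})^*\otimes R\otimes R](X_{\mathcal{B}})=X_{\mathcal{B}}$. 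Each step is reversible, so the equivalence holds in both directions. Repeating the argument verbatim with $M$ in place of $R$ gives the second equivalence.

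It remains to check that imposing covariance only on the generators suffices for all of $D_4$. If $\mathcal{B}\circ g=(g\otimes g)\circ\mathcal{B}$ and $\mathcal{B}\circ h=(h\otimes h)\circ\mathcal{B}$, then
\begin{align*}
\mathcal{B}\circ gh=(g\otimes g)\circ\mathcal{B}\circ h=(gh\otimes gh)\circ\mathcal{B},
\end{align*}
and covariance for inverses follows by composing with $g^{-1}$ on both sides. Equivalently, $g\mapsto(g^{-1})^*\otimes g\otimes g$ is a group representation, so invariance under the generators implies invariance under the whole group.

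The only real subtlety is the order reversal of the dual map in Lemma~\ref{Lem:dual_map_Lemma}. This is why the inverse $(R^{-1})^*$, rather than $R^*$, appears; keeping the contravariance straight is the main thing to get right.
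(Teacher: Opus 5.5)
Your proof is correct and essentially matches the paper's argument. The paper first rewrites covariance as $(R\otimes R)\circ\mathcal{B}\circ R^{-1}=\mathcal{B}$ and applies Lemma~\ref{Lem:dual_map_Lemma} once, whereas you vectorize both sides separately and then invert $R^*$ on the dual leg, which amounts to the same thing. Your check that covariance on the generators extends to all of $D_4$ fills in a step the paper only asserts in passing.
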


\begin{proof}
Covariance under rotation: $(R \otimes R) \circ \mathcal{B} = \mathcal{B} \circ R$ is equivalent to $(R \otimes R) \circ \mathcal{B} \circ R^{-1} = \mathcal{B} $. By Lemma~\ref{Lem:dual_map_Lemma}, the vector representation of the composition $(R \otimes R) \circ \mathcal{B} \circ R^{-1}$ is
\begin{align*}
X_{(R \otimes R) \circ \mathcal{B} \circ R^{-1}}
= [(R^{-1})^* \otimes (R \otimes R)](X_{\mathcal{B}})
= [(R^{-1})^* \otimes R \otimes R](X_{\mathcal{B}}).
\end{align*}
Since the vector representation is an isomorphism, the equality of maps is equivalent to the equality of their associated vectors. The proof for $M$ is identical.
\end{proof}
Consequently, $\mathcal{B}$ is $D_4$-covariant if and only if $X_{\mathcal{B}}$ is invariant under the action of both generators, and hence under the whole group $D_4$.

\subsubsection*{Step 3: Irreducible Decomposition of $\mathsf{St}_{\mathbb{C}}(A)$}

Recall that the state space $\mathsf{St}_{\mathbb{C}}(A)$ of a square bit has basis $\{\rho_{00}, \rho_{10}, \rho_{01}\}$, where $\rho_{00}, \rho_{10}, \rho_{01}$ are pure states of the square bit in boxworld. We find it convenient to work in a different basis to find the irreducible subrepresentations of the reversible transformations:
\begin{align}
\omega &:= \frac{\rho_{01} + \rho_{10}}{2}, \\
v &:= \rho_{01} - \rho_{10}, \\
u &:= \rho_{00} - \rho_{11},
\end{align}
where $\rho_{11}$ is the fourth pure state of the square bit, determined by the relation $\rho_{11} = \rho_{01} + \rho_{10} - \rho_{00}$. Then $\mathsf{St}_{\mathbb{C}}(A) = \operatorname{Span}_{\mathbb{C}}\{\omega, v, u\}$.

\begin{lemma}
The vector space $\mathsf{St}_{\mathbb{C}}(A)$ decomposes into irreducible subrepresentations as
\begin{align*}
\mathsf{St}_{\mathbb{C}}(A) \cong \mathbf{1} \oplus E,
\end{align*}
where 
\begin{align*}
\mathbf{1} := \operatorname{Span}_{\mathbb{C}}\{\omega\}, \qquad E := \operatorname{Span}_{\mathbb{C}}\{u,v\}.
\end{align*}
\end{lemma}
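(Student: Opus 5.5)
The plan is to verify directly that $\mathbf{1}$ and $E$ are invariant under the generators $R$ and $M$ of $D_4$, that together they span $\mathsf{St}_{\mathbb{C}}(A)$, and that $E$ admits no proper invariant subspace.

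\emph{Invariance of $\mathbf{1}$.} From the cyclic action $R:\rho_{00}\to\rho_{10}\to\rho_{11}\to\rho_{01}\to\rho_{00}$, the image $R(\omega)=\tfrac12(\rho_{00}+\rho_{11})$. The linear relation $\rho_{11}=\rho_{10}+\rho_{01}-\rho_{00}$ shows this equals $\omega$. Since $M$ fixes $\rho_{01}$ and $\rho_{10}$, we also have $M(\omega)=\omega$.

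\emph{Invariance of $E$.} For the rotation,
\begin{align*}
R(v) &= \rho_{00}-\rho_{11} = u, \\
R(u) &= \rho_{10}-\rho_{01} = -v .
\end{align*}
For the reflection, which swaps $\rho_{00}\leftrightarrow\rho_{11}$ and fixes the other two pure states,
\begin{align*}
M(v) &= v, \\
M(u) &= -u .
\end{align*}
These match the action stated in the main text, so $\mathbf{1}$ and $E$ are both $D_4$-invariant subspaces.

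\emph{Spanning.} The vectors $\{\omega,v,u\}$ are linearly independent, since their coordinate vectors $(1/2,1/2,1)$, $(-1,1,0)$ and $(-1,-1,0)$ are independent. Hence $\mathsf{St}_{\mathbb{C}}(A)=\mathbf{1}\oplus E$.

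\emph{Irreducibility.} $\mathbf{1}$ is one-dimensional, so it is trivially irreducible. For $E$, suppose it had a proper invariant subspace; that subspace would be a line $L$. Invariance under $M$ forces $L$ to be an eigenline of $M|_E=\mathrm{diag}(-1,1)$ in the basis $(u,v)$, so $L=\mathrm{Span}\{u\}$ or $L=\mathrm{Span}\{v\}$. Neither line is $R$-invariant, because $R$ exchanges $u$ and $v$ up to sign. Therefore $E$ is irreducible. This is the standard two-dimensional irrep of $D_4$, with $R$ acting as a rotation by $\pi/2$.

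No step is a real obstacle; the only care needed is applying the linear relation for $\rho_{11}$ correctly when evaluating $R(\omega)$ and $R(u)$.
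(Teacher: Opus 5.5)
Your proof is correct and follows essentially the same route as the paper: you check closure of $\mathbf{1}$ and $E$ under the generators $R$ and $M$, then rule out an invariant line in $E$ using both generators. The order differs slightly, since you take the eigenlines of $M$ first and exclude them with $R$, whereas the paper uses $R$ to exclude the coordinate lines and then $M$ for the mixed ones; your explicit linear-independence check of $\{\omega,v,u\}$, which the paper takes for granted, is a welcome addition.
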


\begin{proof}
We first show that $\mathbf{1}$ and $E$ are subrepresentations by verifying that they are closed under the action of the generators $R$ and $M$ of $D_4$. From the action of $R$ and $M$ on the pure states $\rho_{00}, \rho_{10}, \rho_{01}, \rho_{11}$, one obtains
\begin{align}
R(\omega) &= \omega, & R(u) &= -v, & R(v) &= u, \\
M(\omega) &= \omega, & M(u) &= -u, & M(v) &= v.
\end{align}
Thus $\omega$ is fixed by both generators, so $\mathbf{1}$ is a one-dimensional subrepresentation. Moreover, $R$ and $M$ map $E = \operatorname{Span}\{u,v\}$ to itself, so $E$ is also a subrepresentation.

It remains to show that these subrepresentations are irreducible. Since $\mathbf{1}$ is one-dimensional, it is trivially irreducible. For $E$, suppose there were a one-dimensional invariant subspace $\operatorname{Span}_{\mathbb{C}}\{au + bv\}$ with $a,b \in \mathbb{C}$. If $b = 0$, then $R(au) = aR(u) = -av \notin \operatorname{Span}_{\mathbb{C}}\{u\}$, so the subspace would not be invariant; similarly, if $a = 0$, then $R(bv) = bu \notin \operatorname{Span}_{\mathbb{C}}\{v\}$. Hence both $a$ and $b$ must be non-zero. Now invariance under $M$ requires $M(au + bv) = \lambda(au + bv)$ for some $\lambda \in \mathbb{C}$. Using $M(u) = -u$ and $M(v) = v$, we get
\begin{align*}
-au + bv = \lambda au + \lambda bv.
\end{align*}
Since $u$ and $v$ are linearly independent, comparing coefficients yields $-a = \lambda a$ and $b = \lambda b$. As $a,b \neq 0$, this forces $\lambda = -1$ from the first equation and $\lambda = 1$ from the second, a contradiction. Thus no one-dimensional invariant subspace exists, so $E$ is irreducible.
\end{proof}

\subsubsection*{Step 4: Characterizing Invariant Vectors}

Hence, by $\mathsf{St}_{\mathbb{C}}(A) \cong \mathbf{1} \oplus E$, the invariant vector 
$X_{\mathcal{B}} \in \mathsf{St}_{\mathbb{C}}(A)^* \otimes \mathsf{St}_{\mathbb{C}}(A) \otimes \mathsf{St}_{\mathbb{C}}(A)$ 
is equivalently an invariant vector in
$(\mathbf{1} \oplus E)^* \otimes (\mathbf{1} \oplus E) \otimes (\mathbf{1} \oplus E)$, where $(\mathbf{1} \oplus E)^* = \operatorname{Span}_{\mathbb{C}}\{\omega^*, u^*, v^*\}$ is the dual space, with dual basis satisfying
\begin{align}
&\omega^*(\omega) = v^*(v) = u^*(u) = 1, \\
&\omega^*(v) = \omega^*(u) = v^*(\omega) = u^*(\omega) = v^*(u) = u^*(v) = 0.
\end{align}
In terms of the effects of the square bit, these dual basis elements are given by
\begin{align}
\omega^* = \operatorname{Tr}, \qquad
v^* = \frac{1}{2}(e_y - e_x), \qquad
u^* = \frac{1}{2}(\bar{e}_x - e_y).
\end{align}
The action of $(R^{-1})^* = (R^*)^{-1}$ and $(M^{-1})^* = M^*$ on this dual basis is given by
\begin{align}
(R^{-1})^*(\omega^*) &= \omega^*, &
(R^{-1})^*(u^*) &= -v^*, &
(R^{-1})^*(v^*) &= u^*, \\
(M^{-1})^*(\omega^*) &= \omega^*, &
(M^{-1})^*(u^*) &= -u^*, &
(M^{-1})^*(v^*) &= v^*.
\end{align}
Expanding the tensor product $(\mathbf{1} \oplus E)^* \otimes (\mathbf{1} \oplus E) \otimes (\mathbf{1} \oplus E)$ yields eight components. We first eliminate those that do not contain any non-zero invariant vectors.
\begin{lemma} 
The components $E^* \otimes \mathbf{1} \otimes \mathbf{1}$, $\mathbf{1}^* \otimes E \otimes \mathbf{1}$, $\mathbf{1}^* \otimes \mathbf{1} \otimes E$, and $E^* \otimes E \otimes E$ contain no non-zero invariant vectors under the $D_4$-action.
\end{lemma}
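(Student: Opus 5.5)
We want to show that the four components $E^*\otimes\mathbf{1}\otimes\mathbf{1}$, $\mathbf{1}^*\otimes E\otimes\mathbf{1}$, $\mathbf{1}^*\otimes\mathbf{1}\otimes E$ and $E^*\otimes E\otimes E$ carry no non-zero vector fixed by both $(R^{-1})^*\otimes R\otimes R$ and $(M^{-1})^*\otimes M\otimes M$. By Lemma~\ref{lem:map_to_invariant_vector}, invariance under these two generators is equivalent to invariance under all of $D_4$. The quickest tool is the central element $R^2$, supplemented by a direct computation where it is insufficient.

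From the listed actions, $R^2(u)=R(-v)=-u$ and $R^2(v)=R(u)=-v$, so $R^2$ acts as $-1$ on $E$ and as $+1$ on $\mathbf{1}$. The dual formulas likewise give $((R^{-1})^*)^2=-1$ on $E^*=\operatorname{Span}\{u^*,v^*\}$ and $+1$ on $\omega^*$. The element $R^2$ therefore acts on a component by $(-1)^k$, where $k$ is the number of $E$ (or $E^*$) factors. In each of the four components listed, $k$ is odd: $k=1$ for the first three and $k=3$ for $E^*\otimes E\otimes E$. So $R^2$ acts as $-1$ there, and any vector invariant under $R^2$ must vanish. Since invariance under the generator $R$ implies invariance under $R^2$, this settles the whole lemma at once.

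As a sanity check on the first three components, each is isomorphic to $E$ (or $E^*$) tensored with trivial factors. An invariant vector would then span a one-dimensional trivial subrepresentation of $E$, which is impossible because $E$ is irreducible, as shown in the previous lemma, and non-trivial.

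The only step needing care is the dual action: that $(R^{-1})^*$ squares to $-1$ on $E^*$. This follows from the stated formulas $(R^{-1})^*(u^*)=-v^*$ and $(R^{-1})^*(v^*)=u^*$, which give $u^*\mapsto -v^*\mapsto -u^*$. Beyond this there is no real obstacle, and the reflection $M$ is not even needed.
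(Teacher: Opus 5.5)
Your proof is correct and follows essentially the same route as the paper: both use only the rotation, and both exploit the fact that each $E$ or $E^*$ factor contributes a nontrivial phase, so a component with an odd number of such factors has no fixed vector. The paper argues through the eigenvalues $\pm i$ of $R$ (a product of one or three of them is $\pm i \neq 1$), whereas you use the central element $R^2$, which acts as the scalar $(-1)^k$ on a component with $k$ such factors; this is a slightly cleaner packaging that avoids diagonalizing $R$.
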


\begin{proof}
Finding the invariant vector is equivalent to looking for simultaneous eigenvectors of $R$ and $M$ with eigenvalue $1$. With this view, note that on $E = \operatorname{Span}_{\mathbb{C}}\{u,v\}$, the rotation $R$ acts as
$R(u) = -v, \qquad R(v) = u$.
A simple calculation reveals that the eigenvalues of $R$ on $E$ are $\pm i$, with eigenvectors
\begin{align*}
R(u + iv) = i(u + iv), \qquad R(u - iv) = -i(u - iv).
\end{align*}
Similarly, on $E^*$, the dual map $(R^{-1})^*$ has the same eigenvalues $\pm i$, with
\begin{align*}
(R^{-1})^*(u^* + iv^*) = i(u^* + iv^*), \quad (R^{-1})^*(u^* - iv^*) {=} -i(u^* - iv^*).
\end{align*}
Thus neither $R$ on $E$ nor $(R^{-1})^*$ on $E^*$ has $+1$ as an eigenvalue. Consequently, any tensor component containing a single factor of $E$ or $E^*$ while the remaining factors are trivial — i.e., $E^* \otimes \mathbf{1} \otimes \mathbf{1}$, $\mathbf{1}^* \otimes E \otimes \mathbf{1}$, and $\mathbf{1}^* \otimes \mathbf{1} \otimes E$ — cannot contain a vector fixed by $R$, since the eigenvalue of $R$ on such a component is simply the eigenvalue on the single non-trivial factor, which is $\pm i \neq 1$.

For the remaining component $E^* \otimes E \otimes E$, the covariance under rotation translates to the action $(R^{-1})^* \otimes R \otimes R$. The eigenvalues of this map are products of one eigenvalue from $(R^{-1})^*$ and two from $R$, each $\pm i$. Hence every eigenvalue is of the form $(\pm i)^3 = \pm i$, and again $1$ is not an eigenvalue. Therefore no non-zero vector in this component fixed by the rotations exists.
\end{proof}

Thus only four components can contain non-zero invariant vectors:
\begin{align*}
\mathbf{1}^* \otimes \mathbf{1} \otimes \mathbf{1}, \quad
\mathbf{1}^* \otimes E \otimes E, \quad
E^* \otimes E \otimes \mathbf{1}, \quad
E^* \otimes \mathbf{1} \otimes E.
\end{align*}

\begin{lemma}\label{Lem:invariant_Vector}
The invariant subspaces of $E \otimes E$ under $R \otimes R$ and $M \otimes M$, and of $E^* \otimes E$ under $(R^{-1})^* \otimes R$ and $(M^{-1})^* \otimes M$, are each one-dimensional, spanned respectively by
\begin{align}
u \otimes u + v \otimes v \quad \text{and} \quad u^* \otimes u + v^* \otimes v.
\end{align}
\end{lemma}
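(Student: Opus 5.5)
The plan is to work in the explicit bases of $E$ and $E^*$ and solve the invariance conditions directly, using the eigen-decomposition of $R$ already found in the preceding lemma. First take a general element $X = a\,u\otimes u + b\,u\otimes v + c\,v\otimes u + d\,v\otimes v \in E\otimes E$ and impose invariance under $R\otimes R$. From $R(u)=-v$ and $R(v)=u$ we get
\[
(R\otimes R)X = a\,v\otimes v - b\,v\otimes u - c\,u\otimes v + d\,u\otimes u .
\]
Comparing coefficients gives $a=d$ and $b=-c$. Next impose invariance under $M\otimes M$. Since $M(u)=-u$ and $M(v)=v$, the terms $u\otimes u$ and $v\otimes v$ are fixed, while $u\otimes v$ and $v\otimes u$ flip sign. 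Hence $b=c=0$. The invariant subspace is therefore $\operatorname{Span}_{\mathbb{C}}\{u\otimes u+v\otimes v\}$. It is nonzero, so it is exactly one-dimensional.

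For $E^*\otimes E$, I repeat the same computation using the dual action listed above: $(R^{-1})^*(u^*)=-v^*$, $(R^{-1})^*(v^*)=u^*$, $(M^{-1})^*(u^*)=-u^*$ and $(M^{-1})^*(v^*)=v^*$. These act on $\{u^*,v^*\}$ by literally the same matrices as $R$ and $M$ act on $\{u,v\}$. So the coefficient comparison is identical and yields $\operatorname{Span}_{\mathbb{C}}\{u^*\otimes u+v^*\otimes v\}$.

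As a sanity check, this invariant vector is $X_{\mathcal{I}_E}$, the vector of the identity on $E$. That vector is manifestly invariant by Lemma~\ref{lem:map_to_invariant_vector}, since the identity commutes with every group action.

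As an alternative cross-check, I can count invariants by characters. $E$ is the real two-dimensional irrep of $D_4$, so $E\cong E^*$, and $E\otimes E$ decomposes into four distinct one-dimensional irreps, exactly one of which is trivial. Equivalently, Schur's lemma gives $\dim\operatorname{Hom}_{D_4}(E,E)=1$ because $E$ is absolutely irreducible. I expect no real obstacle here. The only point needing care is the sign bookkeeping for the dual action, because it fixes whether the invariant is $u^*\otimes u+v^*\otimes v$ rather than $u^*\otimes u-v^*\otimes v$. For this I will rely on the table of $(R^{-1})^*$ and $(M^{-1})^*$ stated just before the lemma.
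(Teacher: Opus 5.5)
Your proposal is correct and follows essentially the paper's argument: impose invariance under the generators $R$ and $M$ by direct computation, and treat $E^*\otimes E$ identically using the dual-action table (its matrices coincide with those of $R$ and $M$ because these act orthogonally on $\{u,v\}$). The only difference is cosmetic---you compare coefficients in the product basis, whereas the paper first diagonalizes $R\otimes R$ via the eigenvectors $u\pm iv$ and then imposes $M\otimes M$---and your identity-map and Schur's-lemma cross-checks are valid extras.
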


\begin{proof}
We first analyze $E \otimes E$. On $E$, the eigenvectors of $R$ are $R(u+iv) = i(u+iv)$, and $R(u-iv) = -i(u-iv)$.  Thus the $+1$-eigenspace of $R \otimes R$ on $E \otimes E$ is spanned by
\begin{align}
(u+iv) \otimes (u-iv) \quad \text{and} \quad (u-iv) \otimes (u+iv).
\end{align}
A general $+1$-eigenvector of $R \otimes R$ is therefore
\begin{align}
\alpha (u+iv) \otimes (u-iv) + \beta (u-iv) \otimes (u+iv),
\end{align}
for some $\alpha, \beta \in \mathbb{C}$. Expanding this expression gives
\begin{align}
(\alpha+\beta)(u \otimes u + v \otimes v) + i(\alpha-\beta)(u \otimes v - v \otimes u).
\end{align}
Applying $M \otimes M$ to this vector, using $M(u) = -u$ and $M(v) = v$, yields
\begin{align}
(\alpha+\beta)(u \otimes u + v \otimes v) - i(\alpha-\beta)(u \otimes v - v \otimes u).
\end{align}
For this to be invariant under $M$, we require $\alpha-\beta = 0$, so $\alpha = \beta$. Thus the invariant subspace is one-dimensional, spanned by
\begin{align}
u \otimes u + v \otimes v.
\end{align}

The argument for $E^* \otimes E$ is analogous and gives the one-dimensional invariant subspace spanned by 
\begin{align}
u^* \otimes u + v^* \otimes v.
\end{align}
\end{proof}

With Lemma~\ref{Lem:invariant_Vector} in hand, we can now explicitly write down the invariant vectors in each of the four contributing components. Collecting these contributions yields the following characterization.

\begin{corollary} \label{Cor:invariant_vector}
The most general invariant vector in 
$(\mathbf{1} \oplus E)^* \otimes (\mathbf{1} \oplus E) \otimes (\mathbf{1} \oplus E)$
is given by
\begin{align}
X_{\mathcal{B}} 
&= \alpha \, \omega^* \otimes \omega \otimes \omega \nonumber \\
&\quad + \beta \, \omega^* \otimes (u \otimes u + v \otimes v) \nonumber \\
&\quad + \gamma \, (u^* \otimes u + v^* \otimes v) \otimes \omega \nonumber \\
&\quad + \delta \, (u^* \otimes \omega \otimes u + v^* \otimes \omega \otimes v),
\end{align}
with $\alpha, \beta, \gamma, \delta \in \mathbb{C}$.
\end{corollary}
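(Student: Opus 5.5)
The corollary follows from the previous two lemmas by working component by component. The first step is to use the decomposition $\mathsf{St}_{\mathbb{C}}(A)\cong \mathbf{1}\oplus E$ to split
\[
(\mathbf{1}\oplus E)^*\otimes(\mathbf{1}\oplus E)\otimes(\mathbf{1}\oplus E)
\]
into its eight direct summands. Each summand is separately invariant under the diagonal $D_4$-action $(g^{-1})^*\otimes g\otimes g$, because each factor $\mathbf{1}$, $E$, $\mathbf{1}^*$, $E^*$ is itself invariant. Consequently the fixed-point space of the whole tensor product is the direct sum of the fixed-point spaces of the eight summands, and a general invariant $X_{\mathcal{B}}$ is simply the sum of an invariant vector from each summand.

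By the preceding lemma, four of the summands contain no nonzero invariant vector: $E^*\otimes\mathbf{1}\otimes\mathbf{1}$, $\mathbf{1}^*\otimes E\otimes\mathbf{1}$, $\mathbf{1}^*\otimes\mathbf{1}\otimes E$ and $E^*\otimes E\otimes E$. The remaining four are treated in turn.
\begin{itemize}
\item $\mathbf{1}^*\otimes\mathbf{1}\otimes\mathbf{1}$ is one-dimensional, spanned by $\omega^*\otimes\omega\otimes\omega$. It is fixed because $\omega$ and $\omega^*$ are fixed by both generators, which gives the $\alpha$ term.
\item $\mathbf{1}^*\otimes E\otimes E\cong E\otimes E$, since $\omega^*$ is invariant. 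Lemma~\ref{Lem:invariant_Vector} then gives the one-dimensional invariant space spanned by $\omega^*\otimes(u\otimes u+v\otimes v)$, which is the $\beta$ term.
\item $E^*\otimes E\otimes\mathbf{1}\cong E^*\otimes E$. The invariant space is spanned by $(u^*\otimes u+v^*\otimes v)\otimes\omega$, which is the $\gamma$ term.
\item $E^*\otimes\mathbf{1}\otimes E$ needs a little more care, because the trivial factor sits in the middle. Let $\pi$ be the linear isomorphism $E^*\otimes\mathbf{1}\otimes E\to E^*\otimes E$ given by $\phi\otimes\omega\otimes x\mapsto \phi\otimes x$. It intertwines $(g^{-1})^*\otimes g\otimes g$ with $(g^{-1})^*\otimes g$, since $g(\omega)=\omega$. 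So invariant vectors correspond bijectively, and we obtain $u^*\otimes\omega\otimes u+v^*\otimes\omega\otimes v$, which is the $\delta$ term.
\end{itemize}

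Summing these four contributions with arbitrary complex coefficients gives exactly the stated form of $X_{\mathcal{B}}$. Invariance under the two generators is equivalent to invariance under all of $D_4$, so nothing further needs to be checked.

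The argument is routine. The only point requiring care is justifying the componentwise reduction, that is, that the action preserves each summand so that invariant vectors do not mix across summands. This holds because the action is a tensor product of maps, each of which preserves $\mathbf{1}$ and $E$ (and dually $\mathbf{1}^*$ and $E^*$).
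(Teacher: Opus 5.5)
Your proposal is correct and follows the same route as the paper: you split the space into the eight summands, discard the four that have no invariants using the preceding lemma, and read off the four remaining one-dimensional invariant spaces from Lemma~\ref{Lem:invariant_Vector}. Your explicit justification that each summand is stable (so invariants do not mix across summands), and your intertwiner handling the middle trivial factor in $E^*\otimes\mathbf{1}\otimes E$, fill in steps the paper leaves implicit.
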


\begin{proof}
The component $\mathbf{1}^* \otimes \mathbf{1} \otimes \mathbf{1}$ is one-dimensional and clearly invariant, contributing $\alpha \, \omega^* \otimes \omega \otimes \omega$. For $\mathbf{1}^* \otimes E \otimes E$, Lemma~\ref{Lem:invariant_Vector} gives the invariant subspace spanned by $u \otimes u + v \otimes v$, yielding $\beta \, \omega^* \otimes (u \otimes u + v \otimes v)$. Similarly, for $E^* \otimes E \otimes \mathbf{1}$ and $E^* \otimes \mathbf{1} \otimes E$, the same lemma gives the invariant subspaces spanned by $u^* \otimes u + v^* \otimes v$, contributing $\gamma \, (u^* \otimes u + v^* \otimes v) \otimes \omega$ and $\delta \, (u^* \otimes \omega \otimes u + v^* \otimes \omega \otimes v)$, respectively. Combining these yields the claimed expression.
\end{proof}

\subsubsection*{Step 5: Translating back to linear maps}

We now translate the invariant vector $X_{\mathcal{B}}$ back to a linear map via the isomorphism established in Step 1.

\begin{lemma} \label{Lem:covariant_map}
The most general covariant map $\mathcal{B} \in \mathsf{Transf}_{\mathbb{C}}(A \to A'' \otimes A')$ arising from the invariant vector $X_{\mathcal{B}}$ in Corollary~\ref{Cor:invariant_vector} is given, for any $\rho \in \mathsf{St}_{\mathbb{C}}(A)$, by
\begin{align}
\mathcal{B}(\rho) 
&= \tilde{\alpha} \, \operatorname{Tr}(\rho) \, \omega \otimes \omega \nonumber \\
&\quad + \beta \, \operatorname{Tr}(\rho) \, (u \otimes u + v \otimes v) \nonumber \\
&\quad + \gamma \, \rho \otimes \omega \nonumber \\
&\quad + \delta \, \omega \otimes \rho,
\end{align}
with $\tilde{\alpha}, \beta, \gamma, \delta \in \mathbb{C}$.
\end{lemma}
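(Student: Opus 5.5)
The plan is to apply the inverse isomorphism of Step 1 term by term to the invariant vector $X_{\mathcal{B}}$ of Corollary~\ref{Cor:invariant_vector}. A vector $\phi^*\otimes w$ corresponds to the map $\tilde v\mapsto \phi^*(\tilde v)\,w$, and the correspondence is linear. So it suffices to read off the map attached to each of the four summands, using the dual-basis relations $\omega^*(\omega)=u^*(u)=v^*(v)=1$ with all cross pairings zero. Since $\omega^*=\Tr$, the first two summands immediately give $\alpha\,\Tr(\rho)\,\omega\otimes\omega$ and $\beta\,\Tr(\rho)(u\otimes u+v\otimes v)$.

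For the third summand, $\gamma(u^*\otimes u+v^*\otimes v)\otimes\omega$ yields the map $\rho\mapsto\gamma\,[u^*(\rho)u+v^*(\rho)v]\otimes\omega$. Expanding $\rho$ in the basis $\{\omega,u,v\}$ gives
\begin{align}
\rho=\Tr(\rho)\,\omega+u^*(\rho)\,u+v^*(\rho)\,v,
\end{align}
so $u^*(\rho)u+v^*(\rho)v=\rho-\Tr(\rho)\omega$. The third summand therefore equals $\gamma\,\rho\otimes\omega-\gamma\,\Tr(\rho)\,\omega\otimes\omega$. In the same way, the fourth summand $\delta(u^*\otimes\omega\otimes u+v^*\otimes\omega\otimes v)$ gives $\delta\,\omega\otimes\rho-\delta\,\Tr(\rho)\,\omega\otimes\omega$, where the output ordering $A''\otimes A'$ follows the tensor-factor order in $X_{\mathcal{B}}$.

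Collecting the terms, the coefficient of $\Tr(\rho)\,\omega\otimes\omega$ becomes $\tilde\alpha:=\alpha-\gamma-\delta$. The correspondence $(\alpha,\beta,\gamma,\delta)\mapsto(\tilde\alpha,\beta,\gamma,\delta)$ is an invertible linear reparametrization. Hence the set of maps obtained is exactly the four-parameter family in the statement. By Lemma~\ref{lem:map_to_invariant_vector} and Corollary~\ref{Cor:invariant_vector}, this family is the most general covariant map, because the vector representation is an isomorphism.

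The only point needing care is bookkeeping rather than any real obstacle. One must keep track of which output factor ($A''$ or $A'$) carries $\rho$ in the third and fourth summands, and must check that the dual-basis expansion of $\rho$ holds on all of $\St_{\mathbb{C}}(A)$ and not only on normalized states. This is why the factor $\Tr(\rho)$ appears in place of $1$.
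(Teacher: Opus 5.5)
Your proposal is correct and matches the paper's proof essentially step for step: you translate each summand of $X_{\mathcal{B}}$ back through the isomorphism $\mathsf{Lin}(V,W)\cong V^*\otimes W$, use $u^*(\rho)\,u+v^*(\rho)\,v=\rho-\operatorname{Tr}(\rho)\,\omega$, and absorb the extra terms into $\tilde{\alpha}=\alpha-\gamma-\delta$. Your added remark that this reparametrization is invertible is a small clarification the paper leaves implicit: it shows the resulting family is exactly the set of covariant maps, not merely contained in it.
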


\begin{proof}
Under the isomorphism $\mathsf{Lin}(V,W) \cong V^* \otimes W$, a vector $X = \sum_i v_i^* \otimes w_i$ corresponds to the linear map $\mathcal{N}(\rho) = \sum_i v_i^*(\rho) w_i$. Applying this to the expression for $X_{\mathcal{B}}$ in Corollary~\ref{Cor:invariant_vector} yields
\begin{align}
\mathcal{B}(\rho) 
&= \alpha \, \omega^*(\rho) \, \omega \otimes \omega \nonumber \\
&\quad + \beta \, \omega^*(\rho) \, (u \otimes u + v \otimes v) \nonumber \\
&\quad + \gamma \, (u^*(\rho) u + v^*(\rho) v) \otimes \omega \nonumber \\
&\quad + \delta \, \omega \otimes (u^*(\rho) u + v^*(\rho) v).
\end{align}
Expanding $\rho = \operatorname{Tr}(\rho) \, \omega + u^*(\rho) u + v^*(\rho) v$, with $\omega^* = \operatorname{Tr}$, and using $u^*(\rho) u + v^*(\rho) v = \rho - \operatorname{Tr}(\rho) \, \omega$, we obtain
\begin{align}
\mathcal{B}(\rho) 
&= \tilde{\alpha} \, \operatorname{Tr}(\rho) \, \omega \otimes \omega
+ \beta \, \operatorname{Tr}(\rho) \, (u \otimes u + v \otimes v) \nonumber \\
&\quad + \gamma \, \rho \otimes \omega
+ \delta \, \omega \otimes \rho,
\end{align}
where $\tilde{\alpha} := \alpha - \gamma - \delta$.
\end{proof}

\subsubsection*{Step 6: Imposing the classical consistency}

By Proposition~\ref{Prop:classical_consistency_equivalence_class}, it suffices to verify classical consistency for a single representative of each equivalence class. We begin with the adjacent orbit.

\medskip

\begin{lemma}
Classical consistency on the adjacent orbit uniquely fixes all coefficients in the covariant map, yielding
\begin{align}
\tilde{\alpha} = -1, \qquad \beta = \frac18, \qquad \gamma = 1, \qquad \delta = 1.
\end{align}
\end{lemma}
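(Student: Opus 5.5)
We pick the representative adjacent basis $\{\rho_{00},\rho_{10}\}$. Its distinguishing measurement is $\{e_x,\bar e_x\}$, since $e_x(\rho_{10})=1$ and $e_x(\rho_{00})=0$. The dephasing channel is then $\mathcal{D}(\rho)=e_x(\rho)\rho_{10}+\bar e_x(\rho)\rho_{00}$. By Proposition~\ref{Prop:classical_consistency_equivalence_class}, this single basis is enough to treat the whole adjacent orbit.

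First we reduce the classical consistency condition to two vector identities. The map $\mathcal{B}_{\mathrm{cl}}(\rho)$ depends on $\rho$ only through $e_x(\rho)$ and $\Tr(\rho)$. Since $e_x\circ\mathcal{D}=e_x$, we have $\mathcal{B}_{\mathrm{cl}}=\mathcal{B}_{\mathrm{cl}}\circ\mathcal{D}$. The image of $\mathcal{D}$ is spanned by $\rho_{00}$ and $\rho_{10}$, both of which are fixed by $\mathcal{D}$. Hence $(\mathcal{D}\otimes\mathcal{D})\circ\mathcal{B}\circ\mathcal{D}=\mathcal{B}_{\mathrm{cl}}$ is equivalent to the two conditions
\begin{align}
(\mathcal{D}\otimes\mathcal{D})\,\mathcal{B}(\rho_{a})=\rho_a\otimes\rho_a,\qquad a\in\{00,10\},
\end{align}
with $\mathcal{B}$ the general covariant map of Lemma~\ref{Lem:covariant_map}.

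Next we compute $\mathcal{D}$ on the basis $\{\omega,u,v\}$. Introduce $m:=\tfrac12(\rho_{00}+\rho_{10})$ and $w:=\rho_{10}-\rho_{00}$. The values of $\Tr$ and $e_x$ give:
\begin{itemize}
\item $\mathcal{D}(\omega)=m$, since $e_x(\omega)=\tfrac12$;
\item $\mathcal{D}(u)=-w$, since $u$ is traceless with $e_x(u)=-1$;
\item $\mathcal{D}(v)=-w$, by the same reasoning.
\end{itemize}
Writing $\rho_{00}=m-\tfrac12 w$ and $\rho_{10}=m+\tfrac12 w$, we obtain
\begin{align}
(\mathcal{D}\otimes\mathcal{D})\mathcal{B}(\rho_{00})=\tilde\alpha\, m\otimes m+2\beta\, w\otimes w+\gamma\,(m-\tfrac12 w)\otimes m+\delta\, m\otimes(m-\tfrac12 w).
\end{align}
The target is
\begin{align}
\rho_{00}\otimes\rho_{00}=m\otimes m-\tfrac12(m\otimes w+w\otimes m)+\tfrac14 w\otimes w.
\end{align}

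Since $m$ and $w$ are linearly independent, the four vectors $m\otimes m$, $m\otimes w$, $w\otimes m$, $w\otimes w$ are linearly independent. Comparing coefficients then gives a linear system with a unique solution:
\begin{itemize}
\item the $w\otimes m$ and $m\otimes w$ coefficients give $\gamma=1$ and $\delta=1$;
\item the $w\otimes w$ coefficient gives $2\beta=\tfrac14$, so $\beta=\tfrac18$;
\item the $m\otimes m$ coefficient gives $\tilde\alpha+\gamma+\delta=1$, so $\tilde\alpha=-1$.
\end{itemize}
The condition for $\rho_{10}$ is obtained by flipping the sign of $w$. It reproduces exactly the same equations, so it is consistent and adds no new constraint.

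The only delicate point is the sign bookkeeping in $\mathcal{D}(u)$ and $\mathcal{D}(v)$. Both collapse to the same vector $-w$, which is why $u\otimes u+v\otimes v$ contributes $2\,w\otimes w$. A slip there would change the value of $\beta$, so we would double-check it against the component formula $\rho=(x,y,1)^T$.
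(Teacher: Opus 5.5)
Your proposal is correct and follows the paper's proof essentially step for step. You use the same representative $\{\rho_{00},\rho_{10}\}$, the same values $\mathcal{D}(\omega)=\tfrac12(\rho_{00}+\rho_{10})$ and $\mathcal{D}(u)=\mathcal{D}(v)=\rho_{00}-\rho_{10}$, and the same coefficient comparison for $\rho_{00}$, with the $\rho_{10}$ condition adding nothing new. The differences are cosmetic: expanding in $m,w$ instead of $\rho_{00},\rho_{10}$ turns the paper's $4\times 4$ system into a diagonal one, and you justify more carefully, via $\mathcal{D}\circ\mathcal{D}=\mathcal{D}$ and $\mathcal{B}_{\mathrm{cl}}=\mathcal{B}_{\mathrm{cl}}\circ\mathcal{D}$, why checking the two basis states suffices.
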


\begin{proof}
Choose the representative $\mathcal{C}_{\mathrm{adj}} = \{\rho_{00}, \rho_{10}\}$, distinguished by the effect $e_x$. The corresponding dephasing channel and classical broadcasting map are
\begin{align}
\mathcal{D}_{\mathrm{adj}}(\rho) &= e_x(\rho)\rho_{10} + \bar{e}_x(\rho)\rho_{00}, \\
\mathcal{B}_{\mathrm{cl}}^{\mathrm{adj}}(\rho) &= e_x(\rho)\rho_{10}\otimes\rho_{10} + \bar{e}_x(\rho)\rho_{00}\otimes\rho_{00}.
\end{align}

The most general covariant map from Lemma~\ref{Lem:covariant_map} is
\begin{align}
\mathcal{B}(\rho) 
&= \tilde{\alpha} \, \operatorname{Tr}(\rho) \, \omega \otimes \omega \nonumber \\
&\quad + \beta \, \operatorname{Tr}(\rho) \, (u \otimes u + v \otimes v) \nonumber \\
&\quad + \gamma \, \rho \otimes \omega \nonumber \\
&\quad + \delta \, \omega \otimes \rho.
\end{align}

Since $\mathcal{D}_{\mathrm{adj}}$ fixes the basis states $\rho_{00}$ and $\rho_{10}$, it is sufficient to impose classical consistency on these two states.

We first evaluate on $\rho_{00}$. Using $\operatorname{Tr}(\rho_{00})=1$, we have
\begin{align}
\mathcal{B}(\rho_{00}) 
&= \tilde{\alpha} \, \omega \otimes \omega 
+ \beta \, (u \otimes u + v \otimes v) 
+ \gamma \, \rho_{00} \otimes \omega 
+ \delta \, \omega \otimes \rho_{00}.
\end{align}

We now compute the action of $\mathcal{D}_{\mathrm{adj}}$ on the basis elements $\{\omega, v, u\}$:
\begin{align}
\mathcal{D}_{\mathrm{adj}}(\omega) &= \frac12(\rho_{00}+\rho_{10}), \\
\mathcal{D}_{\mathrm{adj}}(v) &= \mathcal{D}_{\mathrm{adj}}(u) = \rho_{00}-\rho_{10}.
\end{align}

Applying $\mathcal{D}_{\mathrm{adj}} \otimes \mathcal{D}_{\mathrm{adj}}$ to $\mathcal{B}(\rho_{00})$ and expanding in the basis $\{\rho_{00}, \rho_{10}\}$ gives
\begin{align}
&(\mathcal{D}_{\mathrm{adj}} \otimes \mathcal{D}_{\mathrm{adj}})\mathcal{B}(\rho_{00}) \nonumber \\
&= \left(\frac{\tilde{\alpha}}{4} + 2\beta + \frac{\gamma + \delta}{2}\right) \rho_{00}\otimes\rho_{00} \nonumber \\
&\quad + \left(\frac{\tilde{\alpha}}{4} - 2\beta + \frac{\gamma}{2}\right) \rho_{00}\otimes\rho_{10} \nonumber \\
&\quad + \left(\frac{\tilde{\alpha}}{4} - 2\beta + \frac{\delta}{2}\right) \rho_{10}\otimes\rho_{00} \nonumber \\
&\quad + \left(\frac{\tilde{\alpha}}{4} + 2\beta\right) \rho_{10}\otimes\rho_{10}.
\end{align}

Classical consistency demands
\begin{align}
(\mathcal{D}_{\mathrm{adj}} \otimes \mathcal{D}_{\mathrm{adj}})\mathcal{B}(\rho_{00}) = \rho_{00}\otimes\rho_{00}.
\end{align}
Since $\{\rho_{00}\otimes\rho_{00}, \rho_{00}\otimes\rho_{10}, \rho_{10}\otimes\rho_{00}, \rho_{10}\otimes\rho_{10}\}$ are linearly independent, we may compare coefficients. This yields
\begin{align}
\frac{\tilde{\alpha}}{4} + 2\beta + \frac{\gamma + \delta}{2} &= 1,  \\
\frac{\tilde{\alpha}}{4} - 2\beta + \frac{\gamma}{2} &= 0,  \\
\frac{\tilde{\alpha}}{4} - 2\beta + \frac{\delta}{2} &= 0,  \\
\frac{\tilde{\alpha}}{4} + 2\beta &= 0. 
\end{align}

Solving these equations, we find that classical consistency on the adjacent orbit uniquely fixes all coefficients:
\begin{align}
\tilde{\alpha} = -1, \qquad \beta = \frac18, \qquad \gamma = 1, \qquad \delta = 1.
\end{align}

An identical calculation for $\rho_{10}$ with the condition 
\begin{align}
(\mathcal{D}_{\mathrm{diag}} \otimes \mathcal{D}_{\mathrm{diag}})\mathcal{B}(\rho_{10}) = \rho_{10}\otimes\rho_{10}
\end{align}
yields the identical result.
\end{proof}

We now explore the diagonal orbit and show that classical consistency on this orbit yields the same system of equations.

\begin{lemma}
Classical consistency on the diagonal orbit uniquely fixes all coefficients in the covariant map, yielding
\begin{align}
\tilde{\alpha} = -1, \qquad \beta = \frac18, \qquad \gamma = 1, \qquad \delta = 1.
\end{align}
\end{lemma}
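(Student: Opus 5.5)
By Proposition~\ref{Prop:classical_consistency_equivalence_class}, and by the remark preceding Theorem~\ref{thm:unique-broadcasting-map} that $\mathcal{D}_y = M\circ\mathcal{D}_x\circ M^{-1}$, it suffices to impose classical consistency for the single representative $\mathcal{C}_{\mathrm{diag}}=\{\rho_{01},\rho_{10}\}$ with the dephasing channel $\mathcal{D}_x(\rho)=e_x(\rho)\rho_{10}+\bar{e}_x(\rho)\rho_{01}$. The corresponding classical broadcaster is $\mathcal{B}^{\mathrm{diag}}_{\mathrm{cl}}(\rho)=e_x(\rho)\rho_{10}\otimes\rho_{10}+\bar e_x(\rho)\rho_{01}\otimes\rho_{01}$. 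I start from the general covariant map of Lemma~\ref{Lem:covariant_map}. Both sides of $(\mathcal{D}_x\otimes\mathcal{D}_x)\circ\mathcal{B}\circ\mathcal{D}_x=\mathcal{B}^{\mathrm{diag}}_{\mathrm{cl}}$ are linear. The image of $\mathcal{D}_x$ is spanned by $\rho_{10},\rho_{01}$, which $\mathcal{D}_x$ fixes. Hence it suffices to check the two conditions $(\mathcal{D}_x\otimes\mathcal{D}_x)\mathcal{B}(\rho_{10})=\rho_{10}\otimes\rho_{10}$ and the analogous one for $\rho_{01}$.

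The key computation is the action of $\mathcal{D}_x$ on the basis $\{\omega,v,u\}$. Using $e_x(\omega)=\tfrac12$, $e_x(v)=e_x(u)=-1$ and $\bar e_x(v)=\bar e_x(u)=1$, I expect
\begin{align}
\mathcal{D}_x(\omega)=\omega,\qquad \mathcal{D}_x(v)=v,\qquad \mathcal{D}_x(u)=v .
\end{align}
So $\mathcal{D}_x$ collapses $E$ onto $\operatorname{Span}\{v\}$. Writing $\rho_{10}=\omega-\tfrac12 v$, I then get
\begin{align}
(\mathcal{D}_x\otimes\mathcal{D}_x)\mathcal{B}(\rho_{10})
&=(\tilde\alpha+\gamma+\delta)\,\omega\otimes\omega-\tfrac{\gamma}{2}\,v\otimes\omega-\tfrac{\delta}{2}\,\omega\otimes v+2\beta\, v\otimes v ,
\end{align}
where the $u\otimes u$ term contributes the extra $\beta\,v\otimes v$. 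The target is $\rho_{10}\otimes\rho_{10}=\omega\otimes\omega-\tfrac12(v\otimes\omega+\omega\otimes v)+\tfrac14 v\otimes v$.

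Since $\{\omega\otimes\omega,v\otimes\omega,\omega\otimes v,v\otimes v\}$ are linearly independent, I compare coefficients. This gives $\gamma=\delta=1$, $2\beta=\tfrac14$ and $\tilde\alpha+\gamma+\delta=1$, i.e.\ $\tilde\alpha=-1$, $\beta=\tfrac18$, $\gamma=\delta=1$. This matches the adjacent-orbit result, which establishes the basis-independence claim in Theorem~\ref{thm:unique-broadcasting-map}. The $\rho_{01}=\omega+\tfrac12 v$ condition only flips the signs of the cross terms and yields the same system. The only step needing care is the evaluation of $\mathcal{D}_x$ on $u$: the collapse $u\mapsto v$ doubles the $v\otimes v$ coefficient, and an error there would spoil the value of $\beta$.
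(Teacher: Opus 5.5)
Your proposal is correct and follows the paper's proof essentially step for step. It uses the same representative $\{\rho_{01},\rho_{10}\}$ with $\mathcal{D}_x$, the same key computation $\mathcal{D}_x(\omega)=\omega$, $\mathcal{D}_x(v)=\mathcal{D}_x(u)=v$, and the same reduction to the two basis states. The only difference is cosmetic: you compare coefficients in the product basis built from $\{\omega,v\}$ rather than $\{\rho_{10},\rho_{01}\}$. This is just a change of basis on the same two-dimensional image, and it has the small advantage of decoupling the linear system immediately.
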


\begin{proof}
Choose the representative $\mathcal{C}_{\mathrm{diag}} = \{\rho_{01}, \rho_{10}\}$, distinguished by the effect $e_x$. The corresponding dephasing channel and classical broadcasting map are
\begin{align}
\mathcal{D}_{\mathrm{diag}}(\rho) &= e_x(\rho)\rho_{10} + \bar{e}_x(\rho)\rho_{01}, \\
\mathcal{B}_{\mathrm{cl}}^{\mathrm{diag}}(\rho) &= e_x(\rho)\rho_{10}\otimes\rho_{10} + \bar{e}_x(\rho)\rho_{01}\otimes\rho_{01}.
\end{align}

As before, it is sufficient to impose classical consistency on the basis states $\rho_{01}$ and $\rho_{10}$. We first compute the action of $\mathcal{D}_{\mathrm{diag}}$ on the basis elements $\{\omega, v, u\}$:
\begin{align}
\mathcal{D}_{\mathrm{diag}}(\omega) &= \frac12(\mathcal{D}_{\mathrm{diag}}(\rho_{01}) + \mathcal{D}_{\mathrm{diag}}(\rho_{10}))
= \frac12(\rho_{01}+\rho_{10}) = \omega, \\
\mathcal{D}_{\mathrm{diag}}(v) &= \mathcal{D}_{\mathrm{diag}}(\rho_{01}) - \mathcal{D}_{\mathrm{diag}}(\rho_{10}) = \rho_{01}-\rho_{10} = v, \\
\mathcal{D}_{\mathrm{diag}}(u) &= \mathcal{D}_{\mathrm{diag}}(\rho_{00}) - \mathcal{D}_{\mathrm{diag}}(\rho_{11}) = \rho_{01}-\rho_{10} = v.
\end{align}
Thus $\mathcal{D}_{\mathrm{diag}}(u) = \mathcal{D}_{\mathrm{diag}}(v) = v$.

Applying $\mathcal{D}_{\mathrm{diag}} \otimes \mathcal{D}_{\mathrm{diag}}$ to $\mathcal{B}(\rho_{10})$, using $\operatorname{Tr}(\rho_{10})=1$, and expanding in the basis $\{\rho_{01}, \rho_{10}\}$ gives
\begin{align}
&(\mathcal{D}_{\mathrm{diag}} \otimes \mathcal{D}_{\mathrm{diag}})\mathcal{B}(\rho_{10}) \nonumber \\
&= \left(\frac{\tilde{\alpha}}{4} + 2\beta + \frac{\gamma + \delta}{2}\right) \rho_{10}\otimes\rho_{10} \nonumber \\
&\quad + \left(\frac{\tilde{\alpha}}{4} - 2\beta + \frac{\delta}{2}\right) \rho_{01}\otimes\rho_{10} \nonumber \\
&\quad + \left(\frac{\tilde{\alpha}}{4} - 2\beta + \frac{\gamma}{2}\right) \rho_{10}\otimes\rho_{01} \nonumber \\
&\quad + \left(\frac{\tilde{\alpha}}{4} + 2\beta\right) \rho_{01}\otimes\rho_{01}.
\end{align}

Imposing classical consistency:
\begin{align}
(\mathcal{D}_{\mathrm{diag}} \otimes \mathcal{D}_{\mathrm{diag}})\mathcal{B}(\rho_{01}) = \rho_{01}\otimes\rho_{01}.
\end{align}
and comparing coefficients gives the same system of equations and the same solutions as in the adjacent case.

\end{proof}

Consequently, the broadcasting map satisfying Axiom~\ref{axm:single_system_broadcasting} is uniquely determined by
\begin{align}
\mathcal{B}(\rho) 
= \rho \otimes \omega + \omega \otimes \rho - \operatorname{Tr}(\rho) \, \omega \otimes \omega
+ \frac18 \operatorname{Tr}(\rho) \, (u \otimes u + v \otimes v)
\end{align}
for all $\rho \in \mathsf{St}_{\mathbb{C}}(A)$.
This proves Theorem~\ref{thm:unique-broadcasting-map}.

\subsection{Bayesian retrodiction map in the boxworld}

\subsubsection{Explicit solution for the Bayesian retrodiction map}
\label{app:retrodiction_solution}

Here, we provide the explicit solution for the retrodiction map $\mathcal{R}_{(\mathcal{E},\rho)}$ as in Eq.~\ref{Eq:Retrodiction_definition}. Let us define
\begin{align}
\Delta
:=
8\bigl(u^*[\mathcal{E}(\rho)]\bigr)^2
+
8\bigl(v^*[\mathcal{E}(\rho)]\bigr)^2
-1. \label{Eq:Delta_retro}
\end{align}

The action of the retrodiction map on the basis elements $\{\omega,v,u\}$ of $\operatorname{St}_{\mathbb{C}}(A)$ is as follows.

\begin{widetext}
\begin{align}
\mathcal{R}_{(\mathcal{E},\rho)}(\omega)
={}&
\frac{1}{\Delta}
\Bigg\{
\omega
\Big[
8\,u^*[\mathcal{E}(\rho)]\,
u^*[\mathcal{E}(\rho-\omega)]
+
8\,v^*[\mathcal{E}(\rho)]\,
v^*[\mathcal{E}(\rho-\omega)]
\Big] \nonumber \\
&\quad
+\rho
\Big[
8\,u^*[\mathcal{E}(\rho)]\,
u^*[\mathcal{E}(\omega)]
+
8\,v^*[\mathcal{E}(\rho)]\,
v^*[\mathcal{E}(\omega)]
-1
\Big] \nonumber \\
&\quad
+u
\Big[
u^*[\mathcal{E}(u)]\,u^*[\mathcal{E}(\rho)]
+
v^*[\mathcal{E}(u)]\,v^*[\mathcal{E}(\rho)]
\Big] \nonumber \\
&\quad
+v
\Big[
u^*[\mathcal{E}(v)]\,u^*[\mathcal{E}(\rho)]
+
v^*[\mathcal{E}(v)]\,v^*[\mathcal{E}(\rho)]
\Big]
\Bigg\}, \label{Eq:R_omega_sol}
\end{align}

Using the expression for $\mathcal{R}_{(\E,\rho)}(\omega)$ already obtained, we can rearrange Eqs.~\eqref{eq:R-v} and \eqref{eq:R-u} to write the remaining components directly:

\begin{align}
\mathcal{R}_{(\mathcal{E},\rho)}(v)
&=
8\,v^*[\mathcal{E}(\omega)]\,(\rho-\omega)
+v^*[\mathcal{E}(v)]\,v
+v^*[\mathcal{E}(u)]\,u
-8\,v^*[\mathcal{E}(\rho)]\,\left(\mathcal{R}_{(\mathcal{E},\rho)}(\omega) - \ \omega\right), \label{Eq:R_v_sol}
\\[4pt]
\mathcal{R}_{(\mathcal{E},\rho)}(u)
&=
8\,u^*[\mathcal{E}(\omega)]\,(\rho-\omega)
+u^*[\mathcal{E}(v)]\,v
+u^*[\mathcal{E}(u)]\,u
-8\,u^*[\mathcal{E}(\rho)]\,\left(\mathcal{R}_{(\mathcal{E},\rho)}(\omega) - \ \omega\right). \label{Eq:R_u_sol}
\end{align}

\end{widetext}

Since $\{\omega,v,u\}$ is a basis of $\operatorname{St}_{\mathbb{C}}(A)$, the retrodiction map is completely determined by these three expressions.

\subsubsection{Retrodiction for Constant Channels in Boxworld}\label{App:constant_chan_retro}

Consider the constant channel that outputs a fixed state $\sigma \in \mathsf{St}(B)$ regardless of the input:
\begin{align}
\mathcal{E}(\rho) = \operatorname{Tr}(\rho) \, \sigma, \qquad \forall \rho \in \mathsf{St}(A).
\end{align}
Expressing the output state in the basis $\{\omega, u, v\}$, we have
\begin{align}
\sigma = \omega + r_u u + r_v v,
\end{align}
where $r_u = u^*(\sigma)$ and $r_v = v^*(\sigma)$. For this channel, the denominator $\Delta$, as in Eq.~\eqref{Eq:Delta_retro} becomes
\begin{align}
\Delta := 8\bigl(u^*[\mathcal{E}(\rho)]\bigr)^2 + 8\bigl(v^*[\mathcal{E}(\rho)]\bigr)^2 - 1
= 8(r_u^2 + r_v^2) - 1.
\end{align}

We now evaluate the retrodiction map on the basis elements $\{\omega, v, u\}$.

\medskip

\noindent\textbf{Action on $\omega$:} 
Since $\mathcal{E}(\rho) = \operatorname{Tr}(\rho) \, \sigma$, we have $\mathcal{E}(\rho - \omega) = \mathcal{E}(u) = \mathcal{E}(v) = 0$. Thus all coefficients of $\omega$, $u$, and $v$ in the general expression for $\mathcal{R}_{(\mathcal{E},\rho)}(\omega)$ in Eq.~\eqref{Eq:R_omega_sol} vanish. On the other hand, $u^*[\mathcal{E}(\rho)] = u^*[\mathcal{E}(\omega)] = r_u$ and $v^*[\mathcal{E}(\rho)] = v^*[\mathcal{E}(\omega)] = r_v$, so the coefficient of $\rho$ is $8(r_u^2 + r_v^2) - 1 = \Delta$. Hence
\begin{align}
\mathcal{R}_{(\mathcal{E},\rho)}(\omega)
=
\frac{1}{\Delta} \, (\rho \, \Delta)
=
\rho.
\end{align}

\medskip

\noindent\textbf{Action on $v$:} 
Using Eq.~\eqref{Eq:R_v_sol} together with $\mathcal{R}_{(\mathcal{E},\rho)}(\omega) = \rho$, we obtain
\begin{align}
\mathcal{R}_{(\mathcal{E},\rho)}(v)
&=
8 r_v (\rho - \omega) - 8 r_v (\rho - \omega)
=
0.
\end{align}

\medskip

\noindent\textbf{Action on $u$:} 
Similarly, from Eq.~\eqref{Eq:R_u_sol},
\begin{align}
\mathcal{R}_{(\mathcal{E},\rho)}(u)
&=
8 r_u (\rho - \omega) - 8 r_u (\rho - \omega)
=
0.
\end{align}

\medskip

Thus for any $\tau \in \mathsf{St}_{\mathbb{C}}(B)$, with
\begin{align}
\tau = \operatorname{Tr}(\tau)\, \omega + \tau_v\, v + \tau_u\, u,
\end{align}
where $\tau_v = v^*(\tau)$ and $\tau_u = u^*(\tau)$, we have,
\begin{align}
\mathcal{R}_{(\mathcal{E},\rho)}(\tau)
&=
\operatorname{Tr}(\tau)\, \mathcal{R}_{(\mathcal{E},\rho)}(\omega)
+ \tau_v\, \mathcal{R}_{(\mathcal{E},\rho)}(v)
+ \tau_u\, \mathcal{R}_{(\mathcal{E},\rho)}(u)
\nonumber\\
&=
\operatorname{Tr}(\tau) \, \rho.
\end{align}

Thus, for a constant channel, the retrodiction map simply discards the input and prepares the prior state $\rho$.

\subsubsection{Retrodiction for Reversible Transformations in Boxworld}
\label{app:reversible_retrodiction}

Let $\mathcal{E} = \mathcal{U}$, where $\mathcal{U}$ is a reversible transformation of the square bit. We show explicitly that the corresponding retrodiction map is
\begin{align}
\mathcal{R}_{(\mathcal{U}, \rho)} = \mathcal{U}^{-1}.
\end{align}

Since the reversible transformations of the square bit form the group $D_4$, and the representation decomposes as
\begin{align}
\mathsf{St}_{\mathbb{C}}(A)
=
\operatorname{Span}_{\mathbb{C}}\{\omega\}
\oplus
\operatorname{Span}_{\mathbb{C}}\{u,v\},
\end{align}
every reversible transformation $\mathcal{U}$ acts trivially on the first summand and orthogonally on the second. Thus
\begin{align}
\mathcal{U}(\omega)=\omega,
\end{align}
and, writing the restriction of $\mathcal{U}$ to $E=\operatorname{Span}_{\mathbb{C}}\{u,v\}$ as $\mathcal{U}'$, we have
\begin{align}
\mathcal{U} = 1 \oplus \mathcal{U}'.
\end{align}

The generators $R$ and $M$ act on the basis $\{u,v\}$ as
\begin{align}
R(u)=-v, \quad R(v)=u, \qquad
M(u)=-u, \quad M(v)=v.
\end{align}
Therefore their matrix representations on $\operatorname{Span}_{\mathbb{C}}\{u,v\}$ are
\begin{align}
R'
=
\begin{pmatrix}
0 & -1\\
1 & 0
\end{pmatrix}, \qquad
M'
=
\begin{pmatrix}
-1 & 0\\
0 & 1
\end{pmatrix}.
\end{align}
Both matrices are orthogonal, and hence
\begin{align}
(R')^{-1}=(R')^{T}, \qquad (M')^{-1}=(M')^{T}.
\end{align}
Since $R$ and $M$ generate $D_4$, the same property holds for every $\mathcal{U} \in D_4$:
\begin{align}
(\mathcal{U}')^{-1}=(\mathcal{U}')^{T}.
\end{align}
Consequently,
\begin{align}
\mathcal{U}^{-1}
=
1 \oplus (\mathcal{U}')^{-1}
=
1 \oplus (\mathcal{U}')^{T}.
\end{align}

Now let us consider the matrix representation 

\begin{align}
\mathcal{U}'
=
\begin{pmatrix}
U_{uu} & U_{uv}\\
U_{vu} & U_{vv}
\end{pmatrix},
\end{align}
so that
\begin{align}
\mathcal{U}(u)=U_{uu}u+U_{uv}v, \qquad
\mathcal{U}(v)=U_{vu}u+U_{vv}v. \label{Eq:U(u)_and_U(v)}
\end{align}

Furthermore, since $\mathcal{U}'.(\mathcal{U}')^T  = \Id$, we have
\begin{align}
U_{uu}^{2}+U_{uv}^{2}=1, \quad
U_{vu}^{2}+U_{vv}^{2}=1, \quad
U_{uu}U_{vu}+U_{uv}U_{vv}=0.\label{Eq:U'_ortho}
\end{align}
Now let us consider an arbitrary state $\rho \in \St(A)$: 

\begin{align}
\rho = \omega + r_u u + r_v v.
\end{align}
Then we have
\begin{align}
\mathcal{U}(\rho) &= \omega + r_u \mathcal{U}(u) + r_v \mathcal{U}(v), \nonumber \\
&=\omega + (r_v U_{vu} + r_u U_{uu})u + (r_v U_{vv} + r_u U_{uv})v\label{Eq:U_rho} 
\end{align}

We now evaluate $\mathcal{R}_{\mathcal{U},\rho}(\omega)$ using the general expression given in Eq.~\eqref{Eq:R_omega_sol}. Because $\mathcal{U}(\omega) = \omega$, we have
\begin{align}
u^*[\mathcal{U}(\omega)] = v^*[\mathcal{U}(\omega)] = 0.
\end{align}
Furthermore, we show that the action of $\mathcal{U}$ on $\mathrm{Span}_{\mathbb{C}}(u,v)$ results in the following,
\begin{align}
u^*[\mathcal{U}(u)]u^*[\mathcal{U}(\rho)] + v^*[\mathcal{U}(u)]v^*[\mathcal{U}(\rho)] &= u^*(\rho), \\
u^*[\mathcal{U}(v)]u^*[\mathcal{U}(\rho)] + v^*[\mathcal{U}(v)]v^*[\mathcal{U}(\rho)] &= v^*(\rho). \label{Eq:v*(rho)}
\end{align}

To see this, using Eqs.~\eqref{Eq:U(u)_and_U(v)} and \eqref{Eq:U_rho}, we have 
\begin{align}
  &u^*[\mathcal{U}(u)]u^*[\mathcal{U}(\rho)] + v^*[\mathcal{U}(u)]v^*[\mathcal{U}(\rho)] \nonumber \\
  &= U_{uu}(r_v U_{vu} + r_u U_{uu}) + U_{uv}( r_vU_{vv} + r_u U_{uv} ) \nonumber \\
  &=r_v(U_{uu}U_{vv}+ U_{uv}U_{vv}) + r_u(U_{uu}^2 + U_{uv}^2)\nonumber \\
  &= r_u = u^*(\rho).
\end{align}
Here, in the third equality, we have used Eq.~\eqref{Eq:U'_ortho}. Eq.~\eqref{Eq:v*(rho)} can be proved in the exact same method. 

Now, note that
\begin{align}
\Delta
&=
8\bigl(u^*[\mathcal{U}(\rho)]\bigr)^2 + 8\bigl(v^*[\mathcal{U}(\rho)]\bigr)^2 - 1
=
8(r_u^2+r_v^2)-1,
\end{align}
with these steps, the expression for $\mathcal{R}_{(\mathcal{U},\rho)}(\omega)$ simplifies to
\begin{align}
\mathcal{R}_{(\mathcal{U},\rho)}(\omega)
=
\frac{1}{\Delta}
\Big[
8(r_u^2+r_v^2)\omega
-\rho
+r_u u+r_v v
\Big].
\end{align}
But $\rho = \omega + r_u u + r_v v$, and therefore
\begin{align}
8(r_u^2+r_v^2)\omega - \rho + r_u u + r_v v
=
\bigl(8(r_u^2+r_v^2)-1\bigr)\omega
=
\Delta\,\omega.
\end{align}
Hence
\begin{align}
\mathcal{R}_{(\mathcal{U}, \rho)}(\omega)=\omega.
\end{align}

We next determine the action on the nontrivial subspace $\mathrm{Span}_{\mathbb{C}}(u,v)$. Replacing $\mathcal{U}(\omega)=\omega$ and $\mathcal{R}_{(\mathcal{U}, \rho)}(\omega)=\omega$ in Eqs.~\eqref{Eq:R_v_sol} and ~\eqref{Eq:R_u_sol} we obtain

\begin{align}
\mathcal{R}_{(\mathcal{U},\rho)}(v)
&=
v^*[\mathcal{U}(v)]\,v
+v^*[\mathcal{U}(u)]\,u, \nonumber \\
&=U_{vv}v + U_{uv}u
\\[4pt]
\mathcal{R}_{(\mathcal{U},\rho)}(u)
&=u^*[\mathcal{U}(v)]\,v
+u^*[\mathcal{U}(u)]\,u \nonumber \\
&=U_{vu}v+U_{uu}u
\end{align}
 
Comparing these equations with $\mathcal{U}^{-1}=1\oplus (\mathcal{U'})^T$, where 
\begin{align}
\mathcal{U}^{-1}(u) = U_{uu}u+U_{vu}v, \qquad
\mathcal{U}^{-1}(v) = U_{uv}u+U_{vv}v,
\end{align}
we have
\begin{align}
\mathcal{R}_{(\mathcal{U}, \rho)}(v)  = \mathcal{U}^{-1}(v), \\
\mathcal{R}_{(\mathcal{U}, \rho)}(u) = \mathcal{U}^{-1}(u).
\end{align}

Together with the result for $\omega$, this completely determines the retrodiction map:
\begin{align}
\mathcal{R}_{(\mathcal{U}, \rho)}=\mathcal{U}^{-1}.
\end{align}

\subsection{Proof of Theorem~\ref{Thm:two_party_broadcasting}} \label{App:proof_two_party_broadcasting}
We structure the proof as follows. First, we establish the group-theoretic structure of the global transformations for two systems. We then use the invariant vectors from the single-system case to characterize the two-system covariant maps. Finally, we impose the broadcasting and classical consistency axioms to obtain the unique solution.
\subsubsection*{Step 1: Group-theoretic structure of global reversible transformations for two square-bits}

In boxworld with the maximal tensor product, the global reversible transformations of composite square-bit systems consist only of local reversible transformations and permutations of the subsystems~\cite{Gross2010}. For two systems $A$ and $E$, the local transformations act on the tensor product state space $\mathsf{St}_{\mathbb{C}}(A) \otimes \mathsf{St}_{\mathbb{C}}(E)$ via the representation
\begin{align}
(g_1, g_2) \mapsto \pi(g_1) \otimes \pi(g_2), \qquad g_1, g_2 \in D_4,
\end{align}
where $\pi$ is the standard representation of $D_4$ on a single square-bit system. Thus the local transformations are generated by the direct product group $D_4 \times D_4$. Together with the two-element permutation group $S_2$, the full global reversible transformation group is generated by $D_4 \times D_4$ and $S_2$. In terms of group product, this means
\begin{align}
\mathsf{U}_{\mathrm{global}} &= (D_4 \times D_4) S_2 \nonumber \\
&:= \{ (g_1, g_2) \cdot \sigma \mid (g_1, g_2) \in D_4 \times D_4,\; \sigma \in S_2 \}.
\end{align}

More precisely, we can write this group product as a semidirect product, which is defined as 

\begin{definition}\label{def:semidirectProduct}
  Let $G$ be a group, $N$ a normal subgroup of $G$, and $H$ a subgroup of $G$. Then $G$ is called the \emph{semidirect product} of $N$ and $H$, denoted
\begin{align}
G = N \rtimes H,
\end{align}
if the following holds:
\begin{align}
G = NH:= \{ nh \mid n \in N,\; h \in H \}, \qquad N \cap H = \{e\}.
\end{align}
\end{definition} 

We will make use of the following elementary fact about internal products of groups.

\begin{corollary}\label{cor:unique_decomp_product}
Let $G$ be a group with subgroups $N$ and $H$ such that $G = NH$ and $N \cap H = \{e\}$. Then every element $g \in G$ can be written uniquely as $g = nh$ for some $n \in N$ and $h \in H$.
\end{corollary}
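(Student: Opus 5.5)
The statement to prove is Corollary~\ref{cor:unique_decomp_product}: if $G=NH$ with $N\cap H=\{e\}$, every $g\in G$ has a unique factorization $g=nh$. I would prove existence and uniqueness separately, using only the subgroup axioms. Normality of $N$ plays no role here; it matters only for the semidirect-product structure in Definition~\ref{def:semidirectProduct}, which we do not need.

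\emph{Existence.} This is immediate from the hypothesis $G=NH:=\{nh\mid n\in N,\ h\in H\}$. By definition, every $g\in G$ can be written as $g=nh$ for some $n\in N$ and $h\in H$.

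\emph{Uniqueness.} Suppose $g=n_1h_1=n_2h_2$ with $n_1,n_2\in N$ and $h_1,h_2\in H$. Multiply on the left by $n_2^{-1}$ and on the right by $h_1^{-1}$ to get
\begin{align}
n_2^{-1}n_1 = h_2h_1^{-1}.
\end{align}
Since $N$ is a subgroup, it is closed under products and inverses, so the left-hand side lies in $N$. For the same reason the right-hand side lies in $H$. The common element therefore lies in $N\cap H=\{e\}$. Hence $n_2^{-1}n_1=e$ and $h_2h_1^{-1}=e$, which gives $n_1=n_2$ and $h_1=h_2$.

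There is no real obstacle in this argument. The only point needing care is to perform the two cancellations on the correct sides, so that one side of the resulting equation sits in $N$ and the other in $H$.

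In the application that follows, we take $G=\mathsf{U}_{\mathrm{global}}$, $N=D_4\times D_4$ and $H=S_2$. The condition $N\cap H=\{e\}$ holds because the nontrivial swap does not act as a product of local transformations $\pi(g_1)\otimes\pi(g_2)$: it exchanges the tensor factors. The corollary then lets us write every global reversible map uniquely as a local map followed by a permutation. As a result, covariance of the joint broadcasting map only has to be checked for $D_4\times D_4$ and for the swap separately.
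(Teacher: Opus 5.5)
Your proof is correct and follows the paper's argument exactly: existence comes directly from $G=NH$, and uniqueness follows from $n_2^{-1}n_1=h_2h_1^{-1}\in N\cap H=\{e\}$. You are also right that normality of $N$ plays no role in this corollary.
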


\begin{proof}
Existence follows immediately from $G = NH$. For uniqueness, suppose $g = n_1 h_1 = n_2 h_2$ with $n_1, n_2 \in N$ and $h_1, h_2 \in H$. Then $n_2^{-1} n_1 = h_2 h_1^{-1}$. The left-hand side is in $N$, and the right-hand side is in $H$, so both lie in $N \cap H = \{e\}$. Hence $n_2^{-1} n_1= h_2 h_1^{-1} = e$, so $n_1 = n_2$ and $h_1 = h_2$.
\end{proof}

With this definition, we can identify the group-theoretic structure of the full global transformation group for two square-bit systems.

\begin{corollary}
The full global transformation group for two square-bit systems is the semidirect product
\begin{align}
\mathsf{U}_{\mathrm{global}} = (D_4 \times D_4) \rtimes S_2.
\end{align}
\end{corollary}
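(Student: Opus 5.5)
To prove that $\mathsf{U}_{\mathrm{global}} = (D_4 \times D_4) \rtimes S_2$, we check the two conditions of Definition~\ref{def:semidirectProduct} with $N := D_4\times D_4$ and $H := S_2 = \{\id, \mathcal{U}_{\mathrm{SWAP}}\}$, both viewed as subgroups of $\mathsf{Transf}_{\mathbb{C}}(A\otimes E\to A\otimes E)$. Here we identify $A\simeq E$ as square bits, so the swap is a reversible transformation of $A\otimes E$. The relation $\mathsf{U}_{\mathrm{global}} = NH$ is exactly the cited characterization of reversible transformations in boxworld~\cite{Gross2010}. Every global reversible map is a composition of local reversible maps and subsystem permutations. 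Any word in these generators can be reordered so that all permutations stand on the right; this uses the commutation rule below. So it remains to show that $N$ is normal and that $N\cap H=\{e\}$.

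For normality, we compute the conjugation of a local map by the swap. Using the SWAP property stated in the OPT appendix, $\mathcal{U}_{\mathrm{SWAP}}\circ(\N\otimes\M) = (\M\otimes\N)\circ\mathcal{U}_{\mathrm{SWAP}}$. Since $\mathcal{U}_{\mathrm{SWAP}}^{-1}=\mathcal{U}_{\mathrm{SWAP}}$, this gives
\begin{align}
\mathcal{U}_{\mathrm{SWAP}}\circ(\pi(g_1)\otimes\pi(g_2))\circ\mathcal{U}_{\mathrm{SWAP}}^{-1} = \pi(g_2)\otimes\pi(g_1)\in N .
\end{align}
Conjugation by elements of $N$ trivially preserves $N$. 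Because every element of $\mathsf{U}_{\mathrm{global}}$ is of the form $nh$, conjugation by any element of $\mathsf{U}_{\mathrm{global}}$ preserves $N$, so $N\trianglelefteq \mathsf{U}_{\mathrm{global}}$. The same identity also supplies the reordering used above: it moves a swap past a local map at the cost of exchanging $g_1$ and $g_2$. As a side result, it identifies the action of $H$ on $N$ as the factor exchange $(g_1,g_2)\mapsto(g_2,g_1)$, i.e.\ the wreath product $D_4\wr S_2$.

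For the trivial intersection, we must show that $\mathcal{U}_{\mathrm{SWAP}}\notin N$. This is the only step with real content. Suppose $\mathcal{U}_{\mathrm{SWAP}}=\pi(g_1)\otimes\pi(g_2)$. Apply both sides to product states $\rho\otimes\omega$, where $\omega$ is the invariant state. The left side gives $\omega\otimes\rho$. The right side gives $\pi(g_1)(\rho)\otimes\omega$, since every reversible transformation fixes $\omega$. Applying $\Tr\otimes\id$ yields $\rho=\omega$ for every $\rho\in\St(A)$. This is false, for example for $\rho=\rho_{00}$. Hence $N\cap H=\{\id\}$.

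Combining $\mathsf{U}_{\mathrm{global}}=NH$, the normality of $N$, and $N\cap H=\{e\}$, Definition~\ref{def:semidirectProduct} gives the claimed semidirect product. Corollary~\ref{cor:unique_decomp_product} then also gives that every global reversible map decomposes uniquely as $(g_1,g_2)\cdot\sigma$; this is the form needed in the subsequent covariance analysis.
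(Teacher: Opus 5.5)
Your proof is correct and follows essentially the paper's route: take $\mathsf{U}_{\mathrm{global}} = (D_4\times D_4)S_2$ from the cited characterization of boxworld reversible maps, prove normality of $D_4\times D_4$ by conjugating with the swap (which exchanges the factors), and check $N\cap H=\{e\}$. Your intersection step is more careful than the paper's one-line remark: you actually show the swap is not of the form $\pi(g_1)\otimes\pi(g_2)$, by evaluating both sides on $\rho\otimes\omega$ and applying $\Tr\otimes\id$.
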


\begin{proof}
First, note that $D_4 \times D_4$ is a normal subgroup of $\mathsf{U}_{\mathrm{global}}$. To see this, let $g = (h_1, h_2) \cdot \sigma \in \mathsf{U}_{\mathrm{global}}$ be arbitrary, and let $(g_1, g_2) \in D_4 \times D_4$. Then
\begin{align}
g (g_1,g_2) g^{-1} 
&= (h_1, h_2) \sigma (g_1, g_2) \sigma^{-1} (h_1^{-1}, h_2^{-1}) \\
&= (h_1, h_2) (g_{\sigma(1)}, g_{\sigma(2)}) (h_1^{-1}, h_2^{-1}) \\
&= (h_1 g_{\sigma(1)} h_1^{-1}, \; h_2 g_{\sigma(2)} h_2^{-1}) \in D_4 \times D_4.
\end{align}
Thus $D_4 \times D_4$ is the normal subgroup of $\mathsf{U}_{\mathrm{global}}$. 

Second, $(D_4 \times D_4) \cap S_2 = \{e\}$, since the only permutation that acts trivially on both subsystems is the identity.

Hence, by Definition~\ref{def:semidirectProduct},
\begin{align}
\mathsf{U}_{\mathrm{global}} = (D_4 \times D_4) \rtimes S_2.
\end{align}
\end{proof}

We now characterize the global covariant map 
$\mathcal{B} \in \mathsf{Transf}_{\mathbb{C}}(A \otimes E \to A'' \otimes A' \otimes E'' \otimes E')$ 
satisfying, for all $\mathcal{U} \in \mathsf{U}_{\mathrm{global}}$,
\begin{align}
\mathcal{B} \circ \mathcal{U} = (\mathcal{U} \otimes \mathcal{U}) \circ \mathcal{B}.
\end{align}
Following the same reasoning as in Lemma~\ref{lem:map_to_invariant_vector}, we associate to $\mathcal{B}$ an invariant vector 
$X_{\mathcal{B}} \in \mathsf{St}_{\mathbb{C}}(A)^* \otimes \mathsf{St}_{\mathbb{C}}(A'') \otimes \mathsf{St}_{\mathbb{C}}(A') \otimes \mathsf{St}_{\mathbb{C}}(E)^* \otimes \mathsf{St}_{\mathbb{C}}(E'') \otimes \mathsf{St}_{\mathbb{C}}(E')$ 
such that
\begin{align}
((\mathcal{U}^{-1})^* \otimes \mathcal{U} \otimes \mathcal{U})(X_{\mathcal{B}}) = X_{\mathcal{B}},
\end{align}
for all $\mathcal{U} \in \mathsf{U}_{\mathrm{global}}$.
Due to the semidirect product structure $\mathsf{U}_{\mathrm{global}} = (D_4 \times D_4) \rtimes S_2$, the characterization of invariant vectors can be broken down into two steps: first, we characterize the fixed-point subspace under the local transformations $D_4 \times D_4$; then, we impose invariance under the remaining $S_2$ action.

Before proceeding, we first justify this method by showing that the projector onto the fixed-point subspace of $\mathsf{U}_{\mathrm{global}}$ decomposes into the projectors onto the fixed-point subspaces of $D_4 \times D_4$ and $S_2$. We show this in general group-theoretic terms.

First, note that for any finite group $G$ acting on a vector space $V$, the projector onto the fixed-point subspace, $V^G:=\{v \in V \mid \pi(g)v = v, \forall g \in G\}$, is given by the twirling operation
\begin{align}
P_G := \frac{1}{|G|} \sum_{g \in G} \pi(g).
\end{align}
Here, $|G|$ is the number of elements in the group $G$. Now note that, due to the product formula, for any group $G = NH$,
\begin{align}
|G| = \frac{|N| \cdot |H|}{|N \cap H|}.
\end{align}
In particular, when $N \cap H = \{e\}$, we have $|G| = |N| \cdot |H|$. With this and due to Corollary~\ref{cor:unique_decomp_product}, which states that any group element $g \in G$ can be uniquely written as $g = nh$ for $n \in N$ and $h \in H$, we show that the projector onto the fixed-point subspace $V^G$ factors as follows.

\begin{lemma} \label{lem:twirling_decomp}
Let $G$ be a finite group with subgroups $N$ and $H$ such that $G = NH$ and $N \cap H = \{e\}$. Let $V$ be a finite-dimensional representation of $G$, and let
\begin{align}
&P_G := \frac{1}{|G|} \sum_{g \in G} \pi(g), \qquad
P_N := \frac{1}{|N|} \sum_{n \in N} \pi(n), \\
&P_H := \frac{1}{|H|} \sum_{h \in H} \pi(h)
\end{align}
be the twirling projectors onto the fixed-point subspaces $V^G$, $V^N$, and $V^H$, respectively. Then
\begin{align}
P_G = P_N \circ P_H.
\end{align}
Furthermore, if $N$ is a normal subgroup of $G$, then the projectors commute and
\begin{align}
P_G = P_N \circ P_H = P_H \circ P_N.
\end{align}
\end{lemma}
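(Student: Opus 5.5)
The plan is to expand $P_N\circ P_H$ directly as a double sum and then reindex it using the unique factorization of Corollary~\ref{cor:unique_decomp_product}. Since $\pi$ is a representation, $\pi(n)\pi(h)=\pi(nh)$. This gives
\begin{align}
P_N\circ P_H=\frac{1}{|N||H|}\sum_{n\in N}\sum_{h\in H}\pi(n)\pi(h)=\frac{1}{|N||H|}\sum_{n\in N}\sum_{h\in H}\pi(nh).
\end{align}
The map $N\times H\to G$, $(n,h)\mapsto nh$, is surjective because $G=NH$. It is injective by Corollary~\ref{cor:unique_decomp_product}, since $N\cap H=\{e\}$. So it is a bijection, and in particular $|G|=|N||H|$, which is the product formula quoted just before the lemma. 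Reindexing the double sum by $g=nh$ turns it into $\frac{1}{|G|}\sum_{g\in G}\pi(g)=P_G$.

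For the second claim, assume $N$ is normal. Then $hN=Nh$ for every $h$, so $G=NH=HN$. One could run the same bijection argument with $(h,n)\mapsto hn$, but it is cleaner to conjugate. For each fixed $h\in H$, the map $n\mapsto hnh^{-1}$ is a bijection of $N$. Hence
\begin{align}
\pi(h)P_N\pi(h)^{-1}=\frac{1}{|N|}\sum_{n\in N}\pi(hnh^{-1})=P_N,
\end{align}
so $P_N$ commutes with every $\pi(h)$ and therefore with their average $P_H$. Combined with the first part, this gives $P_G=P_N\circ P_H=P_H\circ P_N$.

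Two routine points remain. First, each twirl is indeed the projector onto its fixed-point subspace, as used implicitly in the statement. For $P_G$, left-invariance of the group sum gives $\pi(g')P_G=P_G$, which shows both that the range of $P_G$ lies in $V^G$ and that $P_G^2=P_G$; conversely $P_Gv=v$ for every $v\in V^G$. The same argument applies to $P_N$ and $P_H$. Second, in the intended application the composite $P_N\circ P_H$ is then the projector onto $V^N\cap V^H=V^G$. This equality holds because $N$ and $H$ generate $G$.

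I expect no step to be a real obstacle. The only care needed is to state where the hypothesis $N\cap H=\{e\}$ enters: it is what makes the reindexing a bijection rather than a $|N\cap H|$-to-one map. Without it, the double sum would overcount each $g$ by exactly $|N\cap H|$.
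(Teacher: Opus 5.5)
Your proof is correct and follows the paper's argument: the first part is the same reindexing of $\frac{1}{|N||H|}\sum_{n,h}\pi(n)\pi(h)$ via the bijection $(n,h)\mapsto nh$. For the normal case, your identity $\pi(h)P_N\pi(h)^{-1}=P_N$ is a tidier form of the paper's substitution $nh=hn'$, and it makes explicit what the paper leaves implicit, namely that $n\mapsto h^{-1}nh$ permutes $N$ for each fixed $h$. As a side remark, normality is not needed for the commutation: $G=NH$ already gives $G=G^{-1}=HN$, so the alternative bijection $(h,n)\mapsto hn$ you mention yields $P_G=P_H\circ P_N$ unconditionally.
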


\begin{proof}
  Since $G = NH$ and $N \cap H = \{e\}$, every element $g \in G$ has a unique decomposition $g = nh$ with $n \in N$ and $h \in H$. With this, using the product formula, $|G| = |N| \cdot |H|$, and the homomorphism property $\pi(nh) = \pi(n)\pi(h)$,
\begin{align}
P_G 
&= \frac{1}{|G|} \sum_{g \in G} \pi(g) \\
&= \frac{1}{|N||H|} \sum_{n \in N} \sum_{h \in H} \pi(nh) \\
&= \frac{1}{|N||H|} \sum_{n \in N} \sum_{h \in H} \pi(n) \pi(h) \\
&= \left( \frac{1}{|N|} \sum_{n \in N} \pi(n) \right) \circ \left( \frac{1}{|H|} \sum_{h \in H} \pi(h) \right) \\
&= P_N \circ P_H.
\end{align}
Furthermore, if $N$ is normal, then for any $n \in N$ and $h \in H$, we have $nh = h n'$ for some $n' \in N$. Replacing $\pi(nh) = \pi(h n') = \pi(h)\pi(n')$ in the above calculation yields $P_G = P_H \circ P_N$. Thus
\begin{align}
P_G = P_N \circ P_H = P_H \circ P_N.
\end{align}

\end{proof}

Since $\mathsf{U}_{\mathrm{global}}$, and more generally any semidirect product, satisfies the conditions of Lemma~\ref{lem:twirling_decomp}, we can first characterize the fixed-vector subspace under the normal subgroup $D_4 \times D_4$, and then impose invariance under the remaining $S_2$ action.

\subsubsection*{Step 2: Invariant vectors under local reversible transformations}

We first establish a general group-theoretic fact that will be important in our case. Let $(\pi_1, V_1)$ and  $(\pi_2, V_2)$ be representations of groups $G_1$ and  $G_2$ respectively. Let us denote the fixed-point subspace, i.e., the subspace spanned by the invariant vectors, by
\begin{align}
&V_1^{G_1} := \{ v \in V_1 \mid \pi_1(g_1)v = v \text{ for all } g_1 \in G_1 \}, \ \,\text{and} \\
&V_2^{G_2} := \{ w \in V_2 \mid \pi_2(g_2)w = w \text{ for all } g_2 \in G_2 \}.
\end{align}
With this notation, we have the following lemma.

\begin{lemma} \label{Lem:invariant_vector_space_product_group}
For finite-dimensional representations $(\pi_1, V_1)$ of $G_1$ and $(\pi_2, V_2)$ of $G_2$, the fixed-point subspace of the tensor product representation $V_1 \otimes V_2$ of $G_1 \times G_2$ satisfies
\begin{align}
(V_1 \otimes V_2)^{G_1 \times G_2} = V_1^{G_1} \otimes V_2^{G_2},
\end{align}
where the action of $G_1 \times G_2$ on $V_1 \otimes V_2$ is given by $(g_1, g_2) \mapsto \pi_1(g_1) \otimes \pi_2(g_2)$.
\end{lemma}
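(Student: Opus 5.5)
The plan is to prove the two inclusions separately. The direction $V_1^{G_1}\otimes V_2^{G_2}\subseteq (V_1\otimes V_2)^{G_1\times G_2}$ is immediate. Take a simple tensor $v\otimes w$ with $v\in V_1^{G_1}$ and $w\in V_2^{G_2}$. Then
\begin{align}
(\pi_1(g_1)\otimes\pi_2(g_2))(v\otimes w)=\pi_1(g_1)v\otimes\pi_2(g_2)w=v\otimes w.
\end{align}
The claim extends to all of $V_1^{G_1}\otimes V_2^{G_2}$ by linearity.

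For the reverse inclusion I will use the twirling projectors introduced before Lemma~\ref{lem:twirling_decomp}. This is legitimate here because $D_4$ is finite, and I state the lemma for finite $G_1,G_2$. For compact groups the same argument works with Haar averages instead of sums.

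Define $P_{G_1}=\frac{1}{|G_1|}\sum_{g_1}\pi_1(g_1)$ on $V_1$ and $P_{G_2}$ analogously on $V_2$. The group $G_1\times G_2$ is the internal product of its commuting subgroups $G_1\times\{e\}$ and $\{e\}\times G_2$, which intersect trivially. Lemma~\ref{lem:twirling_decomp} then gives
\begin{align}
P_{G_1\times G_2}=(P_{G_1}\otimes \Id)\circ(\Id\otimes P_{G_2})=P_{G_1}\otimes P_{G_2}.
\end{align}
One can also check this directly: since $\pi_1(g_1)\otimes\pi_2(g_2)$ factorizes, the double sum splits into a product of the two single sums.

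Now let $X\in(V_1\otimes V_2)^{G_1\times G_2}$. Then $X=P_{G_1\times G_2}X=(P_{G_1}\otimes P_{G_2})X$. The range of $P_{G_1}\otimes P_{G_2}$ equals $\operatorname{ran}P_{G_1}\otimes\operatorname{ran}P_{G_2}$. To see this, write $X=\sum_k a_k\otimes b_k$; then $(P_{G_1}\otimes P_{G_2})X=\sum_k P_{G_1}a_k\otimes P_{G_2}b_k$, and each term lies in $V_1^{G_1}\otimes V_2^{G_2}$.

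It remains to identify $\operatorname{ran}P_{G_i}=V_i^{G_i}$. A fixed vector is clearly mapped to itself by $P_{G_i}$. Conversely, $\pi_i(h)P_{G_i}=P_{G_i}$ for every $h$, by reindexing the sum, so every vector in the range is fixed. Hence $X\in V_1^{G_1}\otimes V_2^{G_2}$, which proves the reverse inclusion.

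The only real subtlety is the reverse inclusion, and specifically the claim that the range of a tensor product of projectors is the tensor product of their ranges. The computation above settles it in finite dimensions without any basis-dependent argument.
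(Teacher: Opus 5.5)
Your argument is correct, but it takes a different route from the paper and adds a hypothesis. The paper never averages. It expands $X=\sum_{i,j}c_{ij}\,v_i\otimes w_j$ in product bases and uses invariance under the subgroup $G_1\times\{e_2\}$, together with linear independence of the $w_j$, to show that each vector $x_j=\sum_i c_{ij}v_i$ lies in $V_1^{G_1}$. It then rewrites $X=\sum_k a_k\otimes z_k$ with $\{a_k\}$ a basis of $V_1^{G_1}$. Finally, invariance under $\{e_1\}\times G_2$ and linear independence of the $a_k$ give $z_k\in V_2^{G_2}$. This coefficient matching is the unaveraged counterpart of your factorization $P_{G_1\times G_2}=(P_{G_1}\otimes\Id)(\Id\otimes P_{G_2})$.

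The paper's route buys generality. It uses nothing about $G_1,G_2$ beyond their acting linearly, so it covers the lemma as stated, which imposes no finiteness on the groups. Your twirl needs $|G_i|$ finite, or a Haar measure for compact groups, and it has no analogue in general. For example, a $\mathbb{Z}$-action on $\mathbb{C}^2$ generated by a unipotent Jordan block leaves no invariant complement to its fixed line, so no averaging projector exists, yet the lemma still holds.

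Your route is shorter and basis-free. It also exhibits the explicit projector $P_{G_1}\otimes P_{G_2}$ onto the fixed space and fits with Lemma~\ref{lem:twirling_decomp}, which the paper uses right afterwards for the $S_2$ step. Since the lemma is only applied to $D_4\times D_4$, your restriction to finite groups costs nothing for the paper's purposes.
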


\begin{proof}
($\supseteq$): Let $v \in V_1^{G_1}$ and $w \in V_2^{G_2}$. Then for any $(g_1, g_2) \in G_1 \times G_2$,
\begin{align*}
(\pi_1(g_1) \otimes \pi_2(g_2))(v \otimes w) 
&= \pi_1(g_1)v \otimes \pi_2(g_2)w \\
&= v \otimes w.
\end{align*}
Thus $v \otimes w \in (V_1 \otimes V_2)^{G_1 \times G_2}$, therefore $(V_1 \otimes V_2)^{G_1 \times G_2} \supseteq V_1^{G_1} \otimes V_2^{G_2}$.

\medskip

\noindent ($\subseteq$): Let $X \in (V_1 \otimes V_2)^{G_1 \times G_2}$ be an arbitrary invariant vector. Choose bases $\{v_i\}_{i=1}^{\dim V_1}$ of $V_1$ and $\{w_j\}_{j=1}^{\dim V_2}$ of $V_2$. Then $X$ has a unique expansion
\begin{align}
X = \sum_{i,j} c_{ij} \, v_i \otimes w_j
\end{align}
for some coefficients $c_{ij} \in \mathbb{C}$. Since $X$ is invariant under the full group $G_1 \times G_2$, it is in particular invariant under the subgroup $G_1 \times \{e_2\}$, where $e_2$ is the identity element of $G_2$. Thus, for any $g_1 \in G_1$,
\begin{align*}
&(\pi_1(g_1) \otimes \id_{V_2})(X) = X,\\ 
&\implies \sum_{i,j} c_{ij} \, \pi_1(g_1)v_i \otimes w_j 
= \sum_{i,j} c_{ij} \, v_i \otimes w_j.
\end{align*}
Since $\{w_j\}$ is a basis of $V_2$, the coefficients of each $w_j$ must match. Hence, for each fixed $j$,
\begin{align}
\sum_i c_{ij} \, \pi_1(g_1)v_i = \sum_i c_{ij} \, v_i \quad \forall g_1 \in G_1.
\end{align}
For each $j$, define
$x_j := \sum_i c_{ij} v_i \in V_1.$ Then according to the above equation,
$x_j \in V_1^{G_1} $ for each $j$. Consequently, we can rewrite $X$ as
\begin{align}
X = \sum_j x_j \otimes w_j, \qquad x_j \in V_1^{G_1}.
\end{align}

Now let $\{a_k\}_{k=1}^{\dim V_1^{G_1}}$ be a basis of the invariant subspace $V_1^{G_1}$. Since each $x_j \in V_1^{G_1}$, there exist coefficients $\alpha_{kj} \in \mathbb{C}$ such that
\begin{align}
x_j = \sum_k \alpha_{kj} a_k.
\end{align}
Substituting this into the expression for $X$, we obtain
\begin{align}
X = \sum_{k,j} \alpha_{kj} \, a_k \otimes w_j 
= \sum_k a_k \otimes \left( \sum_j \alpha_{kj} w_j \right).
\end{align}
For each $k$, define
$z_k := \sum_j \alpha_{kj} w_j \in V_2$. Then
$X = \sum_k a_k \otimes z_k$. Now, since $X$ is invariant under the full group, it is in particular invariant under the subgroup $\{e_1\} \times G_2$. Thus, for any $g_2 \in G_2$,
\begin{align*}
&(\id_{V_1} \otimes \pi_2(g_2))(X) = X, \nonumber \\
&\implies \sum_k a_k \otimes \pi_2(g_2) z_k = \sum_k a_k \otimes z_k.
\end{align*}
Since $\{a_k\}$ is a basis of $V_1^{G_1}$, the coefficients of each $a_k$ must match:
\begin{align}
\pi_2(g_2) z_k = z_k \quad \forall g_2 \in G_2,
\end{align}
for each $k$. Therefore,
$z_k \in V_2^{G_2}$.
Hence, we have expressed $X$ as
\begin{align}
X = \sum_k a_k \otimes z_k,
\end{align}
where $a_k \in V_1^{G_1}$ and $z_k \in V_2^{G_2}$. This shows that $X \in V_1^{G_1} \otimes V_2^{G_2}$. Therefore,$(V_1 \otimes V_2)^{G_1 \times G_2} \subseteq V_1^{G_1} \otimes V_2^{G_2}$. Combining both inclusions proves the lemma.
\end{proof}

We can now characterize the invariant vectors under local reversible transformations $D_4 \times D_4$. As an immediate consequence of Lemma~\ref{Lem:invariant_vector_space_product_group}, we note that the fixed-point subspace of $D_4 \times D_4$ is simply the tensor product of the fixed-point subspaces of each individual $D_4$ factor already characterized in Lemma~\ref{Lem:invariant_Vector}, leading to the following:

\begin{lemma}\label{lem:D4xD4_Invariance}
    The $D_4 \times D_4$-invariant subspace of 
$\mathsf{St}_{\mathbb{C}}(A)^* \otimes \mathsf{St}_{\mathbb{C}}(A'') \otimes \mathsf{St}_{\mathbb{C}}(A')\otimes \St_{\mathbb{C}}(E)^*\otimes \St_{\mathbb{C}}(E'') \otimes \St_{\mathbb{C}}(E') $
is spanned by vectors of the form
\begin{align}
X_{\mathcal{B}} = \sum_{ i,j \in \{1,2,3,4\}} a_{ij} \, L_i \otimes L_j,
\end{align}
where
\begin{align}
L_1 &= (u^* \otimes u + v^* \otimes v) \otimes \omega,\\  L_2 &= u^* \otimes \omega \otimes u + v^* \otimes \omega \otimes v,
 \\
L_3 &= \omega^* \otimes \omega \otimes \omega, \\
L_4 &= \omega^* \otimes (u \otimes u + v \otimes v).
\end{align}
are the $D_4$-invariant vectors characterized in Lemma~\ref{Lem:invariant_Vector}.
\end{lemma}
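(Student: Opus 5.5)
The plan is to apply Lemma~\ref{Lem:invariant_vector_space_product_group} with $G_1=G_2=D_4$. We take $V_1=\mathsf{St}_{\mathbb{C}}(A)^*\otimes\mathsf{St}_{\mathbb{C}}(A'')\otimes\mathsf{St}_{\mathbb{C}}(A')$ carrying the representation $g_1\mapsto (\pi(g_1)^{-1})^*\otimes\pi(g_1)\otimes\pi(g_1)$, and $V_2$ the analogous space for $E$. The first step is to confirm that the local subgroup $D_4\times D_4$ of $\mathsf{U}_{\mathrm{global}}$ acts on the six-fold tensor product exactly as this product representation. A local transformation $\mathcal{U}=\pi(g_1)\otimes\pi(g_2)$ has inverse $\pi(g_1)^{-1}\otimes\pi(g_2)^{-1}$. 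The dual of a tensor product of maps is the tensor product of the duals, because $\mathsf{St}_{\mathbb{C}}(AE)^*\cong\mathsf{St}_{\mathbb{C}}(A)^*\otimes\mathsf{St}_{\mathbb{C}}(E)^*$ by local tomography. Using the reordering convention fixed in Axiom~\ref{axm:joint_broadcasting}, the map $((\mathcal{U}^{-1})^*\otimes\mathcal{U}\otimes\mathcal{U})$ then factors as the $A$-block action tensored with the $E$-block action. Lemma~\ref{lem:map_to_invariant_vector} and Lemma~\ref{Lem:dual_map_Lemma} apply verbatim to each block, so $D_4\times D_4$-covariance of $\mathcal{B}$ is equivalent to $X_{\mathcal{B}}$ lying in $(V_1\otimes V_2)^{D_4\times D_4}$.

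Lemma~\ref{Lem:invariant_vector_space_product_group} then gives $(V_1\otimes V_2)^{D_4\times D_4}=V_1^{D_4}\otimes V_2^{D_4}$. The remaining step is to identify each factor $V_i^{D_4}$ with $\operatorname{Span}_{\mathbb{C}}\{L_1,L_2,L_3,L_4\}$. This is the content of Corollary~\ref{Cor:invariant_vector}, whose four summands are exactly $L_3$ (from $\mathbf{1}^*\otimes\mathbf{1}\otimes\mathbf{1}$), $L_4$ (from $\mathbf{1}^*\otimes E\otimes E$), $L_1$ (from $E^*\otimes E\otimes\mathbf{1}$) and $L_2$ (from $E^*\otimes\mathbf{1}\otimes E$). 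That corollary rests on the earlier lemma eliminating the other four components by the $\pm i$ eigenvalue argument and on Lemma~\ref{Lem:invariant_Vector}. I would also note briefly that $L_1,\dots,L_4$ are linearly independent, since they lie in distinct direct summands of the decomposition $(\mathbf{1}\oplus E)^*\otimes(\mathbf{1}\oplus E)\otimes(\mathbf{1}\oplus E)$. So they form a basis of $V_i^{D_4}$, and $\{L_i\otimes L_j\}_{i,j}$ is a basis of the tensor product. This yields the stated form $X_{\mathcal{B}}=\sum_{i,j}a_{ij}L_i\otimes L_j$, where the first factor is on the $A$-systems and the second on the $E$-systems.

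The main obstacle is bookkeeping rather than substance. One must check that the identification of the dual of the composite with the tensor product of duals, together with the factor reordering, really turns the global covariance condition into a product action. In particular the contravariant factor $(\mathcal{U}^{-1})^*$ must split correctly as $(\pi(g_1)^{-1})^*\otimes(\pi(g_2)^{-1})^*$ and be placed on the right tensor slots. I would verify this directly on product basis vectors such as $\omega^*\otimes u^*$ and elementary tensors, where it is immediate.
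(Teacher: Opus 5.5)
Your proposal is correct and follows the paper's route. The paper also obtains this lemma as an immediate application of Lemma~\ref{Lem:invariant_vector_space_product_group}, with each block's $D_4$-fixed space taken from the single-system analysis. Your extra bookkeeping fills in details the paper leaves implicit: splitting $(\mathcal{U}^{-1})^*$ across the $A$ and $E$ blocks, citing Corollary~\ref{Cor:invariant_vector} for the four-dimensional single-block fixed space, and noting that $L_1,\dots,L_4$ are linearly independent because they lie in distinct summands.
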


\subsubsection*{Step 3: Invariant vectors under the SWAP operation}

Having characterized the invariant vectors under $D_4 \times D_4$ in Lemma~\ref{lem:D4xD4_Invariance}, we now impose the remaining covariance condition associated with the $S_2$ swap operation.
 
 For notational convenience, let $\mathcal{S}_{AE}$ denote the swap transformation $\mathcal{U}_{\mathrm{SWAP}}\in \Chan(A\otimes E \to E\otimes A)$, that exchanges the systems $A$ and $E$. With this, we have the following.

\begin{lemma}\label{SWAP_Invariant_X_B}
Let $X_{\mathcal{B}} = \sum_{i,j} a_{ij} L_i \otimes L_j$ be a $D_4 \times D_4$-invariant vector as characterized in Lemma~\ref{lem:D4xD4_Invariance}. The swap-invariance condition $\widetilde{\mathcal{S}} \circ  X_{\mathcal{B}} = X_{\mathcal{B}}$, where $\widetilde{\mathcal{S}} := \mathcal{S}_{A^*E^*}^* \otimes \mathcal{S}_{A''E''} \otimes \mathcal{S}_{A'E'}$, forces the coefficient matrix $(a_{ij})$ to be symmetric, i.e., $a_{ij} = a_{ji}$ for all $i,j$.
\end{lemma}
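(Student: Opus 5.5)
The plan is to compute how $\widetilde{\mathcal{S}}$ acts on each product $L_i\otimes L_j$, and then compare coefficients in the basis of such products.

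The central observation concerns the factor ordering. The vector $X_{\mathcal{B}}$ lives in $\bigl[\St_{\mathbb{C}}(A)^*\otimes\St_{\mathbb{C}}(A'')\otimes\St_{\mathbb{C}}(A')\bigr]\otimes\bigl[\St_{\mathbb{C}}(E)^*\otimes\St_{\mathbb{C}}(E'')\otimes\St_{\mathbb{C}}(E')\bigr]$. The three swaps in $\widetilde{\mathcal{S}}$ exchange the $A$-slot and the $E$-slot at the input (dual), the first output, and the second output. Together they exchange the entire $A$-block with the entire $E$-block.

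Because $A\simeq E$ are both square bits, the basis vectors $\omega,u,v$ and the duals $\omega^*,u^*,v^*$ are the same on both sides. Each $L_i$ is then a well-defined vector in either block.

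First I would check that $\mathcal{S}^*_{A^*E^*}$ acts on dual product vectors as the plain exchange $\phi\otimes\psi\mapsto\psi\otimes\phi$. This follows from Lemma~\ref{Lem:dual_map_Lemma}, using that the dual of the swap is again the swap.

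From this I would conclude that
\begin{align}
\widetilde{\mathcal{S}}(L_i\otimes L_j)=L_j\otimes L_i
\end{align}
for all $i,j$. It suffices to verify this on elementary tensors $a_1\otimes a_2\otimes a_3\otimes e_1\otimes e_2\otimes e_3$. Each swap moves $a_k$ into the $E$-slot and $e_k$ into the $A$-slot, and linearity extends the result to sums such as $u^*\otimes u+v^*\otimes v$.

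Applying this to the expansion gives
\begin{align}
\widetilde{\mathcal{S}}(X_{\mathcal{B}})=\sum_{i,j}a_{ij}\,L_j\otimes L_i=\sum_{i,j}a_{ji}\,L_i\otimes L_j .
\end{align}

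Setting this equal to $X_{\mathcal{B}}$ leaves $\sum_{i,j}(a_{ji}-a_{ij})\,L_i\otimes L_j=0$. The remaining step is to show that the sixteen vectors $L_i\otimes L_j$ are linearly independent. That requires only that the four vectors $L_1,\dots,L_4$ be linearly independent in the single-system space, since tensor products of independent families are independent.

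Their independence holds because each $L_i$ lies in a different component of the decomposition $(\mathbf{1}\oplus E)^*\otimes(\mathbf{1}\oplus E)\otimes(\mathbf{1}\oplus E)$:
\begin{align}
L_1\in E^*\otimes E\otimes\mathbf{1},\quad L_2\in E^*\otimes\mathbf{1}\otimes E,\quad L_3\in\mathbf{1}^*\otimes\mathbf{1}\otimes\mathbf{1},\quad L_4\in\mathbf{1}^*\otimes E\otimes E .
\end{align}
These components form a direct sum, and each $L_i$ is nonzero. Hence $a_{ij}=a_{ji}$.

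The main obstacle I expect is bookkeeping rather than any conceptual difficulty. The real work is making precise the reordering convention from Axiom~\ref{axm:joint_broadcasting}, and checking that the dual swap really is the transpose-free exchange on $A^*\otimes E^*$, so that no extra sign or index permutation inside the $L_i$ appears.
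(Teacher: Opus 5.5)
Your proposal is correct and follows the paper's proof essentially step for step: the swap exchanges the whole $A$-block with the $E$-block, so $\widetilde{\mathcal{S}}(L_i\otimes L_j)=L_j\otimes L_i$, and comparing coefficients gives $a_{ij}=a_{ji}$. The one place you go beyond the paper is the linear independence of the sixteen vectors $L_i\otimes L_j$, which the paper simply asserts; your observation that the four $L_i$ lie in distinct direct summands of $(\mathbf{1}\oplus E)^*\otimes(\mathbf{1}\oplus E)\otimes(\mathbf{1}\oplus E)$ settles this cleanly.
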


\begin{proof}
 Suppose the vector $X_{\mathcal{B}} \in \mathsf{St}_{\mathbb{C}}(A)^* \otimes \mathsf{St}_{\mathbb{C}}(A'') \otimes \mathsf{St}_{\mathbb{C}}(A') \otimes \mathsf{St}_{\mathbb{C}}(E)^* \otimes \mathsf{St}_{\mathbb{C}}(E'') \otimes \mathsf{St}_{\mathbb{C}}(E')$ satisfies the swap-invariance condition
\begin{align}
\widetilde{\mathcal{S}} \circ X_{\mathcal{B}} = X_{\mathcal{B}},
\end{align}
where
\begin{align}
\widetilde{\mathcal{S}} := \mathcal{S}_{A^*E^*}^* \otimes \mathcal{S}_{A''E''} \otimes \mathcal{S}_{A'E'}.
\end{align}
Here, we have assumed the necessary rearrangements of the systems in $\widetilde{\mathcal{S}}$ to match the order of the systems in $X_{\mathcal{B}}$. The map $\mathcal{S}_{A^*E^*}^*$ is the dual map of the swap on the dual spaces. Since the dual of the swap is again the swap, i.e., $\mathcal{S}_{A^*E^*}^* = \mathcal{S}_{A^*E^*}$, we simply have $\mathcal{S}_{A^*E^*}^*$ swapping the dual systems $A^*$ and $E^*$.

Recall from Lemma~\ref{lem:D4xD4_Invariance} that the $D_4 \times D_4$-invariant vectors are of the form
\begin{align}
X_{\mathcal{B}} = \sum_{i,j} a_{ij} \, L_i \otimes L_j,
\end{align}
where $L_i$ lives on the $A$-systems ($A^* \otimes A'' \otimes A'$) and $L_j$ lives on the $E$-systems ($E^* \otimes E'' \otimes E'$). Under the swap transformation, the entire $A$-system is exchanged with the $E$-system, so
\begin{align}
\widetilde{\mathcal{S}} \circ  (L_i \otimes L_j) = L_j \otimes L_i.
\end{align}

Applying $\widetilde{\mathcal{S}}$ to $X_{\mathcal{B}}$ gives
\begin{align}
\widetilde{\mathcal{S}} \circ  X_{\mathcal{B}}
= \sum_{i,j} a_{ij} \, L_j \otimes L_i
= \sum_{i,j} a_{ji} \, L_i \otimes L_j.
\end{align}

The swap-invariance condition $\widetilde{\mathcal{S}} \circ X_{\mathcal{B}} = X_{\mathcal{B}}$ then yields
\begin{align}
\sum_{i,j} a_{ji} \, L_i \otimes L_j = \sum_{i,j} a_{ij} \, L_i \otimes L_j.
\end{align}

Since the vectors $L_i \otimes L_j$ are linearly independent, we may compare coefficients to obtain
\begin{align}
a_{ji} = a_{ij} \quad \text{for all } i,j.
\end{align}

Thus the coefficient matrix $(a_{ij})$ is symmetric.
\end{proof}

\subsubsection*{Step 4: Converting invariant vectors to linear maps}

Converting these invariant vectors under $(D_4 \times D_4) \rtimes S_2$ back to linear maps via the same procedure of rearranging the coefficients as in Lemma~\ref{Lem:covariant_map}, we immediately obtain the following.

\begin{lemma}
The global $(D_4 \times D_4) \rtimes S_2$-covariant map 
$\mathcal{B} \in \mathsf{Transf}_{\mathbb{C}}(A \otimes E \to A'' \otimes A' \otimes E'' \otimes E')$ 
satisfies, for all $\rho \in \mathsf{St}_{\mathbb{C}}(A)$ and $\sigma \in \mathsf{St}_{\mathbb{C}}(E)$,
\begin{align}
\mathcal{B}(\rho \otimes \sigma) = \sum_{i,j \in \{1,2,3,4\}} \tilde{a}_{ij} \, \mathcal{N}_i(\rho) \otimes \mathcal{N}_j(\sigma),
\end{align}
where $\tilde{a}_{ij} = \tilde{a}_{ji} \in \mathbb{C}$, and for any system $C \in \{A, E\}$, the linear maps 
$\mathcal{N}_i \in \mathsf{Transf}_{\mathbb{C}}(C \to C'' \otimes C')$ 
are given, for all $\tau \in \mathsf{St}_{\mathbb{C}}(C)$, by
\begin{align}
&\mathcal{N}_1(\tau) = \tau \otimes \omega, \\
&\mathcal{N}_2(\tau) = \omega \otimes \tau, \\
&\mathcal{N}_3(\tau) = \operatorname{Tr}(\tau) \, \omega \otimes \omega, \\
&\mathcal{N}_4(\tau) = \operatorname{Tr}(\tau) \, (u \otimes u + v \otimes v).
\end{align}
\end{lemma}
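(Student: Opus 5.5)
The plan is to read the statement directly off the invariant vectors, using the isomorphism $\mathsf{Lin}(V,W)\cong V^*\otimes W$ of Step 1 on each factor. By Lemma~\ref{lem:D4xD4_Invariance} and Lemma~\ref{SWAP_Invariant_X_B}, every $(D_4\times D_4)\rtimes S_2$-invariant vector has the form $X_{\mathcal{B}}=\sum_{i,j}a_{ij}L_i\otimes L_j$ with $a_{ij}=a_{ji}$. The first ingredient is the compatibility of this isomorphism with tensor products. Under the reordering $\St_{\mathbb{C}}(A)^*\otimes\St_{\mathbb{C}}(E)^*\otimes(\text{outputs})$, a product vector $L_i\otimes L_j$ corresponds to the product map $\mathcal{M}_i\otimes\mathcal{M}_j$, where $\mathcal{M}_i\in\mathsf{Transf}_{\mathbb{C}}(C\to C''\otimes C')$ is the map associated with $L_i$. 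I would check this by evaluating on the product basis $v_k\otimes w_l$. Local tomography, via Lemma~\ref{lem:local-tomography-linear}, then makes it enough to specify $\mathcal{B}$ on product inputs $\rho\otimes\sigma$. This gives
\begin{align}
\mathcal{B}(\rho\otimes\sigma)=\sum_{i,j}a_{ij}\,\mathcal{M}_i(\rho)\otimes\mathcal{M}_j(\sigma).
\end{align}

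Next, I would compute each $\mathcal{M}_i$ exactly as in Lemma~\ref{Lem:covariant_map}. For $L_3$ and $L_4$ this is immediate: $\mathcal{M}_3=\mathcal{N}_3$ and $\mathcal{M}_4=\mathcal{N}_4$, since $\omega^*=\Tr$. For $L_1$ and $L_2$, we get
\begin{align}
\mathcal{M}_1(\tau)&=(u^*(\tau)u+v^*(\tau)v)\otimes\omega=\mathcal{N}_1(\tau)-\mathcal{N}_3(\tau),\\
\mathcal{M}_2(\tau)&=\mathcal{N}_2(\tau)-\mathcal{N}_3(\tau).
\end{align}
Both identities use $u^*(\tau)u+v^*(\tau)v=\tau-\Tr(\tau)\omega$.

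The set $\{\mathcal{M}_i\}$ is therefore related to $\{\mathcal{N}_i\}$ by an invertible $4\times4$ matrix $T$: unit diagonal, with entries $-1$ in positions $(1,3)$ and $(2,3)$. Substituting, we obtain $\mathcal{B}(\rho\otimes\sigma)=\sum_{k,l}\tilde a_{kl}\,\mathcal{N}_k(\rho)\otimes\mathcal{N}_l(\sigma)$ with $\tilde a=T^{\mathsf T}aT$. Since $a$ is symmetric, $\tilde a$ is symmetric as well, which establishes the claimed form. Invertibility of $T$ shows the two parametrizations are equivalent. This is not strictly needed, but it confirms the family is exactly the covariant one and not merely contained in it.

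The main obstacle, mild as it is, is the bookkeeping of tensor-factor ordering. The vector lives on $A^*A''A'E^*E''E'$, while the map acts $AE\to A''A'E''E'$. One must verify that the reordering, and the swap used in Lemma~\ref{SWAP_Invariant_X_B}, is consistent with the factorization into $\mathcal{M}_i\otimes\mathcal{M}_j$, so that the symmetry $a_{ij}=a_{ji}$ transfers to the pair of linear maps acting on $(\rho,\sigma)$ rather than producing some permuted pairing.
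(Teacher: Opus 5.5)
Your proposal is correct and follows the paper's proof essentially verbatim: the paper likewise identifies $L_3\leftrightarrow\mathcal{N}_3$, $L_4\leftrightarrow\mathcal{N}_4$, $L_1\leftrightarrow\mathcal{N}_1-\mathcal{N}_3$, $L_2\leftrightarrow\mathcal{N}_2-\mathcal{N}_3$ (your matrix $T$ is exactly its matrix $M$) and concludes $\tilde A = M^{T}AM$, which inherits the symmetry $a_{ij}=a_{ji}$ from Lemma~\ref{SWAP_Invariant_X_B}. Your additional remarks on tensor-factor ordering and on the invertibility of $T$ are sound points that the paper leaves implicit.
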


\begin{proof}
By the same procedure as in the proof of Lemma~\ref{Lem:covariant_map}, the invariant vectors $L_i$ in Lemma~\ref{lem:D4xD4_Invariance} correspond to linear maps $\mathcal{N}_i$ as follows:
\begin{align}
L_3 &\longleftrightarrow \mathcal{N}_3, &
L_4 &\longleftrightarrow \mathcal{N}_4, \\
L_1 &\longleftrightarrow \mathcal{N}_1 - \mathcal{N}_3, &
L_2 &\longleftrightarrow \mathcal{N}_2 - \mathcal{N}_3.
\end{align}

or equivalently, $\mathbf{L} \longleftrightarrow M \boldsymbol{\mathcal{N}}$, where $\mathbf{L}=(L_1\  L_2\  L_3\  L_4)^T$, $\boldsymbol{\mathcal{N}}=(\N_1\  \N_2\  \N_3\ \N_4)^T$, and
\begin{align}
M := 
\begin{pmatrix}
1 & 0 & -1 & 0 \\
0 & 1 & -1 & 0 \\
0 & 0 & 1 & 0 \\
0 & 0 & 0 & 1
\end{pmatrix}.
\end{align}

Using this isomorphism, we obtain the transformed map as
\begin{align}
\mathcal{B}(\rho \otimes \sigma)
&= \sum_{i,j} a_{ij} \left( \sum_p M_{ip} \mathcal{N}_p(\rho) \right) \otimes \left( \sum_q M_{jq} \mathcal{N}_q(\sigma) \right) \\
&= \sum_{p,q} \left( \sum_{i,j} M_{ip} a_{ij} M_{jq} \right) \mathcal{N}_p(\rho) \otimes \mathcal{N}_q(\sigma).
\end{align}

Thus the reparameterized coefficients are given by
\begin{align}
\tilde{a}_{pq} = \sum_{i,j} M_{ip} a_{ij} M_{jq},
\end{align}
which in matrix form is $\tilde{A} = M^T A M$, where $A = (a_{ij})$ and $\tilde{A} = (\tilde{a}_{ij})$. Since Lemma~\ref{SWAP_Invariant_X_B} gives $a_{ij} = a_{ji}$, we have $A = A^T$. Therefore,
\begin{align}
\tilde{A}^T = (M^T A M)^T  = M^T A^T M = M^T A M = \tilde{A}.
\end{align}

Hence $\tilde{a}_{ij} = \tilde{a}_{ji}$ for all $i,j$. 

\end{proof}

\subsubsection*{Step 5: Imposing the remaining Axioms}

Subsequently, we evaluate the 10 independent coefficients $\tilde{a}_{ij}$ (with $i \leq j$) by imposing the classical consistency axiom constraints as in Axiom~\ref{axm:joint_broadcasting}. This yields a system of linear equations for the coefficients. We extract and solve these equations using standard symbolic computation (see Ref.~\cite{Goswami_GlobalBroadcastingCode_2026} for the code). The resulting solution uniquely determines the broadcasting map.

\medskip

\noindent \textbf{Classical consistency:} Here, we impose the classical consistency for the adjacent-adjacent case, which yields the following equations:

\begin{align}
\frac{\tilde{a}_{11}}{4} + \frac{\tilde{a}_{12}}{2} + \frac{\tilde{a}_{13}}{4} + 2\tilde{a}_{14} + \frac{\tilde{a}_{22}}{4} + \frac{\tilde{a}_{23}}{4} & \nonumber \\
+ 2\tilde{a}_{24} + \frac{\tilde{a}_{33}}{16} + \tilde{a}_{34} + 4\tilde{a}_{44} &= 1,  \\[8pt]
\frac{\tilde{a}_{11}}{2} + \frac{3\tilde{a}_{12}}{4} + \frac{\tilde{a}_{13}}{2} + 2\tilde{a}_{14} + \frac{\tilde{a}_{22}}{4}& \nonumber \\
+ \frac{3\tilde{a}_{23}}{8} + \tilde{a}_{24} + \frac{\tilde{a}_{33}}{8} + \tilde{a}_{34} &= 1,  \\[8pt]
\frac{\tilde{a}_{11}}{4} + \frac{3\tilde{a}_{12}}{4} + \frac{3\tilde{a}_{13}}{8} + \tilde{a}_{14} + \frac{\tilde{a}_{22}}{2}& \nonumber \\
+ \frac{\tilde{a}_{23}}{2} + 2\tilde{a}_{24} + \frac{\tilde{a}_{33}}{8} + \tilde{a}_{34} &= 1,  \\[8pt]
\frac{\tilde{a}_{11}}{2} + \tilde{a}_{12} + \frac{3\tilde{a}_{13}}{4} + 2\tilde{a}_{14} + \frac{\tilde{a}_{22}}{2}& \nonumber \\+ \frac{3\tilde{a}_{23}}{4} + 2\tilde{a}_{24} + \frac{\tilde{a}_{33}}{4} + 2\tilde{a}_{34} &= 1,  \\[8pt]
\tilde{a}_{11} + \tilde{a}_{12} + \tilde{a}_{13} + \frac{\tilde{a}_{22}}{4} + \frac{\tilde{a}_{23}}{2} + \frac{\tilde{a}_{33}}{4} &= 1,  \\[8pt]
\frac{\tilde{a}_{11}}{2} + \frac{5\tilde{a}_{12}}{4} + \frac{3\tilde{a}_{13}}{4} + \frac{\tilde{a}_{22}}{2} + \frac{3\tilde{a}_{23}}{4} + \frac{\tilde{a}_{33}}{4} &= 1,  \\[8pt]
\tilde{a}_{11} + \frac{3\tilde{a}_{12}}{2} + \frac{3\tilde{a}_{13}}{2} + \frac{\tilde{a}_{22}}{2} + \tilde{a}_{23} + \frac{\tilde{a}_{33}}{2} &= 1,  \\[8pt]
\frac{\tilde{a}_{11}}{4} + \tilde{a}_{12} + \frac{\tilde{a}_{13}}{2} + \tilde{a}_{22} + \tilde{a}_{23} + \frac{\tilde{a}_{33}}{4} &= 1,  \\[8pt]
\frac{\tilde{a}_{11}}{2} + \frac{3\tilde{a}_{12}}{2} + \tilde{a}_{13} + \tilde{a}_{22} + \frac{3\tilde{a}_{23}}{2} + \frac{\tilde{a}_{33}}{2} &= 1,  \\[8pt]
\tilde{a}_{11} + 2\tilde{a}_{12} + 2\tilde{a}_{13} + \tilde{a}_{22} + 2\tilde{a}_{23} + \tilde{a}_{33} &= 1. 
\end{align}

Solving this system of equations yields the unique solution for the adjacent-adjacent case:
\begin{align}
\tilde{a}_{11} &= 1, &
\tilde{a}_{12} &= 1, &
\tilde{a}_{13} &= -1, &
\tilde{a}_{14} &= \frac{1}{8}, \nonumber \\
\tilde{a}_{22} &= 1, &
\tilde{a}_{23} &= -1, &
\tilde{a}_{24} &= \frac{1}{8}, &
\tilde{a}_{33} &= 1, \nonumber \\
\tilde{a}_{34} &= -\frac{1}{8}, &
\tilde{a}_{44} &= \frac{1}{64}. \label{Eq:solutions}
\end{align}

Imposing the remaining classical consistency conditions---adjacent-diagonal, diagonal-adjacent, and diagonal-diagonal---yields the identical solution, confirming that no further constraints arise from these cases.

\subsubsection*{Step 6: Factorization and final form}

The solutions in Eq.~\eqref{Eq:solutions} imply the coefficients $\tilde{a}_{ij}$ factorize as $\tilde{a}_{ij} = b_i b_j$, with
\begin{align}
b_1 = 1, \qquad b_2 = 1, \qquad b_3 = -1, \qquad b_4 = \frac{1}{8}.
\end{align}
Consequently, the global broadcasting map decomposes as the parallel composition of two identical single-system maps:
\begin{align}
\mathcal{B}(\rho \otimes \sigma)
&= \left( \sum_{i=1}^4 b_i \mathcal{N}_i(\rho) \right)
\otimes
\left( \sum_{j=1}^4 b_j \mathcal{N}_j(\sigma) \right) \nonumber \\
&= \mathcal{B}_{A}(\rho) \otimes \mathcal{B}_{E}(\sigma),
\end{align}
where for $C\in\{A,E\}$,  $\mathcal{B}_{C}\in \mathsf{Transf}_{\mathbb{C}}(C\to C'' \otimes C')$ is the unique single-system broadcasting map characterized in Theorem~\ref{thm:unique-broadcasting-map}. This proves Theorem~\ref{Thm:two_party_broadcasting}.

\section{State-linearity of multi-time states does not imply reconstructibility} \label{app:uniqeMTQS}

In Section~\ref{sec:MarkovCaopatible}, we have observed that the
reconstructibility axiom in Axiom~\ref{axm:PropMTQS} implies that the star
product is convex-linear in states. The converse, however, fails: the
multi-time state in Fig.~\ref{fig:ReconVsnonReconQSOT}(a), introduced in Ref.~\cite{xiao2025no}, provides a
counterexample in the quantum theory.

In Construction-I in Fig.~\ref{fig:ReconVsnonReconQSOT}(a), the multi-time quantum state is obtained by first applying the $1{\to}N$ virtual broadcasting map $\mathcal{B}_N$, followed by sequential applications of quantum channels:
\begin{align}
\tau_{A_N \cdots A_2A_1}= \big[ (\mathcal{E}_{N-1} \circ \cdots &\circ \mathcal{E}_{1}) \otimes \cdots \otimes (\mathcal{E}_{2} \circ \mathcal{E}_{1}) \nonumber \\
&\otimes \mathcal{E}_{1} \otimes \id  \big] \circ \mathcal{B}_N(\rho), \label{Eq:1_to_NBroadcastingMTQS}
\end{align}

Here the virtual broadcasting map $\mathcal{B}_N$ is defined as
\begin{align}
\mathcal{B}_N(\rho)
&:= \frac{1}{2} \{\rho_N, U_N\},
\label{eq:BNdef}
\end{align}
where
\begin{align}
\rho_N
&:= \frac{1}{N} \sum_{k=1}^N \rho^{(k)}_{N},
\quad
\rho^{(k)}_{N}
:=
\Id^{\otimes(k-1)}
\otimes
\rho
\otimes
\Id^{\otimes(N-k)},
\label{eq:rhoN}\\[2mm]
U_N
&:= \frac{1}{(N-1)!} \sum_{\sigma \in C_N} U_\sigma,
\quad
\abs{C_N}
:= (N-1)!.
\label{eq:UN}
\end{align}
Here $C_N \subset S_N$ denotes the set of all $N$-cycle permutations, and
$U_\sigma$ is the unitary representation of the permutation $\sigma$. For
$N=2$, $\mathcal{B}_N$ reduces to the virtual broadcasting map in
Eq.~\eqref{eq:VirtualBC} of the main text.

Note that this construction is linear in the initial states but nonlinear in
the channels. Interestingly, it nevertheless satisfies the compositionality axiom. To see
this, we first note the following structural property of the broadcasting
map $\mathcal{B}_N$ (see Appendix~\ref{app:BnB} for the proof). For
notational convenience, we write $\Tr_n$ for $\Tr_{A_n}$ throughout.

\begin{lemma}\label{lem:BnB}
Let $\mathcal{B}_N$ be the virtual broadcasting map defined in
Eq.~\eqref{eq:BNdef}. Then, for every output subsystem
$n \in \{1,\dots,N\}$,
\begin{align}\label{eq:BnB}
\Tr_n \circ \mathcal{B}_N = \mathcal{B}_{N-1}.
\end{align}
\end{lemma}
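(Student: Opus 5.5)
Since $\mathcal{B}_N(\rho)=\tfrac12\{\rho_N,U_N\}$ and the partial trace is linear, I plan to reduce Lemma~\ref{lem:BnB} to two elementary facts about partial traces of permutation operators on $(\mathcal{H}_A)^{\otimes N}$.

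\textbf{Step 1 (simplification).} I first note that $\rho_N$ commutes with every $U_\sigma$. Indeed $U_\sigma\rho^{(k)}_N U_\sigma^\dagger=\rho^{(\sigma(k))}_N$ (or $\sigma^{-1}(k)$, depending on convention), so conjugation by $U_\sigma$ only permutes the summands of $\rho_N=\frac1N\sum_k\rho^{(k)}_N$. Hence $[\rho_N,U_N]=0$, and therefore
\begin{equation*}
\mathcal{B}_N(\rho)=\rho_N U_N=\frac{1}{N\,(N-1)!}\sum_{k=1}^N\sum_{\sigma\in C_N}\rho^{(k)}_N U_\sigma .
\end{equation*}
It then suffices to show $\Tr_n[\rho_N U_N]=\rho_{N-1}U_{N-1}$, because the same commutation argument on $N-1$ sites gives $\rho_{N-1}U_{N-1}=\mathcal{B}_{N-1}(\rho)$.

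\textbf{Step 2 (partial trace of a cycle).} For $\sigma\in C_N$ and a fixed site $n$, let $\sigma'\in C_{N-1}$ be the cycle on the remaining sites obtained by deleting $n$, i.e.\ sending $\sigma^{-1}(n)\mapsto\sigma(n)$. The standard wire-contraction identity (checked on the product basis) gives:
\begin{itemize}
\item $\Tr_n[X^{(k)}U_\sigma]=X^{(k)}U_{\sigma'}$ for $k\neq n$, and
\item $\Tr_n[X^{(n)}U_\sigma]=X^{(m)}U_{\sigma'}$, where $m$ is a neighbour of $n$ in the cycle (say $m=\sigma(n)$); the operator $X$ is transported along the wire through site $n$.
\end{itemize}
The exact neighbour and the side on which $X$ ends up are irrelevant here: by Step 1, $\sum_m\rho^{(m)}$ commutes with $U_{\sigma'}$, and the neighbour will in any case be averaged over.

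\textbf{Step 3 (counting).} The map $C_N\to C_{N-1}$, $\sigma\mapsto\sigma'$, is exactly $(N-1)$-to-one, since $n$ can be inserted into any of the $N-1$ gaps of $\sigma'$. For fixed $\sigma'$, as the insertion point varies, the neighbour $\sigma(n)$ runs over every remaining site exactly once. I would then evaluate the two groups of terms in $\mathcal{B}_N(\rho)$ separately.
\begin{itemize}
\item The terms with $k\neq n$ contribute
\begin{equation*}
\frac{1}{N(N-1)!}\sum_{k\neq n}(N-1)\,\rho^{(k)}\sum_{\sigma'\in C_{N-1}}U_{\sigma'}=\frac{N-1}{N}\,\rho_{N-1}U_{N-1},
\end{equation*}
using $(N-1)/(N-1)!=1/(N-2)!=1/|C_{N-1}|$.
\item The $k=n$ term contributes
\begin{equation*}
\frac{1}{N(N-1)!}\sum_{\sigma'}\sum_{m\neq n}\rho^{(m)}U_{\sigma'}=\frac1N\,\rho_{N-1}U_{N-1}.
\end{equation*}
\end{itemize}
The two contributions sum to $\rho_{N-1}U_{N-1}=\mathcal{B}_{N-1}(\rho)$, which proves Eq.~\eqref{eq:BnB}.

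The main obstacle I expect is Step 2: getting the contraction identity and the bijective-counting claim exactly right, including the convention for $U_\sigma$ and the case $N=2$ (where $C_1=\{e\}$ and $U_1=\Id$). I would verify these directly on basis vectors, writing $U_\sigma=\sum_{\mathbf i}\ketbra{i_{\sigma^{-1}(1)}\cdots}{i_1\cdots}$ and contracting the index at site $n$.
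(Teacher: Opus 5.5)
Your proposal is correct and follows essentially the paper's route: the same wire-contraction identity for a single $N$-cycle, followed by the same $(N-1)$-to-one counting of $N$-cycles over $(N-1)$-cycles. The differences are minor. Your observation that $\rho_N$ commutes with every $U_\sigma$, so that $\mathcal{B}_N(\rho)=\rho_N U_N$, replaces the paper's adjoint/Hermiticity step for the other half of the anticommutator. Treating a general site $n$ directly replaces the paper's reduction to $n=N$ via swap invariance of $\mathcal{B}_N$.
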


Lemma~\ref{lem:BnB} immediately implies that Construction-I satisfies
compositionality. Indeed, tracing out subsystem $A_n$ and using that each
$\mathcal{E}_k$ is trace preserving, we obtain
\begin{align}
\Tr_n &\Big[
\{\mathcal{E}_{N-1},\dots,\mathcal{E}_{n},
\mathcal{E}_{n-1},\dots,\mathcal{E}_{1}\} \star \rho
\Big]
\nonumber\\[2mm]
&=
\Big[
(\mathcal{E}_{N-1} \circ \cdots \circ \mathcal{E}_1)
\otimes \cdots
\otimes \Tr_n(\mathcal{E}_{n-1} \circ \cdots \circ \mathcal{E}_1)
\nonumber\\
&\qquad
\otimes \cdots
\otimes \mathcal{E}_1
\otimes \id
\Big]
\circ \mathcal{B}_N(\rho)
\nonumber\\[2mm]
&=
\Big[
(\mathcal{E}_{N-1} \circ \cdots \circ \mathcal{E}_1)
\otimes \cdots
\otimes (\mathcal{E}_{n} \circ \mathcal{E}_{n-1} \circ \cdots \circ \mathcal{E}_1)
\nonumber\\
&\qquad
\otimes (\mathcal{E}_{n-1} \circ \cdots \circ \mathcal{E}_1)
\otimes \cdots
\otimes \mathcal{E}_1
\otimes \id
\Big]
\circ \Tr_n \circ \mathcal{B}_N(\rho)
\nonumber\\[2mm]
&=
\Big[
(\mathcal{E}_{N-1} \circ \cdots \circ \mathcal{E}_1)
\otimes \cdots
\otimes (\mathcal{E}_{n} \circ \mathcal{E}_{n-1} \circ \cdots \circ \mathcal{E}_1)
\nonumber\\
&\qquad
\otimes (\mathcal{E}_{n-1} \circ \cdots \circ \mathcal{E}_1)
\otimes \cdots
\otimes \mathcal{E}_1
\otimes \id
\Big]
\circ \mathcal{B}_{N-1}(\rho)
\nonumber\\[2mm]
&=
\{\mathcal{E}_{N-1},\dots,(\mathcal{E}_{n} \circ \mathcal{E}_{n-1}),
\dots,\mathcal{E}_{1}\} \star \rho.
\end{align}

The observations that the star product in Construction-I is state-linear but channel
nonlinear, and that it satisfies compositionality, together rule out
reconstructibility. Indeed, if Construction-I were
reconstructible and satisfied compositionality, then by
Theorem~\ref{Thm:StructureThm_MarkovQSOT} it would admit a chain-like
structure as in Fig.~\ref{fig:ReconVsnonReconQSOT}(b), for some sequence of
broadcasting maps. Such a
chain-like structure, however, would force the star product to be linear in
the channels, contradicting the channel-nonlinearity observed above.

\smallskip

Reconstructibility can also be ruled out without
invoking compositionality. For completeness, we give such an argument below.
The following lemma also provides a systematic method for checking whether a
given star product is reconstructible.

\begin{lemma}\label{lem:Condition_Broadcasting}
Let $\mathcal{B}_N : A_1 \to A_N \otimes \cdots \otimes A_1$ be the
$1 \to N$ broadcasting map, and let $\mathcal{B}_n : A_1 \to A_n \otimes
\cdots \otimes A_1$ be the $1 \to n$ broadcasting map. Suppose the
multi-time state satisfies the reconstructibility axiom. Then, for every cut
$A_N,\dots,A_{n+1} \mid A_n,\dots,A_1$, every initial state $\rho_{A_1}$,
and every effect $e_{A_{n-1}\cdots A_1} \in \eff(A_{n-1} \otimes \cdots
\otimes A_1)$, such that
\begin{align}
(\id_{A_n} \otimes e_{A_{n-1}\cdots A_1})\mathcal{B}_n(\rho_{A_1})=0,
\label{eq:Condition_Broadcasting_left}
\end{align}
then we have
\begin{align}
(\id_{A_N\otimes\cdots\otimes A_n} \otimes e_{A_{n-1}\cdots A_1})
\mathcal{B}_N(\rho_{A_1})=0.
\label{eq:Condition_Broadcasting_right}
\end{align}
\end{lemma}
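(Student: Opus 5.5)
The plan is to specialise to the channel-free case of Construction-I, where every $\mathcal{E}_k=\id$, so the multi-time state is $\mathcal{B}_N(\rho_{A_1})$ itself. Reconstructibility (Q4) must hold for all channel sequences and therefore for this one. At the cut $A_N,\dots,A_{n+1}\mid A_n,\dots,A_1$ it supplies a linear map $\mathcal{R}_n\in\mathsf{Transf}_{\mathbb{C}}(A_n\to A_N\otimes\cdots\otimes A_n)$, independent of $\rho_{A_1}$, such that
\begin{align}
\mathcal{B}_N(\rho_{A_1}) = (\mathcal{R}_n\otimes \id_{A_{n-1}\cdots A_1})\circ \Tr_{A_N\cdots A_{n+1}}\big[\mathcal{B}_N(\rho_{A_1})\big].
\end{align}
Here I read (Q4) as $\mathcal{R}_n$ acting only on the present system $A_n$ while the past systems are carried along by the identity. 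This is how the paper uses it in the proof of Proposition~\ref{Pro:ReconAndComposition}, where $\mathcal{R}_{m-1}$ is composed after the marginal.

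The next step is to identify the marginal. Applying Lemma~\ref{lem:BnB} repeatedly, $N-n$ times to the last output systems, gives $\Tr_{A_N\cdots A_{n+1}}\circ\mathcal{B}_N=\mathcal{B}_n$. This is consistent with the marginal consistency in Definition~\ref{def:MTQS}, since with identity channels the $n$-time state is $\mathcal{B}_n(\rho_{A_1})$. The reconstruction identity therefore becomes $\mathcal{B}_N(\rho_{A_1})=(\mathcal{R}_n\otimes\id)\,\mathcal{B}_n(\rho_{A_1})$.

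Now apply the effect $e_{A_{n-1}\cdots A_1}$ on the past systems. Because $\mathcal{R}_n$ acts only on $A_n$, it commutes with the past-acting effect. This commutation holds as an identity of linear maps on $\St_{\mathbb{C}}$ of the composite; local tomography (Lemma~\ref{lem:local-tomography-linear}) lets us verify it on product inputs. We obtain
\begin{align}
(\id_{A_N\cdots A_n}\otimes e)\,\mathcal{B}_N(\rho_{A_1}) = \mathcal{R}_n\big[(\id_{A_n}\otimes e)\,\mathcal{B}_n(\rho_{A_1})\big].
\end{align}
If the bracket on the right vanishes, as in Eq.~\eqref{eq:Condition_Broadcasting_left}, then linearity of $\mathcal{R}_n$ forces the left side to vanish, which is Eq.~\eqref{eq:Condition_Broadcasting_right}.

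The main point needing care is the bookkeeping of which systems $\mathcal{R}_n$ touches. The statement of (Q4) lists its output as $A_N\otimes\cdots\otimes A_n$, so it must be tensored with the identity on $A_{n-1}\cdots A_1$. If it were allowed to act on the past systems, the commutation with $e$ would fail and the argument would break. A second point to check is that the identity-channel specialisation is legitimate, i.e.\ that Construction-I with all $\mathcal{E}_k=\id$ is indeed $\mathcal{B}_N(\rho)$. This follows directly from Eq.~\eqref{Eq:1_to_NBroadcastingMTQS}.
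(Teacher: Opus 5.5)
Your proposal is correct and follows the paper's proof essentially step for step: you specialise to identity channels so that the multi-time states are $\mathcal{B}_N(\rho)$ and $\mathcal{B}_n(\rho)$, use reconstructibility to write $\mathcal{B}_N(\rho)=(\mathcal{R}_n\otimes\id)\,\mathcal{B}_n(\rho)$, move the past effect $e$ through $\mathcal{R}_n\otimes\id$, and conclude by linearity. The only difference is that you identify the marginal via Lemma~\ref{lem:BnB}, which ties the argument to Construction-I; the marginal consistency of Definition~\ref{def:MTQS}, which you also mention, gives $\Tr_{A_N\cdots A_{n+1}}\circ\mathcal{B}_N=\mathcal{B}_n$ for any multi-time state and so matches the lemma's general phrasing.
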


\begin{proof}
Suppose there exists a state $\rho \in \St(A_1)$ and an effect $e \in \eff(A_{n-1} \otimes \cdots \otimes
A_1)$ such that
\begin{align}
(\Id_{A_n} \otimes e_{A_{n-1}\cdots A_1})\mathcal{B}_n(\rho_{A_1})=0.
\end{align}
If the multi-time state satisfies the reconstructibility axiom, then there
exists a map $\mathcal{R}_n : A_n \to A_N \otimes \cdots \otimes A_n$ that
takes $\mathcal{B}_n$ to $\mathcal{B}_N$. Here we use the fact that a
multi-time state corresponds to a broadcasting map for identity channels. Hence,
\begin{align}
(\mathcal{R}_n \otimes \id_{A_{n-1}\otimes\cdots\otimes A_1}) \circ
(\id_{A_n} \otimes e_{A_{n-1}\cdots A_1})\mathcal{B}_n(\rho_{A_1})&=0,
\\
(\id_{A_N\otimes\cdots\otimes A_n} \otimes e_{A_{n-1}\cdots A_1})\circ
(\mathcal{R}_n \otimes \id_{A_{n-1}\otimes\cdots\otimes A_1})
\mathcal{B}_n(\rho_{A_1})&{=}0,
\\
(\id_{A_N\otimes\cdots\otimes A_n} \otimes e_{A_{n-1}\cdots A_1})
\mathcal{B}_N(\rho_{A_1})&{=}0.
\end{align}
\end{proof}

Note that, following a similar approach to
Ref.~\cite{chen2025virtual}, one can prove a converse: if for every
ancillary system $R$, every pair of state $\rho_{A_1 R}$, and effects
$e_{A_{n-1}\cdots A_1 R}$ satisfying
$(\id_{A_n} \otimes e_{A_{n-1}\cdots A_1 R})(\mathcal{B}_n \otimes \id_R)
(\rho_{A_1 R})=0$ implies 
$(\id_{A_N \otimes \cdots \otimes A_n} \otimes e_{A_{n-1}\cdots A_1 R})
(\mathcal{B}_N \otimes \id_R)(\rho_{A_1 R})=0$, then a unique reconstructing
map exists. We do not, however, pursue this direction here.

Lemma~\ref{lem:Condition_Broadcasting} implies that if the multi-time state
satisfies the reconstructibility axiom, then the corresponding broadcasting
map must satisfy the condition in Lemma~\ref{lem:Condition_Broadcasting}. Hence, by contrapositive, to prove that Construction-I does not satisfy reconstructibility,
it suffices to find one state and one effect for which the condition fails. It is enough to take
$n=2$, $N=3$, and $\mathcal{E}_1$ and $\mathcal{E}_2$ to be the identity
maps. With this choice, the $N=3$ multi-time state is simply the
broadcasting map $\mathcal{B}_3$, while the $n=2$ multi-time state is
simply the broadcasting map $\mathcal{B}_2$.

Now consider the effect $e_{A_1}=\proj{1}_{A_1}$ and the state
$\rho_{A_1}=\proj{0}_{A_1}$. Then by explicit calculation, we obtain
\begin{align}
&\Tr_{A_1}\!\left[(\ketbra{1}{1}_{A_1} \otimes \Id_{A_2})
\mathcal{B}_2(\proj{0})\right] = 0_{A_2}, \ \mathrm{but}
\\
&\Tr_{A_1}\!\left[(\ketbra{1}{1}_{A_1} \otimes \Id_{A_2A_3})
\mathcal{B}_3(\proj{0})\right] \nonumber \\
&\qquad =
\frac{1}{6}\left(\ketbra{0}{1}_{A_2} \otimes \ketbra{1}{0}_{A_3}
+ \ketbra{1}{0}_{A_2} \otimes \ketbra{0}{1}_{A_3}\right)
\neq 0_{A_2A_3}.
\end{align}
Hence, the necessary condition in Lemma~\ref{lem:Condition_Broadcasting} is
violated, and no reconstructing map  $\mathcal{R}_{A_2 \to A_3A_2}$ for for Construction-I exists.

In contrast, Construction-II in Fig.~\ref{fig:ReconVsnonReconQSOT}(b)
respects the structure of Theorem~\ref{Thm:StructureThm_MarkovQSOT} and
therefore satisfies both compositionality and reconstructibility, with
reconstructing map $\mathcal{R}_n
=
\mathcal{E}_{N-1} \circ \mathcal{B}_2 \circ \cdots \circ \mathcal{E}_{n+1}
\circ \mathcal{B}_2 \circ \mathcal{E}_n \circ \mathcal{B}_2.$

\begin{figure}
\centering
\includegraphics[width=\columnwidth]{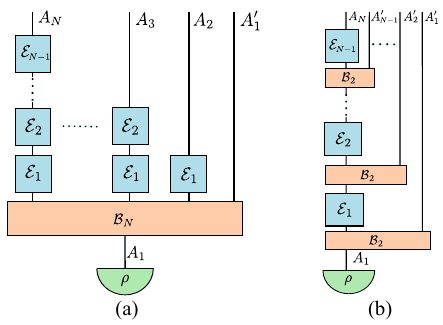}
\caption{ Two different multi-time states $\{\mathcal{E}_n\}_{n=1}^N\star \rho$ (Definition~\ref{def:MTQS}), that are convex-linear in states and satisfy `Compositionality'~\cite{xiao2025no}. However, Fig.(a) does not satisfy the `Reconstructibility' axiom, while Fig.(b) satisfies it. In Fig.(a), $\mathcal{B}_N$ represents the $1\to N$ virtual broadcasting map~\cite{xiao2025no}, see Eq.~\eqref{Eq:1_to_NBroadcastingMTQS}. In Fig.(b), $\mathcal{B}_2$ is the $1\to 2$ virtual broadcasting map ~\cite{parzygnat2024virtual,xiao2025no}, see Eq.~\eqref{eq:VirtualBC} in the main text.  }
    \label{fig:ReconVsnonReconQSOT}
\end{figure}

\section{Proof of Lemma~\ref{lem:BnB}}\label{app:BnB}

The proof hinges on two lemmas. Before introducing them, we fix the notations.

\medskip

\noindent \textbf{Notations:}
We write an $N$-cycle in the form
\begin{align}
\sigma = (a_1, a_2, \ldots, a_{N-1}, N) \in C_N,
\label{eq:Ncycle}
\end{align}
where the labels $a_1,\ldots,a_{N-1}$ are distinct elements of
$\{1,\ldots,N-1\}$, and $N$ is the final subsystem. Deleting $N$ from
$\sigma$ yields the $(N-1)$-cycle
\begin{align}
\sigma' = (a_1, a_2, \ldots, a_{N-1}) \in C_{N-1}.
\label{eq:Nminus1cycle}
\end{align}

For $M \in \mathbb{N}$ and $k \in \{1,\ldots,M\}$, let
\begin{align}
\rho^{(k)}_{M}
&:=
\Id^{\otimes(k-1)}
\otimes
\rho
\otimes
\Id^{\otimes(M-k)}
\label{eq:rho-k-M}
\end{align}
denote the operator on $M$ subsystems obtained by placing $\rho$ on
subsystem $k$ and the identity on all the others. With this, we denote the
average
\begin{align}
\rho_M
&:=
\frac{1}{M}
\sum_{k=1}^M \rho^{(k)}_{M}.
\label{eq:rhoM}
\end{align}

For a permutation $\sigma \in S_N$, let $U_\sigma$ denote its unitary
representation on the $N$-fold tensor product space. Furthermore, let
\begin{align}
U_N
&:=
\frac{1}{(N-1)!}
\sum_{\sigma \in C_N} U_\sigma,
&
|C_N|
&=
(N-1)!,
\label{eq:UN}
\end{align}
where $C_N \subset S_N$ is the set of all $N$-cycles, so that $U_N$ is the
uniform average over all $N$-cycles.

With this notation in place, we now state the two lemmas on which the proof
rests.

\begin{lemma}[Partial trace of a single $N$-cycle]
\label{lem:single-cycle}
Let
\begin{align}
\sigma = (a_1, a_2, \ldots, a_{N-1}, N) \in C_N
\end{align}
and let
\begin{align}
\sigma' = (a_1, a_2, \ldots, a_{N-1}) \in C_{N-1}
\end{align}
be the cycle obtained by deleting $N$. Then
\begin{align}
\Tr_N\!\left[\rho_N U_\sigma\right]
=
\frac{1}{N}
\left[
(N-1)\rho_{N-1}
+
\rho^{(a_1)}_{N-1}
\right]U_{\sigma'}.
\label{eq:single-cycle-partial-trace}
\end{align}
\end{lemma}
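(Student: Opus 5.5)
The plan is to split $\rho_N$ into its $N$ single-site terms and compute the partial trace of each term times $U_\sigma$ separately, using matrix elements of $U_\sigma$ in a product basis. Since $\rho_N = \frac{1}{N}\sum_{k=1}^N \rho^{(k)}_N$, linearity of $\Tr_N$ gives
\begin{align}
\Tr_N\!\left[\rho_N U_\sigma\right] = \frac{1}{N}\Big(\sum_{k=1}^{N-1}\Tr_N\!\left[\rho^{(k)}_N U_\sigma\right] + \Tr_N\!\left[\rho^{(N)}_N U_\sigma\right]\Big).
\end{align}
The first group of terms should produce $(N-1)\rho_{N-1}U_{\sigma'}$, and the last term should produce $\rho^{(a_1)}_{N-1}U_{\sigma'}$.

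First I will fix a convention for the permutation unitary, namely $U_\sigma\ket{i_1,\dots,i_N} = \ket{i_{\sigma^{-1}(1)},\dots,i_{\sigma^{-1}(N)}}$. I will then read off its matrix elements $\langle j_1\cdots j_N|U_\sigma|i_1\cdots i_N\rangle = \prod_m \delta_{j_m, i_{\sigma^{-1}(m)}}$. In this language, tracing out subsystem $N$ identifies $j_N$ with $i_N$. This fuses the two deltas $\delta_{j_N,i_{\sigma^{-1}(N)}}$ and $\delta_{j_{\sigma(N)},i_N}$ into a single delta $\delta_{j_{\sigma(N)},i_{\sigma^{-1}(N)}}$. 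That fused delta is exactly the wire of the cycle $\sigma'$ obtained by deleting $N$, so $\Tr_N U_\sigma = U_{\sigma'}$.

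For $k<N$, the factor $\rho^{(k)}_N = \rho^{(k)}_{N-1}\otimes\Id_N$ acts trivially on subsystem $N$. It therefore commutes with the partial trace, giving $\Tr_N[\rho^{(k)}_N U_\sigma] = \rho^{(k)}_{N-1}\,\Tr_N U_\sigma = \rho^{(k)}_{N-1}U_{\sigma'}$. Summing over $k=1,\dots,N-1$ yields $(N-1)\rho_{N-1}U_{\sigma'}$.

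For $k=N$, the operator $\rho$ sits on the output leg $N$ of $U_\sigma$. In the contracted index calculation it is transported along the fused wire onto the neighbour of $N$ in the cycle. This is the higher-$N$ analogue of $\Tr_2[(\Id\otimes X)\,\mathrm{SWAP}] = X$. The result is $\rho^{(\sigma(N))}_{N-1}U_{\sigma'}$, with $\rho$ appearing to the left of $U_{\sigma'}$ because it was to the left of $U_\sigma$. For the cycle $\sigma=(a_1,\dots,a_{N-1},N)$ we have $\sigma(N)=a_1$, which gives the claimed $\rho^{(a_1)}_{N-1}U_{\sigma'}$. Adding the two contributions with the overall prefactor $1/N$ gives Eq.~\eqref{eq:single-cycle-partial-trace}.

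The main obstacle is the bookkeeping in the $k=N$ term. One must check carefully, with the chosen convention for $U_\sigma$ and the cycle notation, whether $\rho$ lands on slot $a_1 = \sigma(N)$ or on slot $a_{N-1}=\sigma^{-1}(N)$. One must also confirm that it appears on the left of $U_{\sigma'}$ rather than the right. I would settle this by writing out the index contraction explicitly and cross-checking on $N=2$ (the SWAP) and on $N=3$ with $\sigma=(1,2,3)$. If the paper's convention for $U_\sigma$ turns out to be the inverse one, the same computation goes through with $\sigma$ replaced by $\sigma^{-1}$, and the labels are adjusted accordingly.
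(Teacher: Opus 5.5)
Your proof is correct and is essentially the paper's argument. The paper also splits $\rho_N$ into single-site terms, but evaluates them via the factorization $U_\sigma=\mathrm{SWAP}_{N,a_1}(U_{\sigma'}\otimes\Id_N)$ together with $\Tr_N[\mathrm{SWAP}_{N,a_1}]=\Id$ and $\Tr_N[\rho^{(N)}_N\,\mathrm{SWAP}_{N,a_1}]=\rho^{(a_1)}_{N-1}$. This is exactly what your index contraction computes; your uniform treatment of all $k<N$ via $\Tr_N U_\sigma=U_{\sigma'}$ simply merges the paper's separate $k=a_1$ case.

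That factorization uses your convention $U_\sigma\ket{i_1,\dots,i_N}=\ket{i_{\sigma^{-1}(1)},\dots,i_{\sigma^{-1}(N)}}$, so $\rho$ does land on $\sigma(N)=a_1$. Under the inverse convention it would land on $a_{N-1}$ instead, so the choice genuinely matters for this lemma as stated, not just as a relabeling. It is harmless for the subsequent averaging over all $N$-cycles, where only the sum over the $a_k$ enters.
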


\begin{proof}
We use the decomposition
\begin{align}
\sigma = (N, a_1)(a_1, a_2, \ldots, a_{N-1}),
\end{align}
which gives
\begin{align}
U_\sigma
=
\operatorname{SWAP}_{N,a_1}
\left(U_{\sigma'} \otimes \Id_N\right).
\end{align}

Recall that
\begin{align}
\rho_N
=
\frac{1}{N}
\sum_{k=1}^N \rho^{(k)}_{N}.
\end{align}
There are three types of terms in this sum.

For the term with $\rho$ acting on $a_1$, the standard swap identity gives
\begin{align}
\Tr_N
\left[
\rho^{(a_1)}_{N} U_\sigma
\right]
=
\rho^{(a_1)}_{N-1} U_{\sigma'}.
\end{align}

Similarly, for the term with $\rho$ acting on the traced subsystem $N$,
\begin{align}
\Tr_N
\left[
\rho^{(N)}_{N} U_\sigma
\right]
=
\rho^{(a_1)}_{N-1} U_{\sigma'}.
\end{align}

Finally, if $k \notin \{a_1, N\}$, the operator $\rho^{(k)}_{N}$ acts on a
subsystem untouched by the swap involving $N$, and therefore
\begin{align}
\Tr_N
\left[
\rho^{(k)}_{N} U_\sigma
\right]
=
\rho^{(k)}_{N-1} U_{\sigma'}.
\end{align}

Consequently,
\begin{align}
\Tr_N\!\left[\rho_N U_\sigma\right]
&=
\frac{1}{N}
\left[
2\rho^{(a_1)}_{N-1}
+
\sum_{k \neq a_1, N} \rho^{(k)}_{N-1}
\right]U_{\sigma'} \\
&=
\frac{1}{N}
\left[
(N-1)\rho_{N-1}
+
\rho^{(a_1)}_{N-1}
\right]U_{\sigma'},
\end{align}
where in the last equation we used
\begin{align}
\sum_{k \neq N} \rho^{(k)}_{N-1}
=
(N-1)\rho_{N-1}.
\end{align}
This proves Eq.~\eqref{eq:single-cycle-partial-trace}.
\end{proof}

\begin{lemma}[Averaging over all $N$-cycles]
\label{lem:averaged-partial-trace}
For every $\rho$,
\begin{align}
\Tr_N\!\left[\rho_N U_N\right]
=
\rho_{N-1} U_{N-1}.
\label{eq:averaged-partial-trace}
\end{align}
\end{lemma}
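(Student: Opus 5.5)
The plan is to sum Lemma~\ref{lem:single-cycle} over all $N$-cycles and then count how the resulting terms redistribute over $C_{N-1}$. Every $N$-cycle can be written uniquely in the form $\sigma=(a_1,a_2,\ldots,a_{N-1},N)$, i.e.\ starting just after $N$. The deletion map $\sigma\mapsto\sigma'=(a_1,\ldots,a_{N-1})$ from $C_N$ to $C_{N-1}$ is surjective and exactly $(N-1)$-to-one. The preimages of a fixed $(N-1)$-cycle $\sigma'$ correspond to the $N-1$ positions at which $N$ can be inserted into $\sigma'$. Equivalently, they correspond to the choice of which element $a_1$ of $\sigma'$ follows $N$, with the cyclic order of $\sigma'$ read starting from $a_1$. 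As $\sigma$ ranges over the $N-1$ preimages of $\sigma'$, the label $a_1$ therefore ranges over every element of $\{1,\ldots,N-1\}$ exactly once.

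With this bijection in hand, I use linearity of $\Tr_N$ and Lemma~\ref{lem:single-cycle}:
\begin{align}
\Tr_N[\rho_N U_N]=\frac{1}{(N-1)!}\sum_{\sigma'\in C_{N-1}}\frac{1}{N}\sum_{a_1=1}^{N-1}\Big[(N-1)\rho_{N-1}+\rho^{(a_1)}_{N-1}\Big]U_{\sigma'}.
\end{align}
The inner sum gives $(N-1)^2\rho_{N-1}+\sum_{a_1}\rho^{(a_1)}_{N-1}$. Since $\sum_{a_1}\rho^{(a_1)}_{N-1}=(N-1)\rho_{N-1}$, the inner sum equals $(N-1)N\rho_{N-1}$. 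The bracket, including the prefactor $1/N$, thus collapses to $(N-1)\rho_{N-1}U_{\sigma'}$. This yields
\begin{align}
\Tr_N[\rho_N U_N]=\frac{N-1}{(N-1)!}\sum_{\sigma'\in C_{N-1}}\rho_{N-1}U_{\sigma'}=\rho_{N-1}\frac{1}{(N-2)!}\sum_{\sigma'\in C_{N-1}}U_{\sigma'}=\rho_{N-1}U_{N-1},
\end{align}
using $|C_{N-1}|=(N-2)!$.

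The main point needing care is the counting step. I need to verify that the correspondence between preimages of $\sigma'$ and choices of $a_1$ is indeed a bijection, so that each $\rho^{(a_1)}_{N-1}$ appears exactly once per $\sigma'$. I also need to check that the convention in Lemma~\ref{lem:single-cycle}, in which $a_1$ is the successor of $N$ in the cycle notation, is used consistently. A secondary check is the edge case $N=2$, with the convention $C_1=\{e\}$ and $U_1=\Id$. Here the formula reduces to $\Tr_2[\rho_2\,\mathrm{SWAP}]=\rho$, which holds directly. The same argument, after relabeling subsystems by a permutation (which commutes with the symmetric averages $\rho_N$ and $U_N$), gives the statement for tracing any subsystem $n$.
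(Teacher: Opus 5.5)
Your proposal is correct and follows the paper's proof essentially step for step. Like the paper, you group the $N$-cycles by the $(N-1)$-cycle $\sigma'$ obtained by deleting $N$, apply Lemma~\ref{lem:single-cycle} to each of the $N-1$ preimages (whose successor label $a_1$ runs once over every element of $\sigma'$, matching the paper's insertion-slot count), collapse the bracket to $(N-1)\rho_{N-1}U_{\sigma'}$, and renormalize with $|C_{N-1}|=(N-2)!$. Your $N=2$ check is sound, and the closing remark about tracing an arbitrary subsystem $n$ goes beyond this lemma; the paper proves that separately in Corollary~\ref{lem:arbitrary-subsystem}.
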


\begin{proof}
Fix an $(N-1)$-cycle
\begin{align}
\sigma' = (a_1, \ldots, a_{N-1}) \in C_{N-1}.
\end{align}

Deleting $N$ sends each $N$-cycle $\sigma \in C_N$ to a unique
$(N-1)$-cycle in $C_{N-1}$. Conversely, every $\sigma' \in C_{N-1}$ has
exactly $N-1$ preimages in $C_N$, obtained by inserting $N$ into one of the
$N-1$ slots of $\sigma'$. We denote this mapping as $\sigma \mapsto \sigma'$,
and write $\sum_{\sigma \mapsto \sigma'}$ for the sum over all $N-1$
preimages of $\sigma'$.

Using Lemma~\ref{lem:single-cycle} and summing over the preimages of
$\sigma'$, we obtain
\begin{align}
\sum_{\sigma \mapsto \sigma'}
\Tr_N\!\left[\rho_N U_\sigma\right]
&=
\frac{1}{N}
\sum_{k=1}^{N-1}
\left[
(N-1)\rho_{N-1}
+
\rho^{(a_k)}_{N-1}
\right]U_{\sigma'} \\
&=
\frac{1}{N}
\left[
(N-1)^2 \rho_{N-1}
+
\sum_{k=1}^{N-1} \rho^{(a_k)}_{N-1}
\right]U_{\sigma'} \\
&=
\frac{1}{N}
\left[
(N-1)^2 + (N-1)
\right]
\rho_{N-1} U_{\sigma'} \\
&=
(N-1)\rho_{N-1} U_{\sigma'}.
\end{align}

We now average over all $N$-cycles. Since the $N$-cycles in $C_N$ can be
grouped according to which $(N-1)$-cycle they reduce to, summing over all
$\sigma \in C_N$ is the same as summing over all $\sigma' \in C_{N-1}$ and,
for each $\sigma'$, over its $N-1$ preimages:
$\sum_{\sigma \in C_N}
=
\sum_{\sigma' \in C_{N-1}}
\sum_{\sigma \mapsto \sigma'}$. Therefore
\begin{align}
\Tr_N[\rho_N U_N]
&=
\frac{1}{(N-1)!}
\sum_{\sigma \in C_N}
\Tr_N\!\left[\rho_N U_\sigma\right]
\nonumber\\[2mm]
&=
\frac{1}{(N-1)!}
\sum_{\sigma' \in C_{N-1}}
\sum_{\sigma \mapsto \sigma'}
\Tr_N\!\left[\rho_N U_\sigma\right]
\nonumber\\[2mm]
&=
\frac{1}{(N-1)!}
\sum_{\sigma' \in C_{N-1}}
(N-1)\rho_{N-1} U_{\sigma'}
\nonumber\\[2mm]
&=
\frac{1}{(N-2)!}
\rho_{N-1}
\sum_{\sigma' \in C_{N-1}} U_{\sigma'}.
\end{align}
Since
\begin{align}
U_{N-1}
=
\frac{1}{(N-2)!}
\sum_{\sigma' \in C_{N-1}} U_{\sigma'},
\end{align}
we conclude that
\begin{align}
\Tr_N[\rho_N U_N]
=
\rho_{N-1} U_{N-1}.
\end{align}
\end{proof}

With this, we can now show that tracing out the final subsystem of
$\mathcal{B}_N$ results in the virtual broadcasting map $\mathcal{B}_{N-1}$.

\begin{corollary}[Tracing out the final subsystem]
\label{cor:trace-final}
For every $\rho$,
\begin{align}
\Tr_N \circ \mathcal{B}_N(\rho)
&=
\frac{1}{2}
\Tr_N\!\left[\{\rho_N, U_N\}\right] \\
&=
\frac{1}{2}
\{\rho_{N-1}, U_{N-1}\} \\
&=
\mathcal{B}_{N-1}(\rho).
\label{eq:trace-N-result}
\end{align}
Equivalently,
\begin{align}
\Tr_N \circ \mathcal{B}_N
=
\mathcal{B}_{N-1}.
\end{align}
\end{corollary}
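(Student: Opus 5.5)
The statement to prove is Corollary~\ref{cor:trace-final}, $\Tr_N\circ\mathcal{B}_N=\mathcal{B}_{N-1}$. The plan is to expand the anticommutator and reduce each half to Lemma~\ref{lem:averaged-partial-trace}.

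\textbf{Step 1 (linearity).} By linearity of the partial trace,
\begin{align}
\Tr_N\!\left[\{\rho_N,U_N\}\right]=\Tr_N[\rho_N U_N]+\Tr_N[U_N\rho_N].
\end{align}
Lemma~\ref{lem:averaged-partial-trace} gives the first term directly as $\rho_{N-1}U_{N-1}$.

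\textbf{Step 2 (the reversed product).} For $\Tr_N[U_N\rho_N]$ I would not redo the cycle combinatorics. Instead I would take adjoints, using $\Tr_N[X]^\dagger=\Tr_N[X^\dagger]$, which gives
\begin{align}
\Tr_N[U_N\rho_N]=\big(\Tr_N[\rho_N^\dagger U_N^\dagger]\big)^\dagger .
\end{align}
Since $U_\sigma^\dagger=U_{\sigma^{-1}}$ and inversion is a bijection of $C_N$ onto itself, $U_N^\dagger=U_N$. Also $\rho_N^\dagger=(\rho^\dagger)_N$.

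\textbf{Step 3 (apply the lemma to $\rho^\dagger$).} Lemma~\ref{lem:averaged-partial-trace} holds for every operator $\rho$, since its proof used only swap identities. Applying it to $\rho^\dagger$ gives $\rho^\dagger_{N-1}U_{N-1}$. Taking the adjoint, and using $U_{N-1}^\dagger=U_{N-1}$, yields $\Tr_N[U_N\rho_N]=U_{N-1}\rho_{N-1}$.

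\textbf{Step 4 (conclude).} Summing the two terms and halving gives $\tfrac12\{\rho_{N-1},U_{N-1}\}=\mathcal{B}_{N-1}(\rho)$. Because this holds for all $\rho$ in $\mathcal{L}(\H_{A_1})$, the maps are equal.

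\textbf{Main obstacle.} The step that needs care is Step 2. Lemma~\ref{lem:single-cycle} is stated only for $\rho_NU_\sigma$, so the reversed order must be handled separately. If the adjoint argument turns out to be awkward, the fallback is to repeat the proof of Lemma~\ref{lem:single-cycle} with $U_\sigma=(U_{\sigma'}\otimes\Id)\mathrm{SWAP}_{N,a_{N-1}}$ placed on the left. This decomposition corresponds to deleting $N$ from the other side of the cycle, and the same three-case swap count should go through.

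\textbf{Extension to arbitrary $n$.} Lemma~\ref{lem:BnB} asks for $\Tr_n$ with $n$ arbitrary, not only $n=N$. Both $\rho_N$ and $U_N$ are invariant under conjugation by any $U_\pi$ with $\pi\in S_N$, because conjugation permutes $C_N$ among itself. Conjugating by the transposition $(n\,N)$ therefore carries the $n=N$ case to general $n$.
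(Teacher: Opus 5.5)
Your proposal is correct and follows the paper's proof. Like the paper, you expand the anticommutator, apply Lemma~\ref{lem:averaged-partial-trace} to $\Tr_N[\rho_N U_N]$, and obtain $\Tr_N[U_N\rho_N]=U_{N-1}\rho_{N-1}$ by taking adjoints; your justification of $U_N^\dagger=U_N$ (inversion permutes $C_N$) and your detour through $\rho^\dagger$ for non-Hermitian $\rho$ just make precise what the paper compresses into ``the Hermiticity of $\rho_N$ and $U_N$.''
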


\begin{proof}
Taking the adjoint of Eq.~\eqref{eq:averaged-partial-trace} gives
\begin{align}
\Tr_N[U_N \rho_N]
=
U_{N-1} \rho_{N-1},
\end{align}
where we used the Hermiticity of $\rho_N$ and $U_N$.

Therefore,
\begin{align}
\Tr_N\!\left[\{\rho_N, U_N\}\right]
=
\{\rho_{N-1}, U_{N-1}\}.
\end{align}
Using the definition
\begin{align}
\mathcal{B}_N(\rho)
=
\frac{1}{2}\{\rho_N, U_N\},
\end{align}
we immediately obtain
\begin{align}
\Tr_N \circ \mathcal{B}_N(\rho)
&=
\frac{1}{2}
\Tr_N\!\left[\{\rho_N, U_N\}\right] \\
&=
\frac{1}{2}
\{\rho_{N-1}, U_{N-1}\} \\
&=
\mathcal{B}_{N-1}(\rho).
\label{eq:trace-N-result-2}
\end{align}
\end{proof}

It remains to show that the same identity holds when an arbitrary output
subsystem $n \in \{1,\ldots,N\}$ is traced out.

\begin{corollary}[Tracing out an arbitrary output subsystem]
\label{lem:arbitrary-subsystem}
For every $n \in \{1,\ldots,N\}$,
\begin{align}
\Tr_n \circ \mathcal{B}_N
=
\mathcal{B}_{N-1}.
\label{eq:arbitrary-subsystem}
\end{align}
\end{corollary}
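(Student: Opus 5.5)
We will reduce the general statement to Corollary~\ref{cor:trace-final} by exploiting the permutation symmetry of $\mathcal{B}_N$. Fix $n\in\{1,\dots,N\}$ and pick any permutation $\pi\in S_N$ with $\pi(n)=N$; for definiteness take the transposition $\pi=(n,N)$, with $U_\pi$ the associated swap unitary on the $N$-fold tensor product. The plan is to show that $\mathcal{B}_N(\rho)$ is invariant under conjugation by $U_\pi$, i.e.\ $U_\pi\,\mathcal{B}_N(\rho)\,U_\pi^\dagger=\mathcal{B}_N(\rho)$. Then tracing subsystem $n$ amounts to tracing subsystem $N$ after relabelling.

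\textbf{Step 1 (invariance of the two ingredients).} First, $U_\pi\rho^{(k)}_NU_\pi^\dagger=\rho^{(\pi(k))}_N$, so the uniform average $\rho_N$ is invariant under conjugation by every $U_\pi$. Second, $U_\pi U_\sigma U_\pi^\dagger=U_{\pi\sigma\pi^{-1}}$, and conjugation preserves cycle type, so it permutes the set $C_N$ of $N$-cycles bijectively. Hence $U_\pi U_N U_\pi^\dagger=U_N$. Since conjugation is an algebra automorphism, it preserves anticommutators, and therefore $U_\pi\mathcal{B}_N(\rho)U_\pi^\dagger=\mathcal{B}_N(\rho)$.

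\textbf{Step 2 (relabelling the partial trace).} For any operator $X$ on $N$ subsystems we have $\Tr_n[X]=\Tr_N[U_\pi X U_\pi^\dagger]$, up to the canonical identification of the remaining $N-1$ factors. This identification is given by the reordering induced by $\pi$ restricted to $\{1,\dots,N\}\setminus\{n\}$, which maps them onto $\{1,\dots,N-1\}$. Combining this with Step~1 and Corollary~\ref{cor:trace-final} gives $\Tr_n\mathcal{B}_N(\rho)=\Tr_N\mathcal{B}_N(\rho)=\mathcal{B}_{N-1}(\rho)$, read in that relabelled ordering.

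\textbf{Step 3 (the main obstacle).} The difficulty is checking that the relabelling of the surviving $N-1$ subsystems is harmless. This requires $\mathcal{B}_{N-1}(\rho)$ to be invariant under every permutation of its $N-1$ factors. By Step~1 applied with $N-1$ in place of $N$, conjugation-invariance holds for all of $S_{N-1}$, so any reordering yields the same operator.

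To be safe, I would take the transposition $(n,N)$ explicitly. Its restriction to the remaining factors places subsystem $N$ in slot $n$ and leaves all others fixed. This is a permutation of $N-1$ objects, and it is covered by the $S_{N-1}$-invariance just established. Since the argument is linear in $\rho$, it holds for all $\rho\in\mathcal{L}(\H)$, and the result follows.
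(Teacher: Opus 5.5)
Your proof is correct and follows the same route as the paper: you use invariance of $\mathcal{B}_N(\rho)$ under the swap $(n,N)$ to turn $\Tr_n$ into $\Tr_N$, and then invoke Corollary~\ref{cor:trace-final}. Your version adds two points the paper leaves implicit: you verify the permutation invariance (since $\rho_N$ is symmetric and $C_N$ is a conjugacy class), and you handle the reordering of the surviving $N-1$ factors via the $S_{N-1}$-invariance of $\mathcal{B}_{N-1}$.
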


\begin{proof}
By construction, the virtual broadcasting map is permutation invariant on
its output systems. In particular, for the swap unitary
\begin{align}
\mathcal{U}_{\operatorname{SWAP}_{n,N}}(X)
:=
\operatorname{SWAP}_{n,N} \, X \, \operatorname{SWAP}_{n,N},
\end{align}
we have
\begin{align}
\mathcal{U}_{\operatorname{SWAP}_{n,N}}
\circ \mathcal{B}_N
=
\mathcal{B}_N.
\end{align}
With this, we have
\begin{align}
\Tr_n \circ \mathcal{B}_N(\rho)
&=
\Tr_N
\circ
\mathcal{U}_{\operatorname{SWAP}_{n,N}}
\circ
\mathcal{B}_N(\rho) \\
&=
\Tr_N \circ \mathcal{B}_N(\rho) \\
&=
\mathcal{B}_{N-1}(\rho),
\end{align}
where the final equality follows from Corollary~\ref{cor:trace-final}. Since
$\rho$ is arbitrary, this proves Eq.~\eqref{eq:arbitrary-subsystem} for every
$n \in \{1,\ldots,N\}$.
\end{proof}

\bibliography{ref}
\end{document}